\def\fontsettingup{2} 
\newtheorem{theorem}{Theorem}
\newtheorem*{claim*}{Claim}
\newtheorem{fact}[theorem]{Fact}
\newtheorem{lemma}[theorem]{Lemma}
\newtheorem{proposition}[theorem]{Proposition}
\newtheorem{corollary}[theorem]{Corollary}
\theoremstyle{definition}
\newtheorem{definition}[theorem]{Definition}
\newtheorem{remark}[theorem]{Remark}
\newtheorem*{remark*}{Remark}
  \def\*#1{\mathbf{#1}} 
  \def\+#1{\mathcal{#1}} 
  \def\-#1{\mathrm{#1}} 
  \def\^#1{\mathbb{#1}} 
  \def\!#1{\mathfrak{#1}} 
  \def\*#1{\boldsymbol{#1}} 
  \def\+#1{\mathcal{#1}} 
  \def\-#1{\mathrm{#1}} 
  \def\^#1{\mathbb{#1}} 
  \def\!#1{\mathfrak{#1}} 
\def\oPr{\mathbf{Pr}}
\renewcommand{\Pr}[2][]{ \ifthenelse{\isempty{#1}}
  {\oPr\left[#2\right]}
  {\oPr_{#1}\left[#2\right]} } 
\def\oE{\mathbb{E}}
\newcommand{\E}[2][]{ \ifthenelse{\isempty{#1}}
  {\oE\left[#2\right]}
  {\oE_{#1}\left[#2\right]} }
\DeclareMathOperator*{\oVar}{\mathbf{Var}}
\newcommand{\Var}[2][]{ \ifthenelse{\isempty{#1}}
  {\oVar\left[#2\right]}
  {\oVar_{#1}\left[#2\right]} }
\def\oEnt{\mathbf{Ent}}
\newcommand{\Ent}[2][]{ \ifthenelse{\isempty{#1}}
  {\oEnt\left[#2\right]}
  {\oEnt_{#1}\left[#2\right]} }
\renewcommand{\epsilon}{\varepsilon}
\renewcommand{\emptyset}{\varnothing}
\let\epsilon=\varepsilon
\newcommand{\prob}[1]{\ensuremath{\text{{\bf Pr}$\left[#1\right]$}}}
\newcommand{\floor}[1]{\ensuremath{\left\lfloor#1\right\rfloor}}
\newcommand{\polylog}{\operatorname{polylog}}
\title{Differentially Private Algorithms for Graph Cuts: A Shifting Mechanism Approach and More}
\author{Rishi Chandra\thanks{Johns Hopkins University, US. \textnormal{E-mails: \url{rchand18@jhu.edu}, \url{mdinitz@cs.jhu.edu} }}
  \and
Michael Dinitz\thanks{Supported in part by NSF awards CCF-1909111 and CCF-2228995.} \footnotemark[1]
\and 
Chenglin Fan\thanks{Seoul National University. Part of the work was done at JHU. \textnormal{E-mail: \url{fanchenglin@gmail.com}}}
\and 
  Zongrui Zou\thanks{Nanjing University, China. \textnormal{E-mail: \url{zou.zongrui@smail.nju.edu.cn}}.}
}
\date{}
\begin{document}

\begin{titlepage}
\maketitle
\thispagestyle{empty}
\begin{abstract}

In this paper, we address the challenge of differential privacy in the context of graph cuts, specifically focusing on the {\em multiway cut} and the {\em minimum $k$-cut}.  We introduce edge-differentially private algorithms that achieve nearly optimal performance for these problems.

Motivated by multiway cut, we propose the \textit{shifting mechanism}, a general framework for private combinatorial optimization problems.  This framework allows us to develop an efficient private algorithm with a multiplicative approximation ratio that matches the state-of-the-art non-private algorithm, improving over previous private algorithms that have provably worse multiplicative loss. We then provide a tight information-theoretic lower bound on the additive error, demonstrating that for constant $k$, our algorithm is optimal in terms of the privacy cost. The shifting mechanism also allows us to design private algorithm for the \textit{multicut} and \textit{max-cut} problems, with runtimes determined by the best non-private algorithms for these tasks. 

For the minimum $k$-cut problem we use a different approach, combining the exponential mechanism with bounds on the number of approximate $k$-cuts to get the first private algorithm with optimal additive error of $O(k\log n)$ (for a fixed privacy parameter). We also establish an information-theoretic lower bound that matches this additive error. Furthermore, we provide an efficient private algorithm even for non-constant $k$, including a polynomial-time 2-approximation with an additive error of $\tilde{O}(k^{1.5})$.

\end{abstract}

\end{titlepage}
\section{Introduction}\label{sec:intro}

 \vspace{-0.15cm}
 Graph analysis is one of the central topics in theoretical computer science. Given that a graph may represent real entities with privacy concerns, ensuring the privacy of individuals while extracting useful insights from graphs is of paramount concern. Differential privacy~\cite{dwork2006calibrating} is a rigorous mathematical framework designed to provide strong privacy guarantees for individuals in a dataset. An algorithm $\mathcal{A}$ is $(\epsilon,\delta)$-\emph{differentially private} if the probabilities of obtaining any set of possible outputs $S$ of $\mathcal{A}$ when run on two ``neighboring'' inputs $G$ and $G'$ are similar:
$\prob{\mathcal{A}(G) \in S} \leq e^{\epsilon} \cdot \prob{\mathcal{A}(G') 
\in S} + \delta$. When $\delta = 0$, we say that $\mathcal{A}$ preserves \emph{pure differential privacy}. While much work on differentially private algorithms has focused on numeric databases or learning settings, there has also been a significant amount of work on private graph analytics, e.g.,~\cite{gupta2012iterative, DBLP:conf/innovations/BlockiBDS13,DBLP:conf/tcc/KasiviswanathanNRS13,DBLP:conf/focs/BunNSV15,DBLP:conf/nips/AroraU19, DBLP:conf/nips/UllmanS19,DBLP:conf/focs/BorgsCSZ18,eliavs2020differentially,bun2021differentially, DBLP:conf/icml/NguyenV21,DBLP:journals/corr/abs-2106-00508,DBLP:conf/nips/Cohen-AddadFLMN22,DBLP:conf/nips/Fan0L22, DBLP:conf/soda/ChenG0MNNX23,
DBLP:conf/wads/DengGUW23, DBLP:conf/icml/ImolaEMCM23, DBLP:conf/nips/ChenCdEIST23,DBLP:journals/corr/abs-2308-10316,  DBLP:conf/nips/DalirrooyfardMN23,liu2024optimal}.  

Cuts are a fundamentally important object of study in graph theory and graph algorithms, as they are naturally relevant to clustering, community detection, and other important tasks.  In the private setting, there is an important line of work in which we attempt to essentially release the values of {all} cuts by privately answering \textit{cut queries} of the form ``given an $(S, T)$ cut, what is its weight?''~\cite{gupta2012iterative, blocki2012johnson, eliavs2020differentially, liu2024optimal}. However, there is a different perspective on cuts based on optimization: instead of trying to privately estimate the \emph{sizes} of {all} cuts, we instead wish to return the ``best'' cut, for some notion of best.  This includes, for example, classical problems such as max-cut, min $k$-cut, multiway cut, multicut etc.  

While there is some important work on private versions of these classical optimization problems, they have not been explored nearly as much as answering cut size queries privately.  A key difficulty for all of these problems is to output a \emph{structure}, rather than just a value.  It is easy to privately compute the \emph{cost} of the optimal multiway cut or $k$-cut using, e.g., the Laplace mechanism.  But for many tasks (community detection, clustering, etc.) we want the cut itself.  We cannot output the edges in the cut as they are part of the input dataset, but for many applications what we want is actually the vertex partition, and returning such a partition does not (at least a priori) violate privacy.  So that is the structure that our algorithms will return. 

In this paper we explore private versions of several classical cut problems: multiway cut, multicut, max-cut and $k$-cut.  In multiway cut we are given a collection of terminals $T$ and are asked to find the cheapest cut separating all of the terminals, in multicut we are given a collection of $(s_i, t_i)$ pairs and are asked to find the cheapest cut separating each pair, in max-cut we are asked to find the globally maximum weight cut, and in $k$-cut we are asked to find the minimum weight cut that separates the graph into at least $k$ connected components.  See Section~\ref{sec:prelims} for detailed definitions.

We particularly focus on multiway cut and $k$-cut, where we show that our algorithms provide essentially optimal bounds.  All of these problems have been studied extensively in the non-private setting (see Appendix~\ref{sec:related} for more discussion).  Very recently, private algorithms have been developed for private multiway cut, most notably by~\cite{DBLP:conf/nips/DalirrooyfardMN23}, who gave an algorithm with optimal additive loss but with a multiplicative loss of $2$.  This is in contrast to the non-private setting, where the best-known multiplicative loss is $1.2965$~\cite{sharma2014multiway}.  This extra multiplicative loss can easily overwhelm the additive loss when $\mathsf{OPT}$ is large (e.g., in the presence of edge weights). Minimum $k$-cut, on the other hand, has not been studied in the context of privacy.  This is somewhat surprising, since global min-cut (or minimum $2$-cut) was one of the first combinatorial optimization problems to be studied in the context of privacy~\cite{gupta2010differentially}.  So we initiate the study of private $k$-cut in this work. 

From a technical point of view, for general private combinatorial problems, the most common approaches in the literature are \textit{input perturbation} (e.g., private min $s$-$t$ cut \cite{DBLP:conf/nips/DalirrooyfardMN23}), or using the exponential mechanism (e.g., private min-cut~\cite{gupta2010differentially} or private topology selection~\cite{liu2024optimal}), since the outputs are typically discrete structures rather than real-valued vectors.  In this paper we study both of these approaches: new versions of input perturbation for multiway cut, max-cut, and multicut, and new analyses of the exponential mechanism (and an efficient variation) for $k$-cut.

\paragraph{Input Perturbation.}
The most common form of input perturbation is usually generating a private synthetic dataset or graph by adding either (additive or multiplicative) independent noise~\cite{blocki2012johnson, dwork2014analyze, upadhyay2021differentially} or correlated noise, such as private multiplicative weights update~\cite{gupta2012iterative, eliavs2020differentially}. Then, a non-private cut algorithm is applied to the private graph as a post-processing step. This way of adding noise can be shown to be optimal for some problems, such as giving optimal $\tilde{\Theta}(\sqrt{mn})$ additive error for approximating the sizes of all cuts on a weighted $n$-vertex, $m$-edge graph~\cite{liu2024optimal}. However, for many other problems, it is excessive and it loses utility in the sense that we only want some combinatorial structures instead of a whole synthetic graph. For instance, \cite{DBLP:conf/nips/DalirrooyfardMN23} presents an $O(n\log n)$ additive error algorithm for approximating the min $s$-$t$ cut, achieving significantly lower error compared to the error incurred when releasing the sizes of all cuts by a synthetic graph. Therefore, a central question of this paper is to determine when it is necessary to perturb everything in the input to generate a fully private synthetic graph, and when adding less noise is sufficient. In particular, we ask the following question:


\vspace{-0.1cm}
\begin{quote}
  {\em Main Question: Given a combinatorial problem $P$, what property of $P$ determines \textbf{how much} and \textbf{where} the noise needs to be added in order to privately solve it with minimal disturbance?}
\end{quote}
\vspace{-0.1cm}

To answer this question, we present the definition of a \textit{dominating set} (Definition \ref{def:dominating_set}) of a well-defined combinatorial optimization problem. Building on this definition, we propose the \textit{shifting mechanism} as a general framework for input perturbation (Section \ref{sec:shifting}). 

Notably, this mechanism not only encompasses the previous algorithm for the minimum $s$-$t$ cut problem~\cite{DBLP:conf/nips/DalirrooyfardMN23} as a specific instance, but it also directly leads to the design of private algorithms for a wide range of other cut problems.  In particular, it leads to our new algorithms for multiway cut, multicut, and max cut.  Furthermore, in all of these applications, the error bounds provided by the shifting mechanism are either proven to be optimal or matching with those of the exponential mechanism in terms of the dependency on $|V|$, where $V$ is the vertex set. 



\paragraph{Exponential Mechanism.}
The other common approach to private combinatorial optimization is the exponential mechanism.  This mechanism is a particularly natural fit for combinatorial optimization, since it naturally operates on discrete objects.  For minimum $k$-cut, one of the most fundamental cut and clustering problems in graph algorithms, we show that the exponential mechanism leads to extremely good bounds.  Unfortunately the exponential mechanism is not efficient, so we also develop efficient versions of our algorithm that incur some extra loss.  


 \subsection{Our results} 
We consider the standard notion of \emph{edge-level} privacy~\cite{DBLP:conf/icdm/HayLMJ09}, in which $G$ and $G'$ are neighboring graphs if they differ by one edge weight of $1$ (see Section~\ref{sec:prelims} for a formal definition)\footnote{For unweighted graphs, edge-level privacy corresponds to adding or deleting a single edge. Consequently, the graph's topology remains unknown to the public, which contrasts with weight-level privacy~\cite{DBLP:conf/pods/Sealfon16}.}. Our main results include edge-level differentially private algorithms for various cut problems using the shifting mechanism, most notably for multiway cut. Additionally, we also provide tight upper and lower bounds for private minimum $k$-cut (abbreviated as $k$-cut). 

\subsubsection{Application of the shifting mechanism in private multiway cut}

Private multiway cut was first studied in~\cite{DBLP:conf/nips/DalirrooyfardMN23}, where they presented a differentially private algorithm for min $s$-$t$ cut (i.e., $k = 2$) with an optimal additive error of $\tilde{O}(n/\epsilon)$. For $k\geq 3$, the non-private multiway cut problem becomes NP-hard, and they show that the private version of this problem can be solved by running at most $O(\log k)$ rounds of the private algorithm for min $s$-$t$ cut.  This yields a private approximation for multiway cut with a multiplicative approximation of $2$ and an additive loss of $O(n\log^2 k/\epsilon)$, i.e., the algorithm returns a solution of cost at most $2\cdot \text{OPT} + O(n\log^2 k/\epsilon)$ (see \cite[Theorem 4.2]{DBLP:conf/nips/DalirrooyfardMN23}). 

However, the 2-approximation ratio is far from optimal. The best non-private approximation ratio for multiway cut is achieved by solving the simplex embedding linear program of~\cite{DBLP:journals/jcss/CalinescuKR00} and then rounding the fractional solution, which gives a $(\approx 1.3)$ approximation ratio~\cite{sharma2014multiway}. While $2$ and $1.3$ might not seem far apart when the additive error is $\Omega(n)$, in the presence of edge weights the multiplicative loss can easily outweigh the additive loss.  Thus it is important to minimize the multiplicative loss in addition to minimizing the additive loss. 

Unfortunately, the state-of-the-art rounding schemes used by~\cite{sharma2014multiway} are quite complex, so to achieve their approximation ratio privately a natural idea is to solve the linear program privately, since then any rounding scheme functions as post-processing and so we do not need to worry about making it private.  However, the obvious attempts under this idea do not yield the optimal $O(n/\epsilon)$ error, even if $k\geq 3$ is a constant. For example, a folklore approach is to add Laplace or Gaussian noise between each pair of vertices, then find a fractional multiway cut on the noisy graph. For fixed $k$, this gives $\tilde{O}(n^{1.5}/\epsilon)$ additive error because of the concentration of noise and the union bound among all cuts. Moreover, adding noise might result in negative edge weights, making the simplex embedding linear program not polynomial time solvable. Alternatively, Hsu et al.~\cite{hsu2014privately} provide a general method for solving linear programs with approximate differential privacy. For fixed $k$, their approach also results in an $\tilde{O}(n^{1.5}/\epsilon)$ error for the simplex embedding linear program of multiway cut\footnote{Solving the simplex embedding linear program of multiway cut under edge-level DP corresponds to the objective private LP in \cite{hsu2014privately}. In the simplex embedding program, there are $nk$ variables and the sum of them is $n$. Applying the upper bound for the objective private LP in Section 4.5 of \cite{hsu2014privately} gives an $\tilde{O}(n\sqrt{nk})$ additive error.}. Therefore, we have the following natural question:
\begin{quote}
  {\em Question 2: For fixed $k\geq 3$, is it possible to privately and efficiently solve the multiway cut problem with both \textbf{optimal additive error} and the \textbf{best-known non-private multiplicative approximation ratio}?}
\end{quote}
In this paper, as the most notable application of our shifting mechanism, we provide a positive answer to this question for any constant $k\geq 3$ by showing that the particular structure of the~\cite{DBLP:journals/jcss/CalinescuKR00} relaxation allows us to solve it with \textit{pure} differential privacy and smaller additive loss. In particular, we give the following theorem:

\begin{theorem}[Informal]\label{thm:result_1}
For any weighted graph $G$, there exists a polynomial time $(\epsilon,0)$-differentially private algorithm for multiway cut which outputs a solution with cost at most $1.2965 \cdot \mathsf{OPT}+\tilde{O}(nk/\epsilon)$, 
  where $\mathsf{OPT}$ is the value of the optimal multiway cut.
\end{theorem}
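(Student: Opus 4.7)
The plan is to privately approximate the Calinescu--Karger--Rao (CKR) simplex-embedding LP relaxation of multiway cut and then apply a non-private rounding scheme as post-processing. In the CKR relaxation each vertex $v$ is assigned a point $x_v$ in the $(k-1)$-simplex, the $k$ terminals are pinned to the corners, and the objective is $\tfrac{1}{2}\sum_{uv\in E} w_{uv}\|x_u-x_v\|_1$; its value lower-bounds $\mathsf{OPT}$, and the rounding scheme of Sharma--Vondr\'ak converts any feasible fractional solution of LP value $V$ into an integer multiway cut of weight at most $1.2965\cdot V$. Since rounding is pure post-processing, it suffices to privately output a feasible fractional $\hat{x}$ with $\mathrm{LP}(\hat{x})\le \mathsf{OPT}_{\mathrm{LP}}+\tilde{O}(nk/\epsilon)$; composing with the rounding then yields an integer cut of cost at most $1.2965\cdot \mathsf{OPT}+\tilde{O}(nk/\epsilon)$, which is exactly the theorem.

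To obtain $\hat{x}$ I instantiate the shifting mechanism of Section~3 on the CKR LP. The relevant dominating set $\mathcal{D}$ is a $1/\mathrm{poly}(n)$-grid discretization of the $n$ product simplices, which satisfies $\log|\mathcal{D}|=O(nk\log n)$ and contains a point within $o(1)$ of the LP optimum for every weight vector $w$. The LP objective is $1$-sensitive in any single edge weight because $\|x_u-x_v\|_1\le 2$, so feeding this sensitivity and $\log|\mathcal{D}|$ into the generic guarantee of the shifting mechanism yields an $(\epsilon,0)$-DP selection from $\mathcal{D}$ whose LP value exceeds $\mathsf{OPT}_{\mathrm{LP}}$ by at most $\tilde{O}(nk/\epsilon)$.

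The main obstacle is efficiency: $|\mathcal{D}|$ is exponential, so I cannot score every candidate explicitly. The plan is to exploit the separability of the CKR objective across the $k$ simplex coordinates. Writing $\|x_u-x_v\|_1=\sum_{i=1}^{k}|x_{u,i}-x_{v,i}|$ turns the objective into a sum of $k$ ``distance-to-terminal'' subobjectives, each of which (when the simplex constraints are deferred to the feasibility polytope) is a fractional min $s$--$t$ cut objective of the type privately handled in~\cite{DBLP:conf/nips/DalirrooyfardMN23}. Realizing the shifting mechanism in this decomposed form corresponds to injecting calibrated Laplace perturbations along only $O(nk)$ vertex--coordinate terms rather than across the $\binom{n}{2}$ edge weights, and then solving the resulting perturbed LP with any standard convex solver. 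This keeps the effective noise dimension at $O(nk)$, so concentration of Laplace sums gives the target $\tilde{O}(nk/\epsilon)$ additive LP error while avoiding both the $\tilde{O}(n\sqrt{nk}/\epsilon)$ loss of naive per-edge noise and the exponential cost of enumerating $\mathcal{D}$.

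The hard part will be the last step: certifying that a single polynomial-time noisy-LP computation correctly realizes the shifting mechanism over $\mathcal{D}$ with the claimed error, and in particular handling the fact that perturbed coefficients can be negative or drive the solution off the simplex. I expect to resolve this by formulating the noise as additive shifts to the coordinate-decomposed objective (so the feasibility polytope is untouched) and by invoking the shifting-mechanism analysis at the level of scores rather than at the level of edge weights, so that the per-coordinate noise composes into the $\tilde{O}(nk/\epsilon)$ bound without a union bound over cuts.
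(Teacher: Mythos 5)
Your high-level plan matches the paper's: approximate the CKR simplex-embedding LP privately, then apply the Sharma--Vondr\'ak rounding as post-processing. Your later intuition that noise should live on $O(nk)$ terminal-adjacent terms rather than all $\binom{n}{2}$ edges is also correct. But the middle of the argument contains a conceptual confusion and leaves the genuinely hard parts of the proof undone.

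First, the confusion: in Definition~\ref{def:dominating_set} a dominating set is a subset $S\subseteq[N]$ of \emph{input coordinates} (edges), together with a sensitivity bound, achieved by exhibiting a correction vector supported on $S$. It is not a net or discretization of the output space, and the shifting mechanism has no generic guarantee parameterized by $\log|\mathcal{D}|$ for a candidate set $\mathcal{D}$ --- that is the exponential mechanism. Your ``$1/\mathrm{poly}(n)$-grid $\mathcal{D}$'' with $\log|\mathcal{D}|=O(nk\log n)$ therefore does not plug into the shifting-mechanism framework, and the ensuing efficiency problem you flag is an artifact of this misreading. The actual dominating set is the set of terminal--non-terminal edges.

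Second, the privacy proof is missing. The heart of the paper's argument is to show that for any single-edge edit on a non-terminal pair $\{u,v\}$, there exists a correction vector $a_{uv}$ supported on the $2k$ terminal--$\{u,v\}$ coordinates with $\|a_{uv}\|_1=O(k)$ that maps the noisy-LP minimizer of one instance to that of the other (Lemma~\ref{lem:minimizes_two_vertices_case}, via Lemma~\ref{lem:sum_invar} and Fact~\ref{fac:common_minimum}). This is what determines the Laplace scale $b=\Theta(k/\epsilon)$; without it ``calibrated Laplace perturbations'' is undefined and you have no sensitivity bound. Your proposed decomposition $\|x_u-x_v\|_1=\sum_i|x_{u,i}-x_{v,i}|$ into $k$ per-coordinate ``min $s$--$t$ cut'' subobjectives does not obviously yield such a correction vector: the $k$ subproblems are coupled by the simplex constraint $\sum_i x_{u,i}=1$, so one cannot run $k$ independent $s$--$t$ privacy arguments and compose, and the proposal does not say how else to close the gap.

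Third, the utility claim is unsupported. With $O(nk)$ independent $\mathsf{Lap}(\Theta(k/\epsilon))$ variables, the naive bound on the noise mass in a fractional cut is $\tilde{O}(nk^2/\epsilon)$, a $k$ factor too large. The paper avoids this by analyzing the \emph{uncut}: by Lemma~\ref{lem:sum_invar}, $\sum_{t\in T}(2-\|x_t-x_u\|_1)=2$ for every non-terminal $u$ regardless of $x_u$, so the uncut coefficients on noise terms sum to $O(n)$ rather than $O(nk)$, giving $\tilde{O}(nk/\epsilon)$ total noise in the uncut objective (Lemma~\ref{lem:little_noise_in_uncut}, Theorem~\ref{thm:utility_simplex_embedding}). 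Your sketch appeals only to ``concentration of Laplace sums'' and hence would deliver at best $\tilde{O}(nk^2/\epsilon)$ as written.
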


Note that the approximation ratio for the private multiway cut in Theorem \ref{thm:result_1} matches that of the state-of-the-art non-private algorithms. Additionally, for any fixed $k\geq 3$, we show that the additive error of our algorithm is optimal with respect to the dependence on $n$ and $\epsilon$ by proving the following lower bound for pure differential privacy, which applies even to non-efficient algorithms. 

\begin{theorem}[Informal]\label{thm:result_lower_bound}
  Fix any $2\leq k \leq n/2$. Any $(\epsilon,0)$-differentially private algorithm for multiway cut on $n$-vertex graphs has expected additive error at least $\frac{n\log(k/3)}{24\epsilon}$.
\end{theorem}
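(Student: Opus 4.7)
The plan is to prove this lower bound via a packing argument on a carefully chosen family of weighted graphs. I would index the family by assignments $\sigma\in[\lfloor k/3\rfloor]^{n-k}$ of the $n-k$ non-terminal vertices to one of $\lfloor k/3\rfloor$ designated terminals: $G_\sigma$ consists of the $k$ terminals, the $n-k$ non-terminal vertices, and, for each $i$, a single edge of weight $W>0$ between $v_i$ and $t_{\sigma(i)}$. The weight $W$ is a parameter to be tuned. Three immediate properties drive the argument: $\mathsf{OPT}(G_\sigma)=0$, realized by the partition placing each $v_i$ with $t_{\sigma(i)}$; the cost of any partition $\tau\in[k]^{n-k}$ on $G_\sigma$ equals $W\cdot d_H(\sigma,\tau)$ where $d_H$ is Hamming distance; and the edge-edit distance satisfies $d_E(G_\sigma,G_{\sigma'})=2W\cdot d_H(\sigma,\sigma')$.

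Suppose now that $A$ is an $(\epsilon,0)$-differentially private algorithm with expected additive error at most $\alpha$ on every input. Markov's inequality gives that, for every $\sigma$, with probability at least $1/2$ the output $A(G_\sigma)$ lies within Hamming distance $2\alpha/W$ of $\sigma$. Take a maximal packing $\Sigma\subseteq [\lfloor k/3\rfloor]^{n-k}$ with pairwise Hamming distance strictly greater than $4\alpha/W$; then the Hamming balls of radius $2\alpha/W$ around the elements of $\Sigma$ are disjoint, and any two graphs $G_\sigma,G_{\sigma'}$ with $\sigma,\sigma'\in\Sigma$ have edit distance at most $2W(n-k)$. The standard pure-DP packing lemma (group privacy applied to disjoint target sets) then yields
\[
|\Sigma|\leq 2\exp\bigl(2\epsilon W(n-k)\bigr).
\]
On the other hand, a covering argument in $[\lfloor k/3\rfloor]^{n-k}$ gives the opposing inequality $|\Sigma|\geq (k/3)^{n-k}/V_{k/3}(n-k,\,4\alpha/W)$, where $V_q(m,r)=\sum_{i\leq r}\binom{m}{i}(q-1)^i$ is the $q$-ary Hamming ball volume.

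Combining these two bounds and taking logarithms produces an inequality tying $W$, $\alpha$, and $\epsilon$. Applying the entropy estimate $\log V_q(m,xm)\leq m\log(q)\cdot H_q(x)$ and then optimizing the free parameter $W$ — the sweet spot is $W=\Theta(\log(k/3)/\epsilon)$ — shows that the inequality is violated whenever $\alpha<n\log(k/3)/(24\epsilon)$, yielding the claimed lower bound. The assumption $k\leq n/2$ ensures $n-k\geq n/2$, which is used to pass from the natural $(n-k)\log(k/3)$ factor arising from the construction to the $n\log(k/3)$ in the theorem's numerator.

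The main obstacle is the joint optimization of $W$ and the packing radius. The parameter $W$ plays two competing roles: increasing $W$ both amplifies the cost gap between distinct partitions (tightening the Markov-based reconstruction bound) and inflates the pairwise edit distance between graphs (weakening the DP packing bound). These effects must be balanced so that the packing-volume lower bound on $|\Sigma|$ still exceeds the DP upper bound. A secondary subtlety is evaluating the Hamming ball volume precisely enough to pin down the explicit constant $24$, which requires computing the $q$-ary entropy at the optimal packing fraction and carefully book-keeping the $\log(\lfloor k/3\rfloor)$ vs.\ $\log(\lfloor k/3\rfloor-1)$ discrepancies that appear in the various bounds.
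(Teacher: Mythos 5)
Your packing argument is a correct and genuinely different route from the paper's proof, which never mentions Hamming balls, coverings, or entropy. The paper uses the same ``star'' construction ($G_\tau$ with each non-terminal $v_i$ attached to $t_{\tau_i}$ by a single heavy edge), but instead of a global packing, it runs a \emph{local, per-vertex} averaging argument: fix one non-terminal $v_i$ and a partial assignment $\hat\tau$ of all other non-terminals, and use group privacy over the $O(\log(k)/\epsilon)$ unit edge-edits that reattach $v_i$ from one terminal to another. The key step is a pigeonhole over the $k$ possible attachments of $v_i$: if the algorithm correctly placed $v_i$ with probability $\geq 1/3$ for at least $k/2$ of the attachments, group privacy would force $\sum_j \Pr[v_i\mapsto t_j]>1$, a contradiction; hence $v_i$ is misplaced with average probability $\geq 1/6$ over a uniformly random $\tau$. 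Linearity of expectation over the $n-k$ non-terminals then yields the bound. Your global packing proof is essentially the ``standard DP lower-bound hammer'' and generalizes readily, but it imports the $q$-ary Hamming-ball volume machinery, a sphere-covering bound, and an entropy estimate, and requires a nontrivial joint optimization of $W$ and the packing radius to extract the constant. The paper's local argument avoids all of this, is entirely elementary, and gives the explicit constant with no optimization — at the (mild) cost of being less ``black-box.'' Both approaches sit on top of the same instance family and use group privacy at the same weight scale $W=\Theta(\log(k)/\epsilon)$; they differ only in how they aggregate the per-instance errors. Note also that the paper's formal statement (Section~\ref{sec:lowerbound_multiway}) uses $\log(k/6)$ and requires $k\geq 6$; the informal $\log(k/3)$ version is vacuous for $k\leq 3$, and your construction with alphabet size $\lfloor k/3\rfloor$ likewise degenerates for $k\leq 5$, so you should treat small $k$ separately (or just note the bound is then nonpositive and trivially holds).
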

The above information-theoretic lower bound on the additive error for private multiway cut is optimal in terms of $n$, $k$, and $\epsilon$, as it asymptotically matches the upper bound given by an inefficient exponential mechanism. It also generalize the lower bound in \cite{DBLP:conf/nips/DalirrooyfardMN23} from $k = 2$ to all $k =O(n)$. We remark that for $k = O(1)$, this lower bound implies that no private algorithm achieves non-trivial error on unweighted graphs. Including the results of this paper, we summarize the current known conclusions about private multiway cut problem in Table \ref{tab:multiway-cut}.
\begin{table*}[t]
    \centering
    \caption{Current known results for private multiway cut with fixed $k$.}\label{tab:multiway-cut}
    \begin{tabular}{|c|c|c|c|c|c|}
        
        \hline
        Value of $k$ &  \makecell[c]{Category} &  \makecell[c]{Reference} & \makecell[c]{Privacy} & Error rate & Efficient?\\
        \hline 

        \multirow{2}*{$k = 2$ }& Upper bound & \cite{DBLP:conf/nips/DalirrooyfardMN23}  &  $(\epsilon,0)$-DP & $O(n/\epsilon)$ & Yes \\
        \cline{2-6}
        ~& Lower bound & \cite{DBLP:conf/nips/DalirrooyfardMN23} &$(\epsilon,0)$-DP & $\Omega(n/\epsilon)$ & --- \\
        \hline
       \multirow{5}*{$k \geq 3$ }& Upper bound & \makecell[c]{Exponential\\ mechanism}  &  $(\epsilon,0)$-DP & $ {O}_k(n/\epsilon)$ & No \\
        \cline{2-6}
        ~ & Upper bound & \cite{hsu2014privately}  &  $(\epsilon,\delta)$-DP & $(\approx 1.3)\mathsf{OPT} + \tilde{O}_k(n^{1.5}/\epsilon)$ & Yes \\
        \cline{2-6}
         ~ & Upper bound & \cite{DBLP:conf/nips/DalirrooyfardMN23}  &  $(\epsilon,0)$-DP & $2\mathsf{OPT} + {O}_k(n/\epsilon)$ & Yes \\
        \cline{2-6}
         ~ & Upper bound &  Theorem \ref{thm:result_1} &  $(\epsilon,0)$-DP & $(\approx 1.3)\mathsf{OPT} + {O}_k(n/\epsilon)$ & Yes \\
        \cline{2-6}
        ~& Lower bound & Theorem \ref{thm:result_lower_bound} &$(\epsilon,0)$-DP & $\Omega_k(n/\epsilon)$ & --- \\
  \hline
    \end{tabular}
  \end{table*}
\subsubsection{Other applications of the shifting mechanism}\label{sec:results_shifting}
In addition to multiway cut, we also show in Section \ref{sec:app_shifting} that the shifting mechanism can be used for other cut problems, and offers a unified framework that encompasses previous studies.
\begin{enumerate}
    \item \textbf{Private min $s$-$t$ cut.} Let $x\in \mathbb{R}_{+}^{{n\choose 2}}$ be a non-negative vector that encodes a weighted graph. We define $f(x)$ to be the unique\footnote{It is easy to show that by adding negligible perturbation (with sufficient precision) to the edge weights of a graph, the minimum $s$-$t$ cut will be unique.  See also the discussion in \cite{DBLP:conf/nips/DalirrooyfardMN23}.} min $s$-$t$ cut in the graph represented by $x$. We show that the collection of edges crossing between terminals $\{s,t\}$ and any non-terminals builds up a dominating set (Definition \ref{def:dominating_set}) of $f$ with sensitivity $2$. This directly implies a private algorithm for finding min $s$-$t$ cut with a \textit{purely} additive error $\tilde{O}(n)$, matching the lower bound giving in \cite{DBLP:conf/nips/DalirrooyfardMN23}. Thus, our shifting mechanism encompasses the previous work on min $s$-$t$ cut~\cite{DBLP:conf/nips/DalirrooyfardMN23} as a special instance, and also offers a significantly different and more streamlined analysis.
    \item \textbf{Private multicut.} Given a weighted and undirected graph $G$ together with a collection of pairs $T = \{(s_1,t_1), \cdots (s_k,t_k)\}$ known as terminals, the multicut problem is to find a minimum cut (or equivalently, a partition of the vertices) that separates each pair in $T$. As one of the natural generalizations of the min $s$-$t$ cut problem, multicut plays a central role in the approximation algorithms literature. Yet to the best of our knowledge, the private version of multicut for $k\geq 2$ has not been studied before. By applying the shifting mechanism, we directly give an additive noise, polynomial-space algorithm with $\tilde{O}(k(n+k))$ additive error, matching the error of the non-efficient exponential mechanism on this problem for any constant $k$. Further, our algorithm runs in polynomial time when $k = 1,2$.  So when $k=2$, this is the first efficient private algorithm that has the same error as the exponential mechanism.
    \item \textbf{Private max-cut.} Finally, we show that our shifting mechanism also gives an algorithm for finding the maximum cut of a weighted graph. The error of our algorithm is $\tilde{O}(n)$, matching that of the exponential mechanism on private max-cut. Unlike the exponential mechanism, the time and space consumption of our algorithm depend on the complexity of finding the exact max-cut on a noisy graph with non-negative edge weights. Although this does not yet provide an efficient algorithm, it enables the use of various faster algorithms for finding max-cut in polynomial space~\cite{scott2007linear, golovnev2011new}, thereby accelerating the exponential mechanism within polynomial space.
\end{enumerate}

\subsubsection{Private minimum $k$-cut}
The private minimum cut problem (or minimum $2$-cut) was first introduced by~\cite{gupta2010differentially}, who proposed a $(\epsilon,0)$-private algorithm that incurs an expected additive error of $\Theta(\log n/\epsilon)$. This additive error is far below the $\Omega(n/\epsilon)$ lower bound for private min $s$-$t$ cut proved by~\cite{DBLP:conf/nips/DalirrooyfardMN23}, showing that the two problems are fundamentally different and require different techniques.  
 Their algorithm consists of two stages:
 (1) adding random edges to the input graph $G$ in a private manner to increase the 
 optimal cut cost to 
 $\Omega(\log n/\epsilon)$,
 and (2) applying the exponential mechanism over all cuts in the augmented graph.  There is also a matching lower bound, and so the private min-cut problem is essentially solved.  On the other hand, there are no known bounds (upper or lower) for minimum $k$-cut.  Therefore, we ask the following question:

 \begin{quote}
  {\em Question 3: What is the ``correct'' dependence on $k$, $n$, and $\epsilon$ that characterizes the achievable error of the private minimum $k$-cut problem?}
\end{quote}

To answer the above question, we first give an upper bound on the private minimum $k$-cut problem. In particular, we give algorithms where the additive loss is a function of the number of approximate $k$-cuts. 
 Bounds on this number have been studied extensively~\cite{DBLP:journals/jacm/Karger00,chekuri18, DBLP:journals/jacm/GuptaHLL22}, and when we plug in the known bounds we get the following theorem:

 \begin{theorem}[Informal]\label{thm:result_kcut_upper}
    For any unweighted graph $G$, there is an inefficient $(\epsilon,0)$-DP algorithm for $k$-cut which outputs a solution with cost at most $\mathsf{OPT} + O(k\log{n} / \epsilon)$ with high probability. Further, allowing approximate differential privacy, there is an  $(\epsilon,\delta)$-DP algorithm achieving the same utility in time $O(n^{O(k)}(\log n)^{O(k^2)})$. Here, $\mathsf{OPT}$ is the value
of the minimum  $k$-cut on $G$.
\end{theorem}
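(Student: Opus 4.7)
The plan is to apply the exponential mechanism over all $k$-partitions of $V$, score a partition by the negated weight of its cut, and exploit the fact---proved in a long line of work culminating in Gupta--Harris--Lee--Li---that the number of $k$-cuts of cost at most $\alpha\,\mathsf{OPT}$ is $n^{O(\alpha k)}$.  Under edge-level privacy the cost function has sensitivity $1$, so sampling a partition $C$ with probability proportional to $\exp(-\epsilon\,c(C)/2)$ is $(\epsilon,0)$-DP.  To make this useful even when $\mathsf{OPT}$ is tiny, I would first, following Gupta--Ligett--McSherry--Roth--Talwar, privately augment the graph with an independent noise graph $H$ (for instance a sparse Erd\H{o}s--R\'enyi graph, or a union of random spanning trees) chosen so that $\mathsf{OPT}_k(G+H) \geq \tau := Ck\log n/\epsilon$ with high probability while every $k$-cut gains at most $O(\tau)$ edges from $H$; since $H$ is sampled independently of $G$, the rest of the pipeline is post-processing.

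For the utility analysis I would bucket outputs by cost.  Let $N_s$ denote the number of $k$-cuts of cost exactly $s$ in $G+H$; the counting bound gives $\sum_{s \leq \alpha\mathsf{OPT}} N_s \leq n^{O(\alpha k)}$.  Lower-bounding the normalization by the optimum's contribution $\exp(-\epsilon\mathsf{OPT}/2)$ and bucketing the numerator yields
\[
\Pr[c(\text{output}) \geq \mathsf{OPT}_k(G+H) + t] \ \leq\ \sum_{j \geq 0} n^{O(k) + O(k(t+j)/\mathsf{OPT})}\,\exp\!\bigl(-\epsilon(t+j)/2\bigr).
\]
Once $C$ is a sufficiently large constant the factor $n^{O(k(t+j)/\mathsf{OPT})}$ is dominated by $\exp(\epsilon(t+j)/4)$, so the tail decays geometrically and the excess cost is $O(k\log n/\epsilon)$ with high probability.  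Adding back the $O(\tau)$ cost from augmentation preserves the $O(k\log n/\epsilon)$ additive bound on $G$.

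For the $(\epsilon,\delta)$-DP efficient variant, the domain of the exponential mechanism must be shrunk from all $k$-partitions to a set of size $n^{O(k)}$.  I would use the polynomial-delay enumeration of all constant-approximate minimum $k$-cuts, which runs in $n^{O(k)}(\log n)^{O(k^2)}$ time by the same structural results, to build the candidate set, and then run the exponential mechanism on that set.  The privacy subtlety is that this candidate set itself depends on $G$, so neighboring inputs could in principle produce different domains; the standard remedy is propose--test--release with a Laplace-noised threshold, in which one privately estimates $\mathsf{OPT}_k$, enumerates all $k$-cuts up to a slightly inflated budget, and argues that with probability $\geq 1-\delta$ the enumerated sets agree across neighbors.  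I expect this last step to be the main obstacle, since one has to simultaneously ensure that (i) every $k$-cut within the target $O(k\log n/\epsilon)$ additive window is present in the enumerated set with high probability, (ii) the candidate sets for adjacent inputs coincide outside a $\delta$-event so the exponential mechanism on the restricted domain remains $(\epsilon,\delta)$-DP, and (iii) neither the running time nor the final additive error is inflated by the added slack.
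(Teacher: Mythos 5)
Your skeleton is right — exponential mechanism over $k$-partitions, augment the graph to make the min $k$-cut large enough that the $n^{O(\alpha k)}$ bound on near-minimum $k$-cuts (Chekuri et al.\ / Gupta--Harris--Lee--Li) kills the tail, then restrict the domain for efficiency — and your tail calculation is essentially the one the paper does. But your augmentation step has a genuine gap. You propose sampling $H$ \emph{independently of $G$} (e.g.\ a sparse random graph) so that $\mathsf{OPT}_k(G+H)\geq\tau$ while the increase in $k$-cut cost is only $O(\tau)$. These two requirements cannot be met simultaneously by a data-independent $H$. To force $\mathsf{OPT}_k(G+H)\geq\tau$ for \emph{all} $G$ (including $G$ a disjoint union of $k$ cliques, where $\mathsf{OPT}_k(G)=0$ and the only cheap $k$-cut in $G$ is the balanced clique partition) you need $\mathsf{OPT}_k(H)\geq\tau$, which forces $H$ to have edge connectivity $\Omega(\tau/k)=\Omega(\log n/\epsilon)$; but then the balanced $k$-partition crosses $\Theta(n\log n/\epsilon)$ edges of $H$, so $\mathsf{OPT}_k(G+H)-\mathsf{OPT}_k(G)=\Theta(n\log n/\epsilon)\gg\tau$. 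The ``adding back the $O(\tau)$ cost'' step therefore fails. The paper avoids this by choosing the augmentation amount \emph{privately and adaptively}: it fixes a nested chain $H_0\subset H_1\subset\cdots\subset H_{\binom{n}{2}}$, and spends part of the privacy budget on an exponential mechanism over the index $i$ with score $\bigl|\mathsf{OPT}_k(G\cup H_i)-4s(k)/\epsilon\bigr|$, which guarantees both that the augmented optimum exceeds the threshold and that it does not overshoot $\mathsf{OPT}_k(G)$ by more than $O(s(k)/\epsilon)$. This data-dependent index selection is the missing ingredient; crediting it to ``$H$ is independent of $G$, so post-processing'' is exactly the wrong intuition.

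On the efficient $(\epsilon,\delta)$ version, your propose--test--release plan is workable in spirit but more elaborate than what is needed. The paper instead runs the exponential mechanism restricted to the set $S_\alpha$ of $\alpha$-approximate $k$-cuts enumerated via Gupta--Harris--Lee--Li with $\alpha=1+\tfrac{k}{k-1}$, and shows by a coupling argument that this truncated mechanism is within total variation distance $O(1/n^2)$ of the exact exponential mechanism on every input: the sampled cut lies outside $S_\alpha$ with probability $O(1/n^2)$, because the augmentation and the tail bound together ensure the chosen cut is cheap enough to be enumerated. Statistical closeness to an $\epsilon$-DP mechanism then yields $(\epsilon,\delta)$-DP directly, without needing to argue that neighboring inputs produce the same candidate set. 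You flagged (ii) in your own plan as the main obstacle; the coupling route sidesteps it entirely.
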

Our approach is also optimal in terms of the error it achieves, even up to logarithmic factors. Specifically, we complement our upper bound with the following tight lower-bound result:
\begin{theorem}[Informal]\label{thm:result_lower_bound_k_cut}
   Fix any $2\leq k \lesssim n^{1/2}$. Any $(\epsilon,0)$-differentially private algorithm for approximating  $k$-cut on $n$-vertex graphs has expected error  $\Omega(k \log n/\epsilon)$.
\end{theorem}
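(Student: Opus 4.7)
The plan is a direct-sum argument that reduces the $k$-cut lower bound to $k/2$ independent private minimum $2$-cut lower bounds on disjoint gadgets, using the $\Omega(\log m /\epsilon)$ lower bound of Gupta et al.~\cite{gupta2010differentially} for private min-cut on $m$-vertex graphs as the base case. First I would construct a hard instance as a disjoint union of gadgets: set $m = \lfloor 2n/k\rfloor$ and take $G$ to be the disjoint union of $k/2$ gadgets $H_1,\ldots,H_{k/2}$, each on $m$ vertices, where $H_j$ is a structured graph (for instance, a lightly-weighted cycle $C_m$ with many tied minimum $2$-cuts, which admits the known $\Omega(\log m /\epsilon)$ lower bound for private $2$-cut via a packing argument). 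Since $G$ has exactly $k/2$ connected components, any valid $k$-cut of $G$ corresponds to a distribution $(s_1,\ldots,s_{k/2})$ with $\sum_j s_j = k$ and $s_j \ge 1$, so that $\sum_j (s_j-1) = k/2$.

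Next I would observe that restriction preserves differential privacy: for any $(\epsilon,0)$-DP algorithm $\mathcal{A}$ returning a $k$-cut of $G$, the restricted algorithm $\mathcal{A}|_{H_j}$ that outputs the induced partition of $V(H_j)$ is itself $(\epsilon,0)$-DP with respect to edge-weight perturbations of $H_j$, because any such perturbation is also a perturbation of $G$ with the same $\ell_1$-sensitivity, and post-processing preserves DP. Moreover, because $G$ has no inter-gadget edges, the cost decomposes additively as $\mathrm{cost}_G(P) = \sum_j \mathrm{cost}_{H_j}(P|_{H_j})$, and the additive error of $\mathcal{A}$ on $G$ is at least the sum of the per-gadget additive errors, each measured against the optimal $s_j$-cut of the corresponding gadget.

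Then I would apply a packing-based per-gadget lower bound: for a well-chosen $H_j$, a packing argument over the $\binom{m-1}{s-1}$ candidate minimum $s$-cuts gives expected additive error $\Omega\!\bigl((s_j-1)\log(m/s_j)/\epsilon\bigr)$ on $\mathcal{A}|_{H_j}$. Taking expectations and summing across gadgets, using $\sum_j (s_j-1) = k/2$, yields
\[
\mathbb{E}[\mathrm{error}(\mathcal{A})] \;\geq\; \Omega\!\left(\tfrac{k}{2}\cdot\tfrac{\log(n/k)}{\epsilon}\right) \;=\; \Omega\!\left(\tfrac{k\log n}{\epsilon}\right),
\]
where the final step uses $k \lesssim n^{1/2}$ so that $\log(n/k) \ge \tfrac{1}{2}\log n$.

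The main obstacle I expect is establishing the per-gadget lower bound uniformly over the distribution $\vec{s}$ chosen by $\mathcal{A}$. The packing argument gives a clean bound when each $s_j$ is close to $2$, but for highly imbalanced distributions (where many $s_j = 1$ contribute no per-gadget error, compensated by a single large $s_{j^*}$) one must carefully use the $\binom{m-1}{s_j-1}$-dependence together with the identity $\sum_j(s_j-1)=k/2$ so that no factor of $k$ is lost. A clean way to sidestep this subtlety is to design each $H_j$ with strong ``concavity'' so that its minimum $s$-cut for $s>2$ substantially exceeds twice its minimum $2$-cut; this forces $\mathcal{A}$ to output $s_j=2$ for every $j$ (else incurring extra cost that already dominates the claimed bound), after which the per-gadget bound reduces cleanly to the classical private min $2$-cut lower bound applied to each restriction.
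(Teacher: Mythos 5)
Your approach (a direct-sum reduction to $k/2$ independent $2$-cut instances on disjoint gadgets) is genuinely different from the paper's proof, which constructs a single path-of-cliques graph $G_0$, considers the family $\{G_S\}$ obtained by deleting the $(k-1)$ bridges indexed by $S \subseteq [\ell]$, uses a Frankl-type lemma to find exponentially many sets $S$ with pairwise small intersections, and then runs a packing argument: any fixed $k$-partition is good for at most one $G_S$, and group privacy forces the probabilities to sum to more than $1$, a contradiction. Your decomposition into per-gadget errors, the observation that the restriction $\mathcal{A}|_{H_j}$ is $\epsilon$-DP, and the additivity of cost across gadgets are all correct as far as they go.

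However, there is a quantitative gap in your argument that the paper's construction is specifically engineered to avoid, and which your proposed ``concavity'' fix does not repair. The Gupta et al.\ style per-gadget lower bound is an averaging (pigeonhole) argument: on the comparison regular graph $\tilde H_j$, some vertex $v_j$ has small probability of being isolated, and group privacy transfers that to the perturbed gadget $H_j^{v_j}$. But the per-gadget pigeonhole is over the $m$ vertices of the gadget, and the relevant quantity is $\mathbb{E}[\#\text{singleton parts of } P|_{H_j}]$. Since $P$ is a $k$-partition of the full graph, the induced partition $P|_{H_j}$ can have up to $k$ parts, so the averaging only gives you a vertex with isolation probability $O(k/m)$ rather than $O(1/m)$. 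After the $e^{\epsilon d}$ group-privacy factor, this forces $d = O\!\big(\log(m/k)/\epsilon\big) = O\!\big(\log(n/k^2)/\epsilon\big)$ rather than $O(\log m/\epsilon)$, and the final bound degrades to $\Omega\!\big(k\log(n/k^2)/\epsilon\big)$. This matches $\Omega(k\log n/\epsilon)$ only when $k \le n^{1/2 - \Omega(1)}$; near the boundary $k \approx n^{1/2}/\log n$ (which the paper handles) your bound collapses to $\Omega(k\log\log n/\epsilon)$. The concavity trick you propose addresses a different issue (penalizing imbalanced $\vec s$), not this dilution: even if all gadgets are forced to satisfy $s_j = 2$, the algorithm on the \emph{comparison} instance $\tilde G_j$ is under no constraint and may legitimately place $\Theta(k)$ singletons inside gadget $j$, so the $k$-factor in the pigeonhole cannot be removed by gadget design alone. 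The paper avoids this entirely because its packing is over $\gtrsim \ell^{k/12}$ whole $k$-cuts, not over $O(m)$ single vertices per gadget, so the group-privacy cost $\Theta(kd)$ is paid once against an exponentially large candidate set rather than $k/2$ times against a linear-size set.

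Two secondary issues: (i) a cycle $C_m$ does not give the Gupta et al.\ lower bound, since its $2$-cuts are all of cost $2$ and there is no gap between singleton cuts and other cuts; you need a $d$-regular graph with $d = \Theta(\log m/\epsilon)$, as in their construction. (ii) The choice of $v_j$ in the pigeonhole for gadget $j$ depends on the other gadgets' $v_{j'}$, so the selection must be done carefully (e.g.\ by a probabilistic argument with a Markov bound over a random $v_j$); this is fixable but not free.
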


Combining Theorem \ref{thm:result_kcut_upper} and Theorem \ref{thm:result_lower_bound_k_cut} answers Question 3. Unfortunately, our pure DP algorithm is inefficient, and our approximate DP algorithm is inefficient when $k = \omega(1)$ (which is unsurprising since $k$-cut is NP-hard when $k = \omega(1)$).  So we complement Theorem~\ref{thm:result_kcut_upper} with an efficient algorithm with sub-optimal bounds. 
\begin{theorem}[Informal]\label{thm:result_efficient}
For any weighted graph $G$,
  there exists a polynomial time $(\epsilon,\delta)$-differentially private algorithm for  $k$-cut which outputs a solution with cost at most $2 \cdot \mathsf{OPT}+\tilde{O}(k^{1.5}/\epsilon)$, 
  where   $\mathsf{OPT}$ is the value of the minimum  $k$-cut.
\end{theorem}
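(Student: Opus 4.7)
The plan is to privatize the Saran--Vazirani greedy-splitting approximation for minimum $k$-cut, which is a non-private $(2-2/k)$-approximation. Starting from the trivial partition $P = \{V\}$, in each of $k-1$ iterations the algorithm computes the global minimum cut of $G[C]$ for every component $C \in P$, selects the component whose min cut is smallest, and splits it along that cut. Crucially, after a split only the two newly created components need fresh min-cut computations since the other components' subgraphs are unchanged, so the whole run makes only $2k-1 = O(k)$ min-cut calls in total.

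To privatize, I would instantiate each min-cut call with the private global min-cut mechanism of~\cite{gupta2010differentially}, which returns a cut of cost at most $\mathsf{OPT}_{G[C]} + O(\log n/\epsilon_0)$ under pure $\epsilon_0$-edge-DP in polynomial time (and extends to weighted inputs with integer weights). The argmin over components is a function of already-private values, hence post-processing and free. By advanced composition over $O(k)$ sub-calls, setting $\epsilon_0 = \Theta(\epsilon/\sqrt{k\log(1/\delta)})$ yields overall $(\epsilon,\delta)$-DP with per-call additive error $\tilde{O}(\sqrt{k}/\epsilon)$.

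For utility, I would adapt the Saran--Vazirani charging argument to approximate cuts. The core combinatorial fact is that at every step, for any component $C$ of the current partition that contains an intact optimal part $V_j$, the min cut of $G[C]$ is at most $|\delta(V_j)|$ (since $(V_j, C \setminus V_j)$ is a valid cut of $G[C]$). A pigeonhole argument over the $k-1$ steps then bounds the sum of the true min cuts selected by the greedy by $2(1-1/k)\,\mathsf{OPT}$. Since each private cut exceeds its component's true min cut by at most $\tilde{O}(\sqrt{k}/\epsilon)$, summing over $k-1$ steps gives total additive error $\tilde{O}(k^{1.5}/\epsilon)$, yielding a $k$-cut of cost at most $2(1-1/k)\,\mathsf{OPT} + \tilde{O}(k^{1.5}/\epsilon) \le 2\,\mathsf{OPT} + \tilde{O}(k^{1.5}/\epsilon)$.

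The main obstacle is verifying that the charging is robust to the possibly suboptimal greedy choices induced by noisy min-cut estimates. The plan is to avoid analyzing which component the private greedy actually selects: instead one maintains the invariant that, regardless of prior splits, at every step there is still an intact optimal part whose boundary upper-bounds the current step's true min cut. The sum of these boundaries across the $k-1$ steps can then be charged to $\sum_i |\delta(V_i)| = 2\,\mathsf{OPT}$ minus the heaviest boundary, which recovers the $(2-2/k)$ multiplicative factor independently of the greedy's selection rule.
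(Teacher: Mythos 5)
Your high-level plan is the same as the paper's: privatize the Saran--Vazirani SPLIT procedure by replacing each min-cut call with the private min-cut mechanism of Gupta et al., split the budget via advanced composition over $O(k)$ calls, and pay $\tilde{O}(\sqrt{k}/\epsilon)$ per step for a total of $\tilde{O}(k^{1.5}/\epsilon)$. However, there are two concrete problems with the proposal as written. First, the selection step is not free post-processing. The private min-cut mechanism releases a vertex partition, not a private estimate of its cost; the \emph{size} of that cut on the actual graph still depends on the private edge set, so choosing the component whose ``min cut is smallest'' by comparing these sizes leaks additional information. You need to either add a Laplace estimate of each cut size (costing an extra $O(k)$ composition terms, same asymptotics) or, as the paper does, run the exponential mechanism directly over the per-component true min-cut values to choose which component to split, and only then invoke the private min-cut mechanism on the chosen component. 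Second, the utility invariant you propose --- ``at every step there is still an intact optimal part whose boundary upper-bounds the current step's true min cut'' --- is false in general. A single split can shatter \emph{every} optimal part $V_1,\dots,V_k$ simultaneously (e.g.\ one hyperplane-like cut that intersects each $V_j$), so after one private step there may be no intact $V_j$ left, and the pigeonhole charging collapses. The paper's analysis avoids this trap: it builds a fixed comparison sequence $b_1,\dots,b_{k-1}$ of (possibly repeated) Gomory--Hu-style cuts, sorted by weight and with repetitions tracking the number of new connected components each creates, and proves $w(p_i)\le w(b_i)+O(\log n/\epsilon_0)$ by a casework induction on whether the private cuts $p_1,\dots,p_{i-1}$ have already produced more than $i$ connected components. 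That step-by-step comparison against an externally defined cut sequence --- rather than against intact optimal parts --- is what makes the $2\cdot\mathsf{OPT}$ bound survive the noisy, possibly over-splitting choices of the private mechanism.
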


We remark that unlike multiway cut, the $2$-approximation for the minimum $k$-cut is essentially optimal~\cite{manurangsi2017inapproximability}, assuming the Small Set Expansion Hypothesis (SSEH). Including the results of this paper, we summarize the current known bounds for the private minimum $k$-cut problem in Table \ref{tab:k-cut}.

\begin{table*}[h]
    \centering
    \caption{Current known results for the private minimum  $k$-cut problem.}\label{tab:k-cut}
    \begin{tabular}{|c|c|c|c|c|c|}
        
        \hline
        Value of $k$ &  \makecell[c]{Category} &  \makecell[c]{Reference} & \makecell[c]{Privacy} & Error rate & Efficient?\\
        \hline 

        \multirow{2}*{$k = 2$ }& Upper bound & \cite{gupta2010differentially}  &  $(\epsilon,\delta)$-DP & $O(\log n/\epsilon)$ & Yes \\
        \cline{2-6}
        ~& Lower bound & \cite{gupta2010differentially} &$(\epsilon,0)$-DP & $\Omega(\log n/\epsilon)$ & --- \\
        \hline
       \multirow{3}*{$k \geq 3$ }&  Upper bound & Theorem \ref{thm:result_kcut_upper}  &  $(\epsilon,\delta)$-DP & $O(k\log n/\epsilon)$ &  \makecell[c]{Yes\\(for constant $k$)} \\
        \cline{2-6}
        ~ &    Lower bound & Theorem \ref{thm:result_lower_bound_k_cut} &$(\epsilon,0)$-DP & $\Omega(k\log n/\epsilon)$ & --- \\
        \cline{2-6}
         ~ & Upper bound & Theorem \ref{thm:result_efficient}  &  $(\epsilon,\delta)$-DP & $2\mathsf{OPT} + \tilde{O}(k^{1.5}/\epsilon)$ & Yes \\

        \cline{2-6}
  \hline
    \end{tabular}
  \end{table*}


\section{Preliminaries} \label{sec:prelims}

Here, we give the formal definition of edge-level differential privacy. 

\begin{definition}[Edge-level differential privacy]\label{def:dp}
    Let $\mathcal{A}:\mathcal{D}\rightarrow \mathcal{R}$ be a randomized algorithm, where $\mathcal{R}$ is the output domain. For fixed $\epsilon >0$ and $\delta\in [0,1)$, we say that $\mathcal{A}$ preserves $(\epsilon,\delta)$-differential privacy if for any measurable set $S\subseteq \mathcal{R}$ and any pair of neighboring graphs $G,G'\in \mathcal{D}$, 
    $$\mathbf{Pr}[\mathcal{A}(G)\in S] \leq \mathbf{Pr}[\mathcal{A}(G')\in S]\cdot e^{\epsilon} + \delta.$$
    If $\delta = 0$, we also say $\mathcal{A}$ preserves pure differential privacy (denoted by $\epsilon$-DP). Here, $G=(V,E,w)$ and $G'=(V,E',w')$ with $w,w' \in \mathbb R_{+}^{N}$ are {\em neighboring} if $\| w - w'\|_0 \leq 1$ and $\| w- w'\|_\infty \leq 1$. That is, $G$ and $G'$ differ in one edge by weight at most {\em one}. 
\end{definition}
Important properties of differential privacy that we use include post-processing, adaptive and advanced composition, and the Laplace and exponential mechanisms.  Since these are standard, we defer their discussion to Appendix~\ref{app:DP}. 

Given a graph $G = (V, E, w)$ and a partition $\mathcal S = (S_1, S_2, \dots, S_k)$ of $V$, let $\delta(\mathcal S) = \{e \in E : \text{endpoints of $e$ are in different parts of } \mathcal S\}$. The problems we study include Multiway Cut, Minimum k-Cut, Multicut and Max-Cut.  
They are formally defined as follows (we note that as for most cut problems there are equivalent definitions in terms of edges and node partitions; due to edge-level privacy, we will use the node partition-based definitions).  


\begin{definition}
    In the \emph{Multiway Cut} problem, we are given a weighted graph $G = (V, E, w)$ and a collection of terminals $T = \{t_1, t_2, \dots, t_k\} \subseteq V$.  A feasible solution is a partition $\mathcal S$ of $V$ into $k$ parts $(S_1, S_2, \dots, S_k)$ such that $t_i \in S_i$ for all $i \in [k]$.   The cost of a partition is  $\sum_{e \in \delta(\mathcal S)} w(e)$, and we want to find the feasible solution of minimum cost.
\end{definition}

\begin{definition}
    In the \emph{Minimum $k$-Cut} problem (or just \emph{$k$-cut}), we are given a weighted graph $G = (V, E, w)$.  A feasible solution is a partition $\mathcal S$ of $V$ into $k$ parts $(S_1, S_2, \dots, S_k)$.  The cost of a partition is  $\sum_{e \in \delta(\mathcal S)} w(e)$, and we want to find the feasible solution of minimum cost.
\end{definition}

\begin{definition}
    In the \emph{Multicut} problem, we are given a weighted graph \( G = (V, E, w) \) and a collection of pairs of vertices \( (s_1, t_1), (s_2, t_2), \dots, (s_k, t_k) \) where \( s_i, t_i \in V \) for all \( i \in [k] \). 
    A feasible solution is a partition \( \mathcal{S} = (S_1, S_2, \ldots, S_l) \) of the vertex set \( V \) such that for each pair \( (s_i, t_i) \) for all \( i \in [k] \), it holds that \( s_i \in S_j \) and \( t_i \in S_{j'} \) for some \( j, j' \in [l] \) with \( j \neq j' \). The cost of a partition is \( \sum_{e \in \delta(\mathcal{S})} w(e) \), and we seek to find a feasible solution of minimum cost.
\end{definition}

\begin{definition}
    In the \emph{Max-Cut} problem, we are given a weighted graph \( G = (V, E, w) \). A feasible solution is a partition \( \mathcal{S} \) of \( V \) into two disjoint subsets \( S_1 \) and \( S_2 \). The objective is to maximize the total weight of the edges crossing the cut, defined as \( \sum_{e \in \delta(\mathcal{S})} w(e) \), where \( \delta(\mathcal{S}) = \{ e \in E : e \text{ connects a vertex in } S_1\\
    \text{ to a vertex in } S_2 \} \).
\end{definition}

\section{The Shifting Mechanism and Its Analysis}\label{sec:shifting}
In this section, we introduce our general framework for differentially private combinatorial optimization, and provide its privacy analysis.
\subsection{Problem formalization}
Let $x\in \mathcal{X}^N$ be the dataset, where $\mathcal{X}\subseteq \mathbb{R}$ is the data domain. The goal is to privately output $f(x)$ for a function $f: \mathcal{X}^N \rightarrow \mathcal{R}$. Two datasets are neighboring if and only if they have a difference on one coordinate by one. In all of our applications, $\mathcal{X} = \mathbb{R}_{\geq 0}$ and it represents the edge weight in a $n$-vertex graph where $N = {n\choose 2}$. In this case, the privacy notion then corresponds to edge-level privacy. Notably $f(x)$ is usually a structure rather than a value. For instance, consider the case where $\mathcal{X}^N$ denotes a collection of graphs with a unique min-cut, and $\mathcal{R}$ represents the set of all $(S,V\backslash S)$ cuts in an $n$-vertex graph. Then, $f(x)$ can be defined as the unique min-cut of the graph encoded by $x$. 

For any $x\in \mathcal{X}^N$ and $S\subseteq [N]$, we use $x|_S$ to denote the vector in $\mathcal{X}^S$, whose values are taken from the coordinates of $x$ specified by $S$. Given any $S$, we may also write $f(x)$ as $f(x|_S, x|_{[N]\backslash S})$. Consider any function $f: \mathcal{X}^N \rightarrow \mathcal{R}$. The following definition characterizes the ``hardness'' of approximating $f(x)$ in terms of our shifting mechanism:
\begin{definition}\label{def:dominating_set}
Fix any $S\subseteq [N]$ and $s\geq 0$, a function $f:\mathcal{X}^N \rightarrow \mathcal{R}$ admits a dominating set $S$ of sensitivity $s$ if for any pair of neighboring datasets $x,x'\in \mathcal{X}^N$, there exists a vector $a\in \mathcal{X}^S$ with $\|a\|_1 \leq s$ such that
\begin{itemize}
    \item $a$ is a function of $f(x)$ and $x'-x$.
    \item $f(x'|_S + a, x'|_{[N]\backslash S}) = f(x).$
\end{itemize}
\end{definition}
\noindent Note that we ask the vector $a\in \mathcal{X}^S$ must \textit{only} depend on the value of $f(x)$ and $x-x'$, instead of $x$ or $x'$ themselves. Intuitively, saying that a function admits a dominating set $S$ means that if we replace the input $x$ by a neighboring input $x'$, we can always shift $x'$ only in $S$ (by adding a ``correction'' vector $a$) based on (1) the value of $f(x)$ and (2) how $x$ is altered, ensuring that the output remains unchanged at $f(x)$. Further, $S$ is oblivious to the input.

Definition \ref{def:dominating_set} provides a new paradigm for input perturbation in differential privacy. That is, only perturbing values in the dominating set and computing $f$ on such noisy input, instead of perturbing every coordinate of $x$ and computing $f$ as post processing. In this sense, the difficulty of approximating $f(x)$ is captured by the quality of the dominating set that can be identified for $f$. 

To further elaborate this, we give some examples. According to the privacy notion, just letting $a = x-x'$ yields the fact that every function $f$ has a trivial dominating set $S = [N]$ of sensitivity $1$. In this case, adding noise in the dominating set is equivalent to adding independent noises on each coordinate, which preserves privacy for any $f$ by post-processing. Another extreme example is when $f$ is the averaging function where $f(x) = \frac{1}{N}\sum_{i=1}^N x^{(i)}$. Then, it is easy to verify that each single coordinate $\{i\}\subseteq [N]$ is a dominating set of sensitivity $1$. So we may as well let $S$ be the first coordinate. Clearly adding noise only on $S$ does not yields a private dataset, as a neighboring dataset could have a difference anywhere else. However, the final output is still $f(x) + \frac{1}{N}\text{Lap}(1/\epsilon)$, which is private according to the basic Laplace mechanism. 

In section \ref{sec:app_stcut}, \ref{sec:app_multicut}, and \ref{sec:app_maxcut}, we will provide more examples of function $f$ within the context of combinatorial optimization (e.g. outputting the unique max-cut of the input graph), in which finding the dominating set becomes non-trivial.

\subsection{The shifting mechanism}

In this section, we present our simple mechanism that utilizes the dominating set of a function $f$ and analyze its privacy. Suppose the function $f$ admits a dominating set $S$ of sensitivity $s$.

\begin{center}
    \begin{tcolorbox}[=sharpish corners, colback=white, width=1\linewidth]\label{alg:shifting}

        \begin{center}
      \textbf{\emph{Algorithm \ref*{alg:shifting}: The Shifting Mechanism}}
        \end{center}
        \vspace{6pt}
    \begin{enumerate}
  \item \textbf{Input:} A dataset $x\in \mathcal{X}^N$, privacy budget $\epsilon > 0$.
  \item Sample i.i.d Laplace noise $\{X_i\}_{i\in [S]}$, where $\forall i, X_i \sim \text{Lap}(s/\epsilon)$.
  \item Construct $\tilde{x}$ where for each $i\in S$, $\tilde{x}^{(i)} = x^{(i)} + X_i$, otherwise $\tilde{x}^{(i)} = x^{(i)}$.
  \item \textbf{Output:} $f(\tilde{x})$.
\end{enumerate}
  \end{tcolorbox} 
  \end{center}

\noindent We will show that Algorithm \ref{alg:shifting} preserves $\epsilon$-differential privacy, and thus as a meta-algorithm, it gives a new input perturbation framework for private problems: 

\begin{center}
    \begin{tcolorbox}[=sharpish corners, colback=white, width=1\linewidth]
        \begin{enumerate}
    \item Formalize the problem into a function $f$;
    \item Find a dominating set $S$ of $f$;
    \item Add noise in $S$ to generate a (non-private) synthetic dataset;
    \item Run $f$ on the (non-private) synthetic dataset, and output the (private) solution.
        \end{enumerate}
  \end{tcolorbox} 
  \end{center}

In Section \ref{sec:app_shifting}, we introduce several simple applications of the framework we proposed above, including the three examples in the context of private cut problems mentioned in Section \ref{sec:results_shifting}. Next, we present the privacy guarantee of Algorithm \ref{alg:shifting}. The privacy analysis of this algorithm is also as simple as it appears to be.

\begin{theorem}\label{thm:privacy_shifting}
If the function $f$ admits a dominating set $S$ of sensitivity $s$, then Algorithm \ref{alg:shifting} preserves $(\epsilon,0)$-differential privacy.
\end{theorem}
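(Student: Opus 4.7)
The plan is to couple the Laplace noise used on input $x$ with the noise on a neighboring input $x'$ so that both executions of Algorithm~\ref{alg:shifting} produce the same output, at the cost of a small shift whose $\ell_1$-norm is controlled by $s$. The core observation is that for any realization $X$ of the noise, the ``noisy datasets'' $(x|_S + X,\; x|_{[N]\setminus S})$ and $(x'|_S + X,\; x'|_{[N]\setminus S})$ still differ on exactly the one coordinate where $x$ and $x'$ differ, and by the same amount. Hence they are themselves a neighboring pair, and Definition~\ref{def:dominating_set} supplies a shift $a$ with $\|a\|_1 \leq s$ such that
\[
f(x'|_S + X + a,\; x'|_{[N]\setminus S}) \;=\; f(x|_S + X,\; x|_{[N]\setminus S}).
\]
Crucially, $a$ depends only on the common output value and on $x' - x$, not on $X$ itself.

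From this I would derive the privacy bound pointwise in the output. Fix any $r \in \mathcal{R}$ and let $A_r = \{z : f(x|_S + z,\, x|_{[N]\setminus S}) = r\}$ and $A'_r$ be the analogous set for $x'$. The coupling above gives $A_r + a_r \subseteq A'_r$ for a single vector $a_r$ (depending only on $r$ and the fixed pair $(x,x')$) with $\|a_r\|_1 \leq s$. Combining this inclusion with the standard translation bound on iid $\mathrm{Lap}(s/\epsilon)$ densities yields
\[
\Pr{X \in A_r} \;\leq\; \Pr{X + a_r \in A'_r} \;\leq\; e^{\|a_r\|_1 \epsilon / s}\,\Pr{X \in A'_r} \;\leq\; e^{\epsilon}\,\Pr{X \in A'_r},
\]
which is the pointwise bound $\Pr{f(\tilde x) = r} \leq e^\epsilon \Pr{f(\tilde x') = r}$. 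Since the output range $\mathcal{R}$ is countable in all combinatorial applications considered, summing over $r \in T$ for any event $T \subseteq \mathcal{R}$ gives $\Pr{f(\tilde x) \in T} \leq e^\epsilon \Pr{f(\tilde x') \in T}$, which is $(\epsilon, 0)$-DP.

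The main obstacle is that the shift $a_r$ depends on the output $r$, so one cannot simply apply a single global translation-invariance argument to the joint Laplace noise distribution. This is why the analysis must proceed pointwise in $r$ before summing, an operation that is valid precisely because $\mathcal{R}$ is discrete for combinatorial problems. A minor technical wrinkle is that the ``shifted'' dataset $(x|_S + X,\; x|_{[N]\setminus S})$ may have coordinates outside $\mathcal{X} = \mathbb{R}_{\geq 0}$; I would handle this by interpreting Definition~\ref{def:dominating_set} as applied to the natural extension of $f$ and of the neighboring relation to real-valued perturbations, which is how the definition is instantiated in each application in Section~\ref{sec:app_shifting}.
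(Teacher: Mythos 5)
Your proof is correct and follows essentially the same approach as the paper: fix an output value, use the dominating-set property to produce an output-dependent but noise-independent shift $a$ yielding a containment between the two noise preimage sets, and then conclude via the Laplace translation bound applied pointwise in the output. The only cosmetic difference is which set the translation inequality is applied to (you shift into the preimage for $x'$, the paper shifts out of the preimage for $x$); your explicit remark about extending $f$ and the neighboring relation to noise-perturbed inputs is a genuine subtlety that the paper leaves implicit.
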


\begin{proof}
We note that the randomness of $f(\tilde{x})$ entirely comes from the randomness of the noise. That is, given any dataset $x\in \mathcal{X}^N$, the output of Algorithm \ref{alg:shifting} is determined by the value of the noises. Fix $x$ and $x'$ be any pair of neighboring datasets such that $\|x-x'\|_1\leq 1$ and $\|x-x'\|_0\leq 1$. Let the output $f(\tilde{x}) =\mathcal{A}(x)$, it suffices to show that
\begin{align}\label{eq:dp}
    \forall y\in \mathcal{R}, \Pr{\mathcal{A}(x) = y} \leq e^{\epsilon}\Pr{\mathcal{A}(x') = y}.
\end{align}
For any possible $y\in \mathcal{R}$, we define $\mathcal{S}_y\subseteq \mathbb{R}^{|S|}$ be a measurable set where\footnote{For the sake of simplicity, we consider the case where $\mathcal{R}$ is discrete. In the case where $\mathcal{R}$ is not, one could discretize it into sufficiently small intervals to ensure a small loss of accuracy.} 
$$\mathcal{S}_y = \{Z\in \mathbb{R}^{|S|}: f(x|_{S} + Z, x|_{[N]\backslash S}) = y \}.$$
Intuitively, $\mathcal{S}_y$ is the collection of noises that makes the output be $f(\tilde{x}) = y$. Similarly, we define $\mathcal{S}'_y = \{Z\in \mathbb{R}^{|S|}: f(x'|_{S} + Z, x'|_{[N]\backslash S}) = y \}$. We note that both $\mathcal{S}_y$ and $\mathcal{S}_{y'}$ can be empty. For any value of noises $Z\in \mathcal{S}_y$, let 
$$\hat{x} = (x|_S + Z, x|_{[N]\backslash S})$$
and $\hat{x}' = (x'|_S + Z, x'|_{[N]\backslash S})$. According to the fact that $S$ is a dominating set of $f$ with sensitivity $s$, by Definition \ref{def:dominating_set}, for all such $\hat{x}$ and $\hat{x}'$, there exists correction coefficients $a\in \mathbb{R}^S$ with $\|a\|_1\leq s$ such that
\begin{itemize}
    \item $a$ is decided by $f(\hat{x}) = y$ and $\hat{x} - \hat{x}' = x-x'$, and
    \item $f(x'|_S + Z + a, x'|_{[N]\backslash S}) = f(x|_{S} + Z, x|_{[N]\backslash S}) = y.$
\end{itemize}
Then, by the definition of $\mathcal{S}_y',$ it is equivalent to saying that for any $Z\in \mathcal{S}_y$, we have that $Z+a \in \mathcal{S}_y'$, and $a$ does not depend on the value of $Z$. Therefore, we have $\mathcal{S}_y + a \subseteq \mathcal{S}'_y$, and thus
\begin{align}\label{eq:step1}
    \Pr{\mathcal{A}(x') = y} = \Pr{Z\in \mathcal{S}_{y}'} \geq \Pr{Z\in \mathcal{S}_y+a}.
\end{align}
On the other hand, by the basic Laplace mechanism (Lemma \ref{lem:laplace}) and that $\|a\|_1 \leq s$, we also have
\begin{align}\label{eq:step2}
    \Pr{\mathcal{A}(x) = y} = \Pr{Z\in \mathcal{S}_{y}} \leq e^\epsilon\Pr{Z\in \mathcal{S}_y+a}.
\end{align}
We note that $\mathcal{S}_y$ and $\mathcal{S}_y'$ does not depend on the randomness of the noise, so the Laplace mechanism can be applied. Combining \Cref{eq:step1} and \Cref{eq:step2} completes the proof.
\end{proof}

\section{Private Simplex Embedding and Multiway Cut} \label{sec:multiway}

Our primary application of the shifting mechanism is to Multiway Cut. 
This is perhaps surprising, since the shifting mechanism as described in Section~\ref{sec:shifting} requires \emph{exact} algorithms for the underlying problem, which obviously do not exist for NP-hard problems such as Multiway Cut.  But by combining the shifting mechanism with nonprivate \emph{approximation} algorithms of Multiway Cut, and doing some extra analysis on how the particular algorithms fit together, we are able to get both privacy and performance comparable to the non-private setting.  

Specifically, we start from the observation that the state of the art approximation algorithms for Multiway Cut rely on rounding a particular LP relaxation due to~\cite{DBLP:journals/jcss/CalinescuKR00}.  So instead of applying the shifting mechanism to Multiway Cut, we instead apply it to find a private solution to this linear program.  Then we simply use the best-known rounding~\cite{sharma2014multiway} on our private solution, and the post-processing guarantees of differential privacy imply privacy with no extra loss. 

So in this section, we apply this framework to get the following theorem:

\begin{theorem}\label{thm:main_multiway_cut}
  There exists a polynomial time $(\epsilon,0)$-differentially private algorithm such that on any input weighted graph $G$, it outputs a solution $x\in \{1,2,\cdots, k\}^n$ on multiway cut of $G$ such that $$\mathcal{E}(x) = (\approx 1.3)\mathsf{OPT}^T(G) + \tilde{O}(n{k}/\epsilon),$$
  where $\mathsf{OPT}^T(G)$ is the optimal value of  multiway cut on $G$ with respect to terminals $T$, and $\mathcal{E}(x)$ is the sum of edge weights across the vertex partition defined by $x$. 
\end{theorem}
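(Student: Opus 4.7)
My plan is to apply the shifting mechanism of Theorem~\ref{thm:privacy_shifting} to the Calinescu--Karger--Rabani (CKR) simplex embedding LP relaxation of multiway cut, then post-process by the nonprivate Sharma--Vondrak rounding, which delivers a $1.2965$-approximation to the CKR LP optimum. Since the LP optimum lower-bounds $\mathsf{OPT}^T(G)$, this recovers the nonprivate ratio, and by post-processing $(\epsilon,0)$-DP of the LP solver transfers to the full pipeline at no additional cost.

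Let $f(w)$ denote the (canonically unique, via an infinitesimal lexicographic tie-breaker) optimal CKR LP solution on edge-weights $w$, and let $S \subseteq \binom{V}{2}$ be the set of all edges incident to at least one terminal, so $|S| = O(nk)$. The central technical claim is that $S$ is a dominating set of $f$ with sensitivity $O(1)$: for any neighboring $w, w'$ differing by $\pm 1$ on an edge $e = (u,v)$, I must construct a correction $a \in \mathbb{R}^S$ with $\|a\|_1 = O(1)$, depending only on $f(w)$ and $w - w'$, such that $f(w'|_S + a,\, w'|_{[N]\setminus S}) = f(w)$. When $e \in S$ this is trivial (take $a = (w-w')|_e$). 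When $u, v$ are both non-terminals, I would build $a$ from the LP's first-order optimality conditions: a $\delta$-perturbation of $w(u,v)$ shifts the subgradient of the LP objective at $f(w)$ only in the variables $x_u, x_v$, along the directions $\mathrm{sign}(x_{u,i}^\ast - x_{v,i}^\ast)$. This shift can be canceled by weight corrections on the terminal-incident edges $\{(u,t_i),(v,t_i)\}_{i\in[k]}$, with signs dictated entirely by $x_u^\ast, x_v^\ast$, so that KKT stationarity (equal subgradient across the support of $x_u^\ast$) and complementary slackness at $f(w)$ are preserved.

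Given the dominating set, instantiating Theorem~\ref{thm:privacy_shifting} with i.i.d.\ $\mathrm{Lap}(O(1/\epsilon))$ noise on $S$ yields an $(\epsilon,0)$-DP LP solver; Sharma--Vondrak rounding on its output gives $x_\mathrm{priv} \in [k]^n$. For utility, letting $G' = (V,E,w')$ be the noisy graph, I combine (a) $\mathcal{E}_{G'}(x_\mathrm{priv}) \leq 1.2965 \cdot \mathrm{LP}(w')$ from Sharma--Vondrak; (b) $\mathrm{LP}(w') \leq \mathrm{LP}(w) + \|w' - w\|_1 \leq \mathsf{OPT}^T(G) + \|w' - w\|_1$ from $1$-Lipschitzness of $\mathrm{LP}$ in the weight vector and the fact that LP is a relaxation; and (c) $\mathcal{E}_G(x_\mathrm{priv}) \leq \mathcal{E}_{G'}(x_\mathrm{priv}) + \|w' - w\|_1$ from $1$-Lipschitzness of cut values. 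A standard Laplace tail bound yields $\|w' - w\|_1 = \tilde{O}(|S|/\epsilon) = \tilde{O}(nk/\epsilon)$ w.h.p., delivering $\mathcal{E}_G(x_\mathrm{priv}) \leq 1.2965 \cdot \mathsf{OPT}^T(G) + \tilde{O}(nk/\epsilon)$.

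The main obstacle is the dominating set claim in the non-terminal-edge case: the correction $a$ must depend only on $f(w)$ and on the single-coordinate difference $w - w'$ (not on other global information about $w$), must have $O(1)$ $L_1$ norm, and must make $f(w)$ \emph{the} LP optimum on the shifted input rather than merely a feasible near-optimum. This requires a careful KKT / primal--dual analysis of the CKR LP at $f(w)$, together with auxiliary handling of degeneracy, tie-breaking, and keeping the shifted weights non-negative (which we may enforce by adding a negligible uniform base weight to all terminal-incident edges). Once this structural lemma is established, the privacy argument of Theorem~\ref{thm:privacy_shifting} applies verbatim and the utility bound goes through as above.
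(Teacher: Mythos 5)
Your overall architecture — shifting mechanism on the CKR LP with noise only on terminal-incident edges, followed by Sharma--Vondr\'ak rounding as post-processing — is exactly the paper's approach. But there are two quantitative gaps, and each one matters.

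\textbf{The sensitivity is $\Theta(k)$, not $O(1)$.} Your correction vector $a$ must cancel a $\pm 1$ perturbation on $w(u,v)$ using only the $2k$ terminal-incident weights at $u$ and $v$. The paper's Lemma~\ref{lem:minimizes_two_vertices_case} does this by setting $a_{u,i}=-1$, $a_{v,i}=+1$ on the set $U=\{i : d^*_{u,i}>d^*_{v,i}\}$ (after a symmetry reduction so $|U|\le k/2$), giving $\|a\|_1 = 2|U|$, which can be as large as $k$. This is not an artifact of one particular construction: for two fractional points $x^*_u,x^*_v\in\Delta_k$ with disjoint supports of size $k/2$ each, the first-order effect of the perturbation genuinely touches $\Theta(k)$ coordinates, so no correction supported on terminal-incident edges can have $L_1$ norm $o(k)$. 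Consequently the shifting mechanism requires $\mathsf{Lap}(\Theta(k/\epsilon))$ noise, as Algorithm~\ref{alg:simplex_embedding} uses ($b=\sqrt2 k/\epsilon$). With $\mathsf{Lap}(O(1/\epsilon))$ noise as you propose, privacy would fail.

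\textbf{The crude $\|w'-w\|_1$ utility bound is then off by a factor of $k$.} Once you put in the correct scale $b=\Theta(k/\epsilon)$ on $|S|=\Theta(nk)$ edges, a standard tail bound gives $\|w'-w\|_1=\tilde O(nk\cdot k/\epsilon)=\tilde O(nk^2/\epsilon)$, not $\tilde O(nk/\epsilon)$. The paper avoids the extra $k$ by analyzing the \emph{uncut} objective $\sum_{t\in T}z_{t,u}\bigl(2-\|x_t-x_u\|_1\bigr)$ rather than the cut. The crucial identity is Lemma~\ref{lem:sum_invar}: for every $x_u\in\Delta_k$, $\sum_{t\in T}\bigl(2-\|x_t-x_u\|_1\bigr)=2$, a constant independent of $x_u$ and of $k$. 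So for each non-terminal $u$, the total noise weight hitting the uncut is at most $2\max_t|z_{t,u}|=\tilde O(b)$, and summing over the $n$ non-terminals gives $\tilde O(nb)=\tilde O(nk/\epsilon)$ (Lemma~\ref{lem:little_noise_in_uncut}), not $\tilde O(nkb)$. Combined with the observation that optimizing the cut and the uncut are equivalent (their sum is the total edge weight), this recovers the claimed additive error. Your sketch, which bounds the cut error directly from $\|w'-w\|_1$, cannot see this cancellation and would only prove $\tilde O(nk^2/\epsilon)$. Your KKT framing for the dominating-set claim is in the right spirit — the paper's Lemma~\ref{lem:minimizes_two_vertices_case} and Fact~\ref{fac:common_minimum} carry out the same idea concretely — but you need both the correct $\Theta(k)$ sensitivity and the uncut trick to get the stated bound.
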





\subsection{Technical overview}\label{sec:overview_multiway}

Here, we introduce the technical ingredients that underpin our results on private multiway cut and discuss the key concepts behind our analysis of privacy and utility guarantees. Under edge-level privacy, there might be a difference in the edge weights between any pair of vertices. A trivial idea is to add noise between each pair of vertices, which incurs an additive error of $\tilde{\Theta}(n^{1.5})$ for privately solving optimization problems related to cuts, which is usually far from optimal. In particular, the non-efficient exponential mechanism yields additive error of only $O(n \log k)$ (and we provide a matching lower bound Section~\ref{sec:lowerbound_multiway}). To get a solution efficiently, the elegant work by \cite{DBLP:conf/nips/DalirrooyfardMN23} proposes a private algorithm which builds on a well-known 2-approximation algorithm for multiway cut. This approach involves finding the min $s$-$t$ cut at most $O(\log k)$ times. As a result, their algorithm naturally incurs a multiplicative error of $2$ but achieves a nearly tight additive error of $O(n\log^2 k)$.

To break the barrier of 2-approximation for efficient and private multiway cut, the most natural idea is to first relax the problem to finding the optimal \textit{simplex embedding}, which can be formulated as a linear program~\cite{DBLP:journals/jcss/CalinescuKR00}. After this, a randomized rounding procedure is applied as post-processing to obtain an integral solution for the multiway cut. In particular, for any $k\in \mathbb{N}_+$, we define the $k$-simplex to be $\Delta_k := \{x\in \mathbb{R}^k_{\geq 0}: \sum_{i\in [k]} x^{i} = 1\}$. Then, given a weighted graph $G = ([n], E, w)$ and a set of terminals $T\subseteq [n]$ where $T = \{s_1, s_2, \cdots, s_k\}$, we consider the following linear program for embedding vertices into the $k$-simplex (see~\cite{DBLP:journals/jcss/CalinescuKR00} for more discussion):
  \begin{gather*}
    \min_{x\in (\mathbb{R}^k)^n}\quad \frac{1}{2}\sum_{e = \{u,v\}\in E} w(e)\|x_u - x_v\|_1. \\
    \begin{aligned}
    \textup{s.t.}\quad &x_{s_i}  = \mathbf{e}_i, &\forall i\in [k] \\
                       &x_u \in \Delta_k , &\forall u\in [n]. \\
    \end{aligned}
    \end{gather*}
Here, each vertex $u\in [n]$ is attached with a $k$-dimensional vector in $\Delta_k$, and for each terminal $s\in T$, $x_{s}$ is fixed as the $i$-th canonical basis vector $\mathbf{e}_i$. Then, if for every non-terminal vertex $u\in [n]\backslash T$, $x_u$ chooses the integral solution, i.e., $x_u$ is one of $\{\mathbf{e}_1, \mathbf{e}_2, \cdots, \mathbf{e}_k\}$, it is easy to verify that the value of the objective is the size of multiway cut specified by such integral solutions. This relaxation has been extensively studied~\cite{cualinescu1998improved, karger2004rounding, buchbinder2013simplex, sharma2014multiway}, and the best-known rounding scheme yields approximation ratio of approximately $1.3$~\cite{buchbinder2013simplex, sharma2014multiway}.

Therefore, the main step is to privately solve the linear program for simplex embedding while minimizing the additive error. As discussed earlier, applying the results in \cite{hsu2014privately} for general private LP solving incurs an error of $\tilde{O}(n^{1.5})$. 
To achieve a linear dependency on $n$, inspired by the shifting mechanism in Section \ref{sec:shifting}, our approach is to add {Laplace} noise {\em only} between each terminal and non-terminal vertex to achieve pure differential privacy
\footnote{In \cite{DBLP:conf/nips/DalirrooyfardMN23}, to privately find the min $s$-$t$ cut, the authors also add noise from an {\em exponential distribution} between the two terminals $s$, $t$ and all other vertices. However, since the exact minimum  $s$-$t$ cut can be found in polynomial time, they do not consider using a linear program, which leads to a significantly different analysis.}
, as we will show in Section \ref{sec:privacy_analysis} that these edges actually form a dominating set for the fractional multiway cut (i.e., optimal simplex embedding) with a sensitivity of $O(k)$.
This allows us to compare the probabilities of being selected as optimal for each possible fractional solution $x\in (\Delta_k)^n$ between a pair of neighboring graphs with Laplace noises partially added.
Given that the sensitivity of the dominating set for the fractional multiway cut is $O(k)$, preserving $(\epsilon,0)$-differential privacy through the shifting mechanism requires setting the standard deviation of each noise to at least $O( {k}/\epsilon)$. Consequently, the expected amount of the largest noise is $\tilde{\Theta}(k)$. Given the infinite number of fractional cuts (which precludes the use of the union bound) and the fact that there are at most $O(nk)$ noises on each cut, the cumulative error across each cut will be at most 
 $\tilde{O}(nk^2)$. To save an extra $k$ factor and obtain the $\tilde{O}(n{k})$ upper bound, our observation is that since we are adding noises between terminals and non-terminals, then the noises are very ``sparse'' on each \textit{uncut}. To better understand this, consider a star graph where there are $k$ terminals and only one non-terminal $u$. Then there are $k$ uncuts and each uncut is just one edge connecting $u$ and each terminal, containing one noise. However, each cut contains $k-1$ noises. In the case of the star graph, choosing a minimum  multiway cut is equivalent to selecting the largest edge, that is, maximizing the uncut, which only incurs $O(\log k)$ error. In section \ref{sec:utility_multiway}, we extend this intuition to general graphs, and derive the $\tilde{O}(n{k})$ error bound by actually analyzing the utility loss in maximizing uncut rather than minimizing the multiway cut.

\subsection{The algorithm for private simplex embedding}\label{sec:alg_simplex_embedding}

Given an integer $k\in \mathbb{N}_+$, we define the $k$-simplex to be $\Delta_k = \{x\in \mathbb{R}^k: \sum_{i=1}^k x^{(i)} = 1 \land x^{(i)} \geq 0, \forall i\}$. Suppose there are $n\geq k$ points in the $k$-simplex $\Delta_k$, and exactly $k$ points among them are known as ``terminals'', where the $i$-th $(1\leq i\leq k)$ terminal has value $1$ at its $i$-th coordinate and $0$ at all other terminals. That is, if $T = \{s_1,s_2,\cdots, s_k\}$ is the collection of terminals, then $s_i = \mathbf{e}_i$. For each point $u\in [n]$, let $x_u\in \Delta_k$ be the position of $u$. For any two different points in $[n]$, we define a \textit{non-negative} and \textit{symmetric} cost function $c:[n]\times [n] \rightarrow \mathbb{R}_{\geq 0}$. For each pair $u,v\in [n]$, the cost between them is defined as $c(u,v)\|x_u - x_v\|_1$, which decreases as $x_u$ and $x_v$ approach each other. 

The problem of simplex embedding is to find the optimal placement of the $n-k$ non-terminal points in $\Delta_k$ in order to minimize the total sum of costs. This problem can be formalized by the following linear program \textbf{LP0}:
\begin{gather*}
    \min_{x\in (\mathbb{R}^k)^n}  \sum_{\{u, v\} \atop u\neq v} c(u,v)\|x_u - x_v\|_1 \\
    \begin{aligned}
    \textup{s.t.}\quad &x_{s_i}  = \mathbf{e}_i, &\forall i\in [k] \\
                       &x_u \in \Delta_k , &\forall u\in [n]. \\
    \end{aligned}
\end{gather*}
We note that the absolute values in the objective can be eliminated using standard techniques when $c:[n]\times [n] \rightarrow \mathbb{R}$ is non-negative, thus \textbf{LP0} is a linear program.

For the privacy notion, we define two cost functions $c,c'$ as neighboring if there is at most one pair of points $u,v\in [n]$ such that $|c(u,v) - c'(u,v)|\leq 1$. When considering points as the vertices of an undirected graph and costs as the edge weights, this definition is exactly the standard notion known as \textit{edge-level} differential privacy~\cite{gupta2012iterative, blocki2012johnson, eliavs2020differentially, liu2024optimal}. Given $k, n\in \mathbb{N}_+ (k\leq n)$, a set of terminals $T\subseteq [n]$, the cost function $c$ and privacy budget $\epsilon>0$, we give the following mechanism for private simplex embedding that runs in polynomial time, which is a variant of Algorithm \ref{alg:shifting}.

\begin{center}
    \begin{tcolorbox}[=sharpish corners, colback=white, width=1\linewidth]\label{alg:simplex_embedding}

        \begin{center}
      \textbf{\emph{Algorithm \ref*{alg:simplex_embedding}: Simplex Embedding with Pure-DP}}
        \end{center}
        \vspace{6pt}
    \begin{enumerate}
  \item \textbf{Generating Noise:} 
   For each terminal $t\in T$ and each non-terminal $u\in [n]\backslash T$, sample $Z_{\{t,u\}}\sim \mathsf{Lap}(b)$ independently. Here, $b = \frac{\sqrt{2}k}{\epsilon}$.

  \item \textbf{Solving LP:} Run the solver for the following linear program \textbf{LP1}:
  \begin{gather*}
     \min_{x\in (\mathbb{R}^k)^n}\quad g(x) = \sum_{\{u, v\} \atop u\neq v} c(u,v)\|x_u - x_v\|_1 + \sum_{t\in T}\sum_{u\in [n]\backslash T} Z_{\{t,u\}}\|x_t - x_u\|_1. \\
    \begin{aligned}
    \textup{s.t.}\quad &x_{s_i}  = \mathbf{e}_i, &\forall i\in [k] \\
                       &x_u \in \Delta_k , &\forall u\in [n]. \\
    \end{aligned}
    \end{gather*}

\end{enumerate}
  \end{tcolorbox} 
  \end{center}

We remark that the solver in the second step of Algorithm \ref{alg:simplex_embedding} will output an optimal solution in polynomial time if $c(u,v)\geq 0$ for all $u,v\in [n]$:
\begin{remark}\label{re:remove_abs}
  Notice that for any terminal $t\in T$ and non-terminal $u \in [n]\backslash T$, $$\|x_t - x_u\|_1 = 2(1-x_u^{(t)}) \geq 0$$ where $x_u^{(t)}$ is the $t$-th entry of $x_u$. Thus, the optimization problem in step 2 is indeed a linear program and can be solved in polynomial time.
\end{remark}

\subsection{Analysis of private simplex embedding with the shifting mechanism}
As opposed to trivially adding noise to the cost between \textit{any} pair of vertices, our Algorithm \ref{alg:simplex_embedding} only adds noise between terminals and each non-terminal point. We will show later that this approach dramatically reduces the scale of noises, but also makes the privacy analysis highly non-trivial. Below, we give the privacy and utility guarantees of Algorithm \ref{alg:simplex_embedding}. 

\subsubsection{Privacy analysis}\label{sec:privacy_analysis}
In this section, we provide a proof of the following privacy guarantee for Algorithm \ref{alg:simplex_embedding} in terms of the privacy notion defined in Section \ref{sec:alg_simplex_embedding}:

\begin{theorem}\label{thm:privacy_of_simplex_embedding}
    Fix any $\epsilon>0$. Algorithm \ref{alg:simplex_embedding} preserves $(\epsilon,0)$-differential privacy.
\end{theorem}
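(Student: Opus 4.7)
The plan is to exhibit Algorithm \ref{alg:simplex_embedding} as an instance of the shifting mechanism (Algorithm \ref{alg:shifting}) and then invoke Theorem \ref{thm:privacy_shifting}. Treating a cost function $c$ as the private input and the map $f$ that sends $c$ to the optimizer of \textbf{LP0} as the function to privatize, I will argue that the set of terminal-to-nonterminal edges
\[
S = \{\{t,w\} : t \in T,\ w \in [n]\setminus T\}
\]
is a dominating set for $f$ (in the sense of Definition \ref{def:dominating_set}) with sensitivity $s \leq \sqrt{2}\,k$. Since Algorithm \ref{alg:simplex_embedding} adds i.i.d.\ $\mathrm{Lap}(\sqrt{2}k/\epsilon)$ noise exactly on $S$ and then returns the optimizer of the shifted LP, Theorem \ref{thm:privacy_shifting} will immediately deliver $(\epsilon,0)$-DP.

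To verify the dominating-set property, I would fix neighboring $c, c'$ that differ on a single edge $\{u,v\}$ by some $\alpha$ with $|\alpha| \leq 1$ and split into three cases based on where $\{u,v\}$ sits. If both endpoints lie in $T$, the summand $c(u,v)\|x_u-x_v\|_1$ is a constant in the LP (both $x_u, x_v$ are pinned to canonical basis vectors), so $f(c) = f(c')$ and $a = 0$ works. If exactly one endpoint lies in $T$, then $\{u,v\} \in S$ itself, so setting $a_{\{u,v\}} = c(u,v) - c'(u,v)$ with all other entries zero makes the two objectives identical and yields $\|a\|_1 \leq 1$. The remaining case, where both endpoints are non-terminals, is the crux: the edge $\{u,v\}$ lies outside $S$, so $a$ must be supported on $\{\{s_i, u\}, \{s_i, v\} : i \in [k]\}$ and its aggregate contribution must compensate for the missing $\alpha\,\|x_u - x_v\|_1$ term in a way that preserves the optimizer $x^* := f(c)$.

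For this main case, the key identity is $\|x_{s_i}-x_w\|_1 = 2(1-x_w^{(i)})$ for a terminal $s_i$ and non-terminal $w$ (cf.\ Remark \ref{re:remove_abs}), which turns every term of the shift into an affine function of the non-terminal coordinates. I plan to choose $a_{\{s_i,u\}}$ and $a_{\{s_i,v\}}$ as simple linear functions of $x^{*(i)}_u$, $x^{*(i)}_v$, and $\mathrm{sign}(\alpha)$ so that the total shift
\[
h(y) := \sum_{i=1}^k a_{\{s_i,u\}}\,\|x_{s_i}-y_u\|_1 + \sum_{i=1}^k a_{\{s_i,v\}}\,\|x_{s_i}-y_v\|_1
\]
obeys
\[
h(y) - h(x^*) \;\geq\; \alpha\bigl(\|x^*_u - x^*_v\|_1 - \|y_u - y_v\|_1\bigr) \qquad \text{for every feasible } y \in (\Delta_k)^n.
\]
Combined with optimality of $x^*$ for the unshifted LP, this inequality forces $x^*$ to remain an optimizer of the LP with cost $c'$ and noise shifts $a$. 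A natural construction is to pick, coordinate-by-coordinate, a subgradient $\sigma_i \in \{-1,0,+1\}$ of $t \mapsto |t|$ at $x^{*(i)}_u - x^{*(i)}_v$, write $\|x_u-x_v\|_1 = \sum_i \sigma_i (x_u^{(i)} - x_v^{(i)})$ as an affine function that is tight at $x^*$ and a lower bound elsewhere, and translate this into a choice of $a$ on $S$ through the displayed identity.

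The main obstacle is twofold. First, one must verify the displayed inequality globally over the polytope $(\Delta_k)^n$, not merely near $x^*$; this should follow from convexity of the $\ell_1$ norm combined with the simplex constraint $y_w \in \Delta_k$. Second, one must pin down the sensitivity bound $\|a\|_1 \leq \sqrt{2}\,k$, which should emerge from the fact that $a$ has support of size at most $2k$ and each entry is of magnitude $O(|\alpha|) = O(1)$, with the explicit constant $\sqrt{2}$ coming from tightening the coefficients in the construction. A minor technicality is that \textbf{LP0} may admit non-unique optima; this is absorbed by the absolute continuity of the Laplace noise, under which the set of tie-producing realizations has measure zero and so plays no role in the DP analysis. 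Once the dominating-set property is in hand, Theorem \ref{thm:privacy_shifting} completes the proof.
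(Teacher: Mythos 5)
Your high-level plan — exhibit Algorithm \ref{alg:simplex_embedding} as a shifting mechanism by certifying that the terminal-to-nonterminal edge set $S$ is a dominating set of sensitivity $O(k)$ for the LP optimizer — is the same route the paper takes, though the paper writes out a self-contained variant of the shifting-mechanism argument (conditioning on $\mathcal Z\setminus\mathcal Z_{uv}$ via Lemma \ref{lem:reduction_to_uv}) rather than invoking Theorem \ref{thm:privacy_shifting} wholesale. Your subgradient construction of the correction vector is also morally equivalent to the paper's Lemma \ref{lem:minimizes_two_vertices_case}: after passing to $d_{w,i}=1-x_w^{(i)}$ and using the balance identity $\sum_i(y_u^{(i)}-y_v^{(i)})=0$, the paper's choice ($a_{u,i}=-1,a_{v,i}=1$ on $U=\{i:d^*_{u,i}>d^*_{v,i}\}$, $0$ elsewhere) is exactly a ``one-sided'' affine minorant of $\|y_u-y_v\|_1$ tight at $x^*$; your two-sided $\sigma_i$ choice gives the same family, and both yield $\|a\|_1\le k\le\sqrt2 k$.

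There is, however, a genuine gap in the sketch, concentrated in your displayed inequality $h(y)-h(x^*)\ge\alpha\bigl(\|x^*_u-x^*_v\|_1-\|y_u-y_v\|_1\bigr)$. The subgradient furnishes an affine \emph{lower} bound on the convex function $\|y_u-y_v\|_1$ that is tight at $x^*$, so multiplying through by $\alpha$ gives the desired global inequality only when $\alpha>0$. When $\alpha<0$, the inequality reverses and the construction fails: you would need an affine \emph{upper} bound on $\|y_u-y_v\|_1$ tight at $(x^*_u,x^*_v)$, which cannot exist at a non-extreme point of $\Delta_k\times\Delta_k$ since the $\ell_1$ norm is convex. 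Equivalently, $\alpha\|y_u-y_v\|_1+h(y)$ is concave for $\alpha<0$, so its minimum over $\Delta_k\times\Delta_k$ is attained at some pair of simplex vertices $(\mathbf e_i,\mathbf e_j)$, not at a fractional $(x^*_u,x^*_v)$, and for $k\ge3$ the LP optimum is genuinely fractional. Definition \ref{def:dominating_set} must hold for every ordered neighboring pair, and Definition \ref{def:dp} requires the privacy bound in both orderings, so the cost-decrease direction cannot be elided. Note the paper itself only proves the $\Delta=1$ case and asserts that ``the proof for all possible $\Delta$ simply follows''; your proposal inherits that unresolved step, and closing it requires an idea beyond scaling or flipping the sign of the subgradient.
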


In the following discussion, without loss of generality, we assume that the fractional optimal solution of \textbf{LP1} in step 2 is \textit{unique}. If this is not the case, we define a tie-breaking rule that selects the optimal solution closest to the origin with respect to the $\ell_2$ distance.\footnote{In particular, it has been proved in \cite{mangasarian1984normal} that by adding a quadratic perturbation $\eta x^\top x$ to the linear objective $c^\top x$ for a sufficiently small $\eta$, the perturbed problem solves the original problem. Further, the optimal solution of the program would be the unique $2$-norm projection of the origin on the original optimal solution set.} Indeed, it is also well known that by adding small random perturbations to the objective, the optimal solution of \textbf{LP1} will almost surely be unique~\cite{mangasarian1984normal, spielman2004smoothed}, since an LP has infinitely many optimal solutions only if two basic solutions attain the same value.

The simplest cases are that the difference lies between two terminals, or between a terminal and a non-terminal. In both cases, it is straightforward to verify that outputting the optimal solution in step 2 is $(\epsilon,0)$-differentially private. (In particular, changing the cost between any two terminals does not affect the optimal solution set at all.)

Now, suppose $c_1$ and $c_2$ are a pair of neighboring cost functions that differ in cost for any two non-terminals $u$ and $v$. We separate all $k(n-k)$ random Laplace noises added in step 1 into two categories $\mathcal{Z}_{uv}$ and $\mathcal{Z}\backslash \mathcal{Z}_{uv}$, where $\mathcal{Z}_{uv} = \{Z_{\{t,x\}}: x\in \{u,v\}, t\in T\}$ is the random noises added between terminal and $u$ (or $v$), and $\mathcal{Z}$ contains all the $k(n-k)$ noises. Next, we use a coupling trick. Consider the following protocol of sampling noise in $\mathcal{Z}_{uv}$ and $\mathcal{Z}\backslash \mathcal{Z}_{uv}$ for both $c_1$ and $c_2$:

\begin{enumerate}
  \item Sample $k(n-k-2)$ random Laplace variables in $\mathcal{Z}\backslash \mathcal{Z}_{uv}$ independently.
  \item Add the \textbf{same} noise on both $c_1$ and $c_2$ according to $\mathcal{Z}\backslash \mathcal{Z}_{uv}$.
  \item Sample $2k$ i.i.d. random Laplace variables in $\mathcal{Z}_{uv}$, and add noise on $c_1$ according to $\mathcal{Z}_{uv}$.
  \item \textbf{Resample} $2k$ i.i.d. random Laplace variables in $\mathcal{Z}_{uv}$, and add noise on $c_2$ according to $\mathcal{Z}_{uv}'$.
\end{enumerate}
In the procedure above, the random costs added to this pair of neighboring datasets partly shares the same randomness,  specifically in $\mathcal{Z}\backslash \mathcal{Z}_{uv}$. In the meanwhile, the distribution of the random noises added to each cost function follows precisely the same distribution as described in the algorithm. We remark that this procedure is used solely for analysis and does not occur during the actual execution of the algorithm. 

Suppose every noise in $\mathcal{Z}\backslash \mathcal{Z}_{uv}$ is decided. We observe that if $x_u$ and $x_v$ are fixed, then changing the value of the cost between $u$ and $v$ or sample noise in $\mathcal{Z}_{uv}$ will change the \textbf{same} value in the objective for all feasible solutions (with the same fixed $x_u$ and $x_v$), since once $x_u$ and $x_v$ are settled, then the value of all $\|x_a-x_t\|_1$ where $a\in \{u,v\}$ and $t\in T$ are all fixed as constants. Therefore, the optimal solution does not change. Given this fact, we let $x^*(x_u,x_v)$ be the optimal solution of $\textbf{LP1}$ for some fixed $x_u$ and $x_v$, and let $$\mathcal{X}^* = \{x^*(x_u,x_v), x_u,x_v\in \Delta_k\}$$ be the set of optimal solutions for both the simplex embedding problem parameterized by $c_1$ and $c_2$, given all possible pairs $x_u$ and $x_v$. It is easy to verify that the global optimal solution for either $c_1$ or $c_2$ lies within $\mathcal{X}^*$.

Next, we consider the probability ratio that the \textbf{LP1}s for both $c_1$ and $c_2$ choose the exact same unique optimal solution $x^*\in \mathcal{X}^*$, under the coupling strategy mentioned above. Let 
\begin{equation*}
  \begin{aligned}
    g(x) = \sum_{\{a,b\} \atop a\neq b}c(a,b)\|x_a -x_b\|_1 &+ \sum_{t\in T}\sum_{w\in [n]\backslash (\{u,v\}\cup T)} Z_{\{t,w\}}\|x_w-x_t\|_1 \\
    & +\sum_{t\in T} Z_{\{t,u\}}\|x_u-x_t\|_1 + \sum_{t\in T} Z_{\{t,v\}}\|x_v-x_t\|_1
  \end{aligned}
\end{equation*} 
and 
\begin{equation*}
  \begin{aligned}
    g'(x) = &\sum_{ \{a,b\} \neq \{u,v\} \atop a\neq b}c(a,b)\|x_a -x_b\|_1 + \sum_{t\in T}\sum_{w\in [n]\backslash (\{u,v\}\cup T)} Z_{\{t,w\}}\|x_w-x_t\|_1 \\
    & +(c(u,v) + \Delta) \|x_u-x_v\|_1 + \sum_{t\in T} Z_{\{t,u\}}\|x_u-x_t\|_1 + \sum_{t\in T} Z_{\{t,v\}}\|x_v-x_t\|_1
  \end{aligned}
\end{equation*} 
be the objective function of \textbf{LP1} in terms of $c_1$ and $c_2$ respectively. Here, $\max\{-1, -c(\{u,v\})\} \leq \Delta \leq 1$ is the difference. In the following argument we always assume $\Delta = 1$, since the proof for all possible $\Delta$ simply follows from the case where $\Delta = 1$. 

Suppose all noises in $\mathcal{Z}\backslash \mathcal{Z}_{uv}$ have been decided, say $\mathcal{Z}\backslash \mathcal{Z}_{uv} = s\in \mathbb{R}^{(n-k-2)k}$. Fixing any $x_u, x_v\in \Delta_k$, let $S\subseteq \mathbb{R}^{2k}$ be the range in $\mathbb{R}^{2k}$ such that 
 $$\mathcal{Z}_{uv}\in S(s) \Leftrightarrow \min \{g(x): x\in \mathcal{X}^*\} = x^*(x_u,x_v),$$
and 
$$\mathcal{Z}_{uv}\in S'(s) \Leftrightarrow \min \{g'(x): x\in \mathcal{X}^*\} = x^*(x_u,x_v).$$
We will abbreviate $S(s)$ (and $S'(s)$) as $S$ (and $S'$) if we do not have to specify $s$. Since both $\mathbf{LP1}$s for $c_1$ and $c_2$ have a unique optimal solution, then $S(s)$ and $S'(s)$ are well-defined for any $s\in \mathbb{R}^{(n-2-k)k}$. Then, we have the following lemma:

\begin{lemma}\label{lem:reduction_to_uv}
    Fix any $\epsilon>0$. If, for any fixed $\mathcal{Z}\backslash \mathcal{Z}_{uv} = s\in \mathbb{R}^{(n-k-2)k}$ we have 
    $$\Pr{\mathcal{Z}_{uv}\in S(s)} \leq e^\epsilon \Pr{\mathcal{Z}_{uv}\in S'(s)},$$
    then Algorithm \ref{alg:simplex_embedding} preserves $(\epsilon,0)$-differential privacy.
\end{lemma}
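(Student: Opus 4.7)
The plan is to reduce the $(\epsilon,0)$-DP statement over arbitrary measurable output events to the pointwise hypothesis by conditioning on $s = \mathcal{Z} \setminus \mathcal{Z}_{uv}$, decomposing the output event along the partition of the noise space induced by the LP-vertex structure, and integrating out $s$ against its cost-independent density.

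Concretely, I will fix any measurable set $E$ in the output codomain of Algorithm~\ref{alg:simplex_embedding}. From the setup preceding the lemma, three facts are in force: $(i)$ the algorithm's output lies almost surely in $\mathcal{X}^* = \{x^*(x_u, x_v) : x_u, x_v \in \Delta_k\}$; $(ii)$ for each fixed $(x_u, x_v)$, the restricted optimum $x^*(x_u, x_v)$ depends on neither $c(u,v)$ nor the noise in $\mathcal{Z}_{uv}$, since altering those only shifts the objective by a constant on the slice with $x_u, x_v$ fixed; $(iii)$ because \textbf{LP1} is a standard linear program with finitely many vertices, $\mathcal{X}^*$ is finite. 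The generic-uniqueness assumption then guarantees that for each $s$, the family $\{S_{(x_u, x_v)}(s)\}$ is a finite measurable partition of $\mathbb{R}^{2k}$ up to a Lebesgue-null boundary, and the same holds for the primed family $\{S'_{(x_u, x_v)}(s)\}$; by $(ii)$ the two families are indexed by the \emph{same} parameter set.

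Using this partition, for each fixed $s$ I would write
\begin{align*}
    \Pr{\mathcal{A}(c_1) \in E \mid s}
    &= \sum_{y \in E \cap \mathcal{X}^*} \Pr{\mathcal{Z}_{uv} \in S_{(y_u, y_v)}(s)} \\
    &\leq e^{\epsilon} \sum_{y \in E \cap \mathcal{X}^*} \Pr{\mathcal{Z}_{uv} \in S'_{(y_u, y_v)}(s)}
    = e^{\epsilon} \Pr{\mathcal{A}(c_2) \in E \mid s},
\end{align*}
where the middle step invokes the hypothesis of the lemma termwise on the finitely many summands. Because $\mathcal{Z} \setminus \mathcal{Z}_{uv}$ is sampled independently of the cost function, its marginal density $p(s)$ is identical under $c_1$ and $c_2$, so integrating the inequality against $p(s)$ yields $\Pr{\mathcal{A}(c_1) \in E} \leq e^{\epsilon}\Pr{\mathcal{A}(c_2) \in E}$, which is the desired DP guarantee.

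The main thing to be careful about is the alignment of indices between the two partitions: without observation $(ii)$, the sets $S_{(x_u, x_v)}(s)$ and $S'_{(x_u, x_v)}(s)$ could correspond to different $y \in \mathcal{X}^*$, and the pointwise hypothesis would not line up summand by summand. Since observation $(ii)$ is already established in the text preceding the lemma, this reduction is clean. A minor measurability concern is that the boundary where multiple LP vertices tie has Lebesgue measure zero and, under the tie-breaking convention already fixed in the paper, contributes nothing to the probabilities above.
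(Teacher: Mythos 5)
Your argument takes essentially the same route as the paper's: condition on $s = \mathcal{Z}\setminus\mathcal{Z}_{uv}$, apply the pointwise hypothesis, and integrate $s$ out against its cost-independent density. The paper merely runs this in the opposite order — it reduces first to a pointwise inequality over individual outputs, then disintegrates each side over $s$; you fix a measurable event $E$, condition on $s$, sum over attainable outputs, and integrate last. These are equivalent orderings of the same calculation.

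One slip worth correcting: the set $\mathcal{X}^*$ as defined in the paper is \emph{not} finite. As $(x_u,x_v)$ ranges over $\Delta_k\times\Delta_k$, the $u,v$ coordinates of $x^*(x_u,x_v)$ alone already trace out a $2(k-1)$-dimensional continuum, so the family of restricted optima is uncountable. What actually makes your finite decomposition legitimate is a different finiteness: the algorithm's output is, with probability one, a vertex of the \emph{fixed} feasible polytope of \textbf{LP1} (a product of simplices, with $k^{n-k}$ vertices); for generic noise the objective is nondegenerate so the optimum is a unique vertex, and the $\ell_2$ tie-break only matters on a Lebesgue-null set of noise values. Hence only finitely many $y\in E\cap\mathcal{X}^*$ carry nonzero conditional probability, and the sum is effectively over that finite set. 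Your conclusion is right, but the justification should come from the vertex structure of the LP, not from finiteness of $\mathcal{X}^*$.
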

\begin{proof}
    Notice that Algorithm \ref{alg:simplex_embedding} outputs a solution for \textbf{LP1}. To show that Algorithm \ref{alg:simplex_embedding} is $(\epsilon,0)$-differentially private, it is equivalent to showing that  for an arbitrary $x^*(x_u,x_v)\in \mathcal{X}^*$,
\begin{equation}\label{eq:expression}
    \Pr{g(x) \text{ is minimized by } x^*(x_u,x_v)} \leq e^\epsilon \cdot \Pr{g'(x) \text{ is minimized by } x^*(x_u,x_v)} .
\end{equation}
We note that 
$$\Pr{g(x) \text{ is minimized by } x^*(x_u,x_v)} = \int_{s\in \mathbb{R}^{(n-2-k)k}} \text{pdf}_{\mathcal{Z}\backslash \mathcal{Z}_{uv}}(s) \Pr{\mathcal{Z}_{uv}\in S(s)} d s$$
and 
$$\Pr{g'(x) \text{ is minimized by } x^*(x_u,x_v)} = \int_{s\in \mathbb{R}^{(n-2-k)k}} \text{pdf}_{\mathcal{Z}\backslash \mathcal{Z}_{uv}}(s) \Pr{\mathcal{Z}_{uv}\in S'(s)} d s.$$
Therefore, by the coupling argument and the fact that all Laplace noises in $\mathcal{Z}$ are i.i.d. sampled, \cref{eq:expression} holds when 
$$\Pr{\mathcal{Z}_{uv}\in S} \leq e^\epsilon \Pr{\mathcal{Z}_{uv}\in S'},$$
which completes the proof of Lemma \ref{lem:reduction_to_uv}.
\end{proof}

 In the inspiration of Lemma \ref{lem:reduction_to_uv}, we only need to find the proper variance of Laplace random variables in $Z_{uv}$ such that $\Pr{\mathcal{Z}_{uv}\in S} \leq e^\epsilon \Pr{\mathcal{Z}_{uv}\in S'}$ for all $\epsilon > 0$. However, we cannot directly compare $S$ and $S'$ since they may have complicated structures (after all noises in $\mathcal{Z}\backslash \mathcal{Z}_{uv}$ have been decided), but we can still analyze the ratio between the probabilities $\Pr{\mathcal{Z}_{uv}\in S}$ and $\Pr{\mathcal{Z}_{uv}\in S'}$ by the following lemmas and facts. The first lemma states that the sum of distances between any point and all terminals is invariant.

\begin{lemma}\label{lem:sum_invar}
  For any $x\in \Delta_k$ and $t\in T$,
  $$\frac{1}{2}\sum_{t\in T} {\|x - x_t\|_1} = k-1.$$
\end{lemma}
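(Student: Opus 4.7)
The plan is to exploit the two defining properties of the simplex $\Delta_k$: non-negativity of coordinates and the fact that they sum to one, together with the fact that every terminal's embedding is a canonical basis vector. Specifically, since $x_{s_i} = \mathbf{e}_i$ for $i \in [k]$, the set $\{x_t : t \in T\}$ is exactly $\{\mathbf{e}_1, \dots, \mathbf{e}_k\}$, so the claim reduces to showing that for any $x = (x^{(1)}, \dots, x^{(k)}) \in \Delta_k$,
\[
  \tfrac{1}{2}\sum_{i=1}^{k} \|x - \mathbf{e}_i\|_1 = k-1.
\]

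First, I would compute $\|x - \mathbf{e}_i\|_1$ for a single $i$. Splitting the sum over coordinates,
\[
  \|x - \mathbf{e}_i\|_1 = |x^{(i)} - 1| + \sum_{j \neq i} |x^{(j)}|.
\]
Since $x \in \Delta_k$ we have $0 \leq x^{(j)} \leq 1$ for every $j$, so $|x^{(i)} - 1| = 1 - x^{(i)}$ and $|x^{(j)}| = x^{(j)}$ for $j \neq i$. Using $\sum_{j \neq i} x^{(j)} = 1 - x^{(i)}$, this yields $\|x - \mathbf{e}_i\|_1 = 2(1 - x^{(i)})$.

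Summing over $i \in [k]$ gives
\[
  \sum_{i=1}^k \|x - \mathbf{e}_i\|_1 = 2k - 2\sum_{i=1}^k x^{(i)} = 2k - 2 = 2(k-1),
\]
and dividing by two completes the proof. There is no real obstacle here; the lemma is a direct consequence of the definition of the simplex and $\ell_1$ distance to basis vectors, and its purpose is simply to serve as an algebraic identity that will be reused in the utility analysis (in particular, to control the sum of noisy terms $Z_{\{t,u\}}\|x_u - x_t\|_1$ summed over terminals).
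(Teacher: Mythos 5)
Your proof is correct and follows essentially the same route as the paper: both reduce the claim to the identity $\|x - \mathbf{e}_i\|_1 = 2(1 - x^{(i)})$ for $x \in \Delta_k$ and then sum over $i$ using $\sum_i x^{(i)} = 1$. You spell out the coordinate-wise split more explicitly, but the argument is the same.
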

\begin{proof}
  For any $x\in \Delta_k$, we have that $\sum_{t\in T} x^{(t)} = 1$. Thus,
  \begin{equation*}
    \frac{1}{2}\sum_{t\in T}\|x -x_t\|_1  =\sum_{t\in T} |1-x^{(t)}| = k - \sum_{t\in T}x^{(t)} = k-1,
  \end{equation*}
  which completes the proof.
\end{proof}

The following lemma indicates the existence of the dominating set of multiway cut of size $O(nk)$ with sensitivity $O(k)$.

\begin{lemma}\label{lem:minimizes_two_vertices_case}
  Fix any integer $k\geq 2$. Let $\{d_{u,i}\}_{i\in [k]}$ and $\{d_{v,i}\}_{i\in [k]}$ be two sets of variables such that
  \begin{enumerate}
    \item For any $w\in \{u,v\}$ and $i\in [k]$: $0\leq d_{w,i} \leq 1$.
    \item For any $w\in \{u,v\}$, $\sum_{i\in [k]} d_{w,k} = k-1$.
  \end{enumerate}
Then, for any fixed $\{d_{u,i}^*\}_{i\in [k]}$ and $\{d_{v,i}^*\}_{i\in [k]}$ that satisfy both conditions, there always exist $a_u, a_v \in \mathbb{R}^{k}$ with $\|a_u\|_2 \leq \sqrt{k/2}$ and $\|a_v\|_2 \leq \sqrt{k/2}$ such that 
\begin{equation}\label{eq:minimizes_two_vertices_case}
  \sum_{i\in [k]} a_{u,i}d_{u,i} + \sum_{i\in [k]} a_{v,i}d_{v,i} + \frac{1}{2}\sum_{i\in [k]} |d_{u,i} - d_{v,i}|
\end{equation}
achieves minimum value on $\{d_{u,i}^*\}_{i\in [k]}$ and $\{d_{v,i}^*\}_{i\in [k]}$.
\end{lemma}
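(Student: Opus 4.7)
The plan is to give an explicit, closed-form construction of $(a_u, a_v)$ in terms of the sign pattern of $d_u^* - d_v^*$, and then verify optimality by rewriting the objective as a manifestly non-negative sum that vanishes at the target. Specifically, I will set $\sigma_i := \mathrm{sgn}(d_{u,i}^* - d_{v,i}^*)$ (with the convention $\mathrm{sgn}(0) = 0$), and define $a_{u,i} := -\tfrac{1}{2}\sigma_i$ and $a_{v,i} := \tfrac{1}{2}\sigma_i$. The norm bound is then immediate: each coordinate has magnitude at most $\tfrac{1}{2}$, so $\|a_u\|_2, \|a_v\|_2 \le \sqrt{k}/2 \le \sqrt{k/2}$.

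For optimality I would substitute these choices into the objective~\eqref{eq:minimizes_two_vertices_case}. Combining the two linear terms along coordinate $i$ gives $a_{u,i}d_{u,i} + a_{v,i}d_{v,i} = -\tfrac{1}{2}\sigma_i(d_{u,i} - d_{v,i})$, so the entire expression rearranges to
$$\tfrac{1}{2}\sum_{i\in[k]}\Big[\,|d_{u,i}-d_{v,i}|\;-\;\sigma_i\,(d_{u,i}-d_{v,i})\,\Big].$$
Each summand is non-negative since $|\sigma_i|\le 1$ implies $\sigma_i t \le |t|$ for every real $t$, and each summand vanishes at $(d_u,d_v)=(d_u^*,d_v^*)$ because $\sigma_i(d_{u,i}^*-d_{v,i}^*) = |d_{u,i}^*-d_{v,i}^*|$ by the definition of $\mathrm{sgn}$. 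Hence $(d_u^*,d_v^*)$ achieves the global minimum value $0$, and in particular the minimum over the feasible set.

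The construction is motivated by a KKT calculation at $(d_u^*,d_v^*)$: the correction $a_u$ is chosen to precisely cancel the subdifferential of $\tfrac{1}{2}\|d_u - d_v\|_1$ with respect to $d_u$, so first-order optimality holds with zero Lagrange multipliers for both the simplex equality constraint and the box constraints. A pleasant consequence is that the argument is entirely unconstrained—the inequality $F\ge 0$ holds on all of $\mathbb{R}^k\times\mathbb{R}^k$—and the feasibility of $(d_u^*,d_v^*)$ enters only to guarantee it is a valid comparator. I do not anticipate any substantial obstacle; the only subtleties are (i) handling the degenerate coordinates $d_{u,i}^*=d_{v,i}^*$ via $\mathrm{sgn}(0)=0$, which is what keeps the cancellation valid in that case, and (ii) checking $\sqrt{k}/2 \le \sqrt{k/2}$, which is immediate from $1/2 \le 1/\sqrt{2}$.
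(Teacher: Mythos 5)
Your proof is correct, and it takes a genuinely different route from the paper's. The paper defines $U=\{i : d_{u,i}^*>d_{v,i}^*\}$, invokes a WLOG symmetrization to force $|U|\le k/2$, and assigns $a_{u,i}=-1$, $a_{v,i}=1$ only for $i\in U$ (zero elsewhere); its nonnegativity argument then leans on the simplex constraint $\sum_i d_{w,i}=k-1$ to show $\sum_{i\in U}|d_{u,i}-d_{v,i}|=\tfrac12\sum_{i\in[k]}|d_{u,i}-d_{v,i}|$. Your construction $a_{u,i}=-\tfrac12\sigma_i$, $a_{v,i}=\tfrac12\sigma_i$ is symmetric in the two sides, needs no WLOG, and dispenses with the simplex constraint entirely: the rewrite $\tfrac12\sum_i\bigl(|d_{u,i}-d_{v,i}|-\sigma_i(d_{u,i}-d_{v,i})\bigr)$ is termwise nonnegative on all of $\mathbb{R}^k\times\mathbb{R}^k$ and vanishes at $(d_u^*,d_v^*)$. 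This is cleaner and slightly stronger: you get the unconstrained inequality, and $\|a_u\|_2\le\sqrt{k}/2$, which is a bit tighter than the paper's bound $\sqrt{k/2}$ coming from $|U|\le k/2$ (the improvement is immaterial downstream since it is absorbed into constants). Both constructions depend only on the sign pattern of $d_u^*-d_v^*$, which is exactly what the privacy argument requires, so your version slots into the rest of Section~\ref{sec:privacy_analysis} unchanged.
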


\begin{proof}
  Before getting into the proof, we note that Lemma \ref{lem:minimizes_two_vertices_case} does \textit{not} ask that $\{d_{u,i}^*\}_{i\in [k]}$ and $\{d_{v,i}^*\}_{i\in [k]}$ are the \textit{only} minimum choices. Let $U\subseteq [k]$ be the set of coordinates that $d^*_{u,i} > d^*_{v,i}$ for any $i\in U$ and $\overline{U} = [k] \backslash U$ be the set of other coordinates. Without loss of generality, we assume $|U| \leq |\overline{U}|$. (Since the subscript $u$ and $v$ are essentially symmetric, if $|U| > |\overline{U}|$, we replace $u$ and $v$ for all variables.) We construct the correction vectors $a_u$ and $a_v$ as follows:
  \begin{enumerate}
    \item For each $i\in U$, let $a_{u,i} = -1$ and $a_{v,i} = 1$.
    \item For each $i\in \overline{U}$, let $a_{u,i} = a_{v,i} = 0$.
  \end{enumerate}
  Since $|U|\leq k/2$, then clearly with such construction we have $\|a_u\|_2 = \|a_v\|_2 \leq \sqrt{k/2}$. 
  
  First, it is easy to verify that with this assignment on the coefficients, the minimum value of \cref{eq:minimizes_two_vertices_case} is at least $0$. Because $\sum_{i\in [k]} d_{u,k} = \sum_{i\in [k]} d_{v,k} = k-1$, then for any $T\subseteq [k]$,
  $$\sum_{i \in T} d_{u,i} - d_{v,i} = \sum_{i\in [k]\backslash T} d_{v,i} - d_{u,i}.$$
  Thus, we see that
  $$\sum_{i \in U} |d_{u,i} - d_{v,i}| = \sum_{i\in \overline{U}} |d_{u,i} - d_{v,i}| = \frac{1}{2}\sum_{i\in [k]} |d_{u,i} - d_{v,i}|.$$
  Therefore, 
  \begin{equation*}
    \begin{aligned}
      \sum_{i\in [k]} a_{u,i}d_{u,i} + \sum_{i\in [k]} a_{v,i}d_{v,i} + \frac{1}{2}\sum_{i\in [k]} |d_{u,i} - d_{v,i}| &= -\sum_{i\in U} d_{u,i} + \sum_{i\in U} d_{v,i} + \frac{1}{2}\sum_{i\in [k]} |d_{u,i} - d_{v,i}|\\
      &\geq -\sum_{i\in U} |d_{u,i} - d_{v,i}| + \frac{1}{2}\sum_{i\in[k]}|d_{u,i} - d_{v,i}|\\
      & = -\sum_{i\in U} |d_{u,i} - d_{v,i}| + \sum_{i\in U}|d_{u,i} - d_{v,i}| = 0.
    \end{aligned}
  \end{equation*}
  Next, we show that \cref{eq:minimizes_two_vertices_case} achieves $0$ on all $\{d_{u,i}\}$'s and $\{d_{v,i}\}$'s as long as $\{i: d_{u,i} > d_{v,i}\}$ is exactly $U$, which includes $\{d_{u,i}^*\}_{i\in [k]}$ and $\{d_{v,i}^*\}_{i\in [k]}$. 
  Indeed, if $\{d_{u,i}\}$ and $\{d_{v,i}\}$ satisfy $\{i: d_{u,i} > d_{v,i}\} = U$, then 
  \begin{equation*}
    \begin{aligned}
      \sum_{i\in [k]} |d_{u,i} - d_{v,i}| = \sum_{i \in U} d_{u,i} - d_{v,i} + \sum_{i\in \overline{U}} d_{v,i} - d_{u,i} = 2\sum_{i\in U} (d_{u,i} - d_{v,i}).
    \end{aligned}
  \end{equation*}
  Thus, 
  \begin{equation*}
    \begin{aligned}
      \sum_{i\in [k]} a_{u,i}d_{u,i} + \sum_{i\in [k]} a_{v,i}d_{v,i} + \frac{1}{2}\sum_{i\in [k]} |d_{u,i} - d_{v,i}| &= -\sum_{i\in U} d_{u,i} + \sum_{i\in U} d_{v,i} + \sum_{i\in U} (d_{u,i} - d_{v,i}) =0.
    \end{aligned}
  \end{equation*}
\end{proof}

\begin{fact}\label{fac:common_minimum}
  Let $a:\mathcal{R}\rightarrow \mathbb{R}$ and $b:\mathcal{R}\rightarrow \mathbb{R}$ be two real-valued function on some range $\mathcal{R}$. Suppose $x^* \in \mathcal{R}$ is the unique solution that minimizes $a(x)$ and $x^*$ is also one of the solutions that minimize $b(x)$, then $x^*$ is the unique solution that minimizes $c(x) = a(x) + b(x)$.
\end{fact}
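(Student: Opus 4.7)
The plan is to prove this directly from the definitions, since the statement is elementary. I would argue by fixing an arbitrary $y \in \mathcal{R}$ with $y \neq x^*$ and showing that $c(y) > c(x^*)$, which simultaneously establishes that $x^*$ minimizes $c$ and that it does so uniquely.

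The argument goes as follows. Since $x^*$ is the \emph{unique} minimizer of $a$, for any $y \neq x^*$ we have the strict inequality $a(y) > a(x^*)$. Since $x^*$ is among the minimizers of $b$, we have the (possibly non-strict) inequality $b(y) \geq b(x^*)$. Adding these two inequalities yields
\[
c(y) = a(y) + b(y) > a(x^*) + b(x^*) = c(x^*),
\]
where strictness is preserved because we are adding a strict inequality to a weak one. Since this holds for every $y \neq x^*$, the point $x^*$ must be the unique minimizer of $c$.

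There is no real obstacle here; this is a one-line consequence of the fact that adding a strict and a non-strict inequality gives a strict one. The fact is being stated for later use (presumably to argue that after breaking ties via a quadratic perturbation or adding noise in $\mathcal{Z}_{uv}$, uniqueness of optima is preserved when we superimpose two objectives, one of which has a unique minimum at $x^*$). So the substantive work occurs elsewhere, in verifying the hypothesis that $x^*$ is unique for $a$ and is attained by $b$; the fact itself is just the bookkeeping lemma that lets such uniqueness propagate through the sum.
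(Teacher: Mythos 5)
Your proof is correct and is essentially identical to the paper's: fix $y \neq x^*$, use strictness of $a(y) > a(x^*)$ and weakness of $b(y) \geq b(x^*)$, and add. The extra commentary about how the fact is used downstream is accurate but not part of the proof itself.
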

\begin{proof}
  Suppose $x\neq x^*$; then $a(x)>a(x^*)$ and $b(x)\geq b(x^*)$, which is saying that $c(x)>c(x^*)$ and thus $x^*$ is the only minimizer, completing the proof.
\end{proof}

\noindent Let 
$$h(x) = \sum_{\{a,b\}\in E }c(\{a,b\})\|x_a -x_b\|_1 + \sum_{t\in T}\sum_{w\in [n]\backslash (\{u,v\}\cup T)} Z_{\{t,w\}}\|x_w-x_t\|_1  ,$$
then we rewrite $g(x)$ and $g'(x)$ as 
$$g(x) = h(x) + \sum_{t\in T} Z_{\{t,u\}}\|x_u-x_t\|_1 + \sum_{t\in T} Z_{\{t,v\}}\|x_v-x_t\|_1$$
and 
$$g'(x) = g(x) +  \|x_u-x_v\|_1.$$


\noindent In the context of Fact \ref{fac:common_minimum}, let $a(x)$ be $g(x)$ and let
\begin{equation*}
    \begin{aligned}
        b(x) :&=\|x_u-x_v\|_1 + \sum_{i\in [k]} a_{u,i} \|x_u-x_t\|_1 +\sum_{i\in [k]} a_{v,i} \|x_v-x_t\|_1\\
        & = \sum_{i\in [k]} |(1-x_u^{(i)}) - (1-x_v^{(i)})| + \sum_{i\in [k]} a_{u,i} \|x_u-x_t\|_1 +\sum_{i\in [k]} a_{v,i} \|x_v-x_t\|_1\\
        & = \sum_{i\in [k]} \frac{|\|x_u - x_t\|_1 - \|x_v - x_t\|_1|}{2} + \sum_{i\in [k]} a_{u,i} \|x_u-x_t\|_1 +\sum_{i\in [k]} a_{v,i} \|x_v-x_t\|_1.
    \end{aligned}
\end{equation*}
Then, we can just let $d_{w,t} = \frac{\|x_w - x_t\|_1}{2}$ for any $w\in \{u,v\}$ and $t\in [k]$. By Lemma \ref{lem:sum_invar}, such assignment satisfies both conditions in Lemma \ref{lem:minimizes_two_vertices_case}. Therefore, by applying Lemma \ref{lem:minimizes_two_vertices_case} together with Fact \ref{fac:common_minimum}, we conclude that if the objective $g(x)$ is minimized by any fixed solution $x^*(x_u,x_v)\in \mathcal{X}^*$, then there exist $2k$ correction coefficients $\{a_u,a_v\}$ with $\|a_u\|^2_2 + \|a_v\|_2^2 = k$ such that 
 \begin{equation*}
  \begin{aligned}
  &g(x) + \|x_u-x_v\|_1 + \sum_{i\in [k]} a_{u,i} \|x_u-x_t\|_1 +\sum_{i\in [k]} a_{v,i} \|x_v-x_t\|_1 \\
  & = h(x) + \|x_u-x_v\|_1 + \sum_{t\in T} (Z_{\{t,u\}} + a_{u,t})\|x_u-x_t\|_1 + \sum_{t\in T} (Z_{\{t,v\}}+a_{v,t})\|x_v-x_t\|_1
\end{aligned}
\end{equation*}
is minimized by the unique minimizer $x^*(x_u,x_v)$. Then by the definition of $S$ and $S'$, we have 
$$\mathcal{Z}_{uv} \in S \Rightarrow \mathcal{Z}_{uv} +  a_{uv} \in S',$$
where $a_{uv}\in \mathbb{R}^{2k}$ is the concatenation of $a_u$ and $a_v$. This is saying that $S + a_{uv} \subseteq S'$, and thus
\begin{equation}\label{eq:eq2}
  \Pr{\mathcal{Z}_{uv} \in S'} \geq \Pr{\mathcal{Z}_{uv} \in S + a_{uv}}.
\end{equation}
Next, to compare the probability of $\mathcal{Z}_{uv} \in S$ and $\mathcal{Z}_{uv} \in S + a_{uv}$, we apply the basic Laplace mechanism. In the context of Lemma \ref{lem:laplace}, if we set $R = S$, $v = \mathbf{0}_{2k}$ and $v' = -a_{uv}$, and considering that we have $\|a_{uv}\|_1 \leq \sqrt{2k}\|a_{uv}\|_2 \leq \sqrt{2}k$ from Lemma \ref{lem:minimizes_two_vertices_case}, it is clear that if $Z_{uv}$ are i.i.d Laplace random variables from $\mathsf{Lap}(b)$ with $b = \frac{\sqrt{2}k}{\epsilon}$, 
\begin{equation}\label{eq:bounded_uv}
  \begin{aligned}
    \Pr{\mathcal{Z}_{uv}\in S} \leq e^\epsilon \Pr{\mathcal{Z}_{uv} -a_{uv}\in S} &= e^\epsilon \Pr{\mathcal{Z}_{uv} \in  S + a_{uv}} \\
    &\leq e^\epsilon \Pr{\mathcal{Z}_{uv} \in S'}.
  \end{aligned}
\end{equation}
Here, the first inequality comes from the Laplace mechanism and the last inequality comes from \cref{eq:eq2}. Thus, \cref{eq:bounded_uv} together with Lemma \ref{lem:reduction_to_uv} completes the proof of Theorem \ref{thm:privacy_of_simplex_embedding}.

\begin{remark}
  Again, we note that the range $S$ \textbf{does not} depend on the randomness of noise, instead it only depends on the values of the input cost function. This is why we could let $R$ in Lemma \ref{lem:laplace} be just $S$ and directly apply the Laplace mechanism.
\end{remark}

\subsubsection{Utility analysis}\label{sec:utility_multiway}

Let the non-negative and symmetric function $c = [n]\times [n]\rightarrow \mathbb{R}_{\geq 0}$ be the input cost function. We define the total sum of costs as $W := 2\sum_{\{u,v\}} c(u,v)$, in which the cost between each pair of points is counted twice. Given any fractional placement of the points $x\in (\Delta_k)^n$, we define $\mathcal{E}(x)$ be the size of the total cost of such placement, that is:
$$\mathcal{E}(x) = \sum_{ \{u,v\}\atop u\neq v} c(u,v){\|x_u - x_v\|_1}.$$
We give the following of the utility guarantee of Algorithm \ref{alg:simplex_embedding}:

\begin{theorem}\label{thm:utility_simplex_embedding}
    Fix any $\epsilon>0$. For any given $k,n\in \mathbb{N}_+ (k\leq n)$, set of terminals $T\subseteq[n]$ and cost function $c$, Algorithm \ref{alg:simplex_embedding} outputs a placement $\hat{x}^* \in (\Delta_k)^n$ such that 
    $$\mathbb{E}\left[\mathcal{E}(\hat{x}^*) \right] - \mathsf{OPT}(c,n,k) \leq O\left(\frac{n{k}\log (k)}{\epsilon}\right).$$
    Here, $\mathsf{OPT}(c,n,k)$ is the optimal value of the simplex embedding problem defined by $c,n$ and $T$. Further, we have that with high probability,
       $\mathcal{E}(\hat{x}^*) - \mathsf{OPT}(c,n,k) \leq O\left(\frac{n{k}\log (nk)}{\epsilon}\right).$
\end{theorem}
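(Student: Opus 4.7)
The plan is to argue directly from the optimality of $\hat{x}^*$ for the perturbed program \textbf{LP1}. Write $g(x)=\mathcal{E}(x)+\sum_{t\in T}\sum_{u\in [n]\setminus T} Z_{\{t,u\}}\|x_t-x_u\|_1$, and let $x^*$ denote the (unique) minimizer of the noiseless objective $\mathcal{E}$ with $\mathcal{E}(x^*)=\mathsf{OPT}(c,n,k)$. Since $\hat{x}^*$ minimizes $g$, we have $g(\hat{x}^*)\le g(x^*)$, which rearranges to
\begin{equation*}
\mathcal{E}(\hat{x}^*)-\mathsf{OPT}(c,n,k)\;\le\;\sum_{t\in T}\sum_{u\in [n]\setminus T} Z_{\{t,u\}}\bigl(\|x_t-x_u^*\|_1-\|x_t-\hat{x}_u^*\|_1\bigr).
\end{equation*}
Now use the simplex identity $\|x_t-x_u\|_1=2(1-x_u^{(t)})$ from the proof of Lemma~\ref{lem:sum_invar}: the constant $2\sum_{t,u}Z_{\{t,u\}}$ cancels, leaving
\begin{equation*}
\mathcal{E}(\hat{x}^*)-\mathsf{OPT}(c,n,k)\;\le\;2\sum_{u\in [n]\setminus T}\sum_{t\in T} Z_{\{t,u\}}\,\alpha_u^{(t)},\qquad \alpha_u := \hat{x}_u^*-x_u^*.
\end{equation*}
This is the ``uncut perspective'' described in Section~\ref{sec:overview_multiway}: by using the invariant $\sum_{t\in T}x_u^{(t)}=1$, the dependence on noise collapses from a sum over $O(nk)$ noisy edges into a sum whose per-vertex contribution is controlled by a single ``displacement'' vector $\alpha_u$.

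Next I exploit the structure of $\alpha_u$. Both $\hat{x}_u^*$ and $x_u^*$ lie in $\Delta_k$, so $\sum_{t\in T}\alpha_u^{(t)}=0$ and the positive part $(\alpha_u)^+$ satisfies $\|(\alpha_u)^+\|_1=\|(\alpha_u)^-\|_1\le 1$. Hence, for each non-terminal $u$, the inner sum $\sum_{t}Z_{\{t,u\}}\alpha_u^{(t)}$ is maximized (over all admissible $\alpha_u$) by placing all positive mass on the coordinate attaining $\max_t Z_{\{t,u\}}$ and all negative mass on the coordinate attaining $\min_t Z_{\{t,u\}}$, yielding the bound
\begin{equation*}
\sum_{t\in T} Z_{\{t,u\}}\,\alpha_u^{(t)}\;\le\;\max_{t\in T} Z_{\{t,u\}}-\min_{t\in T} Z_{\{t,u\}}.
\end{equation*}
This is the crucial saving that turns an apparent $O(nk^2/\epsilon)$ bound (the $\ell_1$-norm of the noise on each cut scaled by $\max$ noise) into an $O(nk\log k/\epsilon)$ bound, since what appears per vertex is only the \emph{range} of $k$ i.i.d.\ $\mathsf{Lap}(b)$ variables with $b=\sqrt{2}k/\epsilon$.

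To finish, bound the range of $k$ i.i.d.\ Laplace variables. Standard computations give $\mathbb{E}[\max_{t\in T}Z_{\{t,u\}}-\min_{t\in T}Z_{\{t,u\}}]=O(b\log k)$, and by Laplace tail bounds plus a union bound over the $n$ non-terminals and the $k$ coordinates, $\max_t Z_{\{t,u\}}-\min_t Z_{\{t,u\}}=O(b\log(nk))$ for every $u$ simultaneously with high probability. Summing over all $n-k<n$ non-terminals and multiplying by the factor of $2$ in front gives
\begin{equation*}
\mathbb{E}[\mathcal{E}(\hat{x}^*)]-\mathsf{OPT}(c,n,k)\;=\;O\!\left(\frac{nk\log k}{\epsilon}\right),
\end{equation*}
and the claimed high-probability bound $O(nk\log(nk)/\epsilon)$ analogously.

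The main obstacle is the second paragraph: recognizing that the utility analysis should be done in the ``displacement'' variable $\alpha_u=\hat{x}_u^*-x_u^*$ and that the simplex constraint forces $\sum_t\alpha_u^{(t)}=0$ with bounded positive part. Without this observation one would estimate $|\sum_t Z_{\{t,u\}}\alpha_u^{(t)}|$ by $\|Z\|_\infty\cdot\|\alpha_u\|_1$, losing an extra factor of $k$ and landing at $\tilde{O}(nk^2/\epsilon)$ rather than the target $\tilde{O}(nk/\epsilon)$. Once this observation is in place, the remaining steps (Laplace expected range and tail bounds) are routine.
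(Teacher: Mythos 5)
Your proof is correct and rests on the same core idea as the paper's: the simplex invariant $\sum_{t\in T}\|x_t-x_u\|_1 = 2(k-1)$ (equivalently, $\|x_t-x_u\|_1 = 2(1-x_u^{(t)})$ and $\sum_t x_u^{(t)}=1$) collapses the per-vertex noise contribution to a quantity of order $b\log k$ rather than $kb\log k$, which is exactly the ``uncut perspective'' the paper packages via Lemma~\ref{lem:little_noise_in_uncut} and the $\mathcal{U}$, $\widehat{\mathcal{U}}$ bookkeeping. Your route is a direct algebraic version of the same argument — manipulating $g(\hat{x}^*)\le g(x^*)$ and bounding $\sum_t Z_{\{t,u\}}\alpha_u^{(t)}$ by the Laplace range using $\sum_t\alpha_u^{(t)}=0$ — and is, if anything, slightly cleaner for the expectation bound (it avoids the paper's appeal to ``derive almost identically with a per-vertex union bound'').
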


 If we consider $c$ as the weights of edges of an undirected graph, then it can be verified that $\mathcal{E}(x)$ is equivalent to the size of the fractional cut specified by the placement $x$ (up to of coefficient $1/2$). We denote by $\mathcal{U}(x)$ the size of ``uncut'' specified by $x$, that is, $\mathcal{U}(x) = W - \mathcal{E}(x)$ for any $x\in (\Delta_k)^n$.
We first consider the following linear program $\mathbf{LP2}$:

\begin{gather*}
  \min_{x\in (\mathbb{R}^k)^n}\sum_{ \{u,v\}\atop u\neq v} c(u,v)(2-\|x_u - x_v\|_1)  \\
  \begin{aligned}
  \textup{s.t.}\quad &x_{s_i}  = \mathbf{e}_i, &\forall i\in [k] \\
                     &x_u \in \Delta_k , &\forall u\in [n]. \\
  \end{aligned}
  \end{gather*}
Clearly, since (with a slight abuse of notation),
$$\sum_{e = \{u,v\}}c(e)(2-\|x_u - x_v\|_1) + \sum_{e = \{u,v\}} c(e)\|x_u  -x_v\|_1 = 2\sum_{\{u,v\}}c(u,v) = W,$$
then $\mathcal{U}(x) = \sum_{ \{u,v\}} c(u,v)(2-\|x_u - x_v\|_1)$. Therefore, solving $\mathbf{LP2}$ is equivalent to maximizing uncut $\mathcal{U}(x)$. 

Similarly, suppose all noises in $\mathcal{Z}$ are fixed as $Z_{\{t,u\}} = z_{t,u}\in \mathbb{R}$, and let $\hat{c}$ be the noisy costs (whose cost between any pair of terminal $t$ and non-terminal $u$ has an extra $z_{t,u}$ additive noise). We write $\widehat{\mathcal{E}}(x)$ and $\widehat{\mathcal{U}}(x)$ as the size of cut and uncut with the noisy costs specified by $x$ respectively. Clearly, the total sum of costs in $\hat{c}$ is 
$$\widehat{W} = W + 2\sum_{t\in T}\sum_{u\in [n]\backslash T} z_{t,u},$$
and $\widehat{W} = \widehat{\mathcal{E}}(x) +  \widehat{\mathcal{U}}(x)$. Now, we write the following linear program $\mathbf{LP3}$ which maximizes the uncut $\widehat{\mathcal{U}}(x)$:
\begin{gather*}
  \min_{x\in (\mathbb{R}^k)^n}\quad  \sum_{e = \{u,v\}} c(e)(2-\|x_u - x_v\|_1) + \sum_{t\in T}\sum_{u\in [n]\backslash T} z_{t,u}(2-\|x_t - x_u\|_1)  \\
  \begin{aligned}
  \textup{s.t.}\quad &x_{s_i}  = \mathbf{e}_i, &\forall i\in [k] \\
                     &x_u \in \Delta_k , &\forall u\in [n]. \\
  \end{aligned}
  \end{gather*}
We have the following lemma on the difference between $\mathcal{U}(x)$ and $\widehat{\mathcal{U}}(x)$.

\begin{lemma}\label{lem:little_noise_in_uncut}
  With high probability, for any fractional partition $x\in (\Delta_k)^n$, we have 
  $$|\mathcal{U}(x) - \widehat{\mathcal{U}}(x)| \leq O\left(\frac{n{k}\log (nk)}{\epsilon}\right).$$
\end{lemma}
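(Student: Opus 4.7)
The plan is to rewrite $\widehat{\mathcal{U}}(x) - \mathcal{U}(x)$ in a form where the simplex constraint on the non-terminal coordinates $x_u$ collapses a sum of $k$ noises into a single maximum, and then to control that maximum uniformly via a standard Laplace tail bound.

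First I would recall that by the definition of $\widehat{\mathcal{U}}$,
\[
\widehat{\mathcal{U}}(x) - \mathcal{U}(x) \;=\; \sum_{t\in T}\sum_{u\in [n]\setminus T} z_{t,u}\bigl(2-\|x_t - x_u\|_1\bigr).
\]
Since each terminal $s_i$ satisfies $x_{s_i} = \mathbf{e}_i$, for any non-terminal $u\in [n]\setminus T$ and index $i\in [k]$ we have
\[
\|x_{s_i}-x_u\|_1 \;=\; (1-x_u^{(i)}) + \sum_{j\neq i} x_u^{(j)} \;=\; 2(1-x_u^{(i)}),
\]
where the last equality uses $x_u\in \Delta_k$. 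Hence $2-\|x_{s_i}-x_u\|_1 = 2 x_u^{(i)}$, and therefore
\[
\widehat{\mathcal{U}}(x) - \mathcal{U}(x) \;=\; 2\sum_{u\in [n]\setminus T}\sum_{i=1}^{k} z_{s_i, u}\, x_u^{(i)}.
\]

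Next, for each fixed $u$, the vector $x_u\in \Delta_k$ is a convex combination over the $k$ terminal indices, so
\[
\Bigl|\sum_{i=1}^{k} z_{s_i,u}\, x_u^{(i)}\Bigr| \;\leq\; \max_{i\in [k]}\,|z_{s_i,u}|,
\]
and consequently
\[
|\widehat{\mathcal{U}}(x) - \mathcal{U}(x)| \;\leq\; 2\sum_{u\in [n]\setminus T}\max_{i\in [k]}|z_{s_i,u}| \;\leq\; 2n \cdot \max_{i\in [k],\,u\in [n]\setminus T}|z_{s_i,u}|.
\]
The right-hand side depends only on the noise realization and not on $x$, so a single good event for the noise yields the desired bound uniformly over all $x\in (\Delta_k)^n$.

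Finally I would apply the Laplace tail bound: each $z_{s_i,u}\sim \mathsf{Lap}(\sqrt{2}k/\epsilon)$, so $\Pr\bigl[|z_{s_i,u}| > C(k/\epsilon)\log(nk)\bigr]\leq 1/\poly(nk)$ for a suitable constant $C$. A union bound over the at most $k(n-k)\leq nk$ noise terms gives $\max_{i,u}|z_{s_i,u}| = O\bigl(k\log(nk)/\epsilon\bigr)$ with high probability, which combined with the previous display yields $|\widehat{\mathcal{U}}(x) - \mathcal{U}(x)|\leq O(nk\log(nk)/\epsilon)$, as claimed. The one nontrivial step is the identity $2-\|x_{s_i}-x_u\|_1 = 2 x_u^{(i)}$: without it, estimating each coefficient by its worst case $2$ would leave $nk$ noise terms of size $\widetilde{O}(k/\epsilon)$ per non-terminal and give only the weaker $O(nk^2\log(nk)/\epsilon)$ bound. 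It is precisely the fact that the coefficients $\{2x_u^{(i)}\}_{i=1}^k$ form a probability distribution (scaled by $2$), which is the reformulation of Lemma~\ref{lem:sum_invar} in terms of the terminal indices, that saves the missing factor of $k$.
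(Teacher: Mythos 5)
Your proposal is correct and follows essentially the same route as the paper's proof: the crucial step in both is that, for each fixed non-terminal $u$, the nonnegative coefficients $2-\|x_{s_i}-x_u\|_1 = 2x_u^{(i)}$ sum to $2$ (the paper invokes Lemma~\ref{lem:sum_invar} for this; you derive it directly from the simplex constraint), so the $k$ noises incident on $u$ contribute at most $2\max_i|z_{s_i,u}|$ rather than $O(k)\max_i|z_{s_i,u}|$, and a Laplace tail bound plus a union bound over the $k(n-k)$ noises finishes the argument.
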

\begin{proof}
  Suppose for any pair of terminal $t$ and non-terminal $u$, $z_{t,u}$ is independently sampled from $\mathsf{Lap}(b)$. Then by the tail bound of Laplace noise, with high probability, $\max_{t,u}|z_{t,u}|\leq O(b\log(nk))$. Suppose this holds. By lemma \ref{lem:sum_invar}, we see that for any $x_u\in \Delta_k$ (no matter whether $x_u$ is integral or not), we have
  $$\sum_{t\in T} (2-\|x_t - x_u\|_1) = 2k - (2k-2) = 2.$$
  Then, 
  $$\sum_{u\in [n]\backslash T} \sum_{t\in T}z_{t,u}(2-\|x_t-x_u\|_1)\leq  2\sum_{u\in [n]\backslash T}b{\log(nk)} \leq 2nb{\log(nk)}.$$
  Finally, letting $b = O(k/\epsilon)$ competes the proof of Lemma \ref{lem:little_noise_in_uncut}.
\end{proof}

\noindent With Lemma \ref{lem:little_noise_in_uncut}, we are now ready to state the proof of Theorem \ref{thm:utility_simplex_embedding}.

\begin{proof}
(Of Theorem \ref{thm:utility_simplex_embedding}.) In the original cost function $c$, we remark that $\mathcal{E}(x) + \mathcal{U}(x) = W $ for any fractional partition $x$, while with the noisy costs, $\widehat{\mathcal{E}}(x) + \widehat{\mathcal{U}}(x) = \widehat{W}$ for any fractional partition $x$. Suppose $x^*$ is the optimal fractional solution on simplex embedding with the original cost $c$, and $\hat{x}^*$ is the optimal fractional solution with noisy costs $\hat{c}$ (i.e., optimal solution of \textbf{LP1}). We have
\begin{equation*}
  \begin{aligned}
    & \mathcal{U}(x^*) - \mathcal{U}(\hat{x}^*) = W - \mathcal{E}(x^*) - W + \mathcal{E}(\hat{x}^*) \geq 0 \quad \quad\text{and}\\
    &  \widehat{\mathcal{U}}(\hat{x}^*) -  \widehat{\mathcal{U}}(x^*) = W' -  \widehat{\mathcal{E}}(\hat{x}^*) - W' +  \widehat{\mathcal{E}}(x^*) \geq 0.
  \end{aligned}
\end{equation*}
Therefore,
\begin{equation*}
  \begin{aligned}
    |\mathcal{E}(x^*) - \mathcal{E}(\hat{x}^*)| = |\mathcal{U}(x^*) - \mathcal{U}(\hat{x}^*)| &= \mathcal{U}(x^*) - \mathcal{U}(\hat{x}^*)\\
    & \leq \mathcal{U}(x^*) - \mathcal{U}(\hat{x}^*) + \widehat{\mathcal{U}}(x^*) - \widehat{\mathcal{U}}(\hat{x}^*) \\
    &\leq |\mathcal{U}(x^*) - \widehat{\mathcal{U}}({x}^*)| + |{\mathcal{U}}(\hat{x}^*) - \widehat{\mathcal{U}}(\hat{x}^*)| \\
    &\leq O\left(\frac{n{k}\log (nk)}{\epsilon}\right).
  \end{aligned}
\end{equation*}
This completes the proof of the utility guarantee with high probability. Here, the first inequality follows from $\widehat{\mathcal{U}}(\hat{x}^*) -  \widehat{\mathcal{U}}(x^*) \geq 0$, and the second equality follows from $\mathcal{U}(x^*) - \mathcal{U}(\hat{x}^*) \geq 0$. The final inequality follows from Lemma \ref{lem:little_noise_in_uncut}. The guarantee in expectation can be derived almost identically, except using a union bound over the $k$ noises incident on each point (instead of all $nk$ noises) and linearity of expectation.
\end{proof}


\subsection{The application for private multiway cut: Proof of Theorem \ref{thm:main_multiway_cut}}
Given an undirected and weighted graph $G = ([n],E)$, non-negative edge weights $c:{E}\rightarrow \mathbb{R}_+$, terminals $T = \{s_1,\cdots, s_k\}$, we attach each vertex in $[n]$ with a vector $x_u \in \mathbb{R}^k$. Then, the problem of finding the optimal multiway cut with respect to given terminals $T$ can be formulated as the following integer linear program \textbf{LP4}:
\begin{gather*}
    \min_{x\in (\mathbb{R}^k)^n}  \frac{1}{2}\sum_{\{u, v\} \in E} c(u,v)\|x_u - x_v\|_1 \\
    \begin{aligned}
    \textup{s.t.}\quad &x_{s_i}  = \mathbf{e}_i, &\forall i\in [k] \\
                       &x_u \in  \{0,1\}^k , &\forall u\in [n]. \\
    \end{aligned}
\end{gather*}
In the above linear program, if $u$ and $v$ are in different connected components, then $x_u\neq x_v$, and $\frac{1}{2}c(u,v)\|x_u-x_v\|_1$ is the edge weight between $u$ and $v$. Therefore, the optimum value of \textbf{LP4} is also the minimum size of multiway cut on $G$. Notice that if we relax the constraint of \textbf{LP4} from $x_i\in \{0,1\}^k$ to $x_i\in \Delta_k$, then \textbf{LP4} is equivalent to \textbf{LP0} if we consider the costs between pairs in the simplex embedding problem as the edge weights. The approximation ratio with respect to such relaxation is widely studied~\cite{cualinescu1998improved, karger2004rounding, buchbinder2013simplex, sharma2014multiway}. We apply the rounding scheme given by \cite{sharma2014multiway} for the simplex embedding linear program of multiway cut:

\begin{theorem}[\cite{sharma2014multiway}]\label{thm:rounding}
    There is a randomized algorithm $\mathcal{R}$ such that on a fractional placement $x\in (\Delta_k)^n$ where $x\in (\Delta_k)^n$ costs $\mathcal{E}(x)$ on some graph $G$, $\mathcal{R}$ outputs a solution $\tilde{x}\in (\{0,1\}^k)^n$ such that 
    $$\mathbb{E}[\mathcal{E}(\tilde{x})] \leq 1.2965 \cdot \mathcal{E}(x),$$
    where $\mathcal{E}(\tilde{x})$ is the cost of cut $\tilde{x}$ on $G$.
\end{theorem}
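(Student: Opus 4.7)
The plan is to construct a randomized rounding scheme $\mathcal{R}$ mapping each fractional point $x_u \in \Delta_k$ to a vertex $\tilde x_u \in \{\mathbf{e}_1,\ldots,\mathbf{e}_k\}$ of the simplex, with $\tilde x_{s_i} = \mathbf{e}_i$ enforced at terminals, and to control the cut probability edge by edge. Since
\begin{equation*}
\mathcal{E}(\tilde x) \;=\; \sum_{\{u,v\}:\, u\neq v} c(u,v)\cdot \mathbf{1}[\tilde x_u \neq \tilde x_v],
\end{equation*}
linearity of expectation reduces the theorem to exhibiting a scheme $\mathcal{R}$ satisfying the \emph{per-edge} bound
\begin{equation*}
\Pr[\mathcal{R}]{\tilde x_u \neq \tilde x_v} \;\leq\; 1.2965 \cdot \tfrac{1}{2}\|x_u - x_v\|_1 \qquad \text{for every pair } x_u, x_v \in \Delta_k.
\end{equation*}
Summing the per-edge bound over edges $\{u,v\}$ weighted by $c(u,v)$ then gives $\mathbb{E}[\mathcal{E}(\tilde x)] \leq 1.2965\cdot \mathcal{E}(x)$.

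The scheme itself will lie in the family of \emph{threshold-based} roundings on the simplex first proposed by \cite{DBLP:journals/jcss/CalinescuKR00}: draw a random threshold $\rho\in[0,1]$ from a distribution $\mathcal{D}$ together with an independent uniformly random permutation $\sigma$ of $[k]$, then assign $u$ to terminal $\sigma(j)$ for the smallest $j$ satisfying $x_u^{(\sigma(j))} \geq \rho$, using a default rule if no such $j$ exists. For any scheme in this family, the separation probability admits a closed-form expression as an integral against $\mathcal{D}$ of functions depending only on the coordinate-wise pairs $(x_u^{(i)}, x_v^{(i)})$. Following Sharma and Vondr\'ak, I would enlarge the family to convex mixtures of threshold schemes (including ``descending-threshold'' and ``single-coordinate'' variants) parametrized by a density $\phi$, yielding a flexible parameter space capable of pushing the worst-case ratio below the $\approx 1.3438$ achievable with simpler schemes such as those in \cite{buchbinder2013simplex}.

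The main step is then the minimax problem of choosing $\phi$ to minimize
\begin{equation*}
\alpha(\phi) \;:=\; \sup_{a,b\in\Delta_k,\, a\neq b}\; \frac{\Pr[\mathcal{R}_\phi]{a,b \text{ separated}}}{\tfrac{1}{2}\|a-b\|_1},
\end{equation*}
which is cast as a factor-revealing LP whose primal variables describe $\phi$ and whose constraints are the per-edge inequalities across all extremal edge configurations $(a,b)$. The hard part is reducing this supremum, which \emph{a priori} ranges over the infinite-dimensional set $\Delta_k\times\Delta_k$, to a tractable finite LP: by convexity and a face-reduction argument, the extremal edges can be taken to be supported on at most three coordinates of the simplex (so the bound becomes independent of $k$), and can be further restricted to a finite family of canonical geometric configurations. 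Once this reduction is in place, an explicit $\phi$ together with a matching dual feasible certificate yields $\alpha(\phi) \leq 1.2965$, and combining with the per-edge reduction above proves the theorem.
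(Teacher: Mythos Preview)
This theorem is not proved in the paper at all: it is stated with the citation \cite{sharma2014multiway} and used as a black box. The paper's contribution is to privately solve the CKR relaxation (Theorem~\ref{thm:utility_simplex_embedding}) and then invoke the Sharma--Vondr\'ak rounding as a post-processing step; the rounding guarantee itself is simply imported.

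Your proposal, by contrast, is a high-level sketch of the Sharma--Vondr\'ak argument. The per-edge reduction and the threshold-scheme framework are correct in spirit, though a couple of details are slightly off: the actual analysis in \cite{sharma2014multiway} does not simply reduce to edges supported on three coordinates with a bound independent of~$k$ (indeed the integrality gap of the CKR relaxation is known to vary with~$k$, and $1.2965$ is the asymptotic worst case), and the scheme is not a pure mixture of CKR-type threshold roundings but rather combines several distinct families (exponential clocks, descending thresholds, independent rounding) whose careful interplay is what gets below the Buchbinder--Naor--Schwartz $1.3438$ barrier. So if you were required to prove the theorem from scratch, your outline captures the right shape but would need substantial refinement in the reduction step and in specifying the actual rounding distribution. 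For the purposes of this paper, however, no proof is expected---a citation suffices.
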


Since the rounding scheme $\mathcal{R}$ only requires $x\in (\Delta_k)^n$ as input, then we can treat it as a post-processing step that does not occur any loss of privacy. Putting everything all together, we give the following algorithm for private multiway cut:

\begin{center}
    \begin{tcolorbox}[=sharpish corners, colback=white, width=1\linewidth]\label{alg:multiway_cut}

        \begin{center}
      \textbf{\emph{Algorithm \ref*{alg:multiway_cut}: Private multiway cut}}
        \end{center}
        \vspace{6pt}
    \begin{enumerate}
  \item \textbf{Solving simplex embedding:} 
   Given an undirected graph $G$, a set of terminals $T$ and privacy budget $\epsilon$, run Algorithm \ref{alg:simplex_embedding} with costs $c$ as the non-negative edge weights.
   
  \item \textbf{Rounding}: Let the output of Algorithm \ref{alg:simplex_embedding} be ${x}_{pri}\in (\Delta_k)^n$, run the rounding scheme $\mathcal{R}$ in Theorem \ref{thm:rounding} and output the solution for multiway cut $\tilde{x}_{pri}\in \{1,\cdots, k\}^n$.   
\end{enumerate}
  \end{tcolorbox} 
  \end{center}

\noindent With algorithm \ref{alg:multiway_cut}, we are now ready to give the proof of Theorem \ref{thm:main_multiway_cut}:
\begin{proof}
    (Of Theorem \ref{thm:main_multiway_cut}.) The privacy guarantee of Theorem \ref{thm:main_multiway_cut} directly follows from Theorem \ref{thm:privacy_of_simplex_embedding} and the fact that differential privacy is robust to post-processing. As for the utility, given the input graph $G$, again let $\mathcal{E}(x)$ be the cost of any (fractional) solution for multiway cut on $G$. Let $\mu_G$ be the output distribution of Algorithm \ref{alg:simplex_embedding}. Notice that
    \begin{equation*}
        \begin{aligned}
            \mathbb{E}[\mathcal{E}(\tilde{x}_{pri})] &= \int_{x\in (\Delta_k)^n} \mathbb{E}[\mathcal{E}(\tilde{x}_{pri}) | x_{pri} = x] \mu_G(x) dx\\
            & \leq 1.2965\int_{x\in (\Delta_k)^n} \mathcal{E}(x) \mu_G(x) dx = 1.2965 \cdot \mathbb{E}_{\mu_G}[\mathcal{E}(x)]\\
            & \leq 1.2965\cdot \mathsf{OPT}^T(G) + O\left(\frac{n{k}\log(k)}{\epsilon}\right),
        \end{aligned}
    \end{equation*}
    where $\mathsf{OPT}^T(G)$ is the optimum value of multiway cut on $G$ in terms of the terminals $T$. Here, the first inequality comes from Theorem \ref{thm:rounding} and the last inequality comes from Theorem \ref{thm:utility_simplex_embedding}, completing the proof.
\end{proof}

\subsection{Information-theoretic lower bound}\label{sec:lowerbound_multiway}
Here, we give the lower bound on solving multiway $k$-cut with pure differential privacy for $k = O(n)$. This generalizes the lower bound in \cite{DBLP:conf/nips/DalirrooyfardMN23} in terms of $\epsilon$-DP. 
\begin{theorem}
  Fix any $\epsilon>0$, $n,k\in \mathbb{N}_+$ and $6\leq k\leq n/2$. Any $(\epsilon,0)$-differentially private algorithm on approximating multiway cut on $n$-vertex graphs has expected error at least $\frac{n\log(k/6)}{24\epsilon}$.
\end{theorem}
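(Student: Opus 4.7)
The plan is to apply a packing argument for pure differential privacy, using the group privacy inequality to trade off the number of distinguishable inputs against their common edit distance to a reference graph.

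For the construction, set $m = n-k \ge n/2$ (since $k \le n/2$) and fix a sub-alphabet $T' \subseteq T$ of $k' := \lceil k/6 \rceil$ terminals. For each labeling $f \colon [m] \to T'$ build the graph $G_f$ on $T \cup \{v_1,\dots,v_m\}$ by placing one unit-weight edge between $v_i$ and $t_{f(i)}$ for each $i$. Three properties are immediate: $\mathsf{OPT}(G_f) = 0$, uniquely witnessed by the partition that places $v_i$ on the side of $t_{f(i)}$; for any partition $g \colon [m] \to [k]$ the cost of $g$ on $G_f$ equals the Hamming distance $d_H(f,g) := |\{i : g(v_i) \neq f(i)\}|$; and the edit distance from $G_f$ to the empty graph $G_\emptyset$ is exactly $m$, while for any other $G_{f'}$ it is $2\,d_H(f,f')$.

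Now suppose $A$ is $(\epsilon,0)$-DP with expected additive error at most $\alpha$ on every input. By Markov's inequality, $\Pr{A(G_f)\in S_f} \ge 1/2$ with $S_f := \{g : d_H(f,g) \le 2\alpha\}$. Using the Gilbert--Varshamov bound in the alphabet $T'$, pick a code $C \subseteq (T')^m$ with minimum Hamming distance $4\alpha+1$ and size $\log|C| \ge m\log k' - 4\alpha\log(emk'/(4\alpha))$; the triangle inequality then ensures that the accept sets $\{S_f\}_{f\in C}$ are pairwise disjoint. Taking $G_\emptyset$ as the reference, group privacy at edit distance $m$ yields
\[
    1 \;\ge\; \sum_{f\in C}\Pr{A(G_\emptyset)\in S_f} \;\ge\; \tfrac{|C|}{2}\, e^{-\epsilon m},
\]
so $\log|C| \le \epsilon m + \log 2$. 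Combining the two bounds on $\log|C|$ gives $4\alpha\log(emk'/(4\alpha)) \ge m(\log k' - \epsilon) - O(1)$, from which a short calculation (in the natural regime where the logarithm is $O(\log k')$) extracts $\alpha = \Omega(m\log k'/\epsilon) = \Omega(n\log(k/6)/\epsilon)$.

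The main technical obstacle is tightening the implicit constants to the precise factor $1/24$ claimed in the statement; this demands careful coordination of Markov's slack, the exact Gilbert--Varshamov volume estimate, and the choice of the sub-alphabet $k' = \lceil k/6\rceil$, but introduces no new ideas. A secondary subtlety is that the packing argument is most informative when $\epsilon$ is not much larger than $\log(k/6)$: in the regime $\epsilon \gg \log(k/6)$ the claimed bound is small and can be recovered either by rescaling the edge weights of the construction to $w \approx \log(k/6)/(2\epsilon)$, which shifts the effective privacy level back into the non-trivial range, or by observing that any nonzero-error algorithm already exceeds it.
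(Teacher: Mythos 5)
Your construction places \emph{unit-weight} edges, and this is a fatal error, not a secondary subtlety. With unit weights the cost of any partition on $G_f$ is a Hamming distance and is therefore at most $m = n-k < n$, so for $\epsilon < \log(k/6)/24$ the target lower bound $\frac{n\log(k/6)}{24\epsilon}$ already exceeds the maximum possible error on every graph in your family. Relatedly, the ``short calculation'' at the end does not do what you claim: from $4\alpha\log(emk'/(4\alpha)) \ge m(\log k' - \epsilon)$ with the logarithm treated as $O(\log k')$, one extracts $\alpha \gtrsim m(\log k' - \epsilon)/\log k' = m(1 - \epsilon/\log k')$, not $\Omega(m\log k'/\epsilon)$. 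The $1/\epsilon$ dependence simply does not appear; it has to be injected through the edge weight. The rescaling you mention as a fallback for $\epsilon \gg \log(k/6)$ is in fact needed for all $\epsilon$: the paper uses weight $\frac{\log(k/6)}{2\epsilon}$ on every terminal--nonterminal edge from the outset. With that weight the edit distance from $G_\emptyset$ to $G_f$ becomes $m\lceil w\rceil$ rather than $m$, the accept radii scale by $1/w$, and the resulting bound picks up the missing factor of $w \sim \log(k/6)/\epsilon$.

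Beyond that flaw, the packing/Gilbert--Varshamov route is a legitimately different argument from the paper's. The paper averages over a uniformly random assignment $\tau\in[k]^{V'}$ and, for each fixed nonterminal $v_i$, compares only the $k$ graphs $G_{\tau^{(1)}},\ldots,G_{\tau^{(k)}}$ that reassign $v_i$; a contradiction using group privacy at edit distance $2\lceil w\rceil$ and a counting argument over a subset $S\subseteq[k]$ shows that $v_i$ is misplaced with probability at least $1/6$, and summing over vertices gives the bound with a clean constant. Your approach instead packs exponentially many graphs via a code and invokes group privacy from a single reference graph at a much larger edit distance $m\lceil w\rceil$. Both are valid templates for $\epsilon$-DP lower bounds; the per-vertex argument in the paper is more local and yields the precise constant $1/24$ directly, while the global code-packing argument, once the weights are fixed, requires tighter control of the ball-volume estimate and of the regime in which $\log(emk'/(4\alpha)) = O(\log k')$ to recover the same constant.
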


\begin{proof}
 Let the set of vertices $V = V' \cup T$, where $V' = \{v_1,v_2,\cdots v_{n-k}\}$ are non-terminals and $T=\{t_1,t_2,\cdots,t_k\}$ are terminals.
  For any $\tau\in [k]^{V'}$, let $G_\tau$ be the graph that is constructed by the following steps:
  \begin{enumerate}
    \item Start from an empty graph on $n$ vertices.
    \item For any $i\in V'$ and $s\in T$, if $\tau$'s $i$-th element $\tau_i = s$, then connect an edge between $v_i$ and $s$ with weight $\frac{\log (k/6)}{2\epsilon}$.
  \end{enumerate}
  Let the collection of all these graphs $\Pi = \{G_\tau |\tau\in [k]^{V'}\}$ be the input domain, and $M:\Pi \rightarrow [k]^{V'}$ be a $(\epsilon,0)$-differentially private algorithm such that for any $G\in \Pi$, $M$ outputs a solution on multiway cut. Let $\textbf{err}(M,G) = \text{size}_G(M(G)) - OPT_G$ be the gap between the output of $M$ and the optimal solution of multiway cut on $G$.  

  For any $1\leq i\leq n-k$, let $\textbf{I}_i(M,G_\tau)$ be the indicator variable such that $\textbf{I}_i(M,G_\tau) = 1$ if $M$ does not assign $v_i$ on the $\tau_i$ side, and $\textbf{I}_i(M,G_\tau) = 0$ otherwise. For each $G_\tau\in \Pi$, by the construction, we note that the optimal size of multiway cut in $G_\tau$ is zero since each pair of terminals are disconnected. Thus.
  \begin{equation}\label{eq:identity_to_error}
      \textbf{err}(M,G_\tau) = \text{size}_{G_\tau}(M(G)) = \frac{\log (k/6)}{2\epsilon}\sum_{i = 1}^{n-k} \mathbf{I}_i(M,G_{\tau}).
  \end{equation}
Next, we show that there \textit{exists} a $\tau$ such that the sum of the expected errors 
$$\sum_{i = 1}^{n-k} \mathbb{E}_{M}[\mathbf{I}_i(M,G_{\tau})]$$ is at least $\Omega(n)$. To do this, we further consider a uniformly random distribution over $\tau \in [k]^{V'}$, and it is enough to show that 

$$\mathbb{E}_{\tau} \left[\sum_{i = 1}^{n-k} \mathbb{E}_{M}[\mathbf{I}_i(M,G_{\tau})]\right] =  \sum_{i = 1}^{n-k} \mathbb{E}_{\tau}[\mathbb{E}_{M}[\mathbf{I}_i(M,G_{\tau})]] = \Omega(n)$$
for $k\leq n/2$, where the first equality comes from the linearity of expectation. Now, fix any $v_i\in V'$, and any partial configuration $\hat{\tau} \in [k]^{V'\backslash\{i\}}$.
Let $M_i: \Pi \rightarrow [k]$ be the projection of $M$ on $v_i$. Since $M$ is $\epsilon$-differentially private, by post-processing immunity, $M_i$ is also $\epsilon$-differentially private. For the fixed vertex $v_i$, let $\tau^{(1)}, \tau^{(2)}, \cdots, \tau^{(k)}$ be $k$ elements in $[k]^{V'}$ such that for any $j\in[k]$, $\tau^{(j)}$'s $i$-th element $\tau_i^{(j)} = t_j$, and for any other elements $s\in [|V'|]  \backslash \{i\}$ and any assignment $j\in [k]$, $\tau_s^{(j)} = \hat{\tau}_s^{(j)}$. In other words,  $\tau^{(1)}, \tau^{(2)}, \cdots, \tau^{(k)}$ identifies a series of $k$ graphs in which $v_i$ is connected to $k$ different terminals respectively, and all other vertices connect to the terminals that is compatible with the fixed partial configuration $\hat{\tau}$. Then, we note that 

\begin{equation}\label{eq:corrected_1}
\begin{aligned}
        \mathbb{E}_{\tau}[\mathbb{E}_{M}[\mathbf{I}_i(M,G_{\tau})]] &= \mathbb{E}_{\tau}[\mathbb{E}_{M}[\mathbf{I}_i(M_i,G_{\tau})]]\\
        & = k^{1-n}\sum_{\hat{\tau}\in [k]^{V'\backslash\{i\}}} \left(\frac{1}{k}\sum_{j\in [k]} \mathbb{E}_{M}[\mathbf{I}_i(M_i,G_{\tau}) | \tau = \tau^{(j)}] \right).
\end{aligned}
\end{equation}
For the fixed vertex $v_i\in V'$ and partial configuration $\hat{\tau}$, we define $S\subseteq [k]$ to be
\begin{equation}\label{eq:S}
    \begin{aligned}
        S := \left\{j\in[k]: \mathbb{E}_{M}[\mathbf{I}_i(M_i,G_{\tau}) | \tau = \tau^{(j)}] < \frac{1}{3}\right\}.
    \end{aligned}
\end{equation}
For the sake of contradiction, we assume $|S|\geq k/2$. For any $s, j\in S$, we also define $\mu_s(t_j)$ as the probability that on the input graph $G_{\tau^{(s)}}$, $M_i$ assigns $v_i$ on $t_j$'s side. By Markov's inequality, on any input graph $G_{\tau^{(s)}}$ where $s\in S$,

$$\mu_s(t_s)= \Pr{\mathbf{I}_i(M,G_{\tau^{(s)}}) < \frac{2}{3}} \geq \frac{1}{2}$$
since $\mathbf{I}_i(M,G_{\tau^{(s)}})$ is zero only when $M_i$ assigns $v_i$ on $t_s$'s side and otherwise $\mathbf{I}_i(M,G_{\tau^{(s)}}) = 1$. On the other hand, note that $G_{\tau^{(s)}}$ can be obtained from $G_{\tau^{(1)}}$ by editing edges with total sum of edge weights of $\frac{\log(k/6)}{\epsilon}$. Then, by the fact that $M_i$ is $\epsilon$-differentially private, 
$$\mu_1(t_s) \geq \mu_s(t_s)e^{-\epsilon \cdot \frac{\log(k/6)}{\epsilon}} = \mu_s(t_s) \cdot \frac{6}{k} \geq \frac{3}{k}$$
for any $s\in S$ by repeatedly applying the definition of differential privacy $\log(k/6)/\epsilon$ times. This means that
$$\sum_{s\in S} \mu_1(t_s) \geq |S|\cdot \frac{3}{k} \geq \frac{3}{2} > 1,$$
which contradicts the fact that $\sum_{s\in S} \mu_1(t_s) \leq \sum_{j\in [k]} \mu_1(t_j) = 1$. Therefore, we have that the size of $S$ in \cref{eq:S} is less than $k/2$, and thus we have that 

\begin{equation}\label{eq:corrected_2}
    \begin{aligned}
        \frac{1}{k}\sum_{j\in [k]} \mathbb{E}_{M}[\mathbf{I}_i(M_i,G_{\tau}) | \tau = \tau^{(j)}] \geq \frac{1}{k} \cdot \frac{1}{3}\cdot (k-|S|) \geq \frac{1}{6}.
    \end{aligned}
\end{equation}
Plug \cref{eq:corrected_2} into \cref{eq:corrected_1}, we have that for any $v_i \in V'$, 
\begin{equation}
    \begin{aligned}
         \mathbb{E}_{\tau}[\mathbb{E}_{M}[\mathbf{I}_i(M,G_{\tau})]] \geq \sum_{\hat{\tau}\in [k]^{V'\backslash\{i\}}}\frac{k^{1-n}}{6} = \frac{1}{6}.
    \end{aligned}
\end{equation}
Then, by \cref{eq:identity_to_error} and the linearity of expectation, we have that under the uniform distribution over $\tau \in [k]^{V'}$, 

$$\mathbb{E}_{\tau}\left[\mathbb{E}_M[\textbf{err}(M,G_\tau)]\right] = \frac{\log (k/6)}{2\epsilon}\sum_{i = 1}^{n-k} \mathbb{E}_{\tau}[\mathbb{E}_{M}[\mathbf{I}_i(M,G_{\tau})]] \geq (n-k)\cdot \frac{\log (k/6)}{12\epsilon},$$
which implies that there exists a $\tau \in [k]^{V'}$ such that $\mathbb{E}_M[\textbf{err}(M,G_\tau)]$ is lower bounded by $O\left(\frac{(n-k)\log (k)}{\epsilon}\right)$, completing the proof as long as $k\leq n/2$.
\end{proof}

\section{Other Applications of the Shifting Mechanism}\label{sec:app_shifting}

Here, we introduce some direct implications or applications of the framework based shifting mechanism for solving problems in private combinatorial optimization. For the needs of certain graph applications, we also give the following proposition:
\begin{proposition}\label{prop:robust_against_transit}
Fix any $S\subseteq [N]$ and $s\geq 0$. If $f:\mathcal{X}^N \rightarrow \mathcal{R}$ admits a dominating set $S$ of sensitivity $s$, then so is $f_v(x) = f(x + v)$ for any $v\in \mathcal{X}^N$ such that $x+v\in \mathcal{X}^N$.
\end{proposition}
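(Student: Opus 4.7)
The plan is to reduce the statement for $f_v$ directly to the assumed property of $f$ via the natural change of variables $y := x + v$ and $y' := x' + v$. Since translation by a fixed $v$ preserves the neighboring relation (we have $y' - y = x' - x$, so both the $\ell_0$ and $\ell_\infty$ bounds on neighboring inputs are unaffected), applying Definition~\ref{def:dominating_set} to $f$ on the pair $(y, y')$ yields a correction vector $a \in \mathcal{X}^S$ with $\|a\|_1 \leq s$ such that (i) $a$ depends only on $f(y)$ and $y' - y$, and (ii) $f(y'|_S + a, y'|_{[N]\setminus S}) = f(y)$. I plan to show that this same $a$ witnesses the dominating set property for $f_v$ on the pair $(x, x')$, with no change to the bound $s$.

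The key identity I would use is that translation by $v$ commutes with the ``partial shift by $a$ on coordinates in $S$'': namely, $(y'|_S + a,\, y'|_{[N]\setminus S}) = (x'|_S + a,\, x'|_{[N]\setminus S}) + v$. Applying $f$ to both sides and unfolding the definition of $f_v$ then gives
\[
f_v\bigl(x'|_S + a,\, x'|_{[N]\setminus S}\bigr) \;=\; f\bigl(y'|_S + a,\, y'|_{[N]\setminus S}\bigr) \;=\; f(y) \;=\; f_v(x),
\]
which is condition (ii) of Definition~\ref{def:dominating_set} for $f_v$. Condition (i) is inherited for free: since $f(y) = f_v(x)$ and $y' - y = x' - x$, the vector $a$ is determined by $f_v(x)$ and $x' - x$, exactly as required.

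The only minor obstacle is notational bookkeeping: making sure that the domain condition ``$x + v \in \mathcal{X}^N$'' is compatible with the substitution for all neighboring pairs under consideration (so that $y, y'$ are both legitimate inputs to $f$), and that the restriction operators $\cdot|_S$ and $\cdot|_{[N]\setminus S}$ indeed distribute over the translation by $v$. No new bound on $\|a\|_1$ is required, so the sensitivity parameter $s$ transfers verbatim. Thus the proof reduces to a one-line reuse of the hypothesis once the substitution $y = x + v$ is in place.
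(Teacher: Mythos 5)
Your proposal is correct and matches the paper's proof essentially verbatim: the substitution $y = x+v$, $y' = x'+v$ is exactly the paper's $\hat{x} = x+v$, $\hat{x}' = x'+v$, and the key identity that translation by $v$ commutes with the partial update on $S$ is the same one-line computation. The extra care you take about the sign convention $y'-y = x'-x$ (matching Definition~\ref{def:dominating_set} exactly) and the remark about the domain condition are reasonable bookkeeping points, but the argument is the same.
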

\begin{proof}
    Fix a function $f_v$. For any pair of neighboring dataset $x,x'$, let $\hat{x} = x+v$ and $\hat{x}' = x' + v$ be another pair of neighboring datasets. Since $f$ admits a dominating set $S$ of sensitivity $s$, then there exists a vector $a\in \mathcal{X}^S$ with $\|a\|_1\leq s$ such that $a$ only depends on the value of $f(\hat{x}) = f_v(x)$ and $\hat{x} - \hat{x}' = x-x'$, and that
    \begin{align*}
       f_v({x}'|_S + a, {x}'|_{[N]\backslash S})& = 
        f(({x}' + v)|_S + a , ({x}' + v)|_{[N]\backslash S})\\
       & = f(\hat{x}'|_S + a, \hat{x}'|_{[N]\backslash S}) 
             =f(\hat{x})
        = f_v(x),
    \end{align*}
    which completes the proof.
\end{proof}

\subsection{The application of shifting mechanism in private min $s$-$t$ cut}\label{sec:app_stcut}

As the first implication of our shifting mechanism, we demonstrate in this section that applying Algorithm \ref{alg:shifting}, with $f(x)$ representing the exact unique minimum $s$-$t$ cut on the $n$-vertex graph encoded by $x$, results in an additive-error approximation algorithm for the minimum $s$-$t$ cut with an optimal error bound of $\tilde{\Theta}(n)$. This directly recovers the same result from \cite{DBLP:conf/nips/DalirrooyfardMN23} by a significantly different and simplified analysis, while we also be able to treat this problem as a special case within our broader framework.

Formally speaking, consider a weighted graph $G=([n],E,w)$ where $w\in \mathbb{R}_{\geq 0}^{n\choose 2}$ be the vector that encodes all non-negative edge weights. For a non-edge, we consider it as an edge with weight zero. Without the loss of generality, given any distinct $s, t\in [n]$, we assume $G$ has an unique min $s$-$t$ cut, as introducing small random perturbations with sufficient precision to each edge weight, one could ensure that any two distinct cuts have different sizes. Then, we define $f: \mathbb{R}_{\geq 0}^{n\choose 2} \rightarrow 2^{[n]\backslash \{s,t\}}$ be the function that for any $w\in \mathbb{R}_{\geq 0}^{n\choose 2}$,
\begin{align}\label{eq:deff_stcut}
    f(w) = \text{the unique min } s\text{-} t \text{ cut in the graph encoded by } w.
\end{align}
The following lemma exhibits a dominating set of $f$ of size only $O(n)$ instead of $O(n^2)$.
\begin{lemma}\label{lem:dominating_stcut}
Let $S \subseteq {n\choose 2}$ such that $S = \{\{u,v\}: u\in \{s,t\}, v\in [n]\backslash \{s,t\}\}$. Then, $S$ is a dominating set of $f$ with sensitivity $2$. 
\end{lemma}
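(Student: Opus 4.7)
The plan is a case analysis on which edge differs between $w$ and $w'$ and on how that edge sits with respect to the optimal partition $C^{*}=f(w)$, which I write as $(A,B)$ with $s\in A$, $t\in B$. Writing $\Delta:=w'_{uv}-w_{uv}\in[-1,1]$ for the unique differing coordinate $\{u,v\}$, I would build the correction $a\in\mathbb{R}^{S}$ using only $C^{*}$ and $(\{u,v\},\Delta)$, which is all that Definition~\ref{def:dominating_set} permits.

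Two cases are immediate. If the differing edge already lies in $S$, set $a$ to be zero except on this one coordinate, where it equals $-\Delta$; then $w'|_{S}+a=w|_{S}$, so $f(w'|_{S}+a,w'|_{[N]\setminus S})=f(w)$ and $\|a\|_{1}=|\Delta|\le 1$. If the differing edge is exactly $\{s,t\}$, this edge crosses every $s$-$t$ partition, so replacing $w$ by $w'$ shifts every cut value by the same $\Delta$, preserving the argmin; take $a=0$.

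The substantive case is $\{u,v\}$ with $u,v\notin\{s,t\}$. I would combine the identity
\[
\mathrm{val}_{w'+a}(C)-\mathrm{val}_{w}(C)=\Delta\cdot\mathbf{1}[\{u,v\}\in\delta(C)]+\sum_{e\in S}a_{e}\cdot\mathbf{1}[e\in\delta(C)]
\]
with the elementary observation that for any $x\notin\{s,t\}$ the $S$-edges $\{s,x\}$ and $\{t,x\}$ cross a partition $(A',B')$ iff $x\in B'$ and $x\in A'$, respectively. Classifying a cut $C$ by the pair of sides $(\sigma_{u},\sigma_{v})\in\{A',B'\}^{2}$ of $u$ and $v$, the easy subcase is when either $u,v$ lie on the same side of $C^{*}$ and $\Delta\ge 0$, or on opposite sides and $\Delta\le 0$: with $a=0$, the change at $C^{*}$ is no larger than the change at every other cut, so $C^{*}$ remains the unique minimum. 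The two hard subcases are handled by corrections of $\ell_{1}$-norm at most $2|\Delta|\le 2$: for $u,v\in A$ with $\Delta<0$, take $a_{\{s,u\}}=a_{\{s,v\}}=|\Delta|$ (and symmetrically $t$ in place of $s$ if $u,v\in B$); for $u\in A$, $v\in B$ with $\Delta>0$, take $a_{\{s,u\}}=a_{\{t,v\}}=\Delta$. In each of these subcases, a routine enumeration over the four types $(\sigma_{u},\sigma_{v})$ shows that every cut's change relative to $w$ is at least the change incurred by $C^{*}$, while the chosen $a$ is an explicit function of $(C^{*},\{u,v\},\Delta)$.

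The main obstacle is verifying \emph{strict} (not merely weak) minimality of $C^{*}$ in $w'+a$, which is what dictates both the particular choice of $a$ in each subcase and the four-way enumeration over $(\sigma_{u},\sigma_{v})$; because $C^{*}$ is the unique minimum of $w$ and our $a$ makes every other cut change by at least as much as $C^{*}$ does, the strict $w$-gap carries over to $w'+a$. Once this is in place, checking $\|a\|_{1}\le 2$ and the required dependence of $a$ on $(f(w),w'-w)$ is immediate from the explicit formulas.
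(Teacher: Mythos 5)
Your proposal is correct and follows the same route as the paper: reformulate the $s$-$t$ cut as an optimization over labelings, split into cases by where the differing edge sits and the sign of the perturbation, construct an explicit correction vector supported on at most two edges incident to the terminals, and conclude via the observation that if $C^*$ uniquely minimizes the old objective and the added ``change'' term also attains its minimum at $C^*$, then $C^*$ uniquely minimizes the new objective (the paper's Fact~\ref{fac:common_minimum}). Your explicit corrections differ cosmetically from the paper's (e.g.\ for $u,v$ on the same side with $\Delta<0$ you use $a_{\{s,u\}}=a_{\{s,v\}}=|\Delta|$ whereas the paper uses $a_{\{s,u\}}=1,\ a_{\{t,v\}}=-1$, and in the other hard case you scale by $\Delta$ rather than by $1$), but both satisfy $\|a\|_1\le 2$ and both depend only on $(f(w),\,w'-w)$, so the argument goes through unchanged.
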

This lemma suggests that only adding noise between $s,t$ and all non-terminals is enough to preserve privacy, as edges between terminal and non-terminals form a dominating set of $f$. In this sense, the algorithm in \cite{DBLP:conf/nips/DalirrooyfardMN23} can be viewed as a special case of our shifting mechanism. By the inspiration of Lemma \ref{lem:dominating_stcut}, we directly give the following variant of Algorithm \ref{alg:shifting}:

\begin{center}
    \begin{tcolorbox}[=sharpish corners, colback=white, width=1\linewidth]\label{alg:shifting_stcut}

        \begin{center}
      \textbf{\emph{Algorithm \ref*{alg:shifting_stcut}: The Shifting mechanism on approximating min $s$-$t$ cut}}
        \end{center}
        \vspace{6pt}
    \begin{enumerate}
  \item \textbf{Input:} A weighted graph $G = ([n],E,w)$ with non-negative edge weights, two terminals $s,t\in [n]$ and a privacy budget $\epsilon > 0$.
  \item For each terminal $b\in \{s,t\}$ and each non-terminal $u$, let $$w_{\{b,u\}} = w_{\{b,u\}} + \text{Lap}(2/\epsilon) + c\log n$$ for some large enough $c$.
  \item Compute and output the exact min $s$-$t$ cut on the noisy graph.

\end{enumerate}
  \end{tcolorbox} 
  \end{center}
In Algorithm \ref{alg:shifting_stcut}, we amplify the edge weight between terminals and non-terminals with an additive factor $c\log n$ for some large enough $c$ to make sure that with high probability, all edge weights remain non-negative in the noisy graph.
\begin{theorem}
    Algorithm \ref{alg:shifting_stcut} preserves $(\epsilon,0)$-differential privacy. Furthermore, with high probability, Algorithm \ref{alg:shifting_stcut} outputs a min $s$-$t$ cut with error at most $O(n\log n)$.
\end{theorem}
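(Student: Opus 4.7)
My plan is to apply the shifting mechanism framework from Theorem~\ref{thm:privacy_shifting}, using the dominating set $S$ guaranteed by Lemma~\ref{lem:dominating_stcut}, and then exploit a uniform-cardinality property of $s$-$t$ cuts to obtain the error guarantee. For privacy: Lemma~\ref{lem:dominating_stcut} tells us $S$ is a dominating set of $f$ with sensitivity $2$. Writing $v^{(0)} = c\log n \cdot \mathbf{1}_S$ for the deterministic shift and $\tilde{f}(w) = f(w + v^{(0)})$, Proposition~\ref{prop:robust_against_transit} implies $\tilde{f}$ also admits $S$ as a dominating set of sensitivity $2$. Since Algorithm~\ref{alg:shifting_stcut} is precisely Algorithm~\ref{alg:shifting} applied to $\tilde{f}$ with noise scale $b = 2/\epsilon$, Theorem~\ref{thm:privacy_shifting} yields $(\epsilon,0)$-differential privacy immediately. (Continuity of the Laplace noise ensures the noisy graph has a unique min $s$-$t$ cut almost surely, so $f$ is well-defined after perturbation.)

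For utility, the critical observation is that \emph{every} $s$-$t$ cut $(A,B)$ with $s \in A$ and $t \in B$ contains the same number of $S$-edges: the edges of $S$ lying in $\delta(A,B)$ are precisely $\{s,v\}$ with $v \in B \setminus \{t\}$ and $\{t,v\}$ with $v \in A \setminus \{s\}$, so $|\delta(A,B) \cap S| = (|B|-1) + (|A|-1) = n-2$. Consequently, the deterministic shift $c\log n \cdot \mathbf{1}_S$ contributes exactly $(n-2)c\log n$ to the noisy weight of \emph{every} $s$-$t$ cut and thus cannot change the identity of the min $s$-$t$ cut in $\tilde{G}$. Writing $Z(C) := \sum_{e \in \delta(C) \cap S} Z_e$, a sum of $n-2$ i.i.d.\ $\mathsf{Lap}(2/\epsilon)$ random variables, a union bound over the $2(n-2)$ Laplace samples together with the standard tail $\Pr[|Z_e|>t]=e^{-t\epsilon/2}$ yields $\max_e |Z_e| = O(\log n / \epsilon)$ with high probability, whence $\max_C |Z(C)| \leq (n-2)\cdot \max_e |Z_e| = O(n\log n / \epsilon)$ simultaneously over all $s$-$t$ cuts $C$. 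If $C^* = f(w)$ is the true min cut and $\tilde{C}$ is the algorithm's output, optimality of $\tilde{C}$ on the noisy graph gives
\[
w(\delta(\tilde C)) \;\leq\; w(\delta(C^*)) + Z(C^*) - Z(\tilde C) \;\leq\; w(\delta(C^*)) + 2\max_C|Z(C)| \;=\; w(\delta(C^*)) + O(n\log n),
\]
which is the claimed error bound.

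The main subtlety lies in polynomial-time implementability: standard max-flow/min-cut algorithms require non-negative edge weights. Here the deterministic shift $c\log n$ on the $S$-edges plays a dual role: beyond being invisible to the min-cut argument above (by the uniform-cardinality observation), it also lifts every perturbed weight above zero. Since $\max_e |Z_e| = O(\log n / \epsilon)$ with high probability, choosing $c$ sufficiently large ensures $w(e) + Z_e + c\log n \geq 0$ for every $e \in S$ with high probability, so the last step of the algorithm runs in polynomial time. The main conceptual point that makes both parts of the theorem go through is precisely the uniform-cardinality identity $|\delta(A,B)\cap S| = n-2$: without it, either the deterministic shift or the asymmetric incidence of $S$-edges across cuts could bias the noisy min-cut away from the true one.
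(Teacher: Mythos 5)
Your proof is correct and follows the same high-level route as the paper: privacy via Theorem~\ref{thm:privacy_shifting}, Lemma~\ref{lem:dominating_stcut}, and Proposition~\ref{prop:robust_against_transit}; utility by bounding the discrepancy between the true and noisy graphs and invoking optimality on the noisy graph. Your uniform-cardinality observation $|\delta(A,B)\cap S| = n-2$ is a nice sharpening that the paper does not state: it shows the deterministic $c\log n$ shift contributes identically to every $s$-$t$ cut and hence cancels exactly, isolating the noise as the sole source of error. The paper's utility argument is coarser --- it merely notes that the total weight added (noise plus shift) is $O(n\log n/\epsilon)$ w.h.p.\ and that it is nonnegative w.h.p., which already yields the $O(n\log n)$ bound since $w(\delta(\tilde C)) \le \tilde w(\delta(\tilde C)) \le \tilde w(\delta(C^*)) \le w(\delta(C^*)) + O(n\log n/\epsilon)$. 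So your closing claim that the uniform-cardinality identity is \emph{necessary} is overstated: it tightens constants and makes the role of the shift transparent, but the cruder bound $|\delta(C)\cap S|\le 2(n-2)$ together with one-sidedness of the shift suffices. Everything else, including the observation that the shift also serves to keep weights nonnegative for polynomial-time solvability, matches the paper.
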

\begin{proof}
    The privacy guarantee directly comes from Theorem \ref{thm:privacy_shifting} (the privacy of shifting mechanism), Lemma \ref{lem:dominating_stcut} (the dominating set of min $s$-$t$ cut problem) and Proposition \ref{prop:robust_against_transit} (the dominating set remains exist after transition). For the utility, recall that with high probability, the total number of edge weights added from adding noise and amplifying weights is $O(n\log n)$, which concludes the utility gurantee.
\end{proof}

\vspace{-0.6cm}

\subsection{The application of shifting mechanism in private multicut}\label{sec:app_multicut}

One of the generalizations of $s$-$t$ cut is \textit{Multicut}. In this problem, A graph is given with $k$ pairs of different terminals $\{(s_1, t_1), \cdots (s_k, t_k)\}$. The goal is to find a partition of vertex set of at most $2k$ parts, ensuring that each pair of terminals is separated (since there are at most $2k$ terminals to separate). The private multicut problem with $k\geq 2$ has not been studied before us. However, as a simple application, we use the shifting mechanism to give an additive noise mechanism for finding minimum multicut for any $k\in \mathbb{N}_+$. 

We first note that the inefficient exponential mechanism gives $\tilde{O}(n+k)$ additive error for multicut. This is because that to apply exponential mechanism, a naive way is to enumerate all $(n+k)^{2k}$ possible ways to separate $k$ pairs of terminals. This results in a total of $(n+k)^{O(k)}$ candidate solutions. Consequently, the additive error of $\tilde{O}(n+k)$ arises from standard analysis.

Next, we demonstrate that for constant $k$, the shifting mechanism provides a matching error bound, with the runtime depending on finding an exact multicut in an $n$-vertex graph. As a result, it directly gives efficient algorithm for $k = 2$, since the algorithm for non-private exact vertex multicut runs in polynomial time when $k = 1,2$~\cite{bousquet2011multicut}. This  generalizes the private algorithm for min $s$-$t$ cut. Formally speaking, again consider a graph $G = (V',E,w)$, where $V' = [n]\cup T$ and $T = \{s_1, \cdots, s_k, t_1,\cdots, t_k\}$ is the set of all terminals. For the same reason as in the previous section, we assume the multicut of a given graph is unique. Then, fix a set of terminals $T$, we define $f:\mathbb{R}_{\geq 0}^{V'\choose 2} \rightarrow 2^{V'}$ be the function that for any $w\in \mathbb{R}_{\geq 0}^{V'\choose 2}$, 
\begin{align}\label{eq:deff_multicut}
    f(w) = \text{the unique multicut in terms of } T\text{ in the graph encoded by } w.
\end{align}
The following lemma exhibits a dominating set of $f$ of size $O(k(n+k))$.

\begin{lemma}\label{lem:dominating_multi-cut}
Let $S \subseteq {V'\choose 2}$ such that $S = \{\{u,v\}: u\in \{s_i,t_i\} (\forall i\in [k]), v\in V'\backslash \{s_i,t_i\}\}$. Then, $S$ is a dominating set of $f$ with sensitivity $2$. 
\end{lemma}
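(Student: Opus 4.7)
The approach mirrors the $s$-$t$ cut dominating-set proof of Lemma~\ref{lem:dominating_stcut}, extended to the multi-terminal setting via a case analysis on where the single differing edge $e$ lies. Let $\Delta := w(e)-w'(e)\in[-1,1]$ and $\mathcal{P}^*:=f(w)$. The goal is to construct a correction $a\in \mathbb{R}^S$ (depending only on $\mathcal{P}^*$ and $\Delta$) with $\|a\|_1\le 2$ such that $f(w'|_S+a,\,w'|_{\overline{S}})=\mathcal{P}^*$.

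\textbf{Easy cases.} If $e\in S$, set $a_e=\Delta$ and zero elsewhere; the modified graph equals $w$ and $\|a\|_1\le 1$. If $e=\{s_i,t_i\}$ for some pair $i$, then $s_i$ and $t_i$ must lie in different parts of every feasible multicut, so $e$ is cut by every feasible partition; altering $w(e)$ by $\Delta$ therefore shifts every partition's cost by the same additive constant, and the unique minimizer is preserved with $a=\mathbf{0}$.

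\textbf{Main case: $e=\{u,v\}$ with both endpoints non-terminal.} The key structural observation is that in an optimal multicut, every part of $\mathcal{P}^*$ contains at least one terminal: otherwise a terminal-free part could be merged into any adjacent part without violating pair-separation while strictly decreasing cost, contradicting optimality. We therefore pick terminals $\tau_u\in P^*(u)$ and $\tau_v\in P^*(v)$ (possibly $\tau_u=\tau_v$ when $P^*(u)=P^*(v)$); both $\{\tau_u,v\}$ and $\{u,\tau_v\}$ lie in $S$ since $u,v$ are non-terminals, so neither is an edge between paired terminals. The correction places weight of magnitude at most $|\Delta|$ on each of these two $S$-edges, with signs determined by whether $e\in\mathrm{cut}(\mathcal{P}^*)$ and by $\mathrm{sgn}(\Delta)$, yielding $\|a\|_1\le 2|\Delta|\le 2$.

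\textbf{The main obstacle} is to verify that this correction really does preserve $\mathcal{P}^*$ as the unique optimal multicut of the modified graph $w''=(w'|_S+a,\,w'|_{\overline{S}})$. Unlike the situation in the simplex-embedding analysis (cf.~Lemma~\ref{lem:minimizes_two_vertices_case}), the cut-indicator $[\{u,v\}\in\mathrm{cut}(\cdot)]$ is \emph{not} a linear combination of the $S$-edge cut-indicators across feasible partitions, so no correction can uniformly offset the cost changes of all partitions. The verification instead proceeds by a partition-wise case analysis: for every alternative feasible $\mathcal{P}'\ne\mathcal{P}^*$, one inspects the configurations of $(P'(\tau_u),P'(\tau_v))$ relative to $(P'(u),P'(v))$ and shows that $\mathrm{cost}_{w''}(\mathcal{P}')-\mathrm{cost}_{w''}(\mathcal{P}^*)\ge\mathrm{cost}_w(\mathcal{P}')-\mathrm{cost}_w(\mathcal{P}^*)>0$. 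The crucial enabling fact is that under $\mathcal{P}^*$ the terminals $\tau_u,\tau_v$ lie in the same parts as $u,v$ respectively, which is what allows the two corrective shifts on $\{\tau_u,v\}$ and $\{u,\tau_v\}$ to exactly cancel the $e$-change at $\mathcal{P}^*$ while only raising cost at any other partition.
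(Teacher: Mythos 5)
Your easy cases and the structural observation that every part of $\mathcal{P}^*$ contains a terminal are both correct and both needed (the paper uses the latter implicitly). However, the choice of correction edges in the main case is where the plan breaks. You place the corrections on the \emph{crossed} edges $\{\tau_u,v\}$ and $\{u,\tau_v\}$, which are cut by $\mathcal{P}^*$ exactly when $e$ is. The paper instead uses \emph{aligned} edges of the form $\{s,u\}$ and $\{t,v\}$ with $s\in P^*(u)$ and $t\in P^*(v)$ (so these edges are \emph{not} cut by $\mathcal{P}^*$), and the signs of the two corrections are not always equal; this is the direct analogue of the $s$-$t$ cut proof (Lemma~\ref{lem:dominating_stcut}), where the corrections live on $\{s,u\}$ and $\{t,v\}$ with the fixed terminals $s,t$.

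Here is the concrete failure of your crossed-edge choice. Suppose $P^*(u)\ne P^*(v)$ (so $I^*:=[e\in\mathrm{cut}(\mathcal{P}^*)]=1$, and also $J^*=K^*=1$ for your two correction edges) and $\Delta<0$, i.e. $w'(e)=w(e)+\lvert\Delta\rvert>w(e)$. Let $\mathcal{P}'$ be the feasible partition obtained from $\mathcal{P}^*$ by moving both non-terminals $u,v$ into a fresh part, leaving all terminals where they were. Then at $\mathcal{P}'$ we have $e$ uncut but $\{\tau_u,v\}$ and $\{u,\tau_v\}$ both still cut, so $(I,J,K)=(0,1,1)$. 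Writing $w''$ for the shifted input,
\begin{align*}
\bigl(\mathrm{cost}_{w''}(\mathcal{P}')-\mathrm{cost}_{w''}(\mathcal{P}^*)\bigr)
-\bigl(\mathrm{cost}_{w}(\mathcal{P}')-\mathrm{cost}_{w}(\mathcal{P}^*)\bigr)
&= \bigl(-\Delta\cdot 0 + a_1 + a_2\bigr)-\bigl(-\Delta\cdot 1 + a_1 + a_2\bigr) = \Delta < 0,
\end{align*}
\emph{independently} of $a_1,a_2$, because both correction edges are cut at both partitions and therefore contribute identically. So the $w$-gap between $\mathcal{P}'$ and $\mathcal{P}^*$ shrinks by $\lvert\Delta\rvert$ no matter what correction you put on those two edges; since the lemma must hold for all $w$ with $f(w)=\mathcal{P}^*$, including those where this gap is smaller than $\lvert\Delta\rvert$, your claimed inequality $\mathrm{cost}_{w''}(\mathcal{P}')-\mathrm{cost}_{w''}(\mathcal{P}^*)\ge \mathrm{cost}_w(\mathcal{P}')-\mathrm{cost}_w(\mathcal{P}^*)$ is false, and $\mathcal{P}^*$ can cease to be the minimizer. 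The fix is the paper's choice: put the corrections on terminal edges incident to $u$ and $v$ themselves (uncut at $\mathcal{P}^*$), with signs and terminal choices depending on $\mathrm{sgn}(\Delta)$ and on whether $P^*(u)=P^*(v)$ — in particular, when $P^*(u)=P^*(v)$ and $e$ gets lighter, the paper uses a \emph{constrained pair} $(s_i,t_i)$ with $s_i\in P^*(u)$ and opposite-signed corrections, which is what rules out the analogous bad configuration in that subcase.
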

By Theorem \ref{thm:privacy_shifting}, Lemma \ref{lem:dominating_multi-cut} directly implies an additive noise and polynomial-space pure-DP mechanism with $O(k(n+k))$ constant-variance noises. Consequently, when $k$ is constant, the additive error of this mechanism aligns with that of the exponential mechanism. Since we could use the same trick to amplify the edge weights as in the previous section to make sure that the noisy graph has all non-negative edge weights, then the shifting mechanism is efficient when $k\leq 2$ (\cite{bousquet2011multicut}). Thus, we directly give the following corollary:

\begin{corollary}
    The shifting mechanism gives an $(\epsilon,0)$-differentially private algorithm for approximating multicut with additive error $\tilde{O}(k(n+k))$. Further, the algorithm is efficient for $k = 1,2$.
\end{corollary}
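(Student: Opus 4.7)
The plan is to derive the corollary as a direct consequence of the shifting mechanism (Algorithm~\ref{alg:shifting}) instantiated with the multicut function $f$ from \eqref{eq:deff_multicut}, using Lemma~\ref{lem:dominating_multi-cut} to supply the dominating set. Concretely, I would work with the variant already used in Section~\ref{sec:app_stcut}: for each pair $\{u,v\}\in S$, set $w_{\{u,v\}} \leftarrow w_{\{u,v\}} + \mathrm{Lap}(2/\epsilon) + c\log(n+k)$ for a sufficiently large constant $c$, leave all other weights untouched, and then return the exact minimum multicut of the resulting (non-negative, with high probability) weighted graph on $V' = [n] \cup T$.

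For privacy, I would invoke Theorem~\ref{thm:privacy_shifting} with $f$ from \eqref{eq:deff_multicut}. The deterministic shift by $c\log(n+k)$ on coordinates in $S$ is handled by Proposition~\ref{prop:robust_against_transit}, which says that shifting the input inside the dominating set preserves the existence of a dominating set of the same sensitivity; Lemma~\ref{lem:dominating_multi-cut} then supplies sensitivity $s = 2$, so Laplace noise of scale $2/\epsilon$ on $S$ yields $(\epsilon,0)$-DP. The output is computed as a post-processing of the noisy weights, so the privacy guarantee follows.

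For the utility, let $\mathcal{S}^*$ be an optimal multicut of $G$ and $\hat{\mathcal{S}}$ the algorithm's output, which is an exact multicut minimizer of the noisy graph $\tilde{G}$. Since the noisy and original weights differ only on $S$, for any feasible multicut $\mathcal{S}$ we have
\[
\mathrm{cost}_{\tilde{G}}(\mathcal{S}) - \mathrm{cost}_G(\mathcal{S}) = \sum_{e \in \delta(\mathcal{S})\cap S} \bigl(X_e + c\log(n+k)\bigr),
\]
where the $X_e \sim \mathrm{Lap}(2/\epsilon)$ are i.i.d. Using $\mathrm{cost}_{\tilde{G}}(\hat{\mathcal{S}}) \le \mathrm{cost}_{\tilde{G}}(\mathcal{S}^*)$ and rearranging, the additive loss of $\hat{\mathcal{S}}$ under the true weights is at most $2\max_{\mathcal{S}} \bigl|\sum_{e\in \delta(\mathcal{S})\cap S}(X_e + c\log(n+k))\bigr|$. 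A standard tail bound (union-bound over the $|S| \le 2k(n+k-1)$ noise terms) gives $\max_e |X_e| = O(\log(k(n+k))/\epsilon)$ with high probability, so summing over the at most $|S|$ edges of $S$ in any cut yields additive error $O\!\bigl(k(n+k)\cdot \log(k(n+k))/\epsilon\bigr) = \tilde{O}(k(n+k))$. The same union bound guarantees all noisy weights on $S$ are non-negative for $c$ large enough, so the noisy graph is a valid weighted graph on which an exact multicut solver can be run.

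For efficiency, step~2 of the mechanism requires an exact minimum multicut on an $n$-vertex non-negatively weighted graph, which is NP-hard for general $k$ but polynomial-time solvable for $k=1$ (single-pair, i.e., min $s$-$t$ cut) and $k=2$ by Bousquet, Daligault and Thomass\'e~\cite{bousquet2011multicut}. The rest of the algorithm (sampling $|S| = O(k(n+k))$ Laplace variables and adjusting weights) is trivially polynomial-time, so the overall running time is polynomial when $k \in \{1,2\}$. The main delicate point is simply keeping the error accounting tight: one must note that only edges in $S$ contribute to the discrepancy between $\mathrm{cost}_G$ and $\mathrm{cost}_{\tilde{G}}$, so the $\tilde{O}(k(n+k))$ bound already beats the naive $\tilde{O}\bigl(\binom{n+k}{2}\bigr)$ one would get from perturbing every edge.
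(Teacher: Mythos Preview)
Your proposal is correct and follows essentially the same approach as the paper: privacy via Theorem~\ref{thm:privacy_shifting}, Lemma~\ref{lem:dominating_multi-cut}, and Proposition~\ref{prop:robust_against_transit}; utility by bounding the total noise on the $O(k(n+k))$ edges in $S$; and efficiency for $k\le 2$ via~\cite{bousquet2011multicut}. The paper itself states the corollary as an immediate consequence of these ingredients without spelling out the utility calculation, so your version simply makes the implicit error bookkeeping explicit.
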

\subsection{The application of shifting mechanism in private max-cut}\label{sec:app_maxcut}
In this section, we show that our shifting mechanism gives an additive noise mechanism on privately computing the max-cut on a weighted graph, with additive error matching that of the exponential mechanism. However, compared to the exponential mechanism, we are able to utilize several faster exponential-time algorithms on exact max-cut. Formally speaking, given a weighted and undirected graph $G = ([n],E,w)$, we define 
\begin{align}\label{eq:deff_maxcut}
    f(w) = \text{the unique max-cut in the graph } G\text{ encoded by } w.
\end{align}
Given another two terminals $s,t\notin [n]$, we consider a variant on max $s$-$t$ cut of graph $\hat{G} = (\hat{V},\hat{E},\hat{w})$ where $\hat{V} = [n] \cup \{s,t\}$ and define
\begin{align}\label{eq:deff_maxstcut}
    \hat f(\hat{w}) = \text{the unique max-cut of } \hat{G} \text{ in terms of } s \text{ and }t .
\end{align}
That is, for any $\hat{w}$, $\hat{f}(\hat{w})$ computes the maximum cut of the graph represented by $\hat{w}$, subject to the constraint that vertices $s$ and $t$ are in separate components. The following proposition allows us to only consider max $s$-$t$ cut on $\hat{G}$, as it approximates the max-cut on $G$, and differential privacy is robust to post-processing. Here, for a partition of vertices $\mathcal{C}\in 2^{[n]}$, we use $|\mathcal{C}(G)|$ to denote the size of the cut specified by $\mathcal{C}$ and $G$.
\begin{proposition}\label{prop:maxstcut}
    Let $G=([n],E,w)$ be any weighted graph with non-negative edge weights, and $\hat{G}$ be the graph with two extra terminals $s,t\notin [n]$. Let $\mathcal{C}\in 2^{[n]}$ be the max-cut of $G$, and $\hat {\mathcal{C}}_{st}^{G}\in 2^{[n]}$ be the projection of the max $s$-$t$ cut of $\hat{G}$ on the vertex set of $G$. Then 
    $$|\mathcal{C}(G)| \leq |\hat {\mathcal{C}}_{st}^{G}(G)| + d_s + d_t + \min\{d_s,d_t\},$$
    where $d_s$ and $d_t$ are the weighted degrees of $s$ and $t$ in $\hat{G}$, respectively.
\end{proposition}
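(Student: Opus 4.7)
The plan is to sandwich the weight of the max $s$-$t$ cut of $\hat{G}$, which I denote $|\hat{\mathcal{C}}_{st}(\hat{G})|$, between $|\mathcal{C}(G)|$ from below and $|\hat{\mathcal{C}}_{st}^{G}(G)| + d_s + d_t$ from above, and then read off the claim. Both bounds rest on the same elementary decomposition: every $s$-$t$ cut of $\hat{G}$ splits into its $G$-portion (the edges crossing the induced partition of $[n]$) plus edges incident to $s$ or $t$, and this second piece is non-negative and bounded above by $d_s + d_t$ because edge weights are non-negative.

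For the lower bound, I would take the max-cut partition $(A,B)$ of $G$ and consider the feasible $s$-$t$ cut $(A\cup\{s\},\,B\cup\{t\})$ of $\hat{G}$. Its weight equals $|\mathcal{C}(G)|$, coming from the $G$-edges crossing $(A,B)$, plus non-negative contributions from edges incident to $s$ or $t$ (and the $st$-edge if it exists in $\hat{G}$). Hence $|\hat{\mathcal{C}}_{st}(\hat{G})| \ge |\mathcal{C}(G)|$. For the upper bound, let $(A'\cup\{s\},\,B'\cup\{t\})$ be the max $s$-$t$ cut of $\hat{G}$, so that $(A',B')$ is its projection $\hat{\mathcal{C}}_{st}^{G}$ onto $[n]$. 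The weight of this cut decomposes as the weight of $G$-edges crossing $(A',B')$, which is precisely $|\hat{\mathcal{C}}_{st}^{G}(G)|$ by definition, plus the weight of edges incident to $s$ or $t$ that cross the cut in $\hat{G}$; the latter is at most $d_s + d_t$, giving $|\hat{\mathcal{C}}_{st}(\hat{G})| \le |\hat{\mathcal{C}}_{st}^{G}(G)| + d_s + d_t$.

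Chaining the two bounds yields $|\mathcal{C}(G)| \le |\hat{\mathcal{C}}_{st}^{G}(G)| + d_s + d_t$, and the stated inequality follows immediately since $\min\{d_s,d_t\} \ge 0$. There is no real obstacle here — the argument is pure bookkeeping — but some care is needed to treat the possible $st$-edge in $\hat{G}$ (which only enters non-negatively) and to handle edge cases where $A$ or $B$ is empty (in which case $|\mathcal{C}(G)| = 0$ and the claim is trivial). I note in passing that an averaging argument over both placements $(A\cup\{s\}, B\cup\{t\})$ and $(B\cup\{s\}, A\cup\{t\})$ strengthens the lower bound to $|\hat{\mathcal{C}}_{st}(\hat{G})| \ge |\mathcal{C}(G)| + (d_s+d_t)/2$, yielding the tighter conclusion $|\mathcal{C}(G)| \le |\hat{\mathcal{C}}_{st}^{G}(G)| + (d_s+d_t)/2$, so the $\min\{d_s,d_t\}$ term in the proposition is actually slack.
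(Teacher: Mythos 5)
Your proof is correct and takes a genuinely different route from the paper's. The paper first passes through the \emph{global} max-cut $\hat{\mathcal{C}}$ of $\hat{G}$ as an intermediate object: it bounds $|\mathcal{C}(G)| \le |\hat{\mathcal{C}}(\hat G)|$ trivially, then relates the global max-cut to the max $s$-$t$ cut via a ``flip one of $s,t$'' argument, which is where the $\min\{d_s,d_t\}$ term is incurred, and finally applies the same projection bound $|\hat{\mathcal{C}}_{st}(\hat G)| \le |\hat{\mathcal{C}}_{st}^G(G)| + d_s + d_t$ that you use. You instead take the max $s$-$t$ cut $\hat{\mathcal{C}}_{st}(\hat G)$ itself as the intermediate: you lower-bound it by exhibiting the explicit feasible $s$-$t$ cut $(A\cup\{s\}, B\cup\{t\})$ built from the max-cut partition $(A,B)$ of $G$, bypassing the global max-cut entirely. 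This avoids the flip step and hence the $\min\{d_s,d_t\}$ loss, yielding the strictly stronger conclusion $|\mathcal{C}(G)| \le |\hat{\mathcal{C}}_{st}^G(G)| + d_s + d_t$; your averaging refinement to $(d_s+d_t)/2$ is also correct (the two placements of $s,t$ together charge each $s$- or $t$-incident non-$\{s,t\}$ edge exactly once and the $st$-edge twice, which sums to $d_s+d_t$). In short, what the paper's detour through the global max-cut buys is nothing here --- your direct construction is both simpler and tighter, and it of course still implies the proposition as stated since $\min\{d_s,d_t\} \ge 0$.
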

\begin{proof}
    Let $\hat{\mathcal{C}}$ be the global max-cut on $\hat{G}$. If $s$ and $t$ have the same sign in $\hat{\mathcal{C}}$, then flip one of them gives a $s$-$t$ cut, and the size of the cut will be decreased by at most $\min\{d_s,d_t\}$. Therefore, $|\hat {\mathcal{C}}_{st}(\hat G)| \geq |\hat{\mathcal{C}}(\hat G)| - \min\{d_s,d_t\}$, where $\hat {\mathcal{C}}_{st}$ is the max $s$-$t$ cut of $\hat{G}$. This is saying that
    $$|\hat {\mathcal{C}}_{st}^G(G)| \geq |\hat {\mathcal{C}}_{st}(\hat G)| - d_s - d_t \geq |\hat{\mathcal{C}} (\hat G)| - \min\{d_s,d_t\} - d_s - d_t.$$
    On the other hand, $|\hat{\mathcal{C}}(\hat G)| \geq |\mathcal{C}(G)|$ since $\hat{G}$ has extra edges with non-negative weights. This completes the proof.
\end{proof}
Next, we study the dominating set of $\hat{f}$ on max $s$-$t$ cut. 

\begin{lemma}\label{lem:dominating_maxstcut}
Let $\hat{V} = [n]\cup \{s,t\}$, and let $S \subseteq {\hat{V}\choose 2}$ such that $S = \{\{u,v\}: u\in \{s,t\}, v\in [n]\}$. Then, $S$ is a dominating set of $\hat f$ with sensitivity $2$. 
\end{lemma}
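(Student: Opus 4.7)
The plan is to mimic the structure of the dominating-set arguments already used for min $s$-$t$ cut (Lemma~5.1) and for the simplex-embedding relaxation (the case-analysis that culminates in Lemma~4.12), but with the inequalities flipped since we are now maximizing. Fix neighboring $\hat w, \hat w'$ with $\hat w'_e = \hat w_e + \delta$ for some edge $e\in\binom{\hat V}{2}$ and $|\delta|\le 1$. Let $C=(A,B)$ with $s\in A,\ t\in B$ be the unique max $s$-$t$ cut $\hat f(\hat w)$. I will construct the correction $a\in\mathbb{R}^S$, depending only on $\hat f(\hat w)$ and $\hat w'-\hat w$, and split into three cases according to where $e$ lies.

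\textbf{Easy cases.} If $e\in S$, take $a_e=-\delta$ and all other coordinates of $a$ to be zero; then $\hat w'|_S+a=\hat w|_S$, so the shifted graph coincides with $\hat G$ and in particular $\hat f(\hat w'|_S+a,\hat w'|_{[N]\setminus S})=\hat f(\hat w)$, with $\|a\|_1=|\delta|\le 1\le 2$. If $e=\{s,t\}$, every $s$-$t$ cut crosses $e$, so the change $\delta$ shifts the value of \emph{every} feasible cut by the same constant and the argmax is unchanged; take $a=0$.

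\textbf{Main case.} The core is $e=\{u,v\}$ with $u,v\in[n]$. Here I classify feasible $s$-$t$ cuts by how they split $\{u,v\}$ into four types and track, for each type, the change $\Delta(\cdot)$ in cut value induced by $(i)$ the modification on $\{u,v\}$ and $(ii)$ a prospective correction $a$ placed on the four $S$-edges $\{s,u\},\{t,u\},\{s,v\},\{t,v\}$. Because each $S$-edge is crossed by exactly one of the two sides of $u$ or $v$, the four unknown corrections translate into four linear expressions for $\Delta(\text{Type }i)$. The objective is to choose $a$ so that $\Delta(\text{type of }C)$ is at least $\Delta(\text{type }i)$ for every $i$; combined with the strict optimality of $C$ in $\hat G$, this forces $C$ to remain the unique maximizer in the shifted graph.

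\textbf{Four sub-cases and the main obstacle.} Depending on the type of $C$ and the sign of $\delta$, either no correction is needed (e.g.\ when $C$ separates $u,v$ and $\delta\ge 0$, the cuts separating $u,v$ all gain $\delta$ while the others are unchanged, so $C$'s strict lead is preserved) or a correction of $\ell_1$-mass $2|\delta|\le 2$ on exactly two $S$-edges suffices (e.g.\ when $C$ is of type ``$u\in A, v\in B$'' and $\delta<0$, placing $\delta$ on $\{s,u\}$ and $\delta$ on $\{t,v\}$ makes $\Delta$ equal $\delta$ on the three types containing $C$ and $3\delta<\delta$ on the remaining type). The main obstacle is verifying the casework: one has to check that the chosen two-edge correction makes the type of $C$ simultaneously at least as good as all three other types, which I will do by writing out $\Delta$ as a signed sum and invoking the strict uniqueness of $C$. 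The remaining three combinations of (type of $C$, sign of $\delta$) follow from the obvious $s\leftrightarrow t$ and $u\leftrightarrow v$ symmetries, so I will only treat one representative of each symmetry class. Throughout, $a$ is defined purely from the partition $C=\hat f(\hat w)$ (which tells us the sides of $u,v$) and from $\hat w'-\hat w$ (which gives $\delta$), satisfying the obliviousness requirement of Definition~4.5. Finally, since in every case $\|a\|_1\le 2|\delta|\le 2$, this establishes that $S$ is a dominating set of $\hat f$ of sensitivity $2$.
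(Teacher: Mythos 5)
Your proposal is correct and follows essentially the same route as the paper's proof: both reduce to the case $e=\{u,v\}$ with $u,v\in[n]$, classify feasible $s$-$t$ cuts by how they split $\{u,v\}$ (equivalently, by $(y_u,y_v)\in\{\pm1\}^2$), and construct corrections on the two $S$-edges $\{s,u\},\{t,v\}$ (or a symmetric pair), doing a case analysis over the sign of $\delta$ and the side assignment of $(u,v)$ in the true optimum, then invoking the strict-uniqueness/common-minimizer fact. The only cosmetic difference is that you scale the correction by $|\delta|$ (yielding $\|a\|_1=2|\delta|$), whereas the paper uses fixed $\pm 1$ corrections; both stay within sensitivity $2$.
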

According to Theorem \ref{thm:privacy_shifting}, Proposition \ref{prop:maxstcut} and the post-processing property of differential privacy, this lemma directly yields an additive noise mechanism with error $O(n\log n)$. We note that according to Proposition \ref{prop:robust_against_transit}, we could always assume that with high probability, the noise is positive. 

Since there are total number of $O(2^n)$ candidates for max-cut, then by standard analysis, the exponential mechanism gives $O(n\log n)$ additive error, which is matched by the shifting mechanism. Further, the run-time of the shifting mechanism depends on finding the exact max-cut on a $n$-vertex graph. Therefore, we are able to use a various of results on faster algorithms of finding max-cut in polynomial space, instead of exponential space for the exponential mechanism. In particular, \cite{scott2007linear} gives the a polynomial space algorithm for finding exact max-cut on graphs with bounded average degree $\Delta$, runs in time $O(2^{n(1-\frac{2}{\Delta+1})})$. Combining with Lemma \ref{lem:dominating_maxstcut}, and the fact that adding positive noise increases $\Delta$ by at most $2$, we have the following corollary:
\begin{corollary}
    Given any $n$-vertex graph $G$ with maximum degree $\Delta$, there is a $(\epsilon,0)$-differentially private algorithm on finding the max-cut on $G$ such that with high probability, it incurs additive error $O(n\log n)$. Further, it runs in polynomial space and time $O(2^{(n+2)(1-\frac{2}{\Delta+3})})$.
\end{corollary}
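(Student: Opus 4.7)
The plan is to assemble the three ingredients already established: (i) the shifting mechanism from Algorithm~\ref{alg:shifting} with its privacy guarantee (Theorem~\ref{thm:privacy_shifting}), (ii) the dominating set for max-$s$-$t$-cut of size $O(n)$ with sensitivity $2$ (Lemma~\ref{lem:dominating_maxstcut}), and (iii) the reduction from max-cut on $G$ to max-$s$-$t$-cut on an augmented graph $\hat{G}$ via Proposition~\ref{prop:maxstcut}. Concretely, the algorithm is: introduce two auxiliary vertices $s,t$ to form $\hat{G}=(\hat V,\hat E,\hat w)$ with $\hat V=[n]\cup\{s,t\}$ and $\hat w$ set to $w$ on the original edges and $0$ elsewhere; for each edge $\{u,v\}$ with $u\in\{s,t\}$ and $v\in[n]$ add independent $\mathrm{Lap}(2/\epsilon)+c\log n$ noise for a sufficiently large constant $c$ (as in Algorithm~\ref{alg:shifting_stcut}); compute the exact max-$s$-$t$-cut $\hat{\mathcal C}_{st}$ on the noisy $\hat G$; and output its projection $\hat{\mathcal C}_{st}^{G}\in 2^{[n]}$ onto $[n]$.

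For privacy, Lemma~\ref{lem:dominating_maxstcut} says the collection of edges incident to $\{s,t\}$ forms a dominating set of sensitivity $2$ for $\hat f$, so Theorem~\ref{thm:privacy_shifting} gives $(\epsilon,0)$-DP for the noisy max-$s$-$t$-cut; the additive shift $c\log n$ in the noise is a deterministic post-processing on the public skeleton of $\hat G$ and is covered by Proposition~\ref{prop:robust_against_transit}. Projecting to $[n]$ is post-processing, so the output is $(\epsilon,0)$-DP. For utility, there are $2n$ noisy edges, each Laplace with scale $2/\epsilon$, so with high probability the total noise perturbation on the cut value is $O(n\log n/\epsilon)$; this bounds the gap between the optimum of the noisy max-$s$-$t$-cut and the true max-$s$-$t$-cut of $\hat G$. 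Combining this with Proposition~\ref{prop:maxstcut}, and noting that in $\hat G$ the weighted degrees $d_s,d_t$ of the two auxiliary vertices are precisely the sum of the $2n$ nonnegative ``shift plus noise'' values (each $O(\log n)$ w.h.p.), we get $d_s+d_t+\min\{d_s,d_t\}=O(n\log n)$. Hence the returned cut on $G$ achieves value at least $\mathrm{OPT}(G)-O(n\log n/\epsilon)$ with high probability, matching the bound from the exponential mechanism.

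For the runtime, the only non-trivial step is computing an exact max-$s$-$t$-cut on $\hat G$. By picking $c$ large enough we can ensure every noisy weight is nonnegative with high probability, so one can reduce max-$s$-$t$-cut to unconstrained max-cut on $\hat G$ (fix $s$ and $t$ on opposite sides; enumeration of this fixing is a constant-factor overhead) and apply the polynomial-space, $O(2^{n(1-2/(\Delta+1))})$-time exact max-cut algorithm of~\cite{scott2007linear}. The one subtlety is that $\hat G$ has $n+2$ vertices rather than $n$, and the maximum degree can grow: each original vertex gains two noisy edges to $\{s,t\}$ (so its degree becomes at most $\Delta+2$), and $s,t$ themselves have degree $n$, but they contribute only a constant number of vertices. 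Plugging $n+2$ and a max degree of $\Delta+2$ into the Scott–Sorkin bound and absorbing the constant-factor overhead yields runtime $O(2^{(n+2)(1-2/(\Delta+3))})$ in polynomial space. The main obstacle, which is really just a careful bookkeeping issue rather than a genuine difficulty, is verifying that the augmentation preserves nonnegativity of weights (needed to invoke Proposition~\ref{prop:maxstcut} and the max-cut solver) and that the degree of the augmented graph enters the runtime in the claimed way; both are handled by the large deterministic shift $c\log n$ and the direct degree count above.
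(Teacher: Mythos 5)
Your proposal reconstructs essentially the paper's own argument: privacy via Theorem~\ref{thm:privacy_shifting}, Lemma~\ref{lem:dominating_maxstcut}, and Proposition~\ref{prop:robust_against_transit}; utility via Proposition~\ref{prop:maxstcut} together with the $O(\log n)$ high-probability bound on each of the $O(n)$ terminal-incident Laplace noises; and runtime by applying the Scott--Sorkin polynomial-space exact max-cut algorithm to the $(n+2)$-vertex augmented graph $\hat G$, whose degree parameter is raised to $\Delta+2$, giving $O(2^{(n+2)(1-2/(\Delta+3))})$. You are somewhat more explicit than the paper in flagging the high degree of the auxiliary vertices $s,t$ and the need to accommodate the $s$-$t$ constraint in the exact max-cut solver, but both of these are handled the same way (implicitly) in the paper, and the reasoning and arithmetic otherwise coincide.
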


\section{Private minimum $k$-cut} \label{sec:kcut}

In this section, we study the private minimum $k$-cut problem on unweighted graph $G = ([n],E)$. In section \ref{sec:optimal_alg}, we first give a private algorithm that outputs a solution for minimum $k$-cut with purely additive error $O(k\log n/\epsilon)$. To complement this result, in Section \ref{sec:lower_bound_k_cut} we prove a nearly matching $\Omega\left(\frac{k\log \left({n}/{\log n}\right)}{\epsilon}\right)$ lower bound for $k\lesssim n^{1/2}$, which confirms that our algorithm is near-optimal. However, the algorithm does not have an efficient implementation for $k = \omega(1)$. By introducing a multiplicative error and a larger additive error, we give a polynomial time algorithm for the private minimum $k$-cut problem in Section \ref{sec:polynomial_k_cut}.

\subsection{Technical overview} \label{sec:overview-kcut}
In this section, we provide an overview of our proofs for the results on the private minimum $k$-cut problem. Our main technical contribution in this part is an $\Omega(k\log (n)/\epsilon)$ lower bound on the additive error of privately approximating a solution for minimum $k$-cut. When $k=2$, it recovers the lower bound in \cite{gupta2010differentially}. We also give a matching upper bound, which closely follows the minimum cut algorithm of Gupta et al. \cite{gupta2010differentially}.

\subsubsection{The lower bound on private minimum $k$-cut} In \cite{gupta2010differentially}, a regular graph is used to construct the lower bound for the private minimum cut problem. Specifically, they consider a $\frac{\log n }{3\epsilon}$-regular graph, such that each cut except singleton cuts has size of at least $\frac{\log n }{2\epsilon}$. Then there exists at least one vertex $u$ such that, if we remove all edges incident to it, any $\epsilon$-differentially private algorithm for the minimum cut problem will, with high probability, not select the singleton cut on $u$. Since all other cuts have size at least $\frac{\log n }{6\epsilon}$, it implies a $O(\log n/\epsilon)$ lower bound. However, this argument does not apply to the private minimum  $k$-cut problem. Since creating a $k$-cut involves removing more edges, the second smallest cut can be very small in regular graphs, making it difficult to establish a meaningful lower bound. Therefore, we consider a new graph construction to establish the lower bound: (i) first, we divide $n$ vertices into approximately $O(n/\log n)$ cliques, each of size $O(\log n)$; and (ii) second, we connect these cliques to form a ``path'', with each bridge between two cliques consisting of $d = O(\log n/\epsilon)$ edges.

By analyzing the execution of the 2-approximation algorithm in \cite{saran1995finding} on this graph (call it $G_0$), we conclude that the minimum  $k$-cut on $G_0$ is at least $d(k-1)/2$. Next, let $G_1$ and $G_2$ be two graphs with $k$ connected components, formed by removing $k-1$ bridges from $G_0$. Denote the sets of edges removed from $G_0$ to form $G_1$ and $G_2$ as $\overline{E}_1$ and $\overline{E}_2$ respectively. Now, let $\mathcal{C}$ be any $k$-partition of $n$ vertices such that the cut of $G_1$ specified by $\mathcal{C}$ has size less than $\frac{d(k-1)}{6}$. Since the cut specified by $\mathcal{C}$ on $G_0$, denoted by $\mathcal{C}(G_0)$, has size at least $\frac{d(k-1)}{2}$, then $\mathcal{C}(G_0)$ must intersect with $\overline{E}_1$ on at least $\frac{d(k-1)}{3}$ edges. Suppose $|\overline{E}_1 \cap \overline{E}_2| \leq \frac{(k-1)d}{6}$. Then, it can be shown that at least $\frac{(k-1)d}{6}$ edges will be cut off by $\mathcal{C}$ on $G_2$, and thus the size of $\mathcal{C}(G_2)$ is at least $(k-1)d/6$. This is saying that if an algorithm selects a small cut on $G_1$, this cut must cause large error on $G_2$. Due to the arbitrariness in selecting the bridges to remove, we can construct an exponential number of graphs that satisfy the condition $|\overline{E}_i \cap \overline{E}_j| \leq \frac{(k-1)d}{6}$. These ingredients enable us to use the packing argument to give an $\Omega(dk) = \Omega(k\log n/\epsilon)$ lower bound. 

\subsubsection{The optimal algorithm and the efficient algorithm on private minimum  $k$-cut} 

Our algorithm for private minimum  $k$-cut extends the ideas of the private min-cut algorithm of Gupta et al. \cite{gupta2010differentially} to $k$-cut. The privacy guarantee of \cite{gupta2010differentially} relies on a theorem of Karger, which bounds the number of $\alpha$-approximate minimum cuts by $n^{2\alpha}$. They deploy the exponential mechanism across all cuts, and argue that the cut selected is unlikely to be much larger than $\mathsf{OPT}$ by bounding the number of cuts within additive $t$ of $\mathsf{OPT}$, so long as the minimum cut is not too small. In the case that it is, they first augment the graph in a privacy-preserving way. 

In the $k$-cut case, we employ the same ideas, but instead select among all possible $k$-cuts. We first assume a \textit{black-box} bound for the number of approximate minimum  $k$-cuts. Under this assumption, we give the general form for a private $k$-cut algorithm, and prove its additive error guarantee relative to the black-box bound. We then apply a known bound of \cite{chekuri18} on the number of approximate $k$-cuts, yielding an $\epsilon$-differentially private algorithm for $k$-cut with $O(k\log{n}/\epsilon)$ error, detailed in Theorem~\ref{thm:result_kcut_upper}.

We also give a more efficient $(\epsilon, \delta)$-private version of the near-optimal algorithm that achieves the same error guarantees. To avoid enumerating the exponential number of $k$-cuts in the exponential mechanism, we draw on the ideas of \cite{gupta2010differentially} and only select from the set of sufficiently small $k$-cuts while maintaining $(\epsilon, \delta)$-privacy. These can be generated via a modified contraction algorithm of \cite{gupta21} in time $\tilde{O}(n^{k})$.

For $k = \omega(1)$, such an algorithm remains inefficient. To address this, we propose an efficient algorithm for any $k\leq n$, based on the 2-approximation algorithm SPLIT from \cite{saran1995finding}. In our algorithm, we sequentially run the private algorithm for approximately finding the minimum cut $k$ times. More formally, we start with the input graph $G = ([n],E)$. If there are already $i$ connected components of $[n]$ where $1\leq i\leq k-1$, we first use the exponential mechanism to find the connected component with the lightest minimum cut. We then privately output the minimum cut on this subgraph and split it into two new connected components. This process continues until the graph is divided into $k$ components. We will demonstrate that this approach yields a polynomial-time algorithm, albeit introducing a multiplicative error of two and a slightly larger, yet non-trivial, additive error of $\tilde{O}(k^{1.5}/\epsilon)$.


\subsection{A $\tilde{O}(n^{k})$ time private algorithm with purely additive error}\label{sec:optimal_alg}


\subsubsection{Private $k$-cut with a Black-box Analysis}\label{sec:black-box-k-cut}
We use the following assumption for our black-box analysis, and then derive an exact error bound by applying known results on the non-private $k$-cut problem~\cite{chekuri18, gupta21}.
\begin{quoting}
        Let $C$ be the cost of the min $k$-cut of $G$. There exist functions $f, g, h$ such that there are at most $n^{\alpha \cdot f(k)} \cdot g(k)^{\alpha \cdot h(k)}$ many $k$-cuts in $G$ with cost at most $\alpha C$. 
\end{quoting}


Here, we describe the general idea behind our algorithm. First, given a graph $G$, we add edges to the graph to raise the cost of the minimum $k$-cut to at least $C' = 2 \cdot \frac{f(k)\ln{n} + h(k)\ln{g(k)}}{\epsilon}$ in a differentially private manner. Second, we deploy the exponential mechanism over all $k$-cuts in the graph, invoking our black-box bound on the number of $k$-cuts within additive $t$ of $\mathsf{OPT}_k(G)$ to prove that the selected $k$-cut is not too much larger than $\mathsf{OPT}_k(G)$, where $\mathsf{OPT}_k(G)$ is the size of minimum $k$-cut on $G$. We give the following algorithm, which complements the private minimum cut algorithm in \cite{gupta2010differentially}. Let $\mathcal{C}$ be any $k$-partition on $n$ vertices. We denote $\mathcal{C}(G)$ as the set of edges of $G$ that must be removed to form the connected components specified by $\mathcal{C}$. \\

\begin{center}
    \begin{tcolorbox}[=sharpish corners, colback=white, width=1.03\linewidth]\label{alg:k-cut-general}

        \begin{center}
      \textbf{\emph{Algorithm \ref*{alg:k-cut-general}: Black-box Private $k$-cut Algorithm}}
        \end{center}
        \vspace{6pt}
    \begin{enumerate}
  \item \textbf{Input}: Graph $G = (V,E)$, integer $k \ge 3$ and privacy budget $\epsilon > 0$.
  \item {Let} $H_0 \subset H_1, \dots, \subset H_{n \choose 2}$ be arbitrary strictly increasing sets of edges on $V$.
  \item {Choose} index $i \in [0, {n \choose 2}]$ with probability $\propto \exp\left(-\epsilon \left| \mathsf{OPT}_k(G \cup H_i) - 4\frac{f(k)\ln{n} + h(k)\ln{g(k)}}{\epsilon} \right|\right)$.
  \item {Choose} a $k$-cut $\mathcal{C}$ with probability $\propto \exp(-\epsilon |\mathcal{C}(G \cup H_i)|)$. 
  \item \textbf{Return} $\mathcal{C}$.
\end{enumerate}
  \end{tcolorbox} 
  \end{center}



\begin{theorem}
    \label{theorem:privacy-general}
    Algorithm \ref{alg:k-cut-general} is $2\epsilon$-differentially private.
\end{theorem}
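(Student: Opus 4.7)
The plan is to view Algorithm~\ref{alg:k-cut-general} as the sequential composition of two applications of the exponential mechanism (Step~3 selects an index $i$, Step~4 selects a $k$-cut $\mathcal{C}$ given $i$), and to apply basic composition to conclude $2\epsilon$-DP. The only technical content is showing that each of the two score functions has sensitivity~$1$ under the edge-level neighboring relation from \Cref{def:dp}.

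First I would identify the score function of Step~3 as
\[
s_1(G, i) \;=\; -\Bigl|\,\mathsf{OPT}_k(G \cup H_i) \;-\; 4\tfrac{f(k)\ln n + h(k)\ln g(k)}{\epsilon}\,\Bigr|,
\]
where the graphs $H_i$ are data-independent. I would observe that for any fixed set of edges $H_i$, the function $G \mapsto \mathsf{OPT}_k(G\cup H_i)$ has sensitivity at most $1$: changing a single edge weight of $G$ by at most $1$ can alter the value of any fixed $k$-cut of $G\cup H_i$ by at most $1$, hence the minimum value across all $k$-cuts by at most $1$. Since the reverse triangle inequality gives $\bigl|\,|a-T|-|b-T|\,\bigr|\le |a-b|$, the sensitivity of $s_1$ is at most~$1$, and Step~3 is an instance of the exponential mechanism with sampling weight $\exp(-\epsilon\, |\cdot|)$ on a score of sensitivity~$1$, so it is $\epsilon$-DP (using the convention that a weight $\exp(\beta u)$ with $\beta=\epsilon/(2\Delta u)$ gives $\epsilon$-DP; the stated algorithm is understood in this convention).

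Next I would perform the analogous argument for Step~4. Conditional on the index $i$ chosen in Step~3, the score function is $s_2(G, \mathcal C \mid i) = -|\mathcal C(G\cup H_i)|$. For any fixed partition $\mathcal C$ and fixed $H_i$, the quantity $|\mathcal C(G\cup H_i)|$ is just the sum of weights of edges of $G\cup H_i$ crossing $\mathcal C$, which changes by at most $1$ when one edge weight of $G$ changes by at most $1$. Thus $s_2(\cdot\mid i)$ has sensitivity $\leq 1$ and Step~4 is $\epsilon$-DP.

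Finally I would invoke basic composition of differential privacy: the release of the pair $(i,\mathcal C)$ is $2\epsilon$-DP, and post-processing (discarding $i$ and returning $\mathcal C$) preserves this guarantee. The only delicate point worth spelling out is that the second mechanism's distribution depends on the index $i$ released in Step~3, so I would formally apply adaptive composition of the exponential mechanism, noting that the sensitivity bound for $s_2(\cdot\mid i)$ holds uniformly in $i$ because the sets $H_i$ are data-independent and fixed before seeing $G$. I do not anticipate a substantive obstacle; the main thing to verify carefully is that both sensitivities are indeed $1$ and that the composition is adaptive in $i$ but not in the input $G$.
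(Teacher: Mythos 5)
Your proposal matches the paper's proof, which likewise treats Steps 3 and 4 as two applications of the exponential mechanism and invokes composition (Lemma~\ref{lem:adaptive_composition}) together with the exponential mechanism's privacy guarantee (Lemma~\ref{lem:exp}). The paper states this in a single sentence; you supply the sensitivity-$1$ calculations for both score functions and the remark that composition must be taken adaptively in $i$, which the paper leaves implicit.
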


\begin{proof}
    Clear due to basic composition (Lemma \ref{lem:adaptive_composition}) and the privacy guarantee of the exponential mechanism (Lemma \ref{lem:exp}).
\end{proof}

For the utility, the following two lemmas assert that, with high probability, augmenting the input graph does not significantly increase size of the minimum $k$-cut, and subsequently, the selected $k$-cut gives a good result.
\begin{lemma}
    \label{lemma:augmentation-general}
    With probability at least $1 - n^{-f(k)+2}g(k)^{-h(k)}$, we have:
    \begin{equation*}
        \begin{aligned}
            2 \cdot \frac{f(k)\ln{n} + h(k)\ln{g(k)}}{\epsilon} < \mathsf{OPT}_k(G \cup H_i) < \mathsf{OPT}_k(G) + 2 \cdot \frac{f(k)\ln{n} + h(k) \ln{g(k)}}{\epsilon}.
        \end{aligned}
    \end{equation*}
\end{lemma}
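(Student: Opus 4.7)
The plan is to apply a standard tail analysis of the exponential mechanism to Step~2 of Algorithm~\ref{alg:k-cut-general}. Let $B := (f(k) \ln n + h(k) \ln g(k))/\epsilon$ and $T := 4B$, so that Step~2 selects $i \in \{0, 1, \ldots, \binom{n}{2}\}$ with weight $\exp(-\epsilon |T - \mathsf{OPT}_k(G \cup H_i)|)$. The first step is to exploit two structural facts about the sequence $\phi(j) := \mathsf{OPT}_k(G \cup H_j)$: (i) $\phi$ is monotone non-decreasing in $j$, since adding edges cannot decrease any cut; and (ii) $\phi(j+1) - \phi(j) \in \{0,1\}$, since introducing a single edge can increase the minimum $k$-cut by at most one. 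Together these imply that $\phi$ takes every integer value between $\mathsf{OPT}_k(G)$ and $\mathsf{OPT}_k(G \cup K_n)$; since the latter is $\Theta(kn)$ and well in excess of $T$ in the relevant parameter regime, there exists a witness index $i^{\star}$ with $\phi(i^{\star}) \in [T-1, T]$ (or $i^{\star}=0$ when $\mathsf{OPT}_k(G) > T$), whose unnormalized exponential-mechanism weight is at least $e^{-\epsilon}$.

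Second, I would apply the exponential-mechanism tail bound: for every $t \ge 0$,
\begin{equation*}
\Pr\bigl[\,|T - \mathsf{OPT}_k(G \cup H_i)| \ge t + 1\,\bigr] \;\le\; \binom{n}{2}\cdot e^{-\epsilon t},
\end{equation*}
obtained by comparing the total unnormalized weight of bad indices to the weight at $i^{\star}$. Setting $t = 2B - 1$ gives failure probability at most $n^{2-2f(k)} g(k)^{-2h(k)} e^{\epsilon}$, comfortably within the target $n^{2-f(k)} g(k)^{-h(k)}$. Conditional on this event, $\mathsf{OPT}_k(G \cup H_i) \in (2B,\, 6B)$, which immediately gives the claimed lower bound.

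For the upper bound $\mathsf{OPT}_k(G \cup H_i) < \mathsf{OPT}_k(G) + 2B$, I would split on the size of $\mathsf{OPT}_k(G)$. When $\mathsf{OPT}_k(G) \ge T$, it follows from $\mathsf{OPT}_k(G \cup H_i) < T + 2B \le \mathsf{OPT}_k(G) + 2B$. When $\mathsf{OPT}_k(G) < T$, I would strengthen the previous tail argument by restricting to the one-sided bad set $\{i : \phi(i) \ge \mathsf{OPT}_k(G) + 2B\}$: any such $i$ satisfies $|T - \phi(i)| \ge |2B - \mathsf{OPT}_k(G)|$, and the deterministic inequality $\phi(i) \le \mathsf{OPT}_k(G) + i$ forces $i \ge 2B$, so the same ratio-of-weights calculation produces an $n^2 \cdot e^{-\epsilon \cdot 2B}$-style bound.

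The main obstacle is the edge regime where $\mathsf{OPT}_k(G)$ is only marginally above $2B$: the target interval $(2B, \mathsf{OPT}_k(G) + 2B)$ becomes narrow and the score gap between ``good'' indices near $T$ and ``bad'' indices just above $\mathsf{OPT}_k(G) + 2B$ correspondingly shrinks. Carefully exploiting the unit-step monotonicity of $\phi$ together with the quadratic candidate count is what keeps the failure probability within the claimed bound. All remaining parts of the argument are routine exponential-mechanism computations.
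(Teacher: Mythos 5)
Your high-level strategy — exploiting that $\phi(j) := \mathsf{OPT}_k(G\cup H_j)$ is monotone with unit steps to produce a good witness, then applying the exponential-mechanism tail bound with a case split on whether $\mathsf{OPT}_k(G)$ exceeds $T = 4B$ — is exactly the paper's approach. But the fix you propose for what you yourself call the ``main obstacle'' does not go through. You claim any bad index $i$ with $\phi(i) \geq \mathsf{OPT}_k(G) + 2B$ must satisfy $|T - \phi(i)| \geq |2B - \mathsf{OPT}_k(G)|$; this is false when $\mathsf{OPT}_k(G) \leq 2B$, since then $\phi(i)$ can lie anywhere in $[\mathsf{OPT}_k(G) + 2B,\, T]$ with $|T - \phi(i)|$ arbitrarily small, so those indices are not downweighted by the exponential mechanism. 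The auxiliary observation $i \geq 2B$ (from $\phi(i) \leq \mathsf{OPT}_k(G) + i$) does not rescue the argument either, since the mechanism's score depends only on $\phi(i)$, not on $i$. In fact the upper bound $\phi(i) < \mathsf{OPT}_k(G) + 2B$ is not jointly satisfiable with the lower bound $\phi(i) > 2B$ once $\mathsf{OPT}_k(G) = 0$ (the asserted interval $(2B, 2B)$ is empty), so this is not a gap you could hope to close — the stated inequality is simply too strong in that regime.

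What the paper actually does in the case $\mathsf{OPT}_k(G) < 4B$ is settle for the weaker conclusion $\phi(i) < 6B$ and never establishes $\phi(i) < \mathsf{OPT}_k(G) + 2B$ there. This is harmless for Theorem~\ref{theorem:error-general}, which only needs $\phi(i) \leq \mathsf{OPT}_k(G) + O(B)$, and $\phi(i) < 6B$ gives that since $\mathsf{OPT}_k(G) \geq 0$. Two smaller points: your blanket bound $\Pr[|T - \phi(i)| \geq t+1] \leq \binom{n}{2} e^{-\epsilon t}$ presupposes a witness with score at most $1$, which holds only when $\mathsf{OPT}_k(G) \leq T$; when $\mathsf{OPT}_k(G) > T$ the witness is $i^\star = 0$ with score $\mathsf{OPT}_k(G) - T$, so you must recentre the utility bound at that value (that is exactly the paper's second case). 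And in that large-$\mathsf{OPT}$ regime, the lower bound $\phi(i) > 2B$ should be read off directly from monotonicity ($\phi(i) \geq \phi(0) = \mathsf{OPT}_k(G) > 4B$), not from the tail bound, which in that regime only yields $\phi(i) > 6B - \mathsf{OPT}_k(G)$.
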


\begin{proof}
    First suppose $\mathsf{OPT}_k(G) < 4 \cdot \frac{f(k)\ln{n} + h(k) \ln{g(k)}}{\epsilon}$. Then there exists some $i$ such that $$\mathsf{OPT}_k(G \cup H_i) - 4 \cdot \frac{f(k)\ln{n} + h(k) \ln{g(k)}}{\epsilon} = 0.$$ 
    The utility theorem for the exponential mechanism states that for any range $R$, output $r$, and sensitivity-1 cost score $q$, we have:
   \begin{equation*}\label{eq:exp_utility}
    \begin{aligned}
        \prob{s(D,\mathcal{E}(D)) \ge OPT + \frac{2}{\epsilon}(\ln{R} + t)} \le \exp(-t)
    \end{aligned}
   \end{equation*}
   Here, $\log{R} \le 2 \log {n}$. Let $s(k) = f(k)\log{n} + h(k) \log(g(k))$. Setting $t = (f(k)-2)\log{n} + h(k)\log(g(k))$, we have:
    \begin{align*}
        \prob{\left|OPT(G \cup H_i) - 4 \cdot \frac{s(k)}{\epsilon}\right| \ge 2\cdot\frac{s(k)}{\epsilon}} \le \frac{1}{n^{f(k)-2}g(k)^{h(k)}},
        \end{align*}
    which implies that
    $$\prob{2\cdot \frac{s(k)}{\epsilon} < OPT(G \cup H_i) < 6\cdot \frac{s(k)}{\epsilon}} \ge 1 - \frac{1}{n^{f(k)-2}g(k)^{h(k)}}.$$
        Now suppose $\mathsf{OPT}_k(G) \ge 4 \cdot \frac{f(k)\ln{n} + h(k) \ln{g(k)}}{\epsilon}$. We know that $\mathsf{OPT}_k(G \cup H_0) = \mathsf{OPT}_k(G)$. Applying the utility theorem of the exponential mechanism (\cref{eq:exp_utility}) again, we have:
        \begin{align*}
        &\prob{\left|\mathsf{OPT}_k(G \cup H_i) - 4 \cdot \frac{s(k)}{\epsilon}\right| \ge \mathsf{OPT}_k(G) - 2 \cdot \frac{s(k)}{\epsilon}} \le \frac{1}{n^{f(k)-2}g(k)^{h(k)}},
    \end{align*}
    which implies that
    $$\prob{- \mathsf{OPT}_k(G) + 6\cdot \frac{s(k)}{\epsilon} < \mathsf{OPT}_k(G \cup H_i) < \mathsf{OPT}_k(G) + 2\cdot \frac{s(k)}{\epsilon}} \ge 1 - \frac{1}{n^{f(k)-2}g(k)^{h(k)}} .$$
    Combining both cases completes the proof of Lemma \ref{lemma:augmentation-general}.
\end{proof} 

\begin{lemma}\label{lem:analyze_exp}
    If $\mathsf{OPT}_k(G \cup H_i) \ge 2 \cdot \frac{f(k)\ln{n}+h(k)\ln{g(k)}}{\epsilon}$ and $\epsilon < 1$ then:
    \begin{align*}
        \prob{|\mathcal{C}(G \cup H_i)|\ge \mathsf{OPT}_k(G \cup H_i) + b} \le \frac{1}{n^2}
    \end{align*}
    for $b = O(\frac{f(k)\ln{n}+h(k)\ln{g(k)}}{\epsilon})$.  
\end{lemma}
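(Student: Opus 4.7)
\medskip

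\noindent\textbf{Proof proposal.} The plan is a standard exponential-mechanism utility analysis, but parameterized so that the black-box bound on the number of approximate $k$-cuts can be plugged in directly. Write $G' = G \cup H_i$ and $C' = \mathsf{OPT}_k(G')$, and let $N(t)$ denote the number of $k$-cuts $\mathcal{C}$ of $G'$ with $|\mathcal{C}(G')| \le t$. Applying the black-box hypothesis with $\alpha = t/C'$ gives
$$ N(t) \;\le\; n^{\,t f(k)/C'}\, g(k)^{\,t h(k)/C'}. $$
Step 4 of the algorithm samples each $k$-cut $\mathcal{C}$ with probability proportional to $\exp(-\epsilon|\mathcal{C}(G')|)$, and the optimum $k$-cut of $G'$ alone contributes mass at least $e^{-\epsilon C'}$ to the normalizing constant. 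Hence
$$ \Pr[\,|\mathcal{C}(G')| \ge C'+b\,] \;\le\; \frac{\sum_{t \ge C'+b} \bigl(N(t)-N(t-1)\bigr)\, e^{-\epsilon t}}{e^{-\epsilon C'}} \;\le\; \sum_{t \ge C'+b} N(t)\, e^{-\epsilon(t-C')}. $$

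The next step is to show the series is effectively geometric with ratio $\le e^{-\epsilon/2}$. Set $s = t - C' \ge b$; the hypothesis $C' \ge 2(f(k)\ln n + h(k)\ln g(k))/\epsilon$ yields
$$ n^{f(k)/C'}\, g(k)^{h(k)/C'} \;=\; \exp\!\Bigl(\tfrac{f(k)\ln n + h(k)\ln g(k)}{C'}\Bigr) \;\le\; e^{\epsilon/2}, $$
so $N(C'+s) \le n^{f(k)} g(k)^{h(k)} \cdot e^{s\epsilon/2}$. Substituting,
$$ \Pr[\,|\mathcal{C}(G')| \ge C'+b\,] \;\le\; n^{f(k)} g(k)^{h(k)} \sum_{s \ge b} e^{-\epsilon s/2} \;\le\; \frac{2 n^{f(k)} g(k)^{h(k)}}{\epsilon}\, e^{-\epsilon b/2}. $$

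Finally, I would choose $b$ to drive this tail below $1/n^2$. Solving $\tfrac{2 n^{f(k)} g(k)^{h(k)}}{\epsilon} e^{-\epsilon b/2} \le n^{-2}$ for $b$ gives
$$ b \;\ge\; \frac{2}{\epsilon}\Bigl((f(k)+2)\ln n + h(k)\ln g(k) + \ln(2/\epsilon)\Bigr), $$
which, using $\epsilon < 1$ to absorb $\ln(2/\epsilon)$ into the leading terms, is $b = O\!\bigl((f(k)\ln n + h(k)\ln g(k))/\epsilon\bigr)$, as claimed. The only real obstacle is the second step — ensuring that the lower bound on $C'$ is tight enough to make $N(t)\,e^{-\epsilon(t-C')}$ summable — and this is exactly why the augmentation constant in Step~3 of Algorithm~\ref{alg:k-cut-general} was chosen to be $4(f(k)\ln n + h(k)\ln g(k))/\epsilon$, so that Lemma~\ref{lemma:augmentation-general} guarantees $C'$ exceeds the threshold $2(f(k)\ln n + h(k)\ln g(k))/\epsilon$ needed here.
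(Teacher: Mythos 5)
Your proof is essentially the same as the paper's: both bound the number of $k$-cuts of cost at most $\mathsf{OPT}_k(G\cup H_i)+t$ via the black-box hypothesis, exploit the hypothesis $C' \ge 2(f(k)\ln n + h(k)\ln g(k))/\epsilon$ to turn the tail sum into a geometric series with ratio $e^{-\epsilon/2}$, and pick $b$ to drive the tail below $1/n^2$. The one place you diverge is technical: you bound the tail $\sum (N(t)-N(t-1))e^{-\epsilon t}$ crudely by $\sum N(t)e^{-\epsilon t}$, whereas the paper performs a summation-by-parts (Abel) step, writing the tail as $\sum c_t\bigl(e^{-\epsilon t}-e^{-\epsilon(t+1)}\bigr)$ and thereby extracting a factor $(1-e^{-\epsilon}) = O(\epsilon)$ that exactly cancels the $1/(1-e^{-\epsilon/2}) = O(1/\epsilon)$ from the geometric sum. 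Your bound therefore carries an extra $O(1/\epsilon)$ prefactor, which forces the additional $\tfrac{2}{\epsilon}\ln(2/\epsilon)$ term in $b$; this is harmless for $\epsilon = \Omega(1)$ (or more generally when $\log(1/\epsilon) = O(f(k)\ln n)$), but the lemma as stated only assumes $\epsilon < 1$, so the paper's version is uniformly tighter. If you want to close the gap, use the Abel-summation step, or note explicitly the mild lower-bound assumption on $\epsilon$ that your absorption requires.
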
 

\begin{proof}
    Let $c_t$ denote the number of $k$-cuts of size OPT$(G \cup H_i) + t$. Given that $\mathsf{OPT}_k(G \cup H_i) \ge C'$, by our assumption, there are at most $n^{(1+\frac{t}{C'}) \cdot f(k)} \cdot g(k)^{(1+\frac{t}{C'}) \cdot h(k)}$ such cuts. By definition, we are guaranteed that a cut of size $\mathsf{OPT}_k(G \cup H_i)$ exists; thus each cut of cost $\mathsf{OPT}_k(G \cup H_i)+t$ will receive cost at most $\exp(-\epsilon t)$ in the exponential mechanism. Therefore we have:
    \begin{align*}
        \prob{|\mathcal{C}(G \cup H_i)| \ge \mathsf{OPT}_k(G \cup H_i) + b} &\le \sum_{t \ge b} \exp(-\epsilon t) \cdot (c_t - c_{t-1}) \\
        &= \left( \sum_{t \ge b} c_t (\exp(-\epsilon t) - \exp(-\epsilon (t+1))) \right) - \exp(\epsilon b) c_{b-1} \\
        &\le \sum_{t \ge b} c_t (\exp(-\epsilon t) - \exp(-\epsilon (t+1))) \\
        &= (1-\exp(-\epsilon)) \sum_{t \ge b} c_t \exp(-\epsilon t) \\
        &\le (1-\exp(-\epsilon)) \sum_{t \ge b} n^{(1+\frac{t}{C'}) \cdot f(k)} \cdot g(k)^{(1+\frac{t}{C'}) \cdot h(k)} \exp(-\epsilon t) \\
        &= (1-\exp(-\epsilon)) n^{f(k)} g(k)^{h(k)} \sum_{t \ge b} \exp \left( \ln{\left(n^{\frac{tf(k)}{C'}} g(k)^{\frac{th(k)}{C'}}\right)} -\epsilon t \right) 
        \intertext{Now using $C' = 2 \cdot \frac{f(k)\ln{n} + h(k)\ln{g(k)}}{\epsilon}$,}
        &= (1-\exp(-\epsilon)) n^{f(k)} g(k)^{h(k)} \sum_{t \ge b} \left(\exp \left(-\epsilon/2\right) \right)^t \\
        &= (1-\exp(-\epsilon)) n^{f(k)} g(k)^{h(k)} \frac{(\exp(-\epsilon/2))^b}{1-\exp(-\epsilon/2)}
        \intertext{Let $b = 2 \cdot \frac{(f(k) + 2)\ln{n}+h(k)\ln{g(k)}}{\epsilon}$. Noting that $(1-\exp(-\epsilon)) / (1-\exp(-\epsilon/2)) = \Omega(1)$, we have:}
        &= n^{f(k)} \cdot g(k)^{h(k)} \cdot \exp(-(f(k) + 2)\ln{n}-h(k)\ln{g(k)})) \\
        &= n^{f(k)} \cdot g(k)^{h(k)} \cdot n^{-f(k) - 2} g(k)^{-h(k)}\\ 
        &= \frac{1}{n^2}.
    \end{align*}
\end{proof} 
\noindent Based on Lemma \ref{lemma:augmentation-general} and Lemma \ref{lem:analyze_exp}, we give the utility guarantee of our black-box algorithm.
\begin{theorem}
    \label{theorem:error-general}
    For any graph $G$, except with probability at most $n^{-2} + n^{-f(k)+2}g(k)^{-h(k)}$, the cost of $\mathcal{C}$ is at most $\mathsf{OPT}_k(G) + O\left(\frac{f(k)\ln{n} + h(k)\ln{g(k)}}{\epsilon}\right)$. 
\end{theorem}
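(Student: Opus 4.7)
\medskip

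The plan is to chain together the two key lemmas already proved (Lemma~\ref{lemma:augmentation-general} and Lemma~\ref{lem:analyze_exp}) with a union bound, using the trivial monotonicity $|\mathcal{C}(G)| \leq |\mathcal{C}(G \cup H_i)|$ that comes from the fact that $H_i$ only adds edges.

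First, I would condition on the ``good'' event $\mathcal{E}_1$ from Lemma~\ref{lemma:augmentation-general}, which holds except with probability $n^{-f(k)+2} g(k)^{-h(k)}$. On $\mathcal{E}_1$, the chosen augmented graph $G \cup H_i$ satisfies both (i) $\mathsf{OPT}_k(G \cup H_i) \geq 2 \cdot \frac{f(k)\ln n + h(k)\ln g(k)}{\epsilon}$, which is precisely the hypothesis needed to invoke Lemma~\ref{lem:analyze_exp}, and (ii) $\mathsf{OPT}_k(G \cup H_i) \leq \mathsf{OPT}_k(G) + 2 \cdot \frac{f(k)\ln n + h(k)\ln g(k)}{\epsilon}$, which is what lets us convert an upper bound on $|\mathcal{C}(G \cup H_i)|$ into the desired bound on $\mathsf{OPT}_k(G)$.

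Next, still conditioning on $\mathcal{E}_1$, apply Lemma~\ref{lem:analyze_exp} to the augmented graph $G \cup H_i$: except with probability $1/n^2$,
\[
|\mathcal{C}(G \cup H_i)| \;\leq\; \mathsf{OPT}_k(G \cup H_i) + O\!\left(\frac{f(k)\ln n + h(k)\ln g(k)}{\epsilon}\right).
\]
Call this event $\mathcal{E}_2$. Because $G$ is an edge-subgraph of $G \cup H_i$, we have $|\mathcal{C}(G)| \leq |\mathcal{C}(G \cup H_i)|$ for any partition $\mathcal{C}$. Combining this with the bound on $\mathsf{OPT}_k(G \cup H_i)$ from $\mathcal{E}_1$ gives
\[
|\mathcal{C}(G)| \;\leq\; \mathsf{OPT}_k(G) + O\!\left(\frac{f(k)\ln n + h(k)\ln g(k)}{\epsilon}\right)
\]
on $\mathcal{E}_1 \cap \mathcal{E}_2$.

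Finally, I would close the proof with a union bound: $\Pr[\mathcal{E}_1^c \cup \mathcal{E}_2^c] \leq n^{-f(k)+2} g(k)^{-h(k)} + n^{-2}$, which matches the failure probability claimed in the theorem. The only subtlety worth noting is that $\mathcal{E}_2$ is a conditional statement (it uses the hypothesis that $\mathsf{OPT}_k(G \cup H_i) \geq C'$), so the union bound must be applied to the joint event that both the augmentation step and the selection step succeed; this is clean because $\mathcal{E}_1$ supplies exactly the hypothesis needed by $\mathcal{E}_2$. No step looks difficult here — the theorem is essentially a bookkeeping combination of the two preceding lemmas — but the main thing to get right is using the edge-monotonicity $|\mathcal{C}(G)| \leq |\mathcal{C}(G \cup H_i)|$ to transfer the guarantee from the augmented graph back to the input graph $G$.
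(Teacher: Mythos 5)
Your proposal is correct and matches the paper's proof step for step: it chains Lemma~\ref{lemma:augmentation-general} and Lemma~\ref{lem:analyze_exp}, uses the edge-monotonicity $|\mathcal{C}(G)| \leq |\mathcal{C}(G \cup H_i)|$ to transfer the bound back to $G$, and closes with a union bound on the two failure events. The observation that $\mathcal{E}_1$ supplies exactly the hypothesis needed by Lemma~\ref{lem:analyze_exp} is the right thing to flag, and the paper does this implicitly in the same way.
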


\begin{proof}

Putting together the two lemmas, we have that with probability at least $1- (n^{-2} + n^{-f(k)+2}g(k)^{-h(k)})$, 
\begin{align*}
    |\mathcal{C}(G)| \le |\mathcal{C}(G \cup H_i)| &< \mathsf{OPT}_k(G \cup H_i) + 2 \cdot \frac{(f(k)+2)\ln{n}+h(k)\ln{g(k)}}{\epsilon} \\
    &< \mathsf{OPT}_k(G) + 2\cdot \frac{f(k)\ln{n}+h(k)\ln{g(k)}}{\epsilon} + 2 \cdot \frac{(f(k)+2)\ln{n}+h(k)\ln{g(k)}}{\epsilon} \\
    &= \mathsf{OPT}_k(G) + 4\cdot \frac{f(k)\ln{n}+h(k)\ln{g(k)}}{\epsilon} + \frac{4\ln{n}}{\epsilon} \\
    &= \mathsf{OPT}_k(G) + O\left(\frac{f(k)\ln{n}+h(k)\ln{g(k)}}{\epsilon}\right).
\end{align*}
\end{proof}

\subsubsection{Algorithm for private minimum $k$-cut}

Here, we combine a result of Chekuri et al. on the number of approximate $k$-cuts with the black-box analysis to prove an exact error bound. Improving on the result of \cite{DBLP:conf/soda/Karger93}, Chekuri et al. prove the following:

\begin{lemma}
    \label{theorem:karger-k-cuts}
    \text{(\cite{chekuri18})}. The number of $k$-cuts within a factor of $\alpha$ of the optimum is $O(n^{\floor{2\alpha(k-1)}})$.
\end{lemma}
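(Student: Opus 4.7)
The plan is to adapt Thorup's greedy tree-packing technique, which is the standard tool behind the $O(n^{2(k-1)})$ bound for the number of \emph{exact} minimum $k$-cuts, to the approximate setting. Let $\lambda_k = \mathsf{OPT}_k(G)$ denote the value of the minimum $k$-cut.

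I would start by constructing a greedy tree packing $\mathcal{T} = \{T_1, \ldots, T_\tau\}$ of $G$ containing $\tau = \Theta(k^2 \log n)$ spanning trees (in the sense of Thorup / Gabow). The defining property of such a packing is a tree-density guarantee: for every $k$-cut $C$ of $G$, the expected number of edges of $C$ contained in a uniformly random tree $T_i \in \mathcal{T}$ is at most roughly $2(k-1) \cdot |C|/\lambda_k$, up to a $1+o(1)$ factor that becomes negligible once $\tau$ is large enough. This is the natural generalization of the classical Nash-Williams/Gabow statement (which gives load $2$ for ordinary global cuts) to $k$-way partitions.

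Next, I would exploit this density guarantee for an $\alpha$-approximate $k$-cut $C$ with $|C| \le \alpha \lambda_k$. Averaging over $\mathcal{T}$ produces some tree $T \in \mathcal{T}$ with $|C \cap T| \le 2\alpha(k-1)$, and since $|C \cap T|$ is an integer, this strengthens to $|C \cap T| \le \lfloor 2\alpha(k-1) \rfloor$. Call such a cut \emph{$T$-respecting}. The counting step is then direct: a $T$-respecting $k$-cut is specified by choosing at most $\lfloor 2\alpha(k-1) \rfloor$ edges of $T$ to delete (producing up to $\lfloor 2\alpha(k-1)\rfloor + 1$ subtrees), and then assigning these subtrees to $k$ groups to form the $k$-partition. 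The number of edge choices is $\binom{n-1}{\lfloor 2\alpha(k-1)\rfloor} = O(n^{\lfloor 2\alpha(k-1)\rfloor})$, while the number of ways to regroup at most $\lfloor 2\alpha(k-1)\rfloor + 1$ subtrees into $k$ parts depends only on $k$ and $\alpha$ (a Stirling-type factor), contributing only a constant overhead.

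Finally, taking a union over the $\tau = \operatorname{poly}(k \log n)$ trees in $\mathcal{T}$ adds only a low-order factor that is absorbed into the $O(\cdot)$, yielding the claimed bound of $O(n^{\lfloor 2\alpha(k-1) \rfloor})$. I expect the main obstacle to be establishing the tree-packing density guarantee specifically for $k$-cuts: the Nash-Williams/Gabow framework natively handles only $2$-cuts with average load $2$, so either one invokes Thorup's stronger packing theorem as a black box, or one decomposes the $k$-cut into $k-1$ "sub-cut" boundaries and applies the standard packing bound to each piece, carefully tracking how the loads combine without double-counting edges shared by multiple boundaries. A secondary subtlety is the floor in the exponent: one must ensure that the averaging step yields the integer bound $\lfloor 2\alpha(k-1) \rfloor$ rather than the weaker real bound $2\alpha(k-1)$, which is where the integrality of $|C \cap T|$ is essential.
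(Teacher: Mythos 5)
The paper does not actually prove this lemma; it is stated as a citation to Chekuri, Quanrud, and Xu (\cite{chekuri18}) and then used as a black box, so there is no in-paper proof to compare against. Your reconstruction does follow the tree-packing template of that reference, and your enumeration step is exactly right: choose at most $\lfloor 2\alpha(k-1)\rfloor$ tree edges, regroup the resulting subtrees into $k$ parts (a factor depending only on $k,\alpha$), take a union over the polynomially many trees, and observe that the integrality of $|C\cap T|$ is what produces the floor in the exponent.

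The genuine gap is in the density guarantee, and it is more serious than a deferred technical lemma: the statement you present as the ``natural generalization of Nash-Williams/Gabow'' is false for $k\ge 3$. For $k=2$ the average-crossing bound comes from combining $\mathbb{E}_T[|C\cap T|]\le |C|/\tau^*$ (where $\tau^*$ is the maximum fractional spanning-tree packing value) with the Nash-Williams inequality $\tau^*\ge\lambda_2/2$. The analogous inequality $\tau^*\ge\lambda_k/\bigl(2(k-1)\bigr)$, which your sketch implicitly relies on, does not hold: take $G$ to be two disjoint copies of $K_m$ joined by a single bridge edge, with $k=3$. Every spanning tree must use the bridge, so $\tau^*=1$, while the minimum $3$-cut value is $\lambda_3=m$; thus $\lambda_3/\bigl(2(k-1)\bigr)=m/4\gg 1$ once $m>4$, and the generic bound $|C|/\tau^*=m$ is useless, even though the true average crossing over a well-chosen packing is about $3\le 4$. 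So the density bound in the lemma is correct, but it cannot be extracted from the global Nash-Williams value; one needs the more refined argument tied to the $k$-cut LP with box constraints $0\le x_e\le 1$ (whose dual packing is allowed to exceed unit edge capacities), or equivalently the principal-partition structure of the graph, which is the actual content of Chekuri et al.'s proof. Your alternative suggestion of decomposing $C$ into its $k$ boundary $2$-cuts and applying the classical packing bound to each piece meets the same obstruction, because in the bridge example those $2$-cuts are themselves governed by $\lambda_2=1$. In short, the high-level plan and the final counting step are right, but the core crossing-number lemma remains unproved, and the specific route you propose for it breaks down already at $k=3$.
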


We combine Chekuri et al.'s bound with the black-box analysis, where $f(k) = 2(k-1)$ and $ h(k), g(k) = 0$. This yields the following algorithm: \\

\begin{center}
    \begin{tcolorbox}[=sharpish corners, colback=white, width=1\linewidth]\label{alg:k-cut-chekuri}

        \begin{center}
      \textbf{\emph{Algorithm \ref*{alg:k-cut-chekuri}: Private $k$-cut Algorithm}}
        \end{center}
        \vspace{6pt}
    \begin{enumerate}
  \item \textbf{Input}: $G = (V,E)$, integer $k \ge 3$ and privacy budget $\epsilon > 0$.
  \item {Let} $H_0 \subset H_1, \dots, \subset H_{n \choose 2}$ be arbitrary strictly increasing sets of edges on $V$.
  \item {Choose} index $i \in [0, {n \choose 2}]$ with probability $\propto \exp\left(-\epsilon \left| \mathsf{OPT}_k(G \cup H_i) - \frac{8(k-1)\ln{n}}{\epsilon}\right|\right)$.
  \item {Choose} a $k$-cut $\mathcal{C}$ with probability $\propto \exp(-\epsilon |\mathcal{C}(G \cup H_i)|)$. 
  \item \textbf{Return} $\mathcal{C}$.
\end{enumerate}
  \end{tcolorbox} 
  \end{center}

\begin{theorem}
    For any graph $G$, with high probability, Algorithm \ref{alg:k-cut-chekuri} outputs a $k$-cut of cost at most $\mathsf{OPT}_k(G) + O(k\ln{n} / \epsilon)$.
\end{theorem}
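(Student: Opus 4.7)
The plan is to obtain this theorem as a direct instantiation of the black-box result (Theorem \ref{theorem:error-general}) with the count of approximate $k$-cuts supplied by Lemma \ref{theorem:karger-k-cuts}. The only work is to verify that the parameters of Chekuri et al.'s bound fit the black-box template, and then to check that the resulting failure probability is small enough to qualify as ``with high probability.''

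First, I would match parameters. The black-box assumption requires that the number of $k$-cuts of cost at most $\alpha C$ be bounded by $n^{\alpha\cdot f(k)}\cdot g(k)^{\alpha\cdot h(k)}$ for some functions $f,g,h$. Lemma \ref{theorem:karger-k-cuts} gives a bound of $O(n^{\lfloor 2\alpha(k-1)\rfloor})$. Taking $f(k)=2(k-1)$, $h(k)=0$, and (say) $g(k)=1$ makes the template read $n^{2\alpha(k-1)}$, which dominates the Chekuri et al.\ bound up to the constant hidden in $O(\cdot)$; that constant can be absorbed either into a small constant-factor inflation of $f(k)$ or into $g(k)^{h(k)}$ by choosing $h(k)$ to be a suitable constant. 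I would write a one-line check explaining this absorption so that the hypotheses of Theorem \ref{theorem:error-general} are formally satisfied.

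Next, I would plug the chosen $f,g,h$ into the threshold $C' = 2\cdot\frac{f(k)\ln n + h(k)\ln g(k)}{\epsilon}$ appearing in Algorithm \ref{alg:k-cut-chekuri}. With $h(k)=0$ (or the constant chosen above) and $f(k)=2(k-1)$, this threshold is $\Theta(k\ln n/\epsilon)$, which is exactly the shift $\tfrac{8(k-1)\ln n}{\epsilon}$ already hard-coded into Algorithm \ref{alg:k-cut-chekuri} up to a constant factor. Theorem \ref{theorem:error-general} then gives the additive error
\[
O\!\left(\frac{f(k)\ln n + h(k)\ln g(k)}{\epsilon}\right) \;=\; O\!\left(\frac{k\ln n}{\epsilon}\right),
\]
which is precisely the bound claimed.

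Finally, I would confirm the ``with high probability'' phrasing by collecting the two failure sources from the proof of Theorem \ref{theorem:error-general}: the exponential mechanism in the augmentation step fails with probability at most $n^{-f(k)+2}g(k)^{-h(k)}$, and the exponential mechanism over $k$-cuts fails with probability at most $n^{-2}$ (Lemma \ref{lem:analyze_exp}). With $f(k)=2(k-1)\ge 4$ for $k\ge 3$, the first term is at most $n^{-2}$, so a union bound gives failure probability at most $2/n^2 = o(1)$. I expect the only mildly tricky step to be the constant-absorption for the $O(\cdot)$ in Lemma \ref{theorem:karger-k-cuts}; everything else is a direct substitution into the already-proved black-box theorem.
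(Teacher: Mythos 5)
Your proposal matches the paper's proof essentially verbatim: both set $f(k)=2(k-1)$ and $h(k)=0$ (the paper writes $h(k),g(k)=0$), plug Chekuri et al.'s bound into Theorem \ref{theorem:error-general}, and read off the $O(k\ln n/\epsilon)$ error with failure probability $O(1/n^2)$. Your extra remark about absorbing the hidden $O(\cdot)$ constant in Lemma \ref{theorem:karger-k-cuts} is a reasonable bit of caution that the paper elides, but it does not change the argument.
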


\begin{proof}
    Combining Lemma \ref{theorem:karger-k-cuts} with Theorem \ref{theorem:error-general} gives the following bound on the additive error, with probability at least $1 - O(1/n^2)$ for any $k\geq 3$: 
    \begin{align*}
   |\mathcal{C}(G)| \le |\mathcal{C}(G \cup H_i)| &< \mathsf{OPT}_k(G) + 4\cdot \frac{2(k-1)\ln{n}}{\epsilon} +  \frac{4\ln{n}}{\epsilon} \\
    &= \mathsf{OPT}_k(G) + \frac{8k\ln{n}}{\epsilon} - \frac{4\ln{n}}{\epsilon} \\
    &= \mathsf{OPT}_k(G) + O\left(\frac{k\ln{n}}{\epsilon}\right). 
\end{align*}
\end{proof}

\begin{remark}
    \cite{gupta21} improve upon Chekuri et al. with a bound of $n^{\alpha k} k^{O(\alpha k^2)}$ on the number of approximate $k$-cuts, removing the factor of 2 in the exponent, which would yield $4\cdot \frac{(k+1)\ln{n} + O(k^2)\ln{k}}{\epsilon}$ additive error. However, the use of Chekuri et al.'s bound is justified by the fact that it gives error that is only worse by a constant for small $k$, and is better for large $k$ (i.e., $k \ge \sqrt{n}$). 
\end{remark}

We now consider the efficiency of the algorithm. We run the exponential mechanism twice. The first run selects among a universe of $n \choose 2$ elements and runs efficiently; the second selects among a universe of $k$-cuts, the number of which is given by the Stirling partition number—approximately $k^n$—and does not run efficiently. We follow the techniques of \cite{gupta2010differentially} for their efficient minimum cut algorithm, describing how to achieve $(\epsilon, \delta)$-differential privacy in $n^k\polylog{n}$ time, using the following theorem:

\begin{theorem}
    \label{theorem:gupta-runtime}
    \text{\normalfont(\cite{gupta21})}. Let $C$ be the cost of the minimum $k$-cut. For each $k \ge 3$, there is an algorithm to enumerate all $k$-cuts of weight at most $\alpha C$ in time $n^{\alpha k}(\log{n})^{O(\alpha k^2)}$ with probability at least $1 - 1 /$\text{\normalfont poly}$(n)$. 
\end{theorem}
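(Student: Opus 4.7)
The plan is to combine a near-optimal tree packing of $G$ with a dynamic-programming enumeration on each tree in the packing, following the Karger--Thorup paradigm adapted to $k$-cuts.

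First, I would construct an approximate maximum tree packing of $G$: a multi-set of $m = O(\log n)$ spanning trees $T_1, \ldots, T_m$ whose total weight approaches $\lambda(G)/2$, where $\lambda(G)$ is the global minimum cut, with each edge $e$ used in trees of total weight at most $w(e)$. Such a packing can be produced by Gabow's algorithm or Thorup's greedy refinement. The crucial structural consequence is that, for any $\alpha$-approximate $k$-cut $C$ of weight at most $\alpha \cdot \mathsf{OPT}_k(G)$, averaging $|C \cap T_j|$ across the $m$ trees shows that some $T_j$ intersects $C$ in only $O(\alpha k)$ edges, since $\mathsf{OPT}_k(G) \leq (k-1)\lambda(G)$ and each tree captures roughly the same fraction of every cut.

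Second, for each tree $T_j$, I would enumerate all subsets $F \subseteq E(T_j)$ with $|F| \leq c \alpha k$ for an appropriate constant $c$. The number of such subsets is $\binom{n-1}{\leq c\alpha k} = n^{\alpha k + O(1)}$, and by the structural claim every $\alpha$-approximate $k$-cut is captured by some pair $(T_j, F)$. Deleting $F$ partitions $T_j$ into $|F|+1$ subtrees, giving a candidate vertex partition. I would then run a dynamic program that walks $T_j$ and, at each node, maintains a short list of \emph{profiles} describing possible mergings of the subtree components into at most $k$ global classes; this recovers the $k$-cut(s) of $G$ that agree with $F$ on $T_j$ and minimize (or lie within the threshold of) the total cost. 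The number of profiles per node is bounded by a Bell-number-like quantity $k^{O(|F|)}$, which accounts for the $(\log n)^{O(\alpha k^2)}$ factor after substituting $|F| = O(\alpha k)$ and careful pruning.

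Taking the union across all $(T_j, F)$ and deduplicating yields the desired enumeration, with failure probability $1/\mathrm{poly}(n)$ stemming only from the random tree-packing construction (amplifiable by independent trials). The principal obstacle is the counting claim in the first step: one must argue that \emph{some} tree in the packing intersects any given $\alpha$-approximate $k$-cut in only $O(\alpha k)$ edges, not merely the generic $O(\alpha(k-1)\log n)$ from naive pigeonhole. This requires exploiting the tightness of the packing against $\lambda(G)$ rather than the much larger $\mathsf{OPT}_k$, together with a concentration/averaging argument over the random choice of tree. A secondary subtlety is keeping the DP's state space polylogarithmic in $n$ per tree node, so that the overall overhead stays inside the $(\log n)^{O(\alpha k^2)}$ regime; this typically demands a representative-partitions lemma that discards redundant profiles whose agreement with $F$ cannot extend to a valid $\alpha$-approximate $k$-cut.
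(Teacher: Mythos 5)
The theorem is stated as a black-box citation of~\cite{gupta21}, and the paper offers no proof of its own; the cited result is obtained there by a careful analysis of the Karger--Stein \emph{random contraction} process for $k$-cut, not by tree packing. Your proposal instead follows the Thorup / Chekuri--Quanrud--Xu tree-packing route. That approach is valid, but it provably gives the \emph{weaker} exponent and in fact re-derives the other enumeration bound in this section (the Chekuri et al.~bound of $O(n^{\lfloor 2\alpha(k-1)\rfloor})$ stated in Lemma~\ref{theorem:karger-k-cuts}), not the $n^{\alpha k}(\log n)^{O(\alpha k^2)}$ claimed here.

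Concretely, the averaging argument on a near-maximum fractional tree packing of value $\tau$ gives that every cut $C$ with $w(C)\le \alpha\,\mathsf{OPT}_k$ crosses some tree in at most $w(C)/\tau$ edges, and the best lower bound one can prove on $\tau$ is $\tau \ge \mathsf{OPT}_k/\bigl(2(k-1)\bigr)$ (via Nash--Williams/Tutte duality and the factor-2 guarantee of greedy splitting). This yields a crossing number of about $2\alpha(k-1)$, not $\alpha k$; enumerating tree-edge subsets of that size gives $\binom{n-1}{\le 2\alpha(k-1)} = n^{2\alpha(k-1)+O(1)}$ candidates. Your write-up records this as ``$|F|\le c\alpha k$ for an appropriate constant $c$'' and then reports $\binom{n-1}{\le c\alpha k}=n^{\alpha k+O(1)}$, silently dropping the constant $c\approx 2$ from the exponent where it actually lives. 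The ``principal obstacle'' you flag (avoiding an extra $\log n$ in the crossing bound) is not the real issue --- keeping all packing trees rather than $O(\log n)$ sampled ones already avoids it --- whereas the genuine difficulty is removing the factor of $2$ from the exponent of $n$. That improvement is exactly the content of~\cite{gupta21}, and it is achieved by a delicate bootstrapping analysis of recursive random contraction (tracking the probability that a fixed $k$-cut survives the Karger--Stein recursion), not by tree packing plus dynamic programming. So as written your proof reaches only Lemma~\ref{theorem:karger-k-cuts}, not Theorem~\ref{theorem:gupta-runtime}.
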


Our more efficient algorithm is as follows: in Step 4 of Algorithm \ref{alg:k-cut-chekuri}, instead of sampling amongst all possible $k$-cuts, we restrict ourselves to the set $S_\alpha$, defined as the set of $k$-cuts generated by applying Theorem \ref{theorem:gupta-runtime} with $\alpha = 1 + \frac{k}{(k-1)}$. We claim that the output distribution of this algorithm has statistical distance $O(1 / n^2)$ from that of Algorithm \ref{alg:k-cut-chekuri}, implying that the efficient algorithm preserves $(\epsilon, O(1 / n^2))$-differential privacy. 

Consider a hypothetical algorithm that generates the $k$-cut as in Algorithm \ref{alg:k-cut-chekuri} but then outputs FAIL whenever this cut is not in $S_\alpha$. We first claim that the probability that this algorithm outputs FAIL is $O(1/n^2)$. By Lemma \ref{lemma:augmentation-general}, $\mathsf{OPT}_k(G\cup H_i)$ is at least $\frac{4(k-1)\ln{n}}{\epsilon}$ except with probability $1/n^2$. Conditioned on this, Theorem \ref{theorem:error-general} states that the $k$-cut chosen in Step 4 has cost at most $\mathsf{OPT}_k(G \cup H_i) + \frac{4k\ln{n}}{\epsilon}$ except with probability $1/n^2$. Finally, given our setting for $\alpha$, all $k$-cuts of cost at most $\mathsf{OPT}_k(G \cup H_i) + \frac{4k\ln{n}}{\epsilon}$ are enumerated in $S_\alpha$ except with probability $1 /$\text{\normalfont poly}$(n)$, proving the claim that the probability of FAIL is $O(1/n^2)$. 

We can couple this hypothetical algorithm with both the more efficient and exponential time algorithms, and observe that their output distributions are $\delta$-close for $\delta = O(1/n^2)$, implying $(\epsilon, \delta)$-privacy. The runtime of the efficient algorithm is $n^{\alpha k}(\log{n})^{O(\alpha k^2)}$ for $\alpha = 1 + \frac{k}{(k-1)} = \Omega(1)$.

\subsection{Lower bound on private minimum $k$-cut}\label{sec:lower_bound_k_cut}
In this section, we establish the ${\Omega}(k\log n)$ lower bound on privately answering minimum $k$-cut for unweighted graphs. Prior to this, we briefly cover the proof of the $\Omega(\log n)$ lower bound for private minimum cut in \cite{gupta2010differentially}, and highlight the key technical differences between their approach and ours. 

In \cite{gupta2010differentially}, the authors consider a $\frac{\log n}{3\epsilon}$-regular graph, in which the minimum cuts are the $n$ singleton cuts, and that any other cut has size at least $\frac{\log n}{2\epsilon}$. Suppose $\mathcal{M}$ is a randomized mechanism on $G = (V,E)$ that outputs a partition of vertices. Then there exists a $v\in V$ such that 
$$\Pr{\mathcal{M}(G) = (\{v\}, V\backslash \{v\})} \leq 1/n.$$
Consider a graph $G'$ formed from $G$ by removing all edges incident on $v$. Suppose $\mathcal{M}$ is $(\epsilon,0)$-differentially private, then 
$$\Pr{\mathcal{M}(G') = (\{v\}, V\backslash \{v\})} \leq 1/n^{1/3}$$
since $G$ and $G'$ differs in $\log n/{(3\epsilon)}$ edges. This is saying that with probability at least $1-\frac{1}{n^{1/3}}$, $\mathcal{M}$ will output a cut other than the minimum cut $ (\{v\}, V\backslash \{v\})$ of size zero. Since any other cut has size at least $\frac{\log n}{6\epsilon}$, then the expected error on private global min cut is at least $O(\log n/\epsilon)$.

However, this approach for constructing the hard case does not directly apply to the minimum $k$-cut problem. Say we remove edges from a similarly constructed regular graph to create $k-1$ isolated vertices, forming a $k$-cut. Even if the mechanism does not output the optimal $k$-cut, the size of the second-smallest cut might be significantly smaller than $O(k\log n)$.

Therefore, to establish a lower bound for private $k$-cut, we must consider a graph where the actual minimum $k$-cut in the original graph does not have excessively small weights, even when a specified subset of edges is removed to form $k$ connected components. We use a new construction to give the following theorem:
\begin{theorem}\label{thm:lb_on_k_cut}
  Fix any $n,k\in \mathbb{N}_+$, $3\leq k\lesssim n^{1/2}$ and $\epsilon>\frac{1}{48\ln n }$. If $\mathcal{M}$ is a randomized algorithm on $n$-vertex graphs such that on any input graph $G = ([n],E)$, it outputs a $k$-partition $\mathcal{C}$ where
  $$\mathbb{E}_{\mathcal{M}(G)} [|\mathcal{C}(G)|] \leq \mathsf{OPT}_k(G) + \frac{(k-1)\ln\left(\frac{n}{2\ln^2 n}\right)}{576\epsilon},$$ then $\mathcal{M}$ is not $\epsilon$-differentially private. Here, $|\mathcal{C}(G)|$ is the cost of cut $\mathcal{C}$ on $G$, and $\mathsf{OPT}_k(G)$ is the size of minimum $k$-cut on $G$.
\end{theorem}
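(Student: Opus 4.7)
The plan is a packing argument combined with group privacy, in the spirit of the minimum-cut lower bound of \cite{gupta2010differentially} but with a different hard instance tailored to $k$-cut. The base graph $G_0$ is a ``path of cliques'': partition $[n]$ into $N$ cliques each of size $\ell=\Theta(\ln^2 n)$, so $N=\Theta(n/\ln^2 n)$, and between each consecutive pair of cliques place a ``bridge'' consisting of $d=\Theta(\ln(n/\ln^2 n)/\epsilon)$ edges on distinct vertex pairs. The hypothesis $\epsilon>\tfrac{1}{48\ln n}$ is tuned precisely so that $d<\ell-1$. I will then show $\mathsf{OPT}_k(G_0)\ge d(k-1)/2$: any $k$-partition of $G_0$ requires $k-1$ ``separations,'' each of which either cuts a bridge (cost $d$) or splits a clique (cost at least $\ell-1>d$); alternatively, running SPLIT of \cite{saran1995finding} on $G_0$ returns a $k$-cut of cost exactly $d(k-1)$, and the $2$-approximation guarantee gives $\mathsf{OPT}_k(G_0)\ge d(k-1)/2$.

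For each $(k-1)$-subset $\tau$ of the $N-1$ bridges, let $G_\tau$ denote $G_0$ with the $d(k-1)$ edges $\overline{E}_\tau$ of the bridges indexed by $\tau$ removed; then $G_\tau$ splits into $k$ connected components and $\mathsf{OPT}_k(G_\tau)=0$. Define the ``good'' set $S_\tau=\{\mathcal{C}:|\mathcal{C}(G_\tau)|<d(k-1)/6\}$. The key disjointness lemma is: if $|\tau\cap\tau'|\le (k-1)/6$ then $S_\tau\cap S_{\tau'}=\emptyset$. Indeed, the identity $|\mathcal{C}(G_0)|=|\mathcal{C}(G_\tau)|+|\mathcal{C}\cap\overline{E}_\tau|$ combined with $|\mathcal{C}(G_0)|\ge d(k-1)/2$ forces any $\mathcal{C}\in S_\tau$ to have $|\mathcal{C}\cap\overline{E}_\tau|\ge d(k-1)/3$; since $|\overline{E}_\tau\cap\overline{E}_{\tau'}|=d\cdot|\tau\cap\tau'|\le d(k-1)/6$, at least $d(k-1)/6$ of these cut edges survive in $G_{\tau'}$, forcing $|\mathcal{C}(G_{\tau'})|\ge d(k-1)/6$ and hence $\mathcal{C}\notin S_{\tau'}$.

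A standard greedy argument on constant-weight codes produces a family $\mathcal{F}\subseteq \binom{[N-1]}{k-1}$ with pairwise intersection at most $(k-1)/6$ and $\ln|\mathcal{F}|=\Omega((k-1)\ln(N/(k-1)))=\Omega((k-1)\ln(n/\ln^2 n))$ under the hypothesis $k\lesssim n^{1/2}$ (each chosen subset rules out at most $\binom{k-1}{\lceil(k-1)/6\rceil}\binom{N-1}{5(k-1)/6}$ others). Now suppose toward contradiction that $\mathcal{M}$ is $\epsilon$-DP with expected error strictly less than $d(k-1)/12$ on every $G_\tau$. Markov gives $\Pr[\mathcal{M}(G_\tau)\in S_\tau]\ge 1/2$; group privacy across the $d(k-1)$ edges on which $G_0$ and $G_\tau$ differ yields $\Pr[\mathcal{M}(G_0)\in S_\tau]\ge \tfrac12 e^{-\epsilon d(k-1)}$; summing the disjoint events $\{S_\tau\}_{\tau\in\mathcal{F}}$ gives $1\ge\tfrac12|\mathcal{F}|e^{-\epsilon d(k-1)}$, i.e., $\epsilon d(k-1)\ge\ln(|\mathcal{F}|/2)$. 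Choosing $d=\lfloor\ln(n/(2\ln^2 n))/(c\epsilon)\rfloor$ for the appropriate constant $c$ makes this bound fail, yielding the desired error lower bound $\tfrac{(k-1)\ln(n/(2\ln^2 n))}{576\epsilon}$.

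The main obstacle is reconciling the constants: the clique size $\ell$, the bridge width $d$, the $1/6$ thresholds in both the disjointness lemma and the packing construction, and the $1/12$ Markov margin all multiply together into the final denominator $576$. The restriction $k\lesssim n^{1/2}$ is what guarantees both that $N\gg k$ (so the greedy packing family achieves $\ln|\mathcal{F}|=\Omega((k-1)\ln(n/\ln^2 n))$ rather than something vacuous) and that $N\ell\le n$ accommodates all $n$ vertices, while $\epsilon>1/(48\ln n)$ is exactly what allows $d<\ell-1$ so that bridges are indeed the cheapest separations in the combinatorial bound on $\mathsf{OPT}_k(G_0)$.
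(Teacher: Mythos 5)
Your proposal is correct and follows essentially the same route as the paper: the same path-of-cliques hard instance $G_0$, the same lower bound $\mathsf{OPT}_k(G_0)\ge d(k-1)/2$ via the $2$-approximation guarantee of SPLIT, the same disjointness lemma built from the identity $|\mathcal{C}(G_0)|=|\mathcal{C}(G_\tau)|+|\mathcal{C}(G_0)\cap\overline{E}_\tau|$, and the same packing plus group-privacy contradiction. The only cosmetic differences are that you construct the small-pairwise-intersection family by a direct greedy/constant-weight-code count while the paper invokes a lemma of Frankl et al., and the paper applies group privacy between two punctured graphs $G_S$ and $G_{S'}$ (at edit distance $2d(k-1)$) rather than between $G_0$ and $G_\tau$ (distance $d(k-1)$); both choices only change unimportant constants.
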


\noindent Before getting into the proof of \Cref{thm:lb_on_k_cut}, we first consider the unweighted graph $G_0$ on $n$-vertices that is constructed as follows:
\begin{enumerate}
  \item Arbitrarily divide $n$ vertices into $\lfloor n/(2\ln^2 n) \rfloor$ groups.  Within each group, add edges between every pair of vertices to form a clique with size of at least $2\ln^2 n$.
  \item Sequentially add edges to connect each clique, forming a path of length $\lfloor n/(2\ln^2 n) \rfloor - 1$. Each bridge between two directly connected cliques consists of $d = {\frac{1}{48\epsilon}}\ln \left(\frac{n}{2\ln^2 n}\right)$ edges.
\end{enumerate}

Using the 2-approximation algorithm, we have the following lemma on the size of the minimum $k$-cut in such graph, which plays an important role in the proof of \Cref{thm:lb_on_k_cut}.

\begin{lemma}\label{lem:min-k-cut_in_G_0}
  Fix any $2\leq k\leq \lfloor n/(2\ln^2 n) \rfloor - 1$ and $\epsilon > \frac{1}{48\ln n}$. The size of minimum $k$-cut in $G_0$ is at least $\frac{(k-1)d}{2}$.
\end{lemma}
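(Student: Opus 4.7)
The plan is to analyze the execution of the Saran--Vazirani SPLIT 2-approximation algorithm \cite{saran1995finding} on $G_0$, show that its output is a $k$-cut of total weight exactly $(k-1)d$, and then invoke the approximation guarantee to conclude $\mathsf{OPT}_k(G_0) \geq (k-1)d/2$.

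The key structural step I would prove is the following lemma: for any contiguous sub-path of cliques of $G_0$, say $H$ consisting of $K_i \cup K_{i+1} \cup \cdots \cup K_j$ together with the original bridges between consecutive ones, the minimum cut of $H$ equals $d$. Before proving it I would record the arithmetic consequence of the hypothesis $\epsilon > 1/(48\ln n)$, namely
\[
d \;=\; \frac{1}{48\epsilon}\ln\!\left(\frac{n}{2\ln^2 n}\right) \;<\; \ln^2 n \;<\; 2\ln^2 n - 1,
\]
so that cutting even a single vertex off an intact clique is strictly more expensive than cutting an entire bridge. The $\leq d$ direction of the lemma is trivial since cutting any one bridge of $H$ achieves it. For the $\geq d$ direction I take an arbitrary cut $(S, \overline{S})$ of $H$ and split into cases. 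If every clique of $H$ lies entirely on one side, then $(S, \overline{S})$ must separate two adjacent cliques along the path, and hence contains an entire bridge of weight $d$. Otherwise some clique $K_\ell$ is split across $(S, \overline{S})$, and then the cut contains at least $|K_\ell| - 1 \geq 2\ln^2 n - 1$ edges internal to $K_\ell$, which strictly exceeds $d$ by the arithmetic above.

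Given the lemma, the SPLIT analysis is straightforward by induction on the number of iterations: every intermediate connected component is a contiguous sub-path of cliques (this is preserved because the lemma forces each min-cut to be a single bridge, whose removal just shortens the sub-path), and the assumption $k \leq \lfloor n/(2\ln^2 n)\rfloor - 1$ guarantees that a bridge is always available to cut rather than being forced into a single-clique component. So each of the $k-1$ iterations removes exactly one bridge of weight $d$, SPLIT's total output weight is $(k-1)d$, and the 2-approximation guarantee yields $\mathsf{OPT}_k(G_0) \geq (k-1)d/2$. The main obstacle is the second case of the lower-bound argument in the key lemma, since that is precisely where the hypothesis $\epsilon > 1/(48\ln n)$ is used; the inductive structural claim and the final invocation of the approximation guarantee are both routine.
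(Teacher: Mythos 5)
Your proposal is correct and takes essentially the same approach as the paper: analyze the Saran--Vazirani SPLIT algorithm on $G_0$, show via the arithmetic $d < \ln^2 n < 2\ln^2 n - 1$ that every iteration removes a full bridge rather than splitting a clique, conclude SPLIT outputs a $k$-cut of weight $(k-1)d$, and apply the 2-approximation guarantee. Your case analysis on the minimum cut of a sub-path of cliques is actually slightly more careful than the paper's own exposition, which only explicitly argues about isolating a single vertex, but the two arguments are the same in substance.
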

\begin{proof}
  Given a connected graph $G = ([n],E)$, we first introduce the following SPLIT algorithm (\cite{saran1995finding}) on approximating the minimum $k$-cut of $G$:
  \begin{itemize}
      \item In each iteration, pick the lightest cut that splits a connected component. Remove the edges in this cut.
      \item Halt when there are exactly $k$-connected components.
  \end{itemize}
  It is known in \cite{saran1995finding} that the SPLIT algorithm always gives a 2-approximation of the minimum $k$-cut on $G$. We consider executing the SPLIT algorithm on $G_0$. We claim that the algorithm will find a $k$-cut of size exactly $(k-1)d$, and we will prove this fact by induction. In the first iteration, the algorithm simply finds the minimum cut in $G_0$. For any $\epsilon > \frac{1}{48\ln n}$, cutting any bridge removes $d < \ln^2 n $ edges. Since each clique in $G_0$ has $2\ln^2 n$ vertices, then isolating a vertex requires the removal of $2\ln^2 n-1 > d$ edges. Therefore, the minimum cut of $G_0$ is any bridge with $d$ edges. In iteration $i$ where $2\leq i\leq k-1$, suppose there are $i$ connected components in which each connected component is either a clique of $2\ln^2 n$ vertices, or a path of some such cliques connected with bridges of $d$ edges. Since the minimum cut in such clique has size at least $2\ln^2 n-1$, then the algorithm will again remove all edges in a bridge to split a path.
  Therefore, executing the SPLIT algorithm on $G_0$ is equivalent to choosing arbitrary $k-1$ bridges and removing the edges of them. Thus, the total cost is $(k-1)d$. Since this greedy algorithm gives a $2$-approximation of the actual min $k$-cut, then we conclude that $\mathsf{OPT}_k(G_0)\geq \frac{(k-1)d}{2}$.
\end{proof}

Next, let $\ell = \lfloor n/(2\ln^2 n) \rfloor - 1$. We fix an arbitrary $k$ that is no larger than $ \sqrt{\ell}/2 + 1 = \tilde{\Omega}(n^{1/2})$. Given any subset $S\subseteq [\ell]$ of size $k-1$, we define $G_S$ as the graph that is formed from $G_0$ by removing the $k$ bridges specified by $S$. Clearly for any $S\subseteq [\ell]$ and $|S| = k-1$, $G_S$ has exactly $k$ connected components, and thus the size of minimum $k$-cut on $G_S$ is zero. Let $\mathcal{G}$ be a set of graphs such that $$\mathcal{G} \subseteq \{G_S||S| = k-1 \land S\subseteq [\ell]\},$$ and for any pair of $G_{S}$ and $G_{S'}$ in $\mathcal{G}$, $$|S \cap S'| \leq (k-1)/6.$$ That is, the graphs in $\mathcal{G}$ do not overlap with each other very much. Intuitively, this says that if $\mathcal{C}$ is a good $k$-cut on $G_S$, then it must cause large error on any other graph $G_{S'}$. We will formulate this later. The following lemma given by Frankl et al. establishes the existence of such a set $\mathcal{G}$ with exponentially many elements.

\begin{lemma}[\cite{frankl2016invitation}]\label{lem:Frankl}
  Fix any $\ell \geq 4$. For every $2\leq k \leq \sqrt{\ell}/2 + 1$, there exists a $(k-1)$-uniform family $\mathcal{G}$ of size at least $\left(\frac{\ell}{2(k-1)}\right)^{m}$ on $\ell$ elements such that $|S\cap S'|\leq m-1$ for any two distinct sets $S,S' \in \mathcal{G}$.
\end{lemma}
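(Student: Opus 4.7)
My plan is to prove this via a greedy packing argument (a standard approach for such Frankl-type set-system lemmas). Initialize $\mathcal{G}=\emptyset$ and, at each step, add any $(k-1)$-subset $T\in\binom{[\ell]}{k-1}$ satisfying $|T\cap S|\leq m-1$ for every $S$ already in $\mathcal{G}$; continue until no such $T$ remains. The final family will automatically be $(k-1)$-uniform with pairwise intersections at most $m-1$, so it only remains to show it is large.

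To lower bound the final $|\mathcal{G}|$, I would count the ``forbidden'' subsets. For a fixed $S$, the number of $(k-1)$-subsets $T$ with $|T\cap S|\geq m$ is at most $\binom{k-1}{m}\binom{\ell-m}{k-1-m}$: specify which $m$ elements of $S$ are forced to lie in $T$ in $\binom{k-1}{m}$ ways, then complete $T$ by choosing its remaining $k-1-m$ elements from the other $\ell-m$ elements of $[\ell]$. By union bound, whenever $|\mathcal{G}|\cdot \binom{k-1}{m}\binom{\ell-m}{k-1-m}<\binom{\ell}{k-1}$ the greedy step can continue, so the procedure terminates with
\[
|\mathcal{G}|\;\geq\; \frac{\binom{\ell}{k-1}}{\binom{k-1}{m}\binom{\ell-m}{k-1-m}}.
\]
I would then simplify this ratio using the classical double-counting identity $\binom{\ell}{k-1}\binom{k-1}{m}=\binom{\ell}{m}\binom{\ell-m}{k-1-m}$, which rewrites the right-hand side as $\binom{\ell}{m}/\binom{k-1}{m}^{2}$. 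Combined with $\binom{\ell}{m}\geq(\ell/m)^{m}$ and a matching upper bound on $\binom{k-1}{m}$, this should produce a bound of the form $\bigl(c\,\ell/(k-1)^{2}\cdot m\bigr)^{m}$ for some absolute constant $c$.

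The main obstacle is extracting the precise constant $2$ in the denominator $2(k-1)$, rather than some worse $O(1)$ constant or an $(k-1)^{2}$ denominator. A naive application of $\binom{k-1}{m}\leq(e(k-1)/m)^{m}$ yields only $\bigl(\Omega(m\ell/(k-1)^{2})\bigr)^{m}$, which falls short of $(\ell/(2(k-1)))^{m}$ whenever $m$ is much smaller than $k-1$. The hypothesis $k\leq\sqrt{\ell}/2+1$, i.e.\ $\ell\geq 4(k-1)^{2}$, is precisely the input needed to absorb the extra factor of $k-1$: rewriting $\binom{\ell}{m}/\binom{k-1}{m}^{2}$ as $\prod_{i=0}^{m-1}\tfrac{\ell-i}{(k-1-i)^{2}}\cdot\tfrac{(k-1-i)!}{(k-1)!/\binom{k-1}{m}^{-1}}$ and exploiting the quadratic slack in $\ell$ versus $(k-1)^{2}$ should give exactly the stated constant. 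Since the statement is cited from the Frankl--Tokushige monograph, I would finally cross-check against their formulation to confirm that no slightly different hypothesis on $m$ (e.g.\ requiring $m\leq k-1$) has been tacitly assumed.
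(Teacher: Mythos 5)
The paper does not prove this lemma; it cites it from Frankl's monograph, so your greedy argument is competing against the cited proof rather than one appearing in the paper. Unfortunately the greedy approach does not establish the stated bound. Your simplification to $\binom{\ell}{m}/\binom{k-1}{m}^{2}$ is correct, and that is the best the greedy counting can deliver, but this quantity is in general strictly smaller than the target $\bigl(\ell/(2(k-1))\bigr)^{m}$ --- so no amount of sharper estimation of the two binomials, nor any clever use of the hypothesis $\ell\ge 4(k-1)^2$, can close the gap. For a concrete failure, take $k-1=6$, $m=2$, $\ell=144$ (the smallest $\ell$ allowed): the target is $(144/12)^{2}=144$, while $\binom{144}{2}/\binom{6}{2}^{2}=10296/225<46$. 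More systematically, with $\ell=4(k-1)^2$ and $m$ fixed, the greedy quantity is asymptotically $4^{m}\,m!$, a constant, whereas $(2(k-1))^{m}\to\infty$ as $k\to\infty$. The ``quadratic slack'' you hoped to exploit is already fully consumed just keeping the greedy ratio bounded below; it cannot also manufacture the growing factor $(2(k-1))^{m}$.

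The lemma needs an explicit algebraic construction, which is also where the constant $2$ comes from. Pick a prime $q$ with $\ell/(2(k-1)) \le q \le \ell/(k-1)$; Bertrand's postulate guarantees such a prime, and the hypothesis $\ell\ge 4(k-1)^{2}$ ensures $q\ge 2(k-1)\ge k-1$. Embed $[k-1]\times\mathbb{F}_q$ (which has $(k-1)q\le\ell$ elements) into $[\ell]$, fix distinct $\alpha_1,\dots,\alpha_{k-1}\in\mathbb{F}_q$, and for each polynomial $p\in\mathbb{F}_q[x]$ with $\deg p<m$ set $S_p=\{(i,p(\alpha_i)):i\in[k-1]\}$. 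Each $S_p$ is a $(k-1)$-set, and for $p\ne p'$ the indices in $S_p\cap S_{p'}$ correspond to roots of the nonzero polynomial $p-p'$ of degree $<m$, so $|S_p\cap S_{p'}|\le m-1$. There are $q^{m}\ge (\ell/(2(k-1)))^{m}$ such polynomials, giving the desired family. Your closing instinct was sound: the statement does tacitly assume $m\le k-1$ (for instance $k=2$, $m=2$, $\ell>4$ makes the claim false, since a $1$-uniform family on $[\ell]$ has at most $\ell<(\ell/2)^2$ sets), though this is harmless for the paper's application where $m-1=\lfloor(k-1)/6\rfloor$.
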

\noindent Then, by setting $m-1  = \lfloor (k-1)/6 \rfloor $, from Lemma \ref{lem:Frankl}, we directly have the following corollary:

\begin{corollary}\label{cor:size_of_G}
  There exists such a set of graphs $\mathcal{G}$ satisfying previous conditions of size $|\mathcal{G}| \ge \ell^{k/12}$.
\end{corollary}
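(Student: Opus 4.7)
The plan is to derive \Cref{cor:size_of_G} as a direct specialization of \Cref{lem:Frankl}. We want the intersection condition $|S\cap S'|\leq\lfloor(k-1)/6\rfloor$, so we apply the lemma with the parameter choice $m = \lfloor(k-1)/6\rfloor+1$, so that $m-1 = \lfloor(k-1)/6\rfloor$ and the $(k-1)$-uniform family produced by the lemma has pairwise intersections of size at most $\lfloor(k-1)/6\rfloor$, matching the hypothesis in the definition of $\mathcal{G}$ just above the corollary. The range restriction $2 \leq k \leq \sqrt{\ell}/2 + 1$ in \Cref{lem:Frankl} is exactly the range we are working in, so the hypothesis is satisfied.

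Once the lemma is invoked, it remains to simplify the lower bound $\left(\frac{\ell}{2(k-1)}\right)^m$ into $\ell^{k/12}$. The key step is to use the regime $k \leq \sqrt{\ell}/2 + 1$, which rearranges to $2(k-1) \leq \sqrt{\ell}$, and hence
\[
\frac{\ell}{2(k-1)} \;\geq\; \frac{\ell}{\sqrt{\ell}} \;=\; \sqrt{\ell}.
\]
Substituting back gives $|\mathcal{G}| \geq \ell^{m/2} = \ell^{(\lfloor(k-1)/6\rfloor+1)/2}$. A short arithmetic check then confirms $(\lfloor(k-1)/6\rfloor+1)/2 \geq k/12$ in the relevant range of $k$, so the final bound $|\mathcal{G}| \geq \ell^{k/12}$ follows.

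The main (and essentially the only) subtlety is the rounding step at the end: making sure that $\lfloor(k-1)/6\rfloor+1 \geq k/6$ so that dividing by $2$ yields the clean exponent $k/12$. Since $\lfloor(k-1)/6\rfloor \geq (k-1)/6 - 5/6$ (because the floor loses at most a fractional part less than $1$), we have $\lfloor(k-1)/6\rfloor + 1 \geq (k-1)/6 + 1/6 = k/6$, which gives the bound. This is routine, and no further machinery beyond \Cref{lem:Frankl} and the assumed range of $k$ is needed.
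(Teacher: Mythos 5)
Your proof is correct and matches the paper's approach: the paper also invokes Lemma~\ref{lem:Frankl} with $m-1=\lfloor(k-1)/6\rfloor$ and states the corollary ``directly,'' leaving the arithmetic $\bigl(\ell/(2(k-1))\bigr)^m \geq \ell^{k/12}$ implicit. You have simply filled in that routine calculation (using $2(k-1)\leq\sqrt{\ell}$ and $m\geq k/6$), and your bounds check out.
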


Next, we show that a fixed $k$-cut cannot be too ``good'' simultaneously for any two graphs in $\mathcal{G}$. Let $\mathcal{C}$ be a $k$-partition on $n$ vertices. For any $G = ([n],E)$, denote $\mathcal{C}(G)$ as the edge set removed from $E$ to form $k$ connected components, and its size on $G$ as $|\mathcal{C}(G)|$.

\begin{lemma}\label{lem:contradiction}
  Let $\mathcal{C}$ be any $k$-partition on $n$ vertices. Fix any $G_S\in \mathcal{G}$ where $S \subseteq [\ell]$ and $|S| = k-1$. If $|\mathcal{C}(G_S)| < \frac{(k-1)d}{6}$, then $|\mathcal{C}(G_{S'})| > \frac{(k-1)d}{6}$ for any $G_{S'}\in \mathcal{G}$ and $S'\neq S$.
\end{lemma}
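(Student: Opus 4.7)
The plan is to track how the edge set $\mathcal{C}(G_0)$ cut by $\mathcal{C}$ on the ``base'' graph decomposes relative to the removed bridge edges for $G_S$ and $G_{S'}$. For any $S \subseteq [\ell]$ with $|S|=k-1$, let $\overline{E}_S \subseteq E(G_0)$ denote the set of bridge edges removed from $G_0$ to obtain $G_S$; since each bridge contains exactly $d$ edges, $|\overline{E}_S| = (k-1)d$, and more generally $|\overline{E}_S \cap \overline{E}_{S'}| = |S \cap S'| \cdot d$. The key identity I will use is that, since $\mathcal{C}(G)$ denotes the edges of $G$ whose endpoints are separated by the partition $\mathcal{C}$, and $E(G_S) = E(G_0) \setminus \overline{E}_S$, we have
\[
\mathcal{C}(G_S) = \mathcal{C}(G_0) \setminus \overline{E}_S, \qquad |\mathcal{C}(G_0)| = |\mathcal{C}(G_S)| + |\mathcal{C}(G_0) \cap \overline{E}_S|.
\]

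Next, I would invoke Lemma \ref{lem:min-k-cut_in_G_0} to lower bound $|\mathcal{C}(G_0)| \geq \mathsf{OPT}_k(G_0) \geq (k-1)d/2$. Combining this with the assumption $|\mathcal{C}(G_S)| < (k-1)d/6$ and the identity above yields
\[
|\mathcal{C}(G_0) \cap \overline{E}_S| \;>\; \tfrac{(k-1)d}{2} - \tfrac{(k-1)d}{6} \;=\; \tfrac{(k-1)d}{3}.
\]
In other words, a large portion of the cut $\mathcal{C}(G_0)$ must land on the bridges that $G_S$ deletes.

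Finally, to transfer this to $G_{S'}$, I will use the near-disjointness assumption from the construction of $\mathcal{G}$: since $|S \cap S'| \leq (k-1)/6$, we get $|\overline{E}_S \cap \overline{E}_{S'}| \leq (k-1)d/6$. Therefore the edges of $\mathcal{C}(G_0) \cap \overline{E}_S$ that are \emph{not} removed when forming $G_{S'}$ (and hence still cut by $\mathcal{C}$ in $G_{S'}$) satisfy
\[
|\mathcal{C}(G_0) \cap \overline{E}_S \setminus \overline{E}_{S'}| \;\geq\; |\mathcal{C}(G_0) \cap \overline{E}_S| - |\overline{E}_S \cap \overline{E}_{S'}| \;>\; \tfrac{(k-1)d}{3} - \tfrac{(k-1)d}{6} \;=\; \tfrac{(k-1)d}{6}.
\]
Since this set is contained in $\mathcal{C}(G_{S'})$, we conclude $|\mathcal{C}(G_{S'})| > (k-1)d/6$, as desired.

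There is essentially no obstacle here: the argument is a straightforward counting pigeonhole once the decomposition $\mathcal{C}(G_0) = \mathcal{C}(G_S) \sqcup (\mathcal{C}(G_0) \cap \overline{E}_S)$ is recognized. The only thing to be careful about is that $\mathcal{C}$ may cut edges inside cliques of $G_0$ as well, but this only contributes \emph{more} edges to $\mathcal{C}(G_0)$ and never interferes with the bridge-based counting, so the lower bounds are preserved throughout.
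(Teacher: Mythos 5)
Your proof is correct and follows essentially the same argument as the paper's: establish $|\mathcal{C}(G_0)\cap\overline{E}_S| > (k-1)d/3$ from the lower bound on $\mathsf{OPT}_k(G_0)$ and the smallness of $|\mathcal{C}(G_S)|$, then subtract off the small overlap $|\overline{E}_S\cap\overline{E}_{S'}| \le (k-1)d/6$ and note the remainder survives in $G_{S'}$. The only cosmetic difference is that you phrase the last step via the clean containment $\mathcal{C}(G_0)\cap\overline{E}_S\setminus\overline{E}_{S'}\subseteq\mathcal{C}(G_{S'})$ directly, whereas the paper passes through $\overline{E}_S\setminus\overline{E}_{S'}\subseteq E_{S'}$, but these are the same observation.
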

\begin{proof}
  We remark that the actual minimum $k$-cut on $G_S$ is zero. Let $\overline{E}_S$ be the set of edges removed from $G_0$ to form $G_S$, and $E_S$ be the edge set of $G_S$. By Lemma \ref{lem:min-k-cut_in_G_0}, we see that the size of minimum $k$-cut in $G_0$ satisfies $\mathsf{OPT}(G_0) \geq \frac{(k-1)\log n}{2}$; namely for any $k$-partition $\mathcal{C}$, $|\mathcal{C}(G_0)|\geq \frac{(k-1)d}{2}$. Therefore, by the assumption that $|\mathcal{C}(G_S)|< \frac{(k-1)d}{6}$, we have that 
  \begin{equation}\label{eq:4.1}
    |\mathcal{C}(G_0) \cap \overline{E}_S| = |\mathcal{C}(G_0)| - |\mathcal{C}(G_S)| > \frac{(k-1)d}{3}.
  \end{equation}
  Let $G_{S'}$ be any other graph in $\mathcal{G}$ such that $S'\neq S$. Since $|S\cap S'|\leq \frac{k-1}{6}$, then 
  \begin{equation}\label{eq:4.2}
    |\overline{E}_S \cap \overline{E}_{S'}|\leq \frac{(k-1)d}{6}
  \end{equation}
   by the construction of $G_S$ and $G_{S'}$. By the fact that $E_S \cup \overline{E}_{S} = E_{S'} \cup \overline{E}_{S'} =E$, we have
   $$\overline{E}_{S} - \overline{E}_{S'} = E_{S'} - E_S  \subseteq E_{S'}.$$
  Therefore, 
  \begin{equation*}
    \begin{aligned}
      |\mathcal{C}(G_{S'})| = |\mathcal{C}(G_0) \cap E_{S'}| &\geq |\mathcal{C}(G_0) \cap (\overline{E}_{S}\backslash \overline{E}_{S'})|\\
      &\geq |\mathcal{C}(G_0) \cap \overline{E}_S| - |\mathcal{C}(G_0)\cap (\overline{E}_S \cap \overline{E}_{S'})|\\
      &\geq |\mathcal{C}(G_0) \cap \overline{E}_S| - |\overline{E}_S \cap \overline{E}_{S'}|\\
      & > \frac{(k-1)d}{3} - \frac{(k-1)d}{6} = \frac{(k-1)d}{6},
    \end{aligned}
  \end{equation*}
  which completes the proof. Here, the last inequality comes from \cref{eq:4.1} and \cref{eq:4.2} respectively.
\end{proof}

\noindent With all of these technical ingredients, we are ready to give the proof of \Cref{thm:lb_on_k_cut}.
\begin{proof}
  (Of \Cref{thm:lb_on_k_cut}.) Let $\alpha = \frac{(k-1)d}{12}$. For any $S$ subset $[\ell]$ and $|S| = k-1$, we denote by $B_S^\alpha$ the set of $k$-partitions $\mathcal{C}$ on $V = [n]$ such that the cost of $\mathcal{C}$ on $G_S$ is less than $2\alpha$:
  $$B_S^\alpha = \{\mathcal{C}: \mathcal{C} \text{ is a }k\text{-partition and } |\mathcal{C}(G_S)|< 2\alpha\}.$$
  Since the mechanism $\mathcal{M}$ on $G$ outputs a $\mathcal{C}$ which satisfies that 
  \begin{equation*}
      \begin{aligned}
          \mathbb{E}_{\mathcal{M}}[|C(G)|] < \mathsf{OPT}_k(G) + \frac{(k-1)\ln\left(\frac{n}{2\ln^2 n}\right)}{576\epsilon}  &= \mathsf{OPT}_k(G) + \frac{(k-1)d}{12}\\
          &= \mathsf{OPT}_k(G) + \alpha,
      \end{aligned}
  \end{equation*}
  then on any $G_S$, with probability at least $\frac{1}{2}$, $\mathcal{M}(G_S)\in B_S^\alpha$ since $\mathsf{OPT}_k(G_S) = 0$.
  Equivalently, let $\mu_S(\cdot)$ be the distribution of $\mathcal{M}(G_S)$ for any $G_S\in \mathcal{G}$, then $\mu_S(B_S^\alpha) \geq 1/2$. 
  
  For the sake of contradiction, suppose the mechanism $\mathcal{M}$ is $(\epsilon,0)$-differentially private in terms of deleting or adding an edge. Then, for any possible collection $\mathcal{B}$ of $k$-partitions on $[n]$ vertices and any $G_S,G_{S'}\in \mathcal{G}$, 
  $$\mu_S(\mathcal{B}) \geq \mu_{S'}(\mathcal{B})e^{-\epsilon\cdot \frac{k-1}{24\epsilon}\ln \left(n/(2\ln^2 n)\right)} = \mu_{S'}(\mathcal{B})e^{-\frac{k-1}{24}\ln \left(n/(2\ln^2 n)\right)}.$$
  This is because transferring from $G_S$ to $G_{S'}$ requires editing $2d = \frac{k-1}{24\epsilon}\ln \left(\frac{n}{2\ln^2 n}\right)$ edges. On the other hand, by Lemma \ref{lem:contradiction}, $B_S^\alpha \cap B_{S'}^\alpha = \emptyset$ for any pair $G_S$ and $G_{S'}$ in $\mathcal{G}$. Therefore, fix any $G_S\in \mathcal{G}$, 
  \begin{equation*}
    \begin{aligned}
      \mu_S\left(\bigcup_{G_{S'}\in \mathcal{G}} B_{S'}^\alpha \right) = \sum_{G_{S'}\in \mathcal{G}}\mu_S(B_{S'}^\alpha) &\geq e^{-\frac{k-1}{24}\ln \left(n/(2\ln^2 n)\right)}\cdot \sum_{G_{S'}\in \mathcal{G}} \mu_{S'}(B_{S'}^\alpha)\\
      &\geq \frac{1}{2}  e^{-\frac{k-1}{24}\ln \left(n/(2\ln^2 n)\right)}\cdot |\mathcal{G}|\\
      & \geq \frac{1}{2} e^{-\frac{k-1}{24}\ln \left(n/(2\ln^2 n)\right)} \cdot \left(\frac{n}{2\ln^2 n}\right)^{\frac{k}{12}} >1,
    \end{aligned}
  \end{equation*}
  which leads to a contradiction. Here, the second to last inequality comes from Corollary \ref{cor:size_of_G}.
\end{proof}

\subsection{A polynomial time and private algorithm with 2-approximation}\label{sec:polynomial_k_cut}

In this section, we show that there exists an efficient $(\epsilon,\delta)$-differentially private algorithm such that on any unweighted graph $G$, it outputs a $k$-cut $\mathcal{C}$ such that 
$$|\mathcal{C}(G)|\leq 2\mathsf{OPT}_k(G) + \tilde{O}\left(\frac{k^{3/2}}{\epsilon}\right).$$
Here, $\mathsf{OPT}_k(G)$ is the size of minimum $k$-cut on $G$. Our algorithm is a private variant of the SPLIT algorithm in Saran et al.~\cite{saran1995finding}. In the SPLIT algorithm, we sequentially choose the lightest minimum cut in each connected component, until there are at least $k$ connected components. We will repeatedly use the private min-cut algorithm:
\begin{theorem}[\cite{gupta2010differentially}]\label{thm:private_min_cut}
  Fix any $0<\epsilon<1$ and $\delta<1/n^c$ for some large constant $c$. There exists a $(\epsilon, \delta)$-differentially private algorithm $\mathcal{M}$ such that on any input unweighted graph $G = ([n],E)$, it outputs a cut $(S,\overline{S})$ where $S\subseteq [n]$ in polynomial time and with probability at least $1-\frac{1}{n^2}$, $$\mathsf{size}_G(S) \leq \mathsf{OPT}(G) + \frac{20\log n}{\epsilon}.$$
\end{theorem}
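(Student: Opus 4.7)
The plan is to follow the two-stage approach of Gupta et al., closely mirroring the structure we already used for Algorithm \ref{alg:k-cut-chekuri} but specialized to $k=2$. The key classical input is Karger's bound: the number of cuts of cost at most $\alpha \cdot \mathsf{OPT}(G)$ in an $n$-vertex graph is at most $n^{2\alpha}$. This replaces the role played by Lemma \ref{theorem:karger-k-cuts} in the general $k$-cut setting.

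Stage 1 (private augmentation). Fix an arbitrary strictly increasing chain $H_0 \subset H_1 \subset \cdots \subset H_{\binom{n}{2}}$ of edge sets on $[n]$. Use the exponential mechanism to choose an index $i$ with probability proportional to $\exp\bigl(-\tfrac{\epsilon}{2}\,|\mathsf{OPT}(G \cup H_i) - 4\log n/\epsilon|\bigr)$. Since $\mathsf{OPT}$ has edge-sensitivity $1$ and the score is $1$-Lipschitz in $\mathsf{OPT}$, this step is $\tfrac{\epsilon}{2}$-DP. By the exponential-mechanism utility theorem (Lemma \ref{lem:exp}) combined with the monotonicity of $i \mapsto \mathsf{OPT}(G \cup H_i)$, the chosen index satisfies, with probability $1 - O(1/n^2)$, both $\mathsf{OPT}(G \cup H_i) \geq 2\log n/\epsilon$ (crucial for Stage 2) and $\mathsf{OPT}(G \cup H_i) \leq \mathsf{OPT}(G) + O(\log n/\epsilon)$ (the approximation is not destroyed). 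The argument is identical in spirit to Lemma \ref{lemma:augmentation-general} specialized to $f(k)=2$, $h(k)=g(k)=0$.

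Stage 2 (private cut selection). On the augmented graph $G \cup H_i$, apply the exponential mechanism over all $(S, \bar S)$ cuts with score $-|\text{cut}_{G\cup H_i}(S)|$; this step is again $\tfrac{\epsilon}{2}$-DP since the score has edge-sensitivity $1$. To bound the additive error, decompose the probability that the returned cut has cost exceeding $\mathsf{OPT}(G\cup H_i)+b$ into a geometric sum over $t \geq b$, where each term uses Karger's bound that the count of cuts of size $\mathsf{OPT}(G\cup H_i)+t$ is at most $n^{2(1+t/\mathsf{OPT}(G\cup H_i))}$. The key cancellation is that the Karger exponent $\tfrac{2t\log n}{\mathsf{OPT}(G\cup H_i)}$ is dominated by $\epsilon t/2$ precisely because the augmentation guarantees $\mathsf{OPT}(G\cup H_i) \geq 2\log n/\epsilon$. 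Choosing $b = \Theta(\log n/\epsilon)$ makes this tail sum at most $1/n^2$, exactly as in the proof of Lemma \ref{lem:analyze_exp}. Composing the two stages via basic composition yields $\epsilon$-DP and total additive error $\tfrac{20\log n}{\epsilon}$.

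Stage 3 (polynomial-time implementation). The only non-efficient step is the exponential mechanism over the $2^{n-1}$ cuts in Stage 2. Replace it with sampling from the restricted set $S_\alpha$ of all cuts of value at most $\alpha \cdot \mathsf{OPT}(G \cup H_i)$ for an appropriate constant $\alpha$ (say $\alpha = 3$); these can be enumerated in polynomial time using Karger's contraction algorithm, which outputs all such cuts with probability $1 - 1/\mathrm{poly}(n)$. Couple this efficient variant with the ideal exponential mechanism: by Stage 2's utility bound, the ideal sampler outputs a cut outside $S_\alpha$ with probability at most $1/n^c$, so the two output distributions are $\delta$-close for $\delta \leq 1/n^c$. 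This upgrades pure DP to $(\epsilon,\delta)$-DP while preserving polynomial runtime. The main subtlety, and the step that forces $\delta > 0$, is exactly this coupling: without it one cannot avoid touching all $2^{n-1}$ cuts, and the efficiency-privacy tradeoff is unavoidable. The constant $20$ in the final error bound is obtained by optimizing the constants in Stages 1 and 2 (e.g., the thresholds $4\log n/\epsilon$ and $2\log n/\epsilon$ above).
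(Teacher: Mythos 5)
Your proposal is correct and follows essentially the same route as the original Gupta et al.\ argument, which the paper cites rather than reproves; it is also, as you observe, exactly the $k=2$ specialization of the paper's own black-box analysis in Section~\ref{sec:black-box-k-cut} (with $f(k)=2$, $g(k)=h(k)=0$, so that Karger's $n^{2\alpha}$ bound plays the role of Lemma~\ref{theorem:karger-k-cuts}, Stage~1 matches Lemma~\ref{lemma:augmentation-general}, Stage~2 matches Lemma~\ref{lem:analyze_exp}, and Stage~3 is the coupling with $S_\alpha$ described after Theorem~\ref{theorem:gupta-runtime}). The only cosmetic discrepancy is the exact constant $20$: with your $\epsilon/2$ budget split the natural-log version of Lemmas~\ref{lemma:augmentation-general} and~\ref{lem:analyze_exp} yields something like $24\ln n/\epsilon$, and Gupta et al.'s stated $20\log n/\epsilon$ reflects a slightly different split and base of logarithm, which is immaterial for how the theorem is used downstream.
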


\noindent For any $G=(V,E)$ and $S\subseteq V$, we denote the sub-graph of $G$ induced by $S$ as $G(S) = (S,E')$, where $E'\subseteq E$. A natural idea is to plug the mechanism $\mathcal{M}$ in \Cref{thm:private_min_cut} into the SPLIT algorithm. In particular, on the input graph $G$, we initially just run $\mathcal{M}$ to privately output a cut that approximates the min-cut in $G$. In the following iterations, supposing we have divided the graph into $i$ pieces, we use the exponential mechanism to choose a part such that the corresponding sub-graph induced by this part has the smallest minimum cut. Then, we again run $\mathcal{M}$ on this sub-graph to create two new pieces. 

Notice that in each iteration, each part is disjoint. However, every part does not necessarily correspond to a connected component. Specifically, outputting the connected components in each iteration violates privacy, since adding or deleting an edge would possibly change the number of connected components.

 Given parameters $\epsilon,\delta$ and an unweighted graph $G$, the algorithm is listed as follows:

 \begin{center}
    \begin{tcolorbox}[=sharpish corners, colback=white, width=1\linewidth]\label{alg:approx_k_cut}

        \begin{center}
      \textbf{\emph{Algorithm \ref*{alg:approx_k_cut}: Private and efficient approximation of $k$-cut}}
        \end{center}
        \vspace{6pt}
\begin{enumerate}
  \item {Let} $\epsilon_0 = \frac{\epsilon}{6\sqrt{k\ln (2/\delta)}}$ and $\delta_0 = \frac{\delta}{2k}$.
  \item {Let} $G_0$ be the input graph, and let $\mathcal{G}_0 = \{G_0\}$.
  \item {For} $i = 1, 2, \cdots, k-1$:
  \begin{enumerate}[label=\theenumi\alph*)]
  \item {For} each graph $G_\ell$ in $\mathcal{G}_{i-1}$, compute the value of actual minimum cut $\mathsf{OPT}(G_{\ell})$.
  \item {Choose} $\ell \in \{0,1, \cdots,i-1\}$ according to probability proportional to $\exp(-\epsilon_0 \cdot \mathsf{OPT}(G_{\ell}))$. Then, let $\mathcal{G}_{i-1}' = \mathcal{G}_{i-1}\backslash \{G_{\ell}\}$.
  \item {Let} $V_{\ell} \subseteq [n]$ be the support of $G_{\ell}$. Run $\mathcal{M}(G_{\ell})$ with privacy parameter $(\epsilon_0,\delta_0)$ and report $(S^*,V_{\ell}\backslash S^*)$ where $S^* \subseteq V_{\ell}$.
  \item {Let} $G_{i}\leftarrow G_{\ell}(S^*)$ and update $G_{\ell}$ by $G_{\ell}\leftarrow G_{\ell}(V_\ell \backslash S^*)$.
  \item {Update} $\mathcal{G}_{i-1}$ to $\mathcal{G}_{i} := \mathcal{G}_{i-1}' \cup \{G_{\ell}, G_i\}$.
  \end{enumerate}

  \item After $k-1$ iterations ends, output the $k$ disjoint supports of graphs in $\mathcal{G}_{k-1}$ be $T_1,\cdots, T_{k}$.
\end{enumerate}
  \end{tcolorbox} 
  \end{center}

\noindent Clearly, Algorithm \ref{alg:approx_k_cut} always terminates in polynomial time for any $k = \omega(1)$.

\subsubsection{Privacy analysis}
In this section, we present the following theorem on the privacy guarantee of Algorithm \ref{alg:approx_k_cut}:
\begin{theorem}
  Algorithm \ref{alg:approx_k_cut} preserves $(\epsilon,\delta)$-differential privacy.
\end{theorem}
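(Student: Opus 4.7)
The plan is to prove that Algorithm \ref{alg:approx_k_cut} preserves $(\epsilon,\delta)$-differential privacy by regarding it as an adaptive composition of $2(k-1)$ primitive private mechanisms and then applying the advanced composition theorem under the prescribed settings of $\epsilon_0$ and $\delta_0$.

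First, I would set up the adaptive composition framework. Fix any pair of neighboring input graphs $G_0$ and $G_0'$ differing on a single edge, and couple the two executions of the algorithm so that they use exactly the same randomness. A key inductive observation is that the vertex supports $\{V_\ell\}_{\ell}$ of the subgraphs in $\mathcal{G}_{i-1}$ are fully determined by the history of previously released outputs and do \emph{not} depend directly on the input: each $V_\ell$ is obtained by successively applying previously released cuts (outputs of step 3c in earlier iterations) to split vertex sets. Hence in both the $G_0$ and $G_0'$ executions, once we condition on a common history, these supports are identical throughout. This is the key reduction that makes the sensitivity analysis clean.

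Next, I would bound the privacy cost of each iteration. For step 3b, since the $V_\ell$'s partition $[n]$, the edge on which $G_0$ and $G_0'$ differ lies in at most one induced subgraph $G_\ell$, so $\mathsf{OPT}(G_\ell)$ changes by at most $1$ while the other values $\mathsf{OPT}(G_{\ell'})$ are unaltered. Therefore the score function $\ell \mapsto -\mathsf{OPT}(G_\ell)$ has sensitivity $1$, and step 3b preserves $2\epsilon_0$-differential privacy by Lemma \ref{lem:exp}. For step 3c, conditioned on which $G_\ell$ is chosen, a single-edge modification of $G_0$ induces at most a single-edge modification of $G_\ell$; Theorem \ref{thm:private_min_cut} then implies that invoking $\mathcal{M}$ with parameters $(\epsilon_0, \delta_0)$ preserves $(\epsilon_0,\delta_0)$-differential privacy with respect to $G_0$. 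Across $k-1$ iterations we thus have $2(k-1)$ primitive mechanisms, each either $2\epsilon_0$-pure-DP or $(\epsilon_0,\delta_0)$-DP.

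Finally, I would apply the advanced composition theorem to these $2(k-1)$ mechanisms with target parameter $\delta' = \delta/2$. The cumulative privacy loss is bounded by
\[
\epsilon' \;\leq\; 2\epsilon_0 \sqrt{4(k-1)\ln(2/\delta)} \;+\; O\!\left((k-1)\epsilon_0^{2}\right),
\]
and the total failure parameter is $\delta' + (k-1)\delta_0 \leq \delta/2 + (k-1)\cdot \delta/(2k) \leq \delta$. Substituting $\epsilon_0 = \epsilon/(6\sqrt{k\ln(2/\delta)})$ makes the leading term a constant fraction of $\epsilon$, and the quadratic correction is absorbed into $\epsilon$ as well, so we obtain $(\epsilon,\delta)$-differential privacy overall. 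The main obstacle, and the point I would be most careful about in the write-up, is the sensitivity analysis of step 3b in the adaptive setting: a naive reading might worry that the partition $\{V_\ell\}$ could differ between the two neighboring executions, which could make the sensitivity of $\mathsf{OPT}(G_\ell)$ larger or even unbounded. The resolution is exactly that adaptive composition lets us condition on a common history, under which both executions share the same partition, so only one $G_\ell$ differs and only by at most one edge.
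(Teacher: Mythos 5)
Your proof is correct and takes essentially the same route as the paper: the key observation is the same (the vertex supports $\{V_\ell\}$ are determined by the already-released outputs, so conditioning on a common history makes each induced subgraph differ in at most one edge between neighboring inputs), and then steps 3b and 3c are bounded at $2\epsilon_0$ and $(\epsilon_0,\delta_0)$ respectively before composing. The only bookkeeping difference is that the paper first applies basic composition within each iteration to obtain a $(3\epsilon_0,\delta_0)$-DP step and then applies advanced composition over $k-1$ iterations, whereas you apply advanced composition directly to the interleaved sequence of $2(k-1)$ mechanisms; since the paper's Lemma~\ref{lem:adv_composition} is stated for homogeneous parameters, your accounting of only $(k-1)\delta_0$ in the total $\delta$ (because the exponential-mechanism steps are pure DP) implicitly uses a mildly refined/heterogeneous version of advanced composition, but this is standard and the conclusion is the same.
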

\begin{proof}
  In the $i$-th iteration where $1\leq i \leq k-1$, suppose the current set of graphs are $\mathcal{G}_{i} = \{G_0,G_1,\cdots, G_{i-1}\}$. We remark that each graph in $\mathcal{G}$ is a vertex-induced sub-graph of $G$. Let $\mathcal{S}_i = \{S_0, S_1,\cdots, S_{i-1}\}$ be the corresponding sets of vertices of graphs in $\mathcal{G}_i$. It can be verified that $\mathcal{S}_i$ can be fully recovered by the outputs of the algorithm in the first $i-1$ iterations. Then, suppose $G'$ is formed by adding or deleting an edge in $G$, and let the set of graphs induced by $\mathcal{S}_i$ on $G'$ be $\mathcal{G}_i' = \{G_0', G_1', \cdots, G_{i-1}'\}$. Clearly for each $0\leq \ell\leq i-1$, $G_{\ell}$ and $G_{\ell}'$ differ in at most one edge. Therefore, in Step 3b, followed by the exponential mechanism,
  \begin{equation*}
    \begin{aligned}
      \frac{\textbf{Pr}_{\mathcal{G}_i}[{\ell} \text{ is chosen}]}{\textbf{Pr}_{\mathcal{G}'_i}[\ell  \text{ is chosen}] } &= \frac{\exp(-\epsilon_0 \cdot \mathsf{OPT}(G_{\ell}))}{\exp(-\epsilon_0 \cdot \mathsf{OPT}(G'_{\ell}))}\cdot \frac{\sum_{a = 1}^{i-1}e^{-\epsilon_0 \mathsf{OPT}(G_{a})}}{\sum_{a = 1}^{i-1}e^{-\epsilon_0 \mathsf{OPT}(G'_{a})}} \\
      &\leq \exp(\epsilon_0 + \epsilon_0 |\mathsf{OPT}(G_\ell) - \mathsf{OPT}(G_{\ell}')|)\leq e^{2\epsilon_0}.
    \end{aligned}
  \end{equation*}
  In Step 3c, the algorithm runs $\mathcal{M}$ on $G_{\ell}$ and outputs $(S^*, V\backslash S^*)$. Once again, since $G_{\ell}$ and $G'_{\ell}$ differ in at most one edge, this step preserves $(\epsilon_0,\delta_0)$-differential privacy. In each iteration, deciding an $\ell\in \{0,1,\cdots, i-1\}$ and outputting $(S^*, V_{\ell}\backslash S^*)$ is $(3\epsilon_0, \delta_0)$-differentially private by basic composition. Since there are $k-1$ iterations, {by advanced composition (Lemma \ref{lem:adv_composition}), the algorithm preserves $(\epsilon,\delta)$-differential privacy.}
\end{proof}

\subsubsection{Utility analysis}
Here we present the utility guarantee of Algorithm \ref{alg:approx_k_cut}. It is worth noting that the utility analysis does not directly come from the analysis of algorithm SPLIT~\cite{saran1995finding}. In the analysis of SPLIT, some properties that only hold for the optimal minimum cut are used. It is not immediately obvious whether these properties still hold for approximate minimum cuts chosen by the algorithm. For example, in each iteration of the original SPLIT algorithm, the number of connected components increases only by one, but this is not necessarily true for our algorithm, since the cut chosen in Step 3c might be a cut that separates the graph into multiple connected components. In particular, we prove the following theorem:

\begin{theorem}\label{thm:utility_twice_approx}
  Fix any $2\leq k\leq n$. Given any input graph $G$, let $\mathcal{C}$ be a $k$-partition on $[n]$ that is outputted by Algorithm \ref{alg:approx_k_cut}. With probability at least $1-2/n$, 
  $$|\mathcal{C}(G)| \leq 2\mathsf{OPT}_k(G) + \tilde{O}\left(\frac{k^{1.5}  \sqrt{\log(2/\delta)}}{\epsilon}\right).$$
\end{theorem}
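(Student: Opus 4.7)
The plan is to decompose the total cost $|\mathcal{C}(G)| = \sum_{i=1}^{k-1} C_i$, where $C_i$ denotes the cost of the cut committed in iteration $i$, into a multiplicative contribution from a SPLIT-style $2$-approximation and an additive contribution from the two private subroutines invoked per iteration. In iteration $i$, Theorem~\ref{thm:private_min_cut} guarantees that the private min-cut subroutine returns a cut of $G_\ell$ of cost at most $\mathsf{OPT}(G_\ell) + O(\log n/\epsilon_0)$ except with probability $1/n^2$. In parallel, applying the utility guarantee of the exponential mechanism (Lemma~\ref{lem:exp}) to step~3b, whose range has size at most $k\leq n$ with score sensitivity $1$, ensures that the selected component satisfies $\mathsf{OPT}(G_\ell) \leq \min_{\ell'} \mathsf{OPT}(G_{\ell'}) + O(\log n/\epsilon_0)$ except with probability $1/n^2$. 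A union bound over the $2(k-1) \leq 2n$ failure events yields that all per-iteration inequalities hold simultaneously with probability at least $1 - 2/n$.

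Next I would bound $\sum_{i=1}^{k-1} \min_{\ell'} \mathsf{OPT}(G_{\ell'})$ by $2\,\mathsf{OPT}_k(G)$ via an adaptation of the Saran--Vazirani analysis of SPLIT. Fix the optimal partition $(V_1^*, \ldots, V_k^*)$, noting $\sum_m |\delta(V_m^*)| = 2\,\mathsf{OPT}_k(G)$, and consider the current partition $(T_1, \ldots, T_i)$ at the start of iteration $i$. Since $i < k$, some $T_j$ must intersect at least two distinct $V_m^*$; selecting the intersecting $V_m^*$ with smallest $|\delta(V_m^*)|$ defines a valid non-trivial cut of $G(T_j)$ of cost at most $|\delta(V_m^*)|$, so $\min_{\ell'} \mathsf{OPT}(G_{\ell'}) \leq |\delta(V_{m_i}^*)|$ for some index $m_i$. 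A charging argument then selects $m_1, \ldots, m_{k-1}$ injectively, giving $\sum_{i=1}^{k-1} \min_{\ell'} \mathsf{OPT}(G_{\ell'}) \leq \sum_m |\delta(V_m^*)| = 2\,\mathsf{OPT}_k(G)$.

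Combining the two bounds, with high probability
\[
|\mathcal{C}(G)| \;\leq\; \sum_{i=1}^{k-1}\!\left(\min_{\ell'}\mathsf{OPT}(G_{\ell'}) + O(\log n/\epsilon_0)\right) \;\leq\; 2\,\mathsf{OPT}_k(G) + O\!\left(\frac{k \log n}{\epsilon_0}\right),
\]
and plugging in $\epsilon_0 = \epsilon / (6\sqrt{k\ln(2/\delta)})$ yields the claimed additive error $\tilde{O}(k^{1.5}\sqrt{\log(2/\delta)}/\epsilon)$.

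The main obstacle is rigorously justifying the injective charging in the structural step. As the paper remarks just before the theorem, SPLIT's original analysis leverages properties that only hold for \emph{exact} minimum cuts, whereas our algorithm may carve through OPT parts and scatter them across multiple components, disrupting the bookkeeping of which $V_m^*$ remains ``intact'' and available to be charged. Resolving this requires a robust structural lemma: one must show that, regardless of how the current $i$-partition was formed, enough OPT parts remain sufficiently well-aligned with the components for an injective charge map to exist. A natural route is to track a potential function counting how many $V_m^*$ are fully contained inside a single component and argue that a single iteration can decrease this potential by only $O(1)$, all while absorbing the per-iteration approximation slack into the additive error; verifying this invariance is the crux of the utility proof.
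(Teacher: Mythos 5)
Your high-level decomposition is right and matches the paper's: each iteration pays a private additive overhead of $O(\log n/\epsilon_0)$ coming from the exponential-mechanism selection of the subgraph plus the private min-cut subroutine of \cite{gupta2010differentially}, the per-iteration failure probabilities are union-bounded to $O(k/n^2)\le 2/n$, and the remaining task is to control the sum of the underlying nonprivate min-cut values by $2\,\mathsf{OPT}_k(G)$. You also correctly identify where the difficulty lies. However, your proposal leaves that crux unresolved. The injective charge of $\min_{\ell'}\mathsf{OPT}(G_{\ell'})$ to the OPT parts $V_m^*$ is asserted, not proved, and the potential-function fix you sketch (count how many $V_m^*$ remain intact inside a single component and argue this drops by $O(1)$ per iteration) does not obviously hold: a single noisy private cut $S^*$ returned by $\mathcal{M}$ may disconnect $G_\ell$ into several pieces and simultaneously split several $V_m^*$, so the potential could drop by much more than a constant and the injective map you need would not exist. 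As stated, this is a genuine gap rather than a technical detail to be checked.

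The paper fills this gap by a different bookkeeping device. Instead of charging per-iteration subgraph min-cuts directly to OPT parts, it builds an explicit reference sequence of cuts $b_1,\dots,b_{k-1}$ of $G$ from the optimal $k$-cut using the Saran--Vazirani construction (for each edge take the smallest cut separating its endpoints, sort, and greedily select), and, critically, allows a cut in this sequence to be \emph{repeated} $t$ times whenever it creates $t$ new components. It then proves by induction (Lemma \ref{lem:cut_series}) that the $i$-th cut $p_i$ committed by the algorithm satisfies $w(p_i)\le w(b_i)+O(\log n/\epsilon_0)$, case-splitting on whether the current collection of pieces has exactly $i$ connected components (so some $b_j$ has a nonempty residual inside one of the current subgraphs, giving a candidate cut to compare against) or more than $i$ (so some current subgraph is already disconnected and has min-cut zero). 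Summing and applying the Saran et al.\ bound $\sum_{i=1}^{k-1} w(b_i)\le 2\,\mathsf{OPT}_k(G)$ yields the result. The duplication convention for the $b_i$'s is exactly what your proposal is missing: it is the mechanism that keeps the bookkeeping consistent across iterations in which a noisy cut produces multiple new components, so the comparison sequence stays in lockstep with the algorithm even when OPT parts get scattered.
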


To prove \Cref{thm:utility_twice_approx}, we first show that the $k$-cut chosen by our algorithm is relatively good for any sequence of cuts in the original graph that results in a $k$-cut. Let $G = ([n],E)$ be the input graph. 
For any $B\subseteq E$, denote by $\mathsf{COMPS}_G(B)$ the number of connected components of $G$ after deleting all edges in $B$. If no ambiguity arises, we will omit $G$ in subscript. For the sake of convenience, let $p_1,p_2,\cdots, p_{k-1}$ be the set of edges that is removed implicitly by the algorithm in order. That is, $p_i \subseteq \cup_{0\leq \ell \leq i-1} E_{\ell}$, where $E_\ell$ is the edge set of $G_\ell$ in $\mathcal{G}_{i-1}$. Let $b_1,\cdots, b_{k-1}$ be an arbitrary sequence of subsets of the edges $E$ (which may include duplicates) that satisfies the following property:
\begin{itemize}
  \item For each $1\leq i\leq k-1$, $b_i \subseteq E$ is a cut.
  \item $\mathsf{COMPS}(b_1 \cup b_2 \cup \cdots \cup b_{k-1}) \geq k$. That is, the union of cuts $b_1 \cup b_2 \cup \cdots \cup b_{k-1}$ is a $k$-cut. 
  \item For any $1\leq j \leq k-1$, if $b_j$ is unique in the sequence, then $\mathsf{COMPS}(b_1 \cup b_2 \cup \cdots \cup b_{j}) - \mathsf{COMPS}(b_1 \cup b_2 \cup \cdots \cup b_{j-1}) = 1$. We define $b_0 := \emptyset$.
  \item For any $1\leq j\leq k-1$, if $\mathsf{COMPS}(b_1 \cup b_2 \cup \cdots \cup b_{j+1}) - \mathsf{COMPS}(b_1 \cup b_2 \cup \cdots \cup b_{j}) = t$, then $b_{j+1}$ will repeat $\min\{t, k-j\}$ times in the sequence. That is, $b_{j+1} = b_{j+2} = \cdots = b_{\min \{j+t, k\}}$. 
\end{itemize}
Based on the last two properties, it can be verified that each new cut $b_i$ in the sequence must create new connected components, since the number of connected components increases by the multiplicity of $b_i$, after performing the union with $b_i$.
For any cut $b\subseteq E$, let $w(b) = |b|$ be the weight of $b$. We now prove the following lemma:
\begin{lemma}\label{lem:cut_series}
  Let $b_1,\cdots, b_{k-1} \subseteq E$ be any sequence of cuts (allowing duplicates) that satisfy the above properties. Then $w(p_i) \leq w(b_i) + \frac{20\log n}{\epsilon_0}$ for any $1\leq i\leq k-1$.
\end{lemma}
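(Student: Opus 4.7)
The plan is to prove Lemma \ref{lem:cut_series} by combining a structural comparison between the algorithm's current partition and the alternative partition induced by $b_1,\ldots,b_{i-1}$ with the utility guarantees of the exponential mechanism in Step 3b and of the private min-cut subroutine in Step 3c of Algorithm \ref{alg:approx_k_cut}.

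The core is the following structural claim: at iteration $i$, among the current components $\mathcal{G}_{i-1}=\{G_0,\ldots,G_{i-1}\}$ there exists some $G_{\ell^*}$ with $\mathsf{OPT}(G_{\ell^*})\le w(b_i)$. I would prove this by viewing $b_i=E(S,V\setminus S)$ as a cut of $G$ and comparing it with the algorithm's partition $\mathcal{P}_i=\{V_0,\ldots,V_{i-1}\}$ of the vertex set. If some $V_\ell$ has non-empty intersection with both $S$ and $V\setminus S$, then the edges of $b_i$ whose endpoints both lie in $V_\ell$ form a non-trivial cut of $G_\ell$ of weight at most $w(b_i)$, which directly gives $\mathsf{OPT}(G_\ell)\le w(b_i)$. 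Otherwise $\mathcal{P}_i$ refines $\{S,V\setminus S\}$, forcing $b_i\subseteq P_i=p_1\cup\cdots\cup p_{i-1}$; in this ``over-cut'' case the valid-sequence properties---in particular that $b_i$ creates a new component in the alternative partition $\mathcal{Q}_i$ induced by $b_1\cup\cdots\cup b_{i-1}$---together with the blocky structure of the $b$-sequence let us locate an algorithm's part $V_{\ell^*}$ that is separated from the rest of its alternative part $U_{j^*}$ by edges contained in $b_i$, hence $\mathsf{OPT}(G_{\ell^*})\le w(b_i)$.

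Once the structural claim is in hand, the exponential mechanism (with sensitivity-$1$ score $-\mathsf{OPT}(G_\ell)$ over the $i$ candidates) picks an $\ell$ with $\mathsf{OPT}(G_\ell)\le \mathsf{OPT}(G_{\ell^*})+O(\log n/\epsilon_0)$ with high probability, and Theorem \ref{thm:private_min_cut} then yields a cut $p_i$ of $G_\ell$ with $w(p_i)\le \mathsf{OPT}(G_\ell)+\frac{20\log n}{\epsilon_0}$ with probability at least $1-1/n^2$. Chaining these two bounds (and folding the $O(\log n/\epsilon_0)$ exponential-mechanism loss into the displayed constant) gives the stated $w(p_i)\le w(b_i)+\frac{20\log n}{\epsilon_0}$.

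The delicate step will be the ``over-cut'' case of the structural claim. There we must exploit that all of $b_i$'s edges have already been removed by the algorithm, so $\mathcal{P}_i$ strictly refines $\{S,V\setminus S\}$; letting $W$ be the new alternative component created by adding $b_i$ to $b_1\cup\cdots\cup b_{i-1}$, the goal is to show that $W$ equals a union of some algorithm parts, and that one such part has its entire boundary inside its containing alternative part $U_{j^*}$ already covered by $b_i$. Handling the block structure carefully---especially when $b_i$ is a duplicate that adds no new component but still has to be matched against the algorithm's $i$-th iteration---and tracking the correspondence between the unique cuts of the sequence and the algorithm's iterations will be the main technical burden.
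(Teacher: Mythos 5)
Your high-level plan (structural claim about a cheap current component, then exponential-mechanism and private-min-cut utility bounds) matches the paper's, and your Case A — some algorithm part $V_\ell$ meets both sides of $b_i$, so the edges of $b_i$ with both endpoints in $V_\ell$ form a nontrivial cut of $G_\ell$ of weight at most $w(b_i)$ — is exactly the paper's move. The gap is in your Case B. When $\mathcal{P}_i$ refines $\{S, V\setminus S\}$ and hence $b_i \subseteq p_1\cup\cdots\cup p_{i-1}$, your proposed argument (``locate a $V_{\ell^*}$ separated from the rest of its alternative part $U_{j^*}$ by edges contained in $b_i$, hence $\mathsf{OPT}(G_{\ell^*}) \le w(b_i)$'') does not bound any internal min-cut of a current component: the edges of $b_i$ you are pointing at all cross between distinct algorithm parts, so they lie outside $E(G_{\ell^*})$ and cannot witness a cut of $G_{\ell^*}$. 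If every $G_\ell$ is still connected, this route gives nothing — it is perfectly consistent that all of $b_i$ has been removed while every $G_\ell$ still has a large min-cut.

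The paper closes this case by a different structural argument: it case-splits on whether $p_1\cup\cdots\cup p_{i-1}$ produces exactly $i$ or more than $i$ connected components. If more than $i$, some $G_{\ell^*}$ is already disconnected and $\mathsf{OPT}(G_{\ell^*})=0$. If exactly $i$, then since the sequence properties force $\mathsf{COMPS}(b_1\cup\cdots\cup b_i) \ge i+1 > i = \mathsf{COMPS}(p_1\cup\cdots\cup p_{i-1})$, at least one $b_j$ with $j\le i$ (not necessarily $j=i$) cannot be contained in $p_1\cup\cdots\cup p_{i-1}$; that $b_j$ then has an edge inside some $G_\ell$, and as in your Case A this yields $\mathsf{OPT}(G_\ell) \le w(b_j)$. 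So the missing idea is a pigeonhole/component-counting argument over the whole prefix $b_1,\ldots,b_i$, which produces a witness $b_j$ for some $j\le i$ rather than $b_i$ itself. Note also that this gives a bound in terms of $w(b_j)$; passing to $w(b_i)$ then uses the fact that the $b$-sequence later constructed in the utility analysis is sorted nondecreasingly by weight — you would need to surface the same assumption to make your version of the structural claim hold.
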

\begin{proof}
  We prove it by induction. This lemma is clearly true for $i=1$. By the guarantee of $\mathcal{M}$ in \Cref{thm:private_min_cut}, with probability at least $1-1/n^2$, the first cut $p_1$ specified by $G$ and $\mathcal{M}(G)$ satisfies
  $$w(p_1) \leq \mathsf{OPT}(G) + \frac{20\log n}{\epsilon_0} \leq b_1 + \frac{20\log n}{\epsilon_0}$$
  since $b_1$ is a cut. For any $1<i\leq k-1$, after deleting $i-1$ cuts, there are at least $i$ connected components in $G$. We then consider two situations:
  
  \textbf{Case(1)}: After deleting $p_1, p_2, \cdots, p_{i-1}$, there are exactly $i$ connected components in $G$. That is, each graph $G_0,\cdots, G_{i-1}$ in $\mathcal{G}_{i-1}$ is a connected component. By the properties of the sequence $b_1,b_2,\cdots, b_{k-1}$, we have that $\mathsf{COMPS}(b_1\cup b_2 \cdots \cup b_{i}) \geq i+1$, since each new cut increases the number of connected components by at least one. Since $p_1\cup p_2 \cdots \cup p_{i-1}$ only creates exactly $i$ connected components, then there exists $j$, where $1\leq j \leq i$, such that 
  $$b_j \not\subseteq p_1\cup p_2 \cdots \cup p_{i-1}.$$
  Since $b_j$ is a cut in $G$ and $b_j$ is not fully contained in $ p_1\cup p_2 \cdots \cup p_{i-1}$, then $b_j' = b_i \backslash (p_1\cup p_2 \cdots \cup p_{i-1})$ must be a cut in at least one of the graphs $G_0, \cdots, G_{i-1}$. Let $\ell^*$ be the index such that $\mathsf{OPT}(G_{\ell^*}) \leq \mathsf{OPT}(G_{\ell})$ for any $0\leq \ell \leq i-1$. Then, we have 
  $$ \mathsf{OPT}(G_{\ell^*}) \leq w(b_j') \leq w(b_j). $$
  Meanwhile, at the $i$-th iteration, Step 3b of Algorithm \ref{alg:approx_k_cut} chooses $a\in \{0,1,\cdots, i-1\}$ such that with probability at least $1-1/n^2$, $\mathsf{OPT}(G_a) \leq \mathsf{OPT}(G_{\ell^*}) + \frac{6\ln n}{\epsilon_0}$. With the utility guarantee of $\mathcal{M}(G_a)$ and the union bound, we have with probability at least $1- 2/n^2$, 
  $$|p_i| \leq \mathsf{OPT}(G_a) + \frac{20\log n}{\epsilon} \leq \mathsf{OPT}(G_{\ell^*}) + \frac{26\log n}{\epsilon_0} \leq w(b_j) + \frac{26\log n}{\epsilon_0}.$$

  \textbf{Case(2)}: After deleting $p_1, p_2, \cdots, p_{i-1}$, there are more than $i$ connected components. This means that there exists a graph in $\mathcal{G}_{i-1}$ that is not connected. Let this graph be $G_{\ell^*}$; clearly $\mathsf{OPT}(G_{\ell^*}) = 0$. Using similar reasoning as in case(1), with probability at least $1-2/n^2$,
  $$|p_i| \leq \mathsf{OPT}(G_{\ell^*}) + \frac{26\log n}{\epsilon_0} \leq w(b_i) + \frac{26\log n}{\epsilon_0}.$$
  This completes the proof of Lemma \ref{lem:cut_series}.
\end{proof}

Let $A\subseteq E$ be the minimum $k$-cut. Next, we apply the construction from Saran et al.~\cite{saran1995finding} to demonstrate the existence of a sequence of cuts $b_1, b_2, \cdots, b_{k-1}$ that satisfy all required properties and closely approximate the actual minimum $k$-cut $A$. Notice that if $G$ has more than $k$ connected components, then we just let $b_1,b_2,\cdots, b_{k-1}$ be $\emptyset$, which yields the desired result. Therefore, we suppose $G = ([n],E)$ has no more than $k$ connected components. Let $V_1, V_2,\cdots V_k$ be the $k$ connected components in $G = ([n], E-A)$. For any $1\leq i\leq k$, let $a_i$ be the cut the separates $V-V_i$ and $V_i$. We note that $a_i$ might be empty since $G$ is not necessarily connected. Without loss of generality, we assume $a_1, a_2, \cdots, a_k$ are listed in non-decreasing order. Clearly $A = \cup_{1\leq i\leq k} a_i$, and $$\sum_{i=1}^n w(a_i) = 2w(A) = 2\mathsf{OPT}_k(G),$$ since each edge in $A$ is counted twice in the sum. The following algorithm gives a construction of the desired sequence $b_1,b_2,\cdots, b_{k-1}$:
\begin{itemize}
  \item For each $e\in E$, let $s_e \in E$ be the minimum cut that separates the endpoints of $e$. Sort these cuts increasingly to get $s_1, s_2, \cdots, s_{m}$.
  \item Greedily select cuts from the list above until their union is a $k$-cut. Include a selected cut $s_i$ only if $s_i\not\subseteq s_1\cup s_2\cup \cdots, s_{i-1}$.
\end{itemize}

Let the cuts chosen by the above procedure be $b'_1,b'_2,\cdots, b'_q$. According to the rule for selecting cuts, each chosen cut $b'_i$ will generate at least one new connected component. This is because $b'_i$ is the minimum cut that separates two vertices, and thus there exists an edge $\{u,v\} \in b'_i - (b'_1\cup b'_2\cdots\cup b'_{i-1} )$ that will become disconnected from the rest of the graph after removing $b_i$. Therefore, $q\leq k-1$. 

Then, we let the sequence $b_1, b_2, \cdots b_{k-1}$ be generated from $b'_1,b'_2,\cdots, b'_q$ be repeating each $b_{i}'$ by $t$ times if it creates $t$ connected components, and truncating the sequence when its length exceeds $k-1$. It is clear that such a sequence $b_1, b_2, \cdots b_{k-1}$ satisfies all the properties listed above. We apply the following guarantee about such sequence:

\begin{lemma}[Saran et al.~\cite{saran1995finding}]
  Let $b_1, b_2, \cdots, b_k$ be the sequence of cuts constructed as above. Then
  $$\sum_{i=1}^{k-1} b_i \leq \sum_{i=1}^{k-1} a_i.$$
\end{lemma}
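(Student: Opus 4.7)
The plan is to establish the bound by a weight-dominating matching argument in the spirit of Saran and Vazirani. I will exhibit a bijection $\phi$ from the index set $\{1, \ldots, k-1\}$ of the truncated $b$-sequence (counted with multiplicity) into the indices of the $k-1$ lightest $a_i$'s such that $w(b_i) \le w(a_{\phi(i)})$ for every $i$; summing then yields the claim.

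The combinatorial heart of the argument is a \emph{cheap separator} observation. For each $i \in \{1, \ldots, k-1\}$ with $a_i \ne \emptyset$, pick any edge $e_i \in a_i$; since $a_i$ itself is a cut separating the endpoints of $e_i$, the minimum such cut satisfies $w(s_{e_i}) \le w(a_i)$. (If $a_i = \emptyset$, then $V_i$ is already a connected component of $G$, contributing a ``free'' component toward the $k$-partition.) Consequently, the sorted list $s_1 \le s_2 \le \cdots$ contains cuts $s_{e_1}, \ldots, s_{e_{k-1}}$ whose weights are term-by-term dominated by $w(a_1), \ldots, w(a_{k-1})$ and whose union (together with any free components) is itself a $k$-cut of $G$.

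With this in hand, I would run an induction aligned with the greedy selection rule. The inductive claim is: after the algorithm has processed every $s_\ell$ with $w(s_\ell) \le w(a_j)$, the portion of the $b$-sequence produced so far has multiplicity at most $j$, every entry has weight at most $w(a_j)$, and the union of the selected cuts separates at least $j+1$ of the $V_i$'s. The inductive step uses that $s_{e_1}, \ldots, s_{e_j}$ have all been processed by this stage, so the greedy cannot ``fall behind''---otherwise there would be some still-unseparated $V_i$ (with $i \le j$) whose cheap separator $s_{e_i}$ was already available and would have been selected. Applying the claim at $j = k-1$ produces the matching $\phi$.

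The main obstacle is handling the repetition rule: when a single greedy pick $b'_\ell$ creates $m_\ell > 1$ new components, each of its $m_\ell$ copies in the $b$-sequence must be charged to a distinct $a_i$ with $w(a_i) \ge w(b'_\ell)$. The point is that if any of the newly isolated pieces corresponded to a $V_i$ with $w(a_i) < w(b'_\ell)$, then the cheap separator $s_{e_i}$ would appear earlier in the sorted list and would have been selected before $b'_\ell$, contradicting that $b'_\ell$ was the cut that just created this component. A careful bookkeeping that tracks which $V_i$ each newly formed component ``represents'' then closes the induction and produces the desired weight-dominating bijection.
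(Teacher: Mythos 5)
There is a genuine gap. Your argument hinges on the claim that, for a fixed choice of edges $e_i \in a_i$, the union $s_{e_1} \cup \cdots \cup s_{e_{k-1}}$ of cheap separators is a $k$-cut of $G$. This is false in general: each $s_{e_i}$ is only guaranteed to separate the two endpoints of the single edge $e_i$, not $V_i$ from its complement, and two different $e_i$'s can yield the \emph{same} minimum separating cut. Consider the triangle on $\{a,b,c\}$ with $w(ab)=w(ac)=1$ and $w(bc)=1/2$, with $k=3$. The optimal $3$-cut is the whole edge set, and after sorting, $a_1 = \{ab,bc\}$ and $a_2 = \{ac,bc\}$ are the two lightest $a$-cuts (each of weight $3/2$). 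Taking $e_1 = ab \in a_1$ and $e_2 = bc \in a_2$, the minimum cut separating the endpoints of $ab$ is $\{ab,bc\}$ (isolating $b$), and the minimum cut separating the endpoints of $bc$ is a tie between $\{ab,bc\}$ and $\{ac,bc\}$; under the former tie-break $s_{e_1}=s_{e_2}$ and the union produces only two components. Since you write ``pick any edge,'' this breaks the claim. The inductive step inherits the same flaw: even when the greedy \emph{does} process $s_{e_i}$, that only separates the endpoints of $e_i$, not $V_i$ from the rest, so ``the union of the selected cuts separates at least $j+1$ of the $V_i$'s'' does not follow from $s_{e_i}$ being processed, and the ``greedy cannot fall behind'' step is a non sequitur.

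The cheap-separator inequality $w(s_{e}) \le w(a_j)$ for $e \in a_j$ is the right ingredient, but the combinatorial lever is different: one never commits to a choice of $e_i$ in advance and never tracks which $V_i$'s are separated. Instead, compare component counts. Let $A_c := a_1 \cup \cdots \cup a_c$; since $A_c$ disconnects each of $V_1,\ldots,V_c$ from the rest of the graph, $G - A_c$ has at least $c+1$ components. Hence, when the greedy holds a state with exactly $c < k$ components and has not yet formed a $k$-cut, the refinement $G - (R \cup A_c)$ (where $R$ is the edge set removed so far) has strictly more components than $G-R$, which forces some edge $e \in A_c \setminus R$ to have both endpoints in a single greedy component. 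That $e$ lies in some $a_j$ with $j \le c$, so $s_e$ is a not-yet-subsumed candidate of weight at most $w(a_j) \le w(a_c)$, and the greedy's next distinct pick therefore costs at most $w(a_c)$. Unwinding the multiplicity bookkeeping (each $b'_\ell$ fills positions $c_\ell, \ldots, c_{\ell+1}-1$ and $w(b'_\ell) \le w(a_{c_\ell}) \le w(a_i)$ for all $i \ge c_\ell$) gives the term-by-term bound $w(b_i) \le w(a_i)$, which sums to the lemma.
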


\noindent Putting everything all together, we have that with probability at least $1-2/n$, the algorithm outputs a $k$-partition $\mathcal{C}$ of the vertices of the input graph $G$ such that

\begin{equation*}
  \begin{aligned}
    |\mathcal{C}(G)| &= w(p_1\cup p_2 \cdots p_{k-1}) \leq \sum_{i=1}^k p_{i} \leq \sum_{i=1}^{k-1} \left( b_i + \frac{26\log n}{\epsilon_0}\right)\\
    & \leq \sum_{i=1}^k \left( a_i +  \frac{26\log n}{\epsilon_0} \right) = 2 \mathsf{OPT}_k(G) + 156 \cdot \frac{ k^{3/2} \log n \sqrt{\log(2/\delta)}}{\epsilon},
  \end{aligned}
\end{equation*}
which completes the proof of \Cref{thm:utility_twice_approx}.

\subsubsection*{Acknowledgement} We thank Yan Zhong for her discussion during the initial phase of this project. We also thank Adam Smith, Jingcheng Liu and Jalaj Upadhyay for their helpful comments and feedback on different parts of the paper draft.
\clearpage
\bibliographystyle{halpha} 
\bibliography{privacy,cut,referene}

\clearpage
\appendix

\section{Additional Related Work} \label{sec:related}
The cut problems that we study have been studied extensively in the non-private setting.  Minimum cut refers to the smallest set of edges that, if removed, would disconnect a graph into two disjoint subsets. One of the most popular algorithms for finding the minimum cut in graphs is Karger’s Algorithm~\cite{DBLP:conf/soda/Karger93}
and its subsequent refinements~\cite{DBLP:journals/jacm/KargerS96, DBLP:conf/stoc/BenczurK96, DBLP:journals/jacm/Karger00}, which have set benchmarks in this domain.   For state of the art deterministic algorithms, please refer to the near-linear time algorithms described in~\cite{DBLP:journals/jacm/KawarabayashiT19, DBLP:journals/siamcomp/HenzingerRW20} for simple graphs, and in~\cite{DBLP:conf/stoc/Li21, DBLP:conf/soda/HenzingerLRW24} for weighted graphs.

Minimum $k$-cut generalizes the concept of minimum cut, seeking the smallest set of edges whose removal partitions the graph into $k$ disjoint subsets rather than just two. This problem is particularly relevant in clustering, parallel processing, and community detection in social networks, where we aim to identify multiple distinct groups or clusters. Karger and Stein demonstrated that their randomized contraction algorithm finds a minimum $k$-cut in $\tilde{O}(n^{2k-2})$ time~\cite{DBLP:journals/jacm/KargerS96}, and their analysis was surprisingly improved to $\tilde{O}(n^{k})$ time~\cite{DBLP:journals/jacm/GuptaHLL22}, which also works for weighted graphs. For simple graphs, this $n^{k}$ barrier was overcome by~\cite{DBLP:conf/stoc/He022}.

The $s$-$t$ cut problem, also known as the minimum $s$-$t$ cut problem, is a specific case of the minimum cut problem that aims to find the smallest set of edges that, when removed, disconnects two designated nodes, $s$ (source) and $t$ (sink). Minimum $s$-$t$ is polynomially solvable using max-flow algorithms~\cite{DBLP:conf/focs/ChenKLPGS22}.
Multiway cut extends the concept further by dealing with multiple terminal nodes. The objective is to disconnect a given set of $k$ terminal nodes by removing the smallest possible set of edges, ensuring that no two terminals remain connected. Multiway cut was shown to be NP-hard when $k\geq 3$~\cite{DBLP:journals/siamcomp/DahlhausJPSY94}. 
Approximation algorithms for the multiway cut problem include a 1.5-approximation ~\cite{DBLP:journals/jcss/CalinescuKR00} based on LP relaxation and rounding methods, a $(2-2/k)$-approximation ~\cite{DBLP:journals/mp/ZhaoNI05} using a greedy splitting strategy, and a 1.2965-approximation ~\cite{sharma2014multiway} employing a more sophisticated rounding technique for LP relaxation. The current lower bound for approximating the multiway cut problem is 1.20016~\cite{DBLP:journals/mp/BercziCKM20}.


\section{Differential Privacy Basics} \label{app:DP}
Here, we give a brief exposition of differential privacy to the level required to understand the algorithms and their analysis. Differential privacy, proposed by \cite{dwork2006calibrating}, is a widely accepted standard of data privacy, which we formally define below. Let $\mathcal{D}$ be some domain of datasets. The definition of differential privacy has been mentioned in Section \ref{sec:intro}. A key feature of differential privacy algorithms is that they preserve privacy under post-processing—that is, without any auxiliary information about the dataset, any analyst cannot compute a function that makes the output less private. 
\begin{lemma}[Post processing~\cite{dwork2014algorithmic}]\label{l.post_processing}
  Let $\mathcal{A}:\mathcal{D}\rightarrow \mathcal{R}$ be a $(\epsilon,\delta)$-differentially private algorithm. Let $f:\mathcal{R}\rightarrow \mathcal{R}'$ be any function, then $f\circ \mathcal{A}$ is also $(\epsilon,\delta)$-differentially private.
\end{lemma}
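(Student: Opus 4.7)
The plan is to reduce the post-processing statement to the differential privacy guarantee of $\mathcal{A}$ by pulling measurable sets in the output domain $\mathcal{R}'$ back along $f$ into measurable sets in $\mathcal{R}$. Concretely, fix any pair of neighboring datasets $D, D' \in \mathcal{D}$ and any measurable set $S' \subseteq \mathcal{R}'$. The goal is to prove
\begin{equation*}
    \Pr[(f\circ \mathcal{A})(D) \in S'] \;\leq\; e^{\epsilon}\, \Pr[(f \circ \mathcal{A})(D') \in S'] + \delta.
\end{equation*}

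For the deterministic case, I would define $T := f^{-1}(S') = \{r \in \mathcal{R} : f(r) \in S'\}$, which is a measurable subset of $\mathcal{R}$ by measurability of $f$. Then by the identity of events $\{(f\circ \mathcal{A})(D) \in S'\} = \{\mathcal{A}(D) \in T\}$ and the $(\epsilon,\delta)$-DP of $\mathcal{A}$ applied to the set $T$, we get $\Pr[\mathcal{A}(D)\in T] \leq e^{\epsilon}\Pr[\mathcal{A}(D') \in T] + \delta$, which is exactly the desired inequality after unfolding $T$.

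For the general case where $f$ may be randomized, I would decompose $f$ by conditioning on its internal randomness $R$, which is independent of $\mathcal{A}$. For each realization $r$ of $R$, the map $f_r := f(\cdot; r)$ is deterministic, so the previous argument yields $\Pr[f_r(\mathcal{A}(D)) \in S'] \leq e^{\epsilon} \Pr[f_r(\mathcal{A}(D')) \in S'] + \delta$. Taking expectation over $R$ and using linearity of expectation preserves the inequality, completing the proof.

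The only subtle point, rather than a genuine obstacle, is ensuring that $f^{-1}(S')$ is a measurable event so that the DP guarantee of $\mathcal{A}$ can be invoked on it; this is immediate once we assume $f$ is a measurable function (the standing convention for randomized algorithms). Since the lemma is a direct appeal to the DP definition via a change of variables, no additional structural property of $\mathcal{A}$ or $f$ beyond measurability is required.
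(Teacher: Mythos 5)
The paper states this lemma with a citation to Dwork and Roth and does not include a proof, so there is nothing in the paper to compare against directly. Your argument is the standard textbook proof: pull back the event via $T = f^{-1}(S')$, invoke the DP guarantee of $\mathcal{A}$ on $T$, and extend to randomized $f$ by conditioning on its internal coins and averaging; it is correct and complete, including the measurability caveat.
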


Sometimes we need to repeatedly use differentially private mechanisms on the same dataset, and obtain a series of outputs.

\begin{lemma}[Adaptive composition~\cite{dwork2006calibrating}]\label{lem:adaptive_composition}
  Suppose $\mathcal{M}_1(x):\mathcal{D} \rightarrow \mathcal{R}_1$ is $(\epsilon_1,\delta_1)$-differentially private and $\mathcal{M}_2(x,y):\mathcal{D} \times \mathcal{R}_1\rightarrow \mathcal{R}_2$ is $(\epsilon_2,\delta_2)$-differentially private with respect to $x$ for any fixed $y\in \mathcal{R}_1$, then the composition
  $x \Rightarrow (\mathcal{M}_1(x), \mathcal{M}_2(x,\mathcal{M}_1(x)))$
  is $(\epsilon_1 + \epsilon_2, \delta_1 + \delta_2)$-differentially private.
\end{lemma}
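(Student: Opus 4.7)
The plan is to condition on the intermediate output of $\mathcal{M}_1$ and then apply the two privacy guarantees in sequence. Fix neighboring datasets $x, x' \in \mathcal{D}$ and a measurable event $S \subseteq \mathcal{R}_1 \times \mathcal{R}_2$; write $\mu_x$ for the law of $\mathcal{M}_1(x)$ and $S_y := \{z \in \mathcal{R}_2 : (y,z) \in S\}$ for the $y$-section of $S$. Then
\begin{equation*}
\Pr{(\mathcal{M}_1(x), \mathcal{M}_2(x,\mathcal{M}_1(x))) \in S} = \int_{\mathcal{R}_1} \Pr{\mathcal{M}_2(x,y) \in S_y} \, d\mu_x(y),
\end{equation*}
and analogously with $x$ replaced by $x'$ throughout. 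This decomposition is the workhorse.

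The first step is to apply the $(\epsilon_2,\delta_2)$-DP of $\mathcal{M}_2(\cdot,y)$ pointwise in $y$, bounding the integrand by $e^{\epsilon_2}\Pr{\mathcal{M}_2(x',y) \in S_y} + \delta_2$. The second step is to view the randomized map $T(y) := (y, \mathcal{M}_2(x',y))$ as a post-processing of $\mathcal{M}_1$; by Lemma~\ref{l.post_processing}, $T \circ \mathcal{M}_1$ inherits $(\epsilon_1,\delta_1)$-differential privacy from $\mathcal{M}_1$, so
\begin{equation*}
\int_{\mathcal{R}_1} \Pr{\mathcal{M}_2(x',y) \in S_y} \, d\mu_x(y) = \Pr{T(\mathcal{M}_1(x)) \in S} \leq e^{\epsilon_1}\Pr{T(\mathcal{M}_1(x')) \in S} + \delta_1,
\end{equation*}
and the right-hand probability equals $\Pr{(\mathcal{M}_1(x'),\mathcal{M}_2(x',\mathcal{M}_1(x'))) \in S}$. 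Chaining the two inequalities produces the desired multiplicative factor $e^{\epsilon_1 + \epsilon_2}$ on the right-hand side.

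The main obstacle is the additive term. Direct chaining of the two bounds produces $\delta_2 + e^{\epsilon_2}\delta_1$, which is strictly larger than the claimed $\delta_1 + \delta_2$. To close this gap, I would switch to the coupling characterization of approximate differential privacy: for each $i \in \{1,2\}$ there exists a coupling of $\mathcal{M}_i$'s outputs on the two inputs (parametrized by the first-stage output in the case $i=2$) such that, outside of a ``bad'' event of probability at most $\delta_i$, the pointwise privacy loss is bounded by $\epsilon_i$. Composing the two couplings and taking a union bound over the two bad events yields total failure probability at most $\delta_1 + \delta_2$; on the good event the pointwise privacy losses simply add, producing a clean $e^{\epsilon_1+\epsilon_2}$ density ratio and thereby avoiding the spurious exponential blow-up in the $\delta$ term.
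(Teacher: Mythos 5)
The paper cites this lemma to \cite{dwork2006calibrating} and uses it as a black box; it gives no proof, so there is nothing internal to compare against. Your decomposition and your two-step chaining are both correct, and you rightly spot the genuine subtlety: applying the $(\epsilon_2,\delta_2)$ bound first and the $(\epsilon_1,\delta_1)$ post-processing bound second gives an additive term of $e^{\epsilon_2}\delta_1+\delta_2$, because the factor $e^{\epsilon_2}$ multiplies everything downstream, including $\delta_1$.

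The repair you sketch, however, rests on an incorrect characterization of approximate DP. The pointwise privacy loss $\ln\bigl(p(o)/q(o)\bigr)$ is a function of $o$ alone, so a coupling changes nothing about it; and writing $A=\{o:p(o)>e^{\epsilon}q(o)\}$, the $(\epsilon,\delta)$ guarantee only yields $P(A)-e^{\epsilon}Q(A)\le\delta$, not $P(A)\le\delta$, so ``outside a $\delta_i$-bad event the loss is at most $\epsilon_i$'' is not implied. A precise version of this idea exists (the Dwork--Rothblum--Vadhan mixture decomposition), but a more elementary fix stays entirely inside your own two-step argument: truncate at $1$ before invoking the first-stage bound. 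Set $r(y):=\min\bigl\{1,\;e^{\epsilon_2}\Pr{\mathcal{M}_2(x',y)\in S_y}\bigr\}$. Since $\Pr{\mathcal{M}_2(x,y)\in S_y}\le 1$ and $\Pr{\mathcal{M}_2(x,y)\in S_y}\le e^{\epsilon_2}\Pr{\mathcal{M}_2(x',y)\in S_y}+\delta_2$, we get $\Pr{\mathcal{M}_2(x,y)\in S_y}\le r(y)+\delta_2$ pointwise, hence $\Pr{(\mathcal{M}_1(x),\mathcal{M}_2(x,\mathcal{M}_1(x)))\in S}\le \int r\,d\mu_x+\delta_2$. The function $r$ is $[0,1]$-valued and depends only on the fixed neighbor $x'$ and on $S$, so it defines a randomized $\{0,1\}$-valued post-processing of $\mathcal{M}_1$; by Lemma~\ref{l.post_processing} and $(\epsilon_1,\delta_1)$-DP of $\mathcal{M}_1$, $\int r\,d\mu_x\le e^{\epsilon_1}\int r\,d\mu_{x'}+\delta_1$, and trivially $\int r\,d\mu_{x'}\le e^{\epsilon_2}\int \Pr{\mathcal{M}_2(x',y)\in S_y}\,d\mu_{x'}(y)$. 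Chaining now gives exactly $e^{\epsilon_1+\epsilon_2}\Pr{(\mathcal{M}_1(x'),\mathcal{M}_2(x',\mathcal{M}_1(x')))\in S}+\delta_1+\delta_2$; the truncation is what keeps $\delta_1$ outside any exponential.
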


\begin{lemma}
    [Advanced composition lemma, \cite{dwork2010boosting}] 
    \label{lem:adv_composition}
    For parameters $\epsilon>0$ and $\delta,\delta'\in [0,1]$, the composition of $k$ $(\epsilon,\delta)$ differentially private algorithms is a $(\epsilon', k\delta+\delta')$ differentially private algorithm, where 
    $\epsilon' = \sqrt{2k\log(1/\delta')} \cdot \epsilon + k\epsilon (e^\epsilon - 1).$
\end{lemma}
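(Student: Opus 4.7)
The plan is to analyze the $k$-fold adaptive composition by tracking the \emph{privacy loss random variable}. For a fixed pair of neighboring datasets $x, x'$ and an adaptively chosen sequence of $(\epsilon, \delta)$-DP mechanisms $M_1, \ldots, M_k$ applied to $x$ producing outputs $o = (o_1, \ldots, o_k)$, I would define the step-$i$ privacy loss as $L_i = \log \mathbf{Pr}[M_i(x, o_{<i}) = o_i] - \log \mathbf{Pr}[M_i(x', o_{<i}) = o_i]$, so that the full $k$-fold log-ratio telescopes into $L := \sum_{i=1}^k L_i$. The goal reduces to showing that $L \le \epsilon'$ with probability at least $1 - \delta' - k\delta$ over $o$; a standard ``bad event'' decomposition then converts such a tail bound into the claimed $(\epsilon', k\delta + \delta')$-DP guarantee for the composed mechanism.

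First I would strip off the $\delta$-slack at each step. Using the standard equivalence that any $(\epsilon,\delta)$-DP mechanism can be written as a mixture of an $\epsilon$-DP mechanism on a $(1-\delta)$-measure event and an arbitrary mechanism on a $\delta$-measure event, I would restrict attention to an event $E$ of probability at least $1 - k\delta$ on which $|L_i| \le \epsilon$ almost surely for every $i$ (union-bounding across steps). Conditioned on $o_{<i}$, two quantities then matter: the almost-sure bound $|L_i| \le \epsilon$, and the conditional expectation $\mathbb{E}[L_i \mid o_{<i}]$ under $o_i \sim M_i(x, o_{<i})$. For the latter, a short calculation shows that for any two distributions whose ratio lies in $[e^{-\epsilon}, e^\epsilon]$ pointwise, the KL divergence is bounded by $\epsilon(e^\epsilon - 1)$, giving $\mathbb{E}[L_i \mid o_{<i}] \le \epsilon(e^\epsilon - 1)$.

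Given these two facts, I would apply Azuma--Hoeffding to the martingale $\sum_{i=1}^k (L_i - \mathbb{E}[L_i \mid o_{<i}])$ whose increments are bounded by $2\epsilon$ in absolute value, obtaining $\mathbf{Pr}[L > k\epsilon(e^\epsilon - 1) + t] \le \exp(-t^2/(2k\epsilon^2))$. Setting $t = \sqrt{2k \log(1/\delta')} \cdot \epsilon$ makes this tail probability equal to $\delta'$, and the resulting bound $L \le k\epsilon(e^\epsilon - 1) + \sqrt{2k\log(1/\delta')} \cdot \epsilon = \epsilon'$ holds except on an event of probability at most $\delta' + k\delta$, which is exactly what is needed.

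The main obstacle is the careful bookkeeping around the $\delta$ slack: a naive treatment leaves the privacy loss potentially unbounded whenever a $\delta$-exception fires, which would destroy the Azuma concentration. Handling this requires the mixture decomposition at each step (or, equivalently, replacing the privacy loss distribution by a truncation that is within total variation distance $\delta$ per step), after which the martingale analysis runs without issue. A secondary subtlety is adaptivity: since $M_i$ may depend on $o_{<i}$, one must verify the concentration bound for a martingale difference sequence rather than for independent random variables, which is exactly what the conditional structure above provides.
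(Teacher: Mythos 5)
The paper cites this lemma from Dwork, Rothblum, and Vadhan and does not include a proof of its own, so there is no in-paper argument to compare against. Your reconstruction is the standard one and is essentially correct: privacy-loss random variables, a per-step surrogate distribution (or truncation) to absorb the $\delta$-slack, the KL bound $\epsilon(e^\epsilon-1)$ for distributions with pointwise log-ratio in $[-\epsilon,\epsilon]$, and Azuma--Hoeffding (in the bounded-range form, increments of range $2\epsilon$, giving the $\exp(-t^2/(2k\epsilon^2))$ tail) to concentrate the telescoped loss $L=\sum_i L_i$. The one place to be careful, which you flag correctly, is that $(\epsilon,\delta)$-DP does \emph{not} directly give $\Pr[|L_i|>\epsilon]\le\delta$; the precise statement is that for each neighboring pair there exist distributions within total-variation distance $\delta$ of the true output distributions whose pointwise log-ratio is bounded by $\epsilon$, and the union bound and martingale argument should be run on these surrogates before transferring back. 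With that reading, your argument matches the original proof in \cite{dwork2010boosting} and its exposition in Dwork and Roth's monograph.
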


\noindent Next, we introduce basic mechanisms that preserve differential privacy, which are ingredients in our algorithms. We write $X\sim \mathsf{Lap}(b)$ for any parameter $b>0$ if and only if the density function of $X$ is $p_X(x) = \frac{1}{2b} \exp\left(-\frac{|x|}{b}\right)$.

\begin{lemma}[Laplace mechanism]\label{lem:laplace}
   Fix any $\epsilon>0$. Let $v,v' \in \mathbb{R}^{2k}$ be two arbitrary vectors such that $\|v-v'\|_1\leq s$, then if $X = (X_1\cdots X_{2k})$ are i.i.d. sampled random variables such that $X_i \sim \mathsf{Lap}(s/\epsilon)$, then for any $R\subseteq \mathbb{R}^{2k}$, 
  $$\Pr{v+X\in R} \leq e^\epsilon\Pr{v'+X \in R}.$$
\end{lemma}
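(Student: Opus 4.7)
The plan is to reduce the statement to a pointwise comparison of the joint density of $X$ at two shifted arguments, and then to bound that ratio using the one-dimensional Laplace density and the triangle inequality. First I would write the two probabilities explicitly as integrals: letting $p(x) = \prod_{i=1}^{2k} \frac{\epsilon}{2s}\exp(-\epsilon |x_i|/s)$ denote the joint density of $X$, we have
\begin{equation*}
\Pr{v+X\in R} = \int_{R} p(y-v)\, dy
\quad\text{and}\quad
\Pr{v'+X\in R} = \int_{R} p(y-v')\, dy,
\end{equation*}
by the standard change of variables $y = v + x$ (resp. $y = v' + x$).

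Next, I would bound the integrand ratio $p(y-v)/p(y-v')$ pointwise. Using the product form and the identity $\exp(-\epsilon|a|/s)/\exp(-\epsilon|b|/s) = \exp(\epsilon(|b|-|a|)/s) \le \exp(\epsilon |a-b|/s)$ (reverse triangle inequality) coordinatewise, we obtain
\begin{equation*}
\frac{p(y-v)}{p(y-v')} \;\le\; \prod_{i=1}^{2k} \exp\!\left(\frac{\epsilon}{s}\,|v_i - v'_i|\right) \;=\; \exp\!\left(\frac{\epsilon}{s}\,\|v-v'\|_1\right) \;\le\; e^{\epsilon},
\end{equation*}
where the final inequality is exactly the hypothesis $\|v-v'\|_1 \le s$.

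Finally, plugging this pointwise bound into the integral representation yields
\begin{equation*}
\Pr{v+X\in R} = \int_R p(y-v)\, dy \;\le\; e^{\epsilon}\int_R p(y-v')\, dy = e^{\epsilon}\Pr{v'+X\in R},
\end{equation*}
which is exactly the claim. I do not expect any serious obstacle here: the statement is essentially the textbook Laplace mechanism with sensitivity $s$, and the only thing worth being careful about is that the dimension of the noise vector is fixed at $2k$ while the $\ell_1$ radius is $s$ and the per-coordinate scale is $s/\epsilon$, so the product of the coordinatewise density ratios telescopes into $\exp(\epsilon\|v-v'\|_1/s)$ rather than something dimension-dependent. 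A measurability remark about $R$ (Borel or Lebesgue measurable) can be added if desired, but the argument above is insensitive to it.
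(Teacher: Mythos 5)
Your proof is correct and is the standard textbook argument for the Laplace mechanism (pointwise density ratio via the reverse triangle inequality, then telescoping the product over coordinates to $\exp(\epsilon\|v-v'\|_1/s) \le e^{\epsilon}$, then integrating over $R$). The paper does not actually prove \Cref{lem:laplace}; it states it in the preliminaries appendix as a known fact, so there is no in-paper argument to compare against, but what you wrote is exactly the argument one would expect to see.
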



In many cases the output domain of the query function is discrete. For example, according to a private dataset, we would like to output a candidate with the highest score. A fundamental mechanism for choosing a candidate is the {\em exponential mechanism}. Let $\mathcal{R}$ be the set of all possible candidates, and let $s:\mathcal{X}\times \mathcal{R}\rightarrow \mathbb{R}_{\geq 0}$ be a scoring function. We define the sensitivity of $s$ be $$\mathsf{sens}_s = \max_{x,y\in \mathcal{X}\atop x\sim y} |s(x) - s(y)|.$$ 
If we wish to output a candidate that minimizes the scoring function, we define the exponential mechanism $\mathcal{E}$ for input dataset $D\in \mathcal{X}$ to be
$\mathcal{M}(D) := \text{Choose a candidate } r\in \mathcal{R}$ { with probability proportional to } $\exp\left(- \frac{{\epsilon \cdot s(D,r)}}{2\mathsf{sens}_s}\right)$ { for all } $r\in \mathcal{R}.$ 
We have the following lemmas for the privacy guarantee and utility of $\mathcal{M}(\cdot)$.
\begin{lemma}\label{lem:exp}
    The exponential mechanism $\mathcal{M}(\cdot)$ sampling in some discrete space $\mathcal{R}$ is $\epsilon$-differentially private.
\end{lemma}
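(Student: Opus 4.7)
The plan is to give the standard pointwise ratio argument: for any two neighboring datasets $D, D'$ and any fixed output $r \in \mathcal{R}$, bound the ratio $\Pr[\mathcal{M}(D) = r] / \Pr[\mathcal{M}(D') = r]$ by $e^\epsilon$. Since the mechanism samples from a discrete distribution over $\mathcal{R}$, showing this pointwise bound for all $r$ immediately implies $\epsilon$-DP by summing over any measurable subset $S \subseteq \mathcal{R}$.

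First I would write the probability explicitly as
\[
\Pr[\mathcal{M}(D) = r] \;=\; \frac{\exp\!\left(-\tfrac{\epsilon \cdot s(D,r)}{2\mathsf{sens}_s}\right)}{\sum_{r' \in \mathcal{R}} \exp\!\left(-\tfrac{\epsilon \cdot s(D,r')}{2\mathsf{sens}_s}\right)}
\]
and the analogous expression for $D'$. Taking the ratio factors it naturally into two pieces: a numerator ratio comparing the unnormalized weights at $r$, and a denominator ratio comparing the partition functions. For the numerator piece, by the definition of sensitivity we have $|s(D,r) - s(D',r)| \leq \mathsf{sens}_s$, so the numerator ratio is bounded by $\exp(\epsilon/2)$. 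For the denominator piece, I would apply the same sensitivity bound termwise: each weight $\exp(-\epsilon s(D',r')/(2\mathsf{sens}_s))$ is within a multiplicative factor $\exp(\epsilon/2)$ of $\exp(-\epsilon s(D,r')/(2\mathsf{sens}_s))$, so the ratio of sums is also bounded by $\exp(\epsilon/2)$ (note the denominator ratio is inverted, so I need to upper-bound the $D'$-partition function by $\exp(\epsilon/2)$ times the $D$-partition function).

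Multiplying the two bounds yields $\exp(\epsilon/2) \cdot \exp(\epsilon/2) = \exp(\epsilon)$, which is exactly the pure-DP guarantee. Integrating (or summing) over any output set $S$ then gives $\Pr[\mathcal{M}(D) \in S] \leq e^\epsilon \Pr[\mathcal{M}(D') \in S]$, completing the proof.

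There is no real obstacle here — this is a textbook argument and the factor of $2$ in the exponent of the exponential mechanism is precisely chosen so that the two half-$\epsilon$ contributions (from the numerator's pointwise sensitivity and from the partition-function ratio) combine to give the advertised $\epsilon$. The only thing to be slightly careful about is the direction of inequality for the denominator ratio, and to note that the argument works in the discrete case as stated; if $\mathcal{R}$ were continuous, sums would be replaced by integrals against a base measure but the calculation would be identical.
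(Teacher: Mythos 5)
Your proof is correct and is exactly the standard McSherry--Talwar argument: split the probability ratio into the pointwise weight ratio and the partition-function ratio, bound each by $e^{\epsilon/2}$ via the sensitivity of the score, and multiply. The paper states Lemma~\ref{lem:exp} in its preliminaries appendix as a known fact without supplying a proof, so there is no internal argument to compare against, but your derivation (including the care about the direction of the denominator inequality and the remark that the factor of $2$ in the exponent is what makes the two half-$\epsilon$ contributions combine to $\epsilon$) is precisely the canonical one.
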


\begin{lemma}\label{l.exp_utility}
    Let $\mathcal{M}(\cdot)$ be an exponential mechanism sampling in some discrete space $\mathcal{R}$. For any input dataset $D$ and $t>0$, let $OPT = \min_{r\in \mathcal{R}}s(D,r)$. Then
    $$Pr\left[s(D,\mathcal{E}(D)) \geq OPT + \frac{2\mathsf{sens}_s}{\epsilon} (\log (|\mathcal{R}|) + t)\right] \leq \exp(-t).$$
\end{lemma}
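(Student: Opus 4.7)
The plan is to use the standard McSherry--Talwar style argument, adapted for the fact that here we are minimizing the score rather than maximizing it. First I would fix the input dataset $D$ and define the ``bad set''
\[
  B \;=\; \left\{ r \in \mathcal{R} : s(D,r) \geq \mathrm{OPT} + \tfrac{2\,\mathsf{sens}_s}{\epsilon}\bigl(\log|\mathcal{R}| + t\bigr) \right\},
\]
so that the statement to prove becomes $\Pr[\mathcal{M}(D) \in B] \leq e^{-t}$. Writing $T = \tfrac{2\,\mathsf{sens}_s}{\epsilon}(\log|\mathcal{R}| + t)$, I would then directly expand the probability as the ratio between the total exponential mechanism weight on $B$ and the normalizing constant.

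The key step is bounding the numerator and denominator separately. For the numerator, every $r \in B$ satisfies $\exp\bigl(-\tfrac{\epsilon s(D,r)}{2\,\mathsf{sens}_s}\bigr) \leq \exp\bigl(-\tfrac{\epsilon(\mathrm{OPT}+T)}{2\,\mathsf{sens}_s}\bigr)$, and summing over at most $|\mathcal{R}|$ candidates gives an upper bound of $|\mathcal{R}| \cdot \exp\bigl(-\tfrac{\epsilon(\mathrm{OPT}+T)}{2\,\mathsf{sens}_s}\bigr)$. For the denominator, fix any minimizer $r^\star$ with $s(D,r^\star) = \mathrm{OPT}$; by dropping all other terms, the sum is at least $\exp\bigl(-\tfrac{\epsilon \cdot \mathrm{OPT}}{2\,\mathsf{sens}_s}\bigr)$. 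Taking the ratio cancels the $\mathrm{OPT}$ factors and yields
\[
  \Pr[\mathcal{M}(D) \in B] \;\leq\; |\mathcal{R}| \cdot \exp\!\left(-\tfrac{\epsilon T}{2\,\mathsf{sens}_s}\right).
\]

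Plugging in $T = \tfrac{2\,\mathsf{sens}_s}{\epsilon}(\log|\mathcal{R}| + t)$ makes the exponent equal $-\log|\mathcal{R}| - t$, so the right-hand side collapses to $|\mathcal{R}| \cdot e^{-\log|\mathcal{R}|} \cdot e^{-t} = e^{-t}$, which is exactly the desired bound. There is no real obstacle here; the only thing to be careful about is keeping the direction of the inequality straight (since the mechanism minimizes $s$, lower scores receive larger unnormalized weight $\exp(-\epsilon s / (2\,\mathsf{sens}_s))$) and tracking the factor of $2$ that comes from the definition of the mechanism in Lemma~\ref{lem:exp}. The proof requires no case analysis, no sensitivity manipulation across neighboring datasets (the analysis is entirely on a single $D$), and no additional mechanism-specific tools beyond the definition of $\mathcal{M}$.
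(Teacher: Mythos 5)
Your proof is correct, and it is the standard McSherry--Talwar utility argument for the exponential mechanism. Note that the paper itself does not supply a proof of this lemma: it appears in Appendix~\ref{app:DP} as a quoted background fact (alongside Lemma~\ref{lem:exp} for privacy), so there is no paper proof to compare against. Your derivation handles the minimization convention correctly — the unnormalized weight $\exp\bigl(-\epsilon s(D,r)/(2\,\mathsf{sens}_s)\bigr)$ decreases with $s$, so the numerator bound over the bad set $B$ and the one-term lower bound on the normalizer at a minimizer $r^\star$ are exactly what is needed — and the final substitution of $T = \tfrac{2\,\mathsf{sens}_s}{\epsilon}(\log|\mathcal{R}|+t)$ collapses cleanly to $e^{-t}$. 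In short, your argument is a faithful and complete proof of the result the paper cites without proof.
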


\section{Missing Proofs}
\subsection{Proof of Lemma \ref{lem:dominating_stcut}.}

\begin{lemma}\label{lem:dominating_stcut_restate} (Restatement of Lemma \ref{lem:dominating_stcut}.)
Let $S \subseteq {n\choose 2}$ such that $S = \{\{u,v\}: u\in \{s,t\}, v\in [n]\backslash \{s,t\}\}$. Then, $S$ is a dominating set of $f$ with sensitivity $2$, where $f$ is the function defined in \Cref{eq:deff_stcut}.
\end{lemma}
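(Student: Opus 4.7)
The plan is a case analysis on where the differing coordinate $\{u,v\}$ sits relative to the terminals and the cut $f(x)$. Fix $(A,\bar A)=f(x)$ with $s\in A,\, t\in \bar A$, let $c^*$ denote its weight in $x$, and let $\delta = x'_{\{u,v\}} - x_{\{u,v\}}\in[-1,1]$. The goal is to exhibit $a\in\mathbb{R}^S$ with $\|a\|_1\le 2$, determined solely by $(A,\bar A)$ and $(\{u,v\},\delta)$, such that $\hat x := (x'|_S+a,\, x'|_{[N]\setminus S})$ again has unique minimum $s$-$t$ cut $(A,\bar A)$.

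First I would dispose of two easy cases. If $\{u,v\}\in S$, set $a_{\{u,v\}}=-\delta$ and zero elsewhere, so $\hat x = x$ and $f(\hat x)=f(x)$ with $\|a\|_1=|\delta|\le 1$. If $\{u,v\}=\{s,t\}$, set $a=0$; since this edge lies in every $s$-$t$ cut, all cut weights shift by the same $\delta$ and $(A,\bar A)$ remains uniquely minimum in $x'$.

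The main case is when both $u,v$ are non-terminals. I would parameterize any $s$-$t$ cut $(B,\bar B)$ (with $s\in B,\, t\in \bar B$) by $(\xi,\eta):=(\mathbf{1}[u\in\bar B],\mathbf{1}[v\in\bar B])\in\{0,1\}^2$. An edge lies in the cut iff its endpoints straddle it, so the indicators on $\{s,u\},\{s,v\},\{t,u\},\{t,v\}$ are $\xi,\eta,1-\xi,1-\eta$ respectively, and on $\{u,v\}$ it is $\xi\oplus\eta$. The cost change from $x$ to $\hat x$ of a cut $B$ is therefore
\[
\Delta_B \;=\; \delta(\xi\oplus\eta) + a_{\{s,u\}}\xi + a_{\{s,v\}}\eta + a_{\{t,u\}}(1-\xi) + a_{\{t,v\}}(1-\eta).
\]
Since $\mathrm{cost}_x(B)>c^*$ strictly for every $B\neq A$, it suffices to choose $a$ so that $\Delta_A\le 0$ while $\Delta_B\ge \Delta_A$ for each of the four $(\xi,\eta)$-classes. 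Splitting on the sides of $u,v$ in $(A,\bar A)$ and the sign of $\delta$, I would take: (i) $u,v\in A$ and $\delta\ge 0$: $a=0$; (ii) $u,v\in A$ and $\delta<0$: $a_{\{s,u\}}=a_{\{s,v\}}=|\delta|$; (iii) $u\in A,\, v\in\bar A$ and $\delta\le 0$: $a=0$; (iv) $u\in A,\, v\in\bar A$ and $\delta>0$: $a_{\{s,u\}}=a_{\{t,v\}}=\delta/2$ and $a_{\{s,v\}}=a_{\{t,u\}}=-\delta/2$. The cases with $u,v\in\bar A$ are handled by swapping the roles of $s$ and $t$. In every sub-case $\|a\|_1\le 2|\delta|\le 2$, the vector $a$ depends only on $(A,\bar A)$ and on $x'-x$, and substituting into the formula above verifies $\Delta_A=0$ (or $\le 0$) and $\Delta_B\ge 0$ across the remaining three $(\xi,\eta)$-patterns, so $(A,\bar A)$ remains uniquely minimum in $\hat x$.

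The step I expect to be the main obstacle is sub-case (iv): opposite sides with $\delta>0$. A naive correction that decreases the weight of a single cut edge of $(A,\bar A)$ by $\delta$ restores $\mathrm{cost}(A,\bar A)$ to $c^*$ but asymmetrically depresses competing cuts that share that edge, so any such cut with $\mathrm{cost}_x(B)-c^*<\delta$ can then beat $(A,\bar A)$. The four-edge balanced correction in (iv) is forced by the simultaneous non-negativity requirements $\Delta_{(0,0)},\Delta_{(1,0)},\Delta_{(1,1)}\ge 0$ together with $\Delta_A=0$; solving this small linear system is precisely what pins $\|a\|_1$ at $2|\delta|$ and explains why the sensitivity in the lemma is $2$ rather than $1$.
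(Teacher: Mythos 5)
Your proof is correct and follows essentially the same strategy as the paper's: restrict attention to the five edges among $\{s,u\},\{s,v\},\{t,u\},\{t,v\},\{u,v\}$, observe that the induced cost change of any $s$-$t$ cut depends only on which side of the cut $u$ and $v$ lie on, and then run a $2\times 2$ case analysis (sign of $\delta$; same vs.\ opposite side for $u,v$ in $f(x)$) to exhibit a correction $a$ that keeps $f(x)$ uniquely optimal. The paper encodes a cut by spins $(y_u,y_v)\in\{-1,1\}^2$ and writes an auxiliary objective $g(z,a_{\{s,u\}},a_{\{t,v\}})$; you encode by indicators $(\xi,\eta)\in\{0,1\}^2$ and track the cost shift $\Delta_B$ directly — these are equivalent reparameterizations, and your version arguably makes the reduction from ``$(A,\bar A)$ stays uniquely minimal'' to ``$\Delta_B\geq\Delta_A$ over four classes'' a bit more transparent. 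The one place you genuinely diverge is sub-case (iv) (opposite sides, $\delta>0$): the paper uses the two-edge correction $a_{\{s,u\}}=a_{\{t,v\}}=\delta$ (so $\Delta_A=\delta$), while you take a symmetric four-edge correction achieving $\Delta_A=0$; both satisfy $\Delta_B\geq\Delta_A$ with $\|a\|_1=2\delta$, so either works. Two small corrections to your commentary: the constraint $\Delta_A\le 0$ is not actually needed (only $\Delta_B\ge\Delta_A$ matters), so the four-edge correction is not ``forced'' — the paper's two-edge one does fine with the same $\ell_1$ norm; and your closing observation that any valid $a$ must have $\|a\|_1\geq 2\delta$ in case (iv) is a nice addition (it follows from $a_{\{t,v\}}-a_{\{s,v\}}\geq\delta$ and $a_{\{s,u\}}-a_{\{t,u\}}\geq\delta$), something the paper does not spell out.
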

\begin{proof}
    (Of Lemma \ref{lem:dominating_stcut}.) We note that $f$ can be written in the following optimization problem:
    $$f(w) = \arg\min_{y\in \{-1,1\}^n} \sum_{\{i,j\}\in {n\choose 2}}w_{\{i,j\}} (1-y_i y_j) = \arg\min_{y\in \{-1,1\}^n} \tilde{f}(w).$$
    with the constraint $y_s = 1$ and $y_t = -1$. Here, we let $\tilde{f}(w) = \sum_{\{i,j\}\in {n\choose 2}}w_{\{i,j\}} (1-y_i y_j)$. We assume $w'$ is a neighboring of $w$ formed by changing the weight of a single edge $\{u,v\}$ in $w$ by $z\in [1,-1]$ ($z$ also needs to make sure that the edge weights of $w'$ are non-negative). Next, we verify that we could always find a correction vector $a$ such that only shifting the edge weights in $S$ is enough to make sure the value of $f$ is unchanged with replacing $x$ to $x'$, and that $a$ depends on only $f(w)$ and $w'-w = (\{u,v\}, z)$.
    
    \noindent \textbf{(1) Suppose $\{u,v\} = \{s,t\}$}. Then changing the weight between $u,v$ does not change $f(w)$ at all. It suffices to just let $a = \mathbf{0}$.
    
    \noindent \textbf{(2) Suppose exactly one of $u,v$ is in $\{s,t\}$}. Since the change happens in the dominating set, it suffices to just let $a_{\{u,v\}} = -z$ and $0$ anywhere else. 

    \noindent \textbf{(3) Suppose both $u,v$ are not in $\{s,t\}$}. In this case, $f(w')$ can be written as
   \begin{align*}
       f(w') &= \arg\min_{y\in \{-1,1\}^n} \left(\sum_{\{i,j\}\in {n\choose 2}}w_{\{i,j\}} (1-y_i y_j) \right) + z(1-y_u y_v)\\
       & = \arg\min_{y\in \{-1,1\}^n} \tilde{f}(w').
   \end{align*}
    Let $y_u^*, y_v^*\in \{1,-1\}$ be the values of $f(w)$ on $u$ and $v$ respectively. Then, it suffices to show that there exists $a_{\{s,u\}}, a_{\{t,v\}}\in [-1,1]$ such that 
    $$\min_{y_u, y_v \in \{1,-1\}}g(z,a_{\{s,u\}}, a_{\{t,v\}}) = \min_{y_u, y_v \in \{1,-1\}}a_{\{s,u\}} (1-y_sy_u) + a_{\{t,v\}} (1-y_ty_v) + z(1-y_u y_v)$$
    achieves it minimum value (not have to be unique) when $y_u = y_u^*$ and $y_v = y_v^*$. This is because that, in this case, the minimal value of 
    \begin{align*}
        \tilde{f}(w') + a_{\{s,u\}} (1-y_sy_u) + a_{\{t,v\}} (1-y_ty_v) &= \tilde{f}(w) + z(1-y_u y_v) + a_{\{s,u\}} (1-y_sy_u) + a_{\{t,v\}} (1-y_ty_v) \\
        & = \tilde{f}(w) + g(z,a_{\{s,u\}}, a_{\{t,v\}})
    \end{align*}
    remains to be attained uniquely at $f(w)$, as $f(w)$ uniquely minimizes $\tilde{f}(w)$ (see also Fact \ref{fac:common_minimum}). We then discuss by different cases:
    \begin{enumerate}
        \item If $z\geq 0$ and $y_u^* = y_v^*$, then let $a_{\{s,u\}} = a_{\{t,v\}} = 0$. In this case, 
        $$g(z,a_{\{s,u\}}, a_{\{t,v\}}) = z(1-y_uy_v) \geq 0$$
        and $g(z,a_{\{s,u\}}, a_{\{t,v\}}) = 0$ when $y_u = y_v = y_u^* = y_v^*$.
        \item If $z \geq 0$ and $y_u^* = 1$, $y_v^* = -1$. Let $a_{\{s,u\}} = a_{\{t,v\}} = 1$. In this case,
        \begin{align*}
            g(z,a_{\{s,u\}}, a_{\{t,v\}}) &= 2 - y_sy_u - y_ty_v + z(1-y_uy_v) \\
            & = 2 + (y_v - y_u) + z(1-y_uy_v) \geq \min \{2, 2z\} = 2z,
        \end{align*}
        and note that when $y_u = 1 $ and $y_v = -1$, $ g(z,a_{\{s,u\}}, a_{\{t,v\}}) = 2z$. The case when $y_u^* = -1$, $y_v^* = 1$ is symmetric.
        \item If $z<0$ and $y_u^* \neq y_v^*$, then let $a_{\{s,u\}} = a_{\{t,v\}} = 0$. In this case, 
        $$g(z,a_{\{s,u\}}, a_{\{t,v\}}) = z(1-y_uy_v) \geq 2z$$
        and $g(z,a_{\{s,u\}}, a_{\{t,v\}}) = 2z$ when $y_u = y_u^*$ and $ y_v = y_v^*$.
        \item If $z < 0$ and $y_u^* =y_v^* = 1$. Let $a_{\{s,u\}} =1$ and $a_{\{t,v\}} = -1$. In this case,
        \begin{align*}
            g(z,a_{\{s,u\}}, a_{\{t,v\}}) &= 1- y_sy_u + y_ty_v - 1 + z(1-y_uy_v) \\
            & = -(y_u + y_v) + z(1-y_uy_v).
        \end{align*}
        If $y_u\neq y_v$, then $g(z,a_{\{s,u\}}, a_{\{t,v\}}) = 2z$. While when $y_u = y_u = 1$, 
        $$g(z,a_{\{s,u\}}, a_{\{t,v\}}) = -2 \leq 2z.$$
        Finally, note that when $y_u = y_u = -1$, $g(z,a_{\{s,u\}}, a_{\{t,v\}}) = 2.$
        Therefore, $g(z,a_{\{s,u\}}, a_{\{t,v\}})$ is minimized on $y_u = y_v = 1$.
         The case when $y_u^* = y_v^* = -1$ is symmetric.
    \end{enumerate}
    In all cases, $g(z,a_{\{s,u\}}, a_{\{t,v\}})$ is minimized on $y_u = y_u^*$ and $ y_v = y_v^*$, and that both the absolute values of $a_{\{s,u\}}$ and $a_{\{t,v\}}$ are bounded by $1$. This completes the proof.
\end{proof}

\subsection{Proof of Lemma \ref{lem:dominating_multi-cut}.}

\begin{lemma}\label{lem:dominating_multi-cut_restate} (Restatement of Lemma \ref{lem:dominating_multi-cut}.)
Let $S \subseteq {V'\choose 2}$ such that $S = \{\{u,v\}: u\in \{s_i,t_i\} (\forall i\in [k]), v\in V'\backslash \{s_i,t_i\}\}$. Then, $S$ is a dominating set of $f$ with sensitivity $2$, where $f$ is the function defined in \Cref{eq:deff_multicut}.
\end{lemma}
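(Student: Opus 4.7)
The proof plan is to mimic Lemma~\ref{lem:dominating_stcut}'s case analysis, adapted to the richer structure of multicut. Let $w, w'$ be neighboring weight vectors differing on edge $\{u,v\}$ by $z \in [-1,1]$, and let $\pi^* = f(w)$ be the unique optimal multicut partition. I split into three cases by the structure of $\{u,v\}$.

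\textbf{Case 1:} $\{u,v\} = \{s_i, t_i\}$ for some $i$. Because every feasible multicut must separate this pair, the edge is always cut and its weight does not affect $\pi^*$; take $a = \mathbf{0}$.  \textbf{Case 2:} $\{u,v\} \in S$ but is not a pair-edge (so at least one endpoint is a terminal). Set $a_{\{u,v\}} = -z$ and zero elsewhere, so $w'|_S + a = w$ on the entire support, and $f$ is preserved.

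\textbf{Case 3 (the main case):} Both $u, v$ are non-terminals, so $\{u,v\}\notin S$. Fix any pair $(s_i, t_i)$ and restrict the correction $a$ to the four edges $\{s_i, u\},\{s_i, v\},\{t_i, u\},\{t_i, v\}$, all of which lie in $S$. By Fact~\ref{fac:common_minimum}, it suffices to construct $a$ with $\|a\|_1 \le 2$ so that the ``extra cost''
\[
\Delta(\pi) \;=\; z\cdot \mathbf{1}[\pi(u)\neq\pi(v)] \;+\; \sum_{\tau\in\{s_i,t_i\},\, x\in\{u,v\}} a_{\{\tau,x\}}\cdot \mathbf{1}[\pi(\tau)\neq\pi(x)]
\]
is minimized at $\pi^*$ over all feasible multicut partitions $\pi$. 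Since $\pi$ must separate the pair $(s_i, t_i)$, each of $u,v$ falls into one of three states in $\pi$: same part as $s_i$, same part as $t_i$, or a ``third'' part containing neither. When both $u, v$ lie on the $s_i/t_i$-sides in $\pi^*$, the analysis reduces verbatim to Case~3 of Lemma~\ref{lem:dominating_stcut} (with $s_i, t_i$ in the roles of $s, t$), and the same two-edge correction coefficients (with $\ell_1$-norm at most $2$) work.

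\textbf{Main obstacle:} the new ``third-part'' configurations that cannot arise in the $s$-$t$ cut setting. When $u$ lies in a third part in $\pi^*$, both $a_{\{s_i, u\}}$ and $a_{\{t_i, u\}}$ contribute simultaneously to $\Delta(\pi^*)$, so the cancellation structure used in Lemma~\ref{lem:dominating_stcut} breaks. The plan is to exploit the uniqueness of $\pi^*$: under the generic-weight/tie-breaking convention, no part of $\pi^*$ is ``wasted''—in particular each part containing $u$ or $v$ contains at least one terminal, for otherwise merging it with an adjacent part would strictly decrease cost and contradict uniqueness. Thus one may \emph{choose} the pair $i$ so that $s_i$ or $t_i$ lies in $\pi^*(u)$ (and similarly, if $\pi^*(u)\neq\pi^*(v)$, a second pair's terminal lies in $\pi^*(v)$), reducing the third-part configurations back to the $s/t$-side configurations already covered. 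A finite tabulation over the sign of $z$ and the $\pi^*$-position of $(u,v)$ yields explicit corrections with $\|a\|_1 \le 2$, establishing that $S$ is a dominating set of sensitivity $2$.
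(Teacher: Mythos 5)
Your plan is structurally the same as the paper's proof (Appendix~C.2): formulate $f$ as an argmin, apply Fact~\ref{fac:common_minimum}, and construct a correction vector $a$ supported on a few terminal--incident edges, choosing which terminal(s) via the observation that every part of $\pi^*$ touching $u$ or $v$ must contain a terminal (otherwise merging that part with a neighbor would weakly reduce cost and contradict uniqueness). That observation is right and is indeed the key structural fact. However, the claim that when $\pi^*(u),\pi^*(v)$ lie on the $s_i$/$t_i$-sides ``the analysis reduces verbatim to Case~3 of Lemma~\ref{lem:dominating_stcut}'' is false, and it is the crux of the matter. Even when $\pi^*$ places $u,v$ on the two sides of pair $i$, the partitions $\pi$ quantified over in the minimization of the extra cost $\Delta$ may still place $u$ or $v$ in a fresh third part distinct from both $\pi(s_i)$ and $\pi(t_i)$; that configuration does not exist in the binary $s$-$t$ cut model, and the $\{\pm1\}$ algebra of Lemma~\ref{lem:dominating_stcut} silently assumes it away. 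Concretely, take $z<0$ and $\pi^*(u)=\pi^*(v)=\pi^*(s_i)$. The $s$-$t$ cut correction is $a_{\{s_i,u\}}=1$, $a_{\{t_i,v\}}=-1$, so $\Delta(\pi)=z\,\mathbb{I}(\pi(u)\neq\pi(v))+\mathbb{I}(\pi(s_i)\neq\pi(u))-\mathbb{I}(\pi(t_i)\neq\pi(v))$ and $\Delta(\pi^*)=-1$; but the feasible $\pi$ obtained from $\pi^*$ by moving $v$ into a fresh singleton part has $\Delta(\pi)=z+0-1=z-1<-1$, so $\pi^*$ does not minimize $\Delta$ and Fact~\ref{fac:common_minimum} cannot be invoked.

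The lemma is still true, but the correction vectors must be rebuilt for a $k$-ary label space rather than borrowed verbatim. One clean choice: pick a single terminal $\tau$ with $\pi^*(\tau)=\pi^*(u)$ (which exists by your uniqueness argument) and correct only the two edges $\{\tau,u\},\{\tau,v\}\in S$. Writing $\alpha=\mathbb{I}(\pi(\tau)\neq\pi(u))$, $\beta=\mathbb{I}(\pi(\tau)\neq\pi(v))$, $\gamma=\mathbb{I}(\pi(u)\neq\pi(v))$, the discrete-metric triangle inequality gives $|\alpha-\beta|\le\gamma\le\alpha+\beta$. If $\pi^*(u)=\pi^*(v)$ and $z<0$, take $a_{\{\tau,u\}}=a_{\{\tau,v\}}=-z$, so that $\Delta=z(\gamma-\alpha-\beta)\ge 0$ with equality at $\pi^*$; if $\pi^*(u)\neq\pi^*(v)$ and $z>0$, take $a_{\{\tau,u\}}=z$, $a_{\{\tau,v\}}=-z$, so that $\Delta=z(\gamma+\alpha-\beta)\ge 0$ with equality at $\pi^*$; the remaining two sign/configuration combinations take $a=\mathbf 0$. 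In each case $\|a\|_1=2|z|\le 2$, the corrections depend only on $\pi^*$ and $(\{u,v\},z)$, and the supporting edges lie in $S$, which is what the lemma requires. So the high-level route you sketched is the right one, but the ``finite tabulation'' step needs to be redone from scratch for the multicut label space rather than inherited from the $s$-$t$ cut case.
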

\begin{proof}
    (Of Lemma \ref{lem:dominating_multi-cut}.) Similar to the proof of Lemma \ref{lem:dominating_stcut}, we note that without the lose of generality, $f$ can be written as 
   
    \begin{gather*}
    f(w) = \arg\min_{y \in [2k]^{V'}}  \sum_{\{i,j\}\in {V'\choose 2}} w_{ij}\mathbb{I}(y_i \neq y_j) \\
    \begin{aligned}
    \textup{s.t.}\quad & y_{s_1} = c_1, y_{t_1} = c_2; &  \\
                       &y_{s_i} \neq y_{t_i}& \forall 1<i\leq k.
    \end{aligned}
\end{gather*}
Here, $c_1,c_2\in [2k]$ and $c_1\neq c_2$. Again, we let $\tilde{f}(w) = \sum_{\{i,j\}\in {V'\choose 2}} w_{ij}\mathbb{I}(y_i \neq y_j)$. We assume $w'$ is a neighboring of $w$ formed by changing the weight of a single edge $\{u,v\}$ in $w$ by $z\in [1,-1]$. Next, we verify that we could always find a correction vector $a$ that only shift the edge weights in $S$ to make sure the value of $f$ is unchanged with the change, and that $a$ depends on only $f(w)$ and $w'-w = (\{u,v\}, z)$. From the exact same analysis of Lemma \ref{lem:dominating_stcut}, we only consider the case where both $u$ and $v$ are not in $\{s_1,t_1\}$, and that $\{u,v\} \neq \{s_i,t_i\}$ for any $1<i\leq k$ (it does not excludes that $u$ and $v$ are other non-conflicting terminals). This is because that in these cases, either the change does not affect the value of $f$ at all, or the change happens in $S$.

The following proof is almost identical to that of Lemma \ref{lem:dominating_stcut}, but we still present it for the sake of correctness.  Let $y_u^*, y_v^*\in \{1,-1\}$ be the values of $f(w)$ on $u$ and $v$ respectively. Then, it suffices to show that there exists two terminals $s,t\in \{s_1, t_1, \cdots, s_k,t_k\}$ and $a_{\{s,u\}}, a_{\{t,v\}}\in [-1,1]$ such that 
    $$\min_{y_u, y_v \in [2k]}g(z,a_{\{s,u\}}, a_{\{t,v\}}) = \min_{y_u, y_v \in \{1,-1\}}a_{\{s,u\}} \mathbb{I}(y_{s} \neq y_u) + a_{\{t,v\}} \mathbb{I}(y_{t} \neq y_v) + z\mathbb{I}(y_u \neq y_v)$$
    achieves it minimum value when (not have to be unique) $y_u = y_u^*$ and $y_v = y_v^*$, as in this case the unique optimal value of 
    $\tilde{f}(w') +a_{\{s,u\}} \mathbb{I}(y_{s} \neq y_u) + a_{\{t,v\}} \mathbb{I}(y_{t} \neq y_v) $ remains $f(w)$. We then discuss by different cases.
    \begin{enumerate}
        \item If $z\geq 0$ and $y_u^* = y_v^*$, then let $a_{\{s,u\}} = a_{\{t,v\}} = 0$. In this case, 
        $$g(z,a_{\{s,u\}}, a_{\{t,v\}}) = z\mathbb{I}(y_u \neq y_v) \geq 0.$$
        Further, $g(z,a_{\{s,u\}}, a_{\{t,v\}}) = 0$ when $y_u = y_v = y_u^* = y_v^*$.
        \item If $z \geq 0$, $y_u^* = c_1^*$ and $y_v^* = c_2^*$ such that $c_1^*\neq c_2^*$. Then let $s$, $t$ be the terminals that are colored in the optimal solution by $c_1^*$ and $c_2^*$ respectively. Let $a_{\{s,u\}} = a_{\{t,v\}} = 1$. Without the loss of generality, let $c_1^* = 1$ and $c_2^* = -1$. Since $2\mathbb{I}(y_w\neq y_x) = (1-y_wy_x)$ for $y_w,y_x\in \{-1,1\}$, then in this case,
        \begin{align*}
            2\cdot g(z,a_{\{s,u\}}, a_{\{t,v\}}) &= 2 - y_{s_1}y_u - y_{t_1}y_v + z(1-y_uy_v) \\
            & = 2 + (y_v - y_u) + z(1-y_uy_v) \geq \min \{2, 2z\} = 2z,
        \end{align*}
        and note that when $y_u = 1 $ and $y_v = -1$, $ g(z,a_{\{s,u\}}, a_{\{t,v\}}) = 2z$. The case when $y_u^* = -1$, $y_v^* = 1$ is symmetric.
        \item If $z<0$ and $y_u^* \neq y_v^*$, then let $a_{\{s,u\}} = a_{\{t,v\}} = 0$. In this case, 
        $$g(z,a_{\{s,u\}}, a_{\{t,v\}}) = z\mathbb{I}(y_u\neq y_v) \geq z$$
        and $g(z,a_{\{s,u\}}, a_{\{t,v\}}) = z$ when $y_u = y_u^*$ and $ y_v = y_v^*$, since $\mathbb{I}(y_u^*\neq y_v^*) = 1$.
        \item If $z < 0$ and $y_u^* =y_v^* = c^*$. Let $s_i$ be the terminal that is colored in the optimal solution by $c^*$. Without the loss of generality, let $c^* = 1$, and let $t_i$ be colored by $-1$. Let $a_{\{s_i,u\}} =1$ and $a_{\{t_i,v\}} = -1$. In this case,
        \begin{align*}
            2\cdot g(z,a_{\{s_i,u\}}, a_{\{t_i,v\}}) &= 1- y_{s_i}y_u + y_{t_i}y_v - 1 + z(1-y_uy_v) \\
            & = -(y_u + y_v) + z(1-y_uy_v).
        \end{align*}
        If $y_u\neq y_v$, then $g(z,a_{\{s_i,u\}}, a_{\{t_i,v\}}) = 2z$. While when $y_u = y_u = 1$, 
        $$g(z,a_{\{s_i,u\}}, a_{\{t_i,v\}}) = -2 \leq 2z.$$
        Finally, note that when $y_u = y_u = -1$, $g(z,a_{\{s_i,u\}}, a_{\{t_i,v\}}) = 2.$
        Therefore, $g(z,a_{\{s_i,u\}}, a_{\{t_i,v\}})$ is minimized on $y_u = y_v = 1 = c^*$.
         The case when $y_u^* = y_v^* = -1$ is symmetric.
    \end{enumerate}
    In all cases, $g(z,a_{\{s_1,u\}}, a_{\{t_1,v\}})$ is minimized on $y_u = y_u^*$ and $ y_v = y_v^*$, and that both the absolute values of $a_{\{s_1,u\}}$ and $a_{\{t_1,v\}}$ are bounded by $1$. This completes the proof.
\end{proof}

\subsection{Proof of Lemma \ref{lem:dominating_maxstcut}.}
\begin{lemma}\label{lem:dominating_maxstcut_restate} (Restatement of Lemma \ref{lem:dominating_maxstcut}.) Let $\hat{V} = [n]\cup \{s,t\}$, and let $S \subseteq {\hat{V}\choose 2}$ such that $S = \{\{u,v\}: u\in \{s,t\}, v\in [n]\}$. Then, $S$ is a dominating set of $\hat f$ with sensitivity $2$. Here, $\hat{f}$ is the function defined in \Cref{eq:deff_maxstcut}.
\end{lemma}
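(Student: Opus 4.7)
The plan is to mirror the proof of Lemma~\ref{lem:dominating_stcut}, switching the conventions from minimization to maximization. Writing
\[
\hat{f}(\hat{w})=\arg\max_{y\in\{-1,1\}^{\hat{V}},\ y_s=1,\ y_t=-1}\ \sum_{\{i,j\}\in{\hat{V}\choose 2}} \hat{w}_{ij}(1-y_i y_j),
\]
fix a neighbor $\hat{w}'$ that differs from $\hat{w}$ on a single edge $\{u,v\}$ by $z\in[-1,1]$ (compatible with the non-negativity of $\hat{w}'$). I will produce a correction vector $a\in\mathbb{R}^S$ with $\|a\|_1\leq 2$, depending only on $\hat{f}(\hat{w})$ and on the pair $(\{u,v\},z)$, such that $\hat{f}(\hat{w}'|_S+a,\ \hat{w}'|_{[N]\backslash S})=\hat{f}(\hat{w})$.

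First I dispatch the easy cases: if $\{u,v\}=\{s,t\}$, the constraint $y_sy_t=-1$ forces the affected term to be constant on the feasible set, so $a=0$ suffices; if exactly one of $u,v$ lies in $\{s,t\}$, the edge itself is in $S$, and setting $a_{\{u,v\}}=-z$ (zero elsewhere) undoes the change. The substantive case is $u,v\in[n]$, where the modification falls outside $S$ and must be compensated using only the four coordinates $\{s,u\},\{t,u\},\{s,v\},\{t,v\}\in S$.

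For the substantive case, I invoke the maximization analog of Fact~\ref{fac:common_minimum}: if $y^*$ is the unique maximizer of $A(y)$ and is a maximizer of $B(y)$, then $y^*$ is the unique maximizer of $A+B$. Since $y^*=\hat{f}(\hat{w})$ uniquely maximizes the original objective, it suffices to choose $a$ so that the $y$-dependent part of the difference, which (after collecting terms and dropping $y$-independent constants) reduces to
\[
g(y_u,y_v)=z(1-y_u y_v)+a_{\{s,u\}}(1-y_u)+a_{\{t,u\}}(1+y_u)+a_{\{s,v\}}(1-y_v)+a_{\{t,v\}}(1+y_v),
\]
has $(y_u^*,y_v^*):=\hat{f}(\hat{w})|_{\{u,v\}}$ among its maximizers over $(y_u,y_v)\in\{-1,1\}^2$. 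All other coordinates of $y$ are irrelevant because the correction acts only through $y_u$ and $y_v$.

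The proof finishes with a four-by-two case split on $(y_u^*,y_v^*)$ and $\mathrm{sign}(z)$. In each subcase I provide explicit corrections with $\|a\|_1\leq 2|z|\leq 2$: for instance, when $y_u^*=y_v^*=1$ and $z\geq 0$, the assignment $a_{\{s,u\}}=-z/2$, $a_{\{t,u\}}=z/2$, $a_{\{s,v\}}=-z/2$, $a_{\{t,v\}}=z/2$ reduces $g$ to $z(1-y_u y_v)+zy_u+zy_v$, whose maximum $2z$ is attained at $(1,1)$ (tied with $(1,-1)$ and $(-1,1)$); when $y_u^*\ne y_v^*$ and $z\geq 0$ the zero correction already suffices since cutting is preferred; the two $z<0$ subcases follow by the symmetry $s\leftrightarrow t$, $z\leftrightarrow -z$. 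In every subcase $a$ depends only on $(y_u^*,y_v^*)$ and on $(\{u,v\},z)$, so both conditions of Definition~\ref{def:dominating_set} hold with sensitivity $2$. The main obstacle is precisely the same-sign-cut subcase $y_u^*=y_v^*$ when $\mathrm{sign}(z)$ favors cutting: one must verify that the $S$-local correction can absorb the $2z$ gain from cutting while still respecting the $\ell_1$ sensitivity budget of $2$, which is the content of the explicit assignments above.
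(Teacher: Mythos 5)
Your high-level strategy is the same as the paper's: rewrite $\hat f$ as a constrained $\arg\max$ over $y\in\{-1,1\}^{\hat V}$, dispatch the cases where the perturbed edge touches $\{s,t\}$, and for $u,v\in[n]$ reduce to showing that a local function $g(y_u,y_v)$ (combining the $z$-term with the $S$-supported corrections) has $(y_u^*,y_v^*)$ among its maximizers, then invoke the maximization analogue of Fact~\ref{fac:common_minimum}. Your correction vector differs cosmetically: the paper uses only the two coordinates $a_{\{s,u\}},a_{\{t,v\}}\in\{-1,0,1\}$, while you spread the mass over all four coordinates with magnitudes $|z|/2$; both yield $\|a\|_1\le 2$, and your explicit check for $(y_u^*,y_v^*)=(1,1)$, $z\ge 0$ is correct.

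There is, however, a genuine error in the symmetry you invoke for the $z<0$ subcases. Swapping $s\leftrightarrow t$ exchanges $a_{\{s,u\}}\leftrightarrow a_{\{t,u\}}$ \emph{and} $a_{\{s,v\}}\leftrightarrow a_{\{t,v\}}$, which is equivalent to substituting $(y_u,y_v)\mapsto(-y_u,-y_v)$; this flips both $y_u^*$ and $y_v^*$ and therefore preserves whether $y_u^*=y_v^*$. Moreover, it leaves the $z$-term $z(1-y_uy_v)$ unchanged (it depends only on the product $y_uy_v$), so pairing it with $z\mapsto -z$ is not a symmetry of $g$. In particular, your claimed reduction maps the easy case $(y_u^*\neq y_v^*,\ z\ge 0)$, where you used the zero correction, onto the hard case $(y_u^*\neq y_v^*,\ z<0)$ — and the zero correction visibly fails there, since $g(y_u,y_v)=z(1-y_uy_v)$ attains $0$ on the diagonal but $2z<0$ at $(y_u^*,y_v^*)$. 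The correct symmetry is one-sided: swap only $a_{\{s,v\}}\leftrightarrow a_{\{t,v\}}$ (equivalently, substitute $y_v\mapsto -y_v$), which transforms $g_{z,a}(y_u,y_v)$ into $g_{-z,a'}(y_u,y_v)+\text{const}$ and hence sends a valid correction for $(y_u^*,y_v^*,z)$ to one for $(y_u^*,-y_v^*,-z)$, mapping hard cases to hard cases with $\|a'\|_1=\|a\|_1$. With that replacement the argument closes; as written, the $z<0$, $y_u^*\neq y_v^*$ subcase is not actually covered.
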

\begin{proof}
    (Of Lemma \ref{lem:dominating_maxstcut}.) Similar to the proof of Lemma \ref{lem:dominating_stcut}, we note that $\hat f$ can be written in the following optimization problem:
    $$\hat f(w) = \arg\max_{y\in \{-1,1\}^{\hat{V}}} \sum_{\{i,j\}\in {\hat{V}\choose 2}}w_{\{i,j\}} (1-y_i y_j).$$
    with the constraint $y_s = 1$ and $y_t = -1$. Again, we assume $w'$ is a neighboring of $w$ formed by changing the weight of a single edge $\{u,v\}$ in $w$ by $z\in [1,-1]$. Still, we verify that we could always find a correction vector $a$ that only shift the edge weights in $S$ to make sure the value of $f$ is unchanged with the change, and that $a$ depends on only $f(w)$ and $w'-w = (\{u,v\}, z)$. Without the loss of generality, we assume both $u$ and $v$ are not in $\{s,t\}$. Otherwise the construction of the correction vector $a$ is trivial. Let $y_u^*, y_v^*\in \{1,-1\}$ be the values of $\hat{f}(w)$ on $u$ and $v$ respectively. For the same reason as in the proof of Lemma \ref{lem:dominating_stcut}, it suffices to show that there exists $a_{\{s,u\}}, a_{\{t,v\}}\in [-1,1]$ such that 
    $$\max_{y_u, y_v \in \{1,-1\}}g(z,a_{\{s,u\}}, a_{\{t,v\}}) = \max_{y_u, y_v \in \{1,-1\}}a_{\{s,u\}} (1-y_sy_u) + a_{\{t,v\}} (1-y_ty_v) + z(1-y_u y_v)$$
    achieves it maximum value when (not have to be unique) $y_u = y_u^*$ and $y_v = y_v^*$. The following proof is symmetric to the proof of Lemma \ref{lem:dominating_stcut}, we include it for the sake of correctness. Next, we discuss by different cases:
    \begin{enumerate}
        \item If $z\geq 0$ and $y_u^* \neq y_v^*$, then let $a_{\{s,u\}} = a_{\{t,v\}} = 0$. In this case, 
        $$g(z,a_{\{s,u\}}, a_{\{t,v\}}) = z(1-y_uy_v) \leq 2z$$
        and $g(z,a_{\{s,u\}}, a_{\{t,v\}}) = 2z$ when $y_u =y_u^*$ and $y_v = y_v^*$.
        \item If $z \geq 0$ and $y_u^* =y_v^* = 1$. Let $a_{\{s,u\}} =-1$ and $a_{\{t,v\}} = 1$. In this case,
        \begin{align*}
            g(z,a_{\{s,u\}}, a_{\{t,v\}}) &= -1 +y_sy_u +1- y_ty_v + z(1-y_uy_v) \\
            & = y_u +y_v + z(1-y_uy_v).
        \end{align*}
        Note that when both $y_u = y_v = 1$, $g(z,a_{\{s,u\}}, a_{\{t,v\}}) = 2$. 
        If $y_u\neq y_v$, then $$g(z,a_{\{s,u\}}, a_{\{t,v\}}) = 2z \leq 2.$$ And when both $y_u = y_v = -1$, $g(z,a_{\{s,u\}}, a_{\{t,v\}}) = -2$. Therefore, $g(z,a_{\{s,u\}}, a_{\{t,v\}})$ is minimized on $y_u = y_v = 1$. The case when $y_u^* = y_v^* = -1$ is symmetric.
        \item If $z<0$ and $y_u^* = y_v^*$, then let $a_{\{s,u\}} = a_{\{t,v\}} = 0$. In this case, 
        $$g(z,a_{\{s,u\}}, a_{\{t,v\}}) = z(1-y_uy_v) \leq 0$$
        and $g(z,a_{\{s,u\}}, a_{\{t,v\}}) = 0$ when $y_u = y_u^* = y_v = y_v^*$.
        \item If $z < 0$ and $y_u^* = 1$, $y_v^* = -1$. Let $a_{\{s,u\}} =-1$ and $a_{\{t,v\}} = -1$. In this case,
        \begin{align*}
            g(z,a_{\{s,u\}}, a_{\{t,v\}}) &= -1+ y_sy_u -1 +y_ty_v  + z(1-y_uy_v) \\
            & = -2 + (y_u - y_v) + z(1-y_uy_v) \leq \max\{-2,2z\} = 2z
        \end{align*}
        since $z \geq -1$. And note that when $y_u = 1 $ and $y_v = -1$, $ g(z,a_{\{s_1,u\}}, a_{\{t_1,v\}}) = 2z$.
         The case when $y_u^* = -1$, $y_v^* = 1$ is symmetric.
    \end{enumerate}
    In all cases, $g(z,a_{\{s,u\}}, a_{\{t,v\}})$ is maximized on $y_u = y_u^*$ and $ y_v = y_v^*$, and that both the absolute values of $a_{\{s,u\}}$ and $a_{\{t,v\}}$ are bounded by $1$. This completes the proof.
\end{proof}

\end{document}